\documentclass[letterpaper,onecolumn]{IEEEtran}

\usepackage{amsmath,amssymb,amsfonts,amsthm,mathtools}
\usepackage{bm}
\usepackage{cite}
\usepackage{url}
\usepackage{booktabs}
\usepackage{array}
\usepackage{tabularx}
\usepackage{float}
\usepackage{xcolor}
\usepackage{enumitem}
\usepackage[hidelinks]{hyperref}

\DeclareRobustCommand{\doi}[1]{doi: \href{https://doi.org/#1}{#1}}

\allowdisplaybreaks

\newtheorem{theorem}{Theorem}
\newtheorem{proposition}{Proposition}
\newtheorem{lemma}{Lemma}
\newtheorem{corollary}{Corollary}
\newtheorem{assumption}{Assumption}
\newtheorem{definition}{Definition}
\newtheorem{remark}{Remark}

\DeclareMathOperator{\diag}{diag}

\newcommand{\PL}{\mathrm{PL}}

\newcommand{\plim}{\xrightarrow{p}}

\title{Gaussian-Process Dynamics of Diagonal Expectation Propagation under Variance-Profile Gaussian Measurements}
\author{Fangqing~Xiao,~\IEEEmembership{Member,~IEEE}, and Dirk~T.~M.~Slock,~\IEEEmembership{Life~Fellow,~IEEE}%
\thanks{Fangqing Xiao is with the School of Information, Yunnan University, Kunming 650500, China (e-mail: fangqing.xiao@ynu.edu.cn).}%
\thanks{Dirk T. M. Slock is with EURECOM, Sophia Antipolis, 06410 Biot, France (e-mail: dirk.slock@eurecom.fr).}}

\begin{document}
\maketitle

\begin{abstract}
State-evolution analyses of approximate-message-passing and expectation-propagation-type algorithms rely on an effective-channel principle: after a suitable Onsager, orthogonal, or extrinsic correction, the nonlinear module receives a fresh scalar Gaussian observation. This paper studies this principle for diagonal expectation propagation under variance-profile Gaussian sensing matrices. The model preserves Gaussian conditioning, but removes the isotropy that supports the usual scalar decoupling arguments. We prove a finite-time large-system description in which the linear EP module remains Gaussian at the coordinate level, but is generally not a fresh scalar channel. Instead, the residuals form a coordinate-dependent Gaussian process whose covariance is shaped by the variance profile and by the finite linear history of the algorithm. The standard diagonal EP cavity cancels the instantaneous response of the incoming message, but may leave a component predictable from past residuals. We characterize this process through a conditioned matrix-Dyson-equation deterministic equivalent and a Schur-complement representation of the linear module. A Gaussian-regression decomposition then separates the predictable memory from the orthogonal innovation and yields an oracle state-evolution-level correction. Thus, under variance-profile measurements, the limiting object for diagonal EP is a Gaussian-process dynamics with profile-dependent memory rather than the conventional fresh-noise scalar state evolution.
\end{abstract}

\begin{IEEEkeywords}
Diagonal expectation propagation, variance-profile Gaussian matrices, Gaussian-process dynamics, state evolution, matrix Dyson equation, Gaussian conditioning, memory correction.
\end{IEEEkeywords}

\section{Introduction}
\label{sec:introduction}

\subsection{Motivation}
High-dimensional linear inference is a basic model behind compressed sensing, multiuser detection, Bayesian linear estimation, low-rank estimation, coding, and signal recovery with discrete or structured priors \cite{BarbierKrzakala2017,RushGreigVenkataramanan2017,LelargeMiolane2019,MontanariVenkataramanan2021}.  In its simplest form, one observes
\(\boldsymbol y=\boldsymbol A\boldsymbol x+\boldsymbol v\), where \(\boldsymbol A\) is a known sensing matrix, \(\boldsymbol x\) is the unknown signal, and \(\boldsymbol v\) is noise.  Classical compressed sensing emphasizes sparse signals, and early message-passing algorithms were often motivated by this setting.  The large-system inference problem itself is broader: the prior on \(\boldsymbol x\) may be any separable distribution for which scalar posterior inference is well defined.  Sparse priors, such as Bernoulli--Gaussian priors, are then special cases rather than the defining assumption.

A major reason for the success of message-passing methods is that they can turn a high-dimensional inference problem into a sequence of effective scalar estimation problems.  This idea appears in statistical physics through Thouless--Anderson--Palmer-type equations and in belief-propagation-type algorithms for dense linear systems \cite{OpperWinther2001,Kabashima2003,MezardMontanari2009}.  In the compressed-sensing literature, approximate message passing (AMP) made this principle explicit: with an appropriate correction term, the input to the nonlinear denoiser behaves, in the large-system limit, like a scalar Gaussian observation.  The resulting scalar recursion is known as state evolution (SE) \cite{DonohoMalekiMontanari2009,BayatiMontanari2011,FengVenkataramananRushSamworth2022}.

The scalar Gaussian interpretation is not automatic.  Since the same sensing matrix is reused at every iteration, the current iterate and the matrix are statistically dependent.  For independent Gaussian matrices, the Onsager correction cancels the leading predictable dependence, and the SE can be made rigorous by conditioning on the past iterates \cite{BayatiMontanari2011}.  When the matrix departs from this homogeneous Gaussian setting, the same scalar SE need not remain valid, and the standard AMP recursion may require additional modifications.  This observation motivated a wide class of extensions, including generalized AMP, expectation propagation (EP), expectation consistent inference, vector AMP (VAMP), orthogonal AMP (OAMP), and turbo-type message-passing methods \cite{Rangan2011,Minka2001,OpperWinther2005,RanganSchniterFletcher2019,MaPing2017}.

Although these algorithms are derived from different principles, many of them enforce a common decoupling mechanism.  The linear and nonlinear modules should exchange information that is asymptotically extrinsic, orthogonal, or cavity-like relative to the information already used.  In independent Gaussian, right-orthogonally invariant, unitarily invariant, or rotationally invariant models, such mechanisms can lead again to scalar or finite-dimensional Gaussian descriptions \cite{RanganSchniterFletcher2019,Takeuchi2020,Fan2022,VenkataramananKoglerMondelli2022,ZhongWangFan2024}.  Diagonal EP is a natural framework in which to examine this mechanism because its cavity construction explicitly removes the instantaneous contribution of the incoming message before forming the effective observation sent to the prior module.

Variance-profile Gaussian measurements form a different regime.  The entries of the sensing matrix are still Gaussian, but their variances are prescribed by a deterministic profile.  Thus Gaussian conditioning remains available, while rotational invariance and coordinate homogeneity are lost.  Conditioning on past algorithmic information no longer leaves an isotropic residual matrix; it leaves a Gaussian field whose covariance depends on the variance profile and on the revealed history.  Consequently, the usual implication from an extrinsic correction to a fresh scalar Gaussian observation is no longer evident.

This leads to the basic question considered in this paper.  For diagonal EP under variance-profile Gaussian measurements, does the standard cavity construction still produce the fresh Gaussian information required by an ordinary scalar SE description?  Or does the variance profile force one to track a richer object than one or two scalar effective variances?  The issue is not merely whether a Gaussian approximation can be written at one iteration, but whether the Gaussian information delivered to the prior module is genuinely fresh relative to the information delivered in previous iterations.

\subsection{Proof Strategy and Main Technical Difficulty}
The question raised above cannot be answered by looking only at the marginal distribution of a single linear-module output.  At iteration \(t\), the effective observation passed to the prior module is generated by the same sensing matrix that has already produced all previous messages.  It is therefore statistically coupled with the past algorithmic history.  The central issue is not whether one can postulate a Gaussian approximation at one step, but whether the current effective observation contains a Gaussian innovation that is fresh relative to the information already used.

This is the same type of dependence that underlies rigorous analyses of AMP and EP-type algorithms.  For independent Gaussian matrices, the conditioning method represents the past dependence through finitely many linear observations of the sensing matrix, after which the Onsager correction leaves a fresh Gaussian component \cite{BayatiMontanari2011}.  For unitarily invariant measurements, a related conditioning program characterizes the residual Haar randomness after partial information has been revealed \cite{Takeuchi2020}.  More recent analyses for rotationally invariant models use spectral invariance, free-cumulant Onsager corrections, or reductions to orthogonal and long-memory message passing to obtain state evolutions \cite{Fan2022,ZhongWangFan2024,VenkataramananKoglerMondelli2022,LiuMa2024,DudejaLiuMa2026}.  These works show that, under suitable invariance structures, the algorithmic correction can restore an effective fresh Gaussian channel.

The variance-profile Gaussian setting requires a different route.  Since the entries of the sensing matrix are Gaussian, conditioning on finitely many linear observations still leaves a Gaussian matrix.  However, the conditional covariance is no longer isotropic.  It is deformed by the variance profile and by the history revealed by the algorithm.  Consequently, the remaining randomness is Gaussian, but it is not automatically exchangeable across coordinates, nor is it automatically fresh across iterations.  This is the main technical distinction from the independent Gaussian and unitarily invariant settings.

Our proof therefore does not attempt to derive a scalar SE directly.  It first identifies the Gaussian object produced by the linear module.  To keep the conditioning step exact, we formulate a predictable-precision version of diagonal EP in which the conditioning field contains only the finite linear history of the sensing matrix.  Nonlinear finite-dimensional resolvent quantities, such as the actual diagonal posterior variances and adaptive precisions generated by the current random matrix, are not included in this conditioning field.  They are connected back to the adaptive recursion later through a separate perturbation argument.

After conditioning on the linear history, the linear module contains the resolvent of a correlated Gaussian matrix with a bounded deterministic deformation.  We control this resolvent through a regularized matrix-Dyson-equation deterministic equivalent for correlated random matrices \cite{AjankiErdosKruger2019,Erdos2019MDE}.  This external random-matrix input is used only at the level needed for the EP-conditioned block linearization: the proof verifies the required bounded-deformation, flat-covariance, and positive-loading hypotheses, but does not reprove the underlying matrix-Dyson-equation local law.

The next step is a coordinate-wise Schur-complement expansion of the linear module.  This expansion separates the instantaneous response to the incoming message from the residual fluctuation left by the conditioned matrix.  The instantaneous response is the component that the diagonal EP cavity is designed to remove.  The remaining fluctuation is Gaussian, because it is a finite-dimensional projection of the conditioned Gaussian matrix, but its covariance is inherited from the variance profile and from the past linear history.  Thus, the linear module naturally produces a coordinate-dependent Gaussian process rather than a fresh scalar Gaussian innovation.

Once this Gaussian process has been identified, the role of the EP cavity can be examined precisely.  The standard diagonal cavity removes the instantaneous contribution of the incoming message.  What remains, however, may still contain the component of the current Gaussian residual that is predictable from previous residuals.  The final step is therefore a finite-dimensional Gaussian regression: the current residual is decomposed into its conditional mean given the past residual history and an orthogonal Gaussian innovation, using standard Gaussian conditioning identities \cite{Anderson2003}.  This regression is the mechanism by which the fresh part of the effective observation is identified.

The memory correction appearing in this decomposition is an oracle state-evolution-level construction: it characterizes the innovation part of the limiting Gaussian process, and is not proposed as a finite-sample algorithm.  Auxiliary concentration and interpolation tools used later in the proof follow standard high-dimensional Gaussian arguments \cite{Chatterjee2014}.  Designing practical estimators of the memory coefficients, or low-complexity approximations of the oracle correction, is left outside the present scope.

\subsection{Related Work}
Classical rigorous SE theory for AMP concerns independent Gaussian or universality classes close to it.  The original compressed-sensing AMP and its rigorous dense-graph analysis are now standard references \cite{DonohoMalekiMontanari2009,BayatiMontanari2011}.  Subsequent work extended this program to more general AMP recursions, spatial coupling, independent but non-identically distributed Gaussian matrices, universality for symmetric matrices, and non-separable nonlinearities \cite{JavanmardMontanari2013,BayatiLelargeMontanari2015,BerthierMontanariNguyen2020,FengVenkataramananRushSamworth2022}.  These results establish scalar or low-dimensional SEs for AMP-type recursions with carefully designed Onsager terms.  The present paper studies a different object: the resolvent-based linear module and cavity operation of diagonal EP under a deterministic variance profile.

Closest to the present matrix model are recent AMP results for variance-profile matrices.  Sparse symmetric variance profiles were analyzed in \cite{Hachem2024}, elliptic non-symmetric matrices were treated in \cite{GueddariHachemNajim2025Elliptic}, and a general non-symmetric model with variance and correlation profiles was developed in \cite{GueddariHachemNajim2026General}.  A leave-one-out approach to non-asymptotic AMP with Gaussian variance profiles was proposed in \cite{BaoHanXu2023}.  Related inhomogeneous-noise models also appear in low-rank matrix estimation with block or entrywise heterogeneity \cite{GuionnetKoKrzakalaZdeborova2022,PakKoKrzakala2023}.  These works show that variance profiles naturally lead to coordinate-dependent Gaussian descriptions, rather than a single scalar variance.  Their focus, however, is the construction of AMP Onsager terms and the corresponding density-evolution equations.  Here the issue is different: diagonal EP already prescribes an extrinsic cavity, and the question is whether that cavity leaves a fresh Gaussian channel once the LMMSE/resolvent linear module is driven by a variance-profile Gaussian matrix.

A second line of work studies message passing beyond independent Gaussian matrices by exploiting orthogonal, unitary, or rotational invariance.  OAMP and VAMP provide state-evolution descriptions for right-orthogonally invariant or related matrix ensembles \cite{MaPing2017,RanganSchniterFletcher2019}.  Rigorous EP dynamics for unitarily invariant measurements were established in \cite{Takeuchi2020}.  Long-memory and spatially coupled OAMP/VAMP-type constructions further clarify convergence and optimality mechanisms under right-orthogonally invariant models \cite{Takeuchi2022LongMemory,Takeuchi2024SpatialOAMP}.  These works are close to diagonal EP in algorithmic structure, but they rely on invariance of singular vectors or Haar-type conditioning.  A deterministic variance profile breaks this structure at the entrywise level.

Rotationally invariant AMP theory has recently become substantially more developed.  Free-cumulant Onsager corrections and spectral-invariance arguments yield AMP dynamics for rotationally invariant matrices \cite{Fan2022}.  Orthogonally invariant ensembles with multivariate nonlinearities and spectral initialization are treated in \cite{ZhongWangFan2024}.  Rotationally invariant generalized linear models are analyzed through RI-GAMP-type constructions \cite{VenkataramananKoglerMondelli2022}, and a unified construction of AMP algorithms for rotationally invariant models is obtained by reducing general iterative templates to long-memory OAMP \cite{LiuMa2024}.  Spiked matrix models with rotationally invariant noise have also been analyzed through OAMP-type dynamics and optimized denoisers \cite{DudejaLiuMa2026}.  These works reinforce the importance of matrix structure in determining the correct Onsager or extrinsic correction, but their invariance assumptions are fundamentally different from an entrywise deterministic variance profile.

Model mismatch, replica predictions, and the relation between algorithmic dynamics and statistical-mechanics free energies form another relevant direction.  VAMP in mismatched generalized linear models with rotation-invariant matrices has been analyzed through macroscopic SE, replica-symmetric free energy, and de Almeida--Thouless instability calculations \cite{TakahashiKabashima2022}.  The effect of sensing-matrix spectra on EP-type recovery for generalized linear inverse problems has also been studied in \cite{MaXuMaleki2024}.  These works are complementary to ours: they emphasize rotationally invariant spectra, replica or macroscopic performance predictions, and fixed-point stability, whereas the present paper studies the finite-time conditional law of the diagonal EP linear module under an entrywise variance profile.

Memory and universality results provide further context.  Memory AMP and sufficient-statistic memory AMP use previous messages as algorithmic variables to improve convergence or enforce state-evolution consistency for non-i.i.d. or right-unitarily invariant systems \cite{LiuHuangKurkoski2022,LiuHuangKurkoski2022SSMAMP}.  The word ``memory'' has a different meaning here.  We do not introduce a long-memory algorithm; memory is the predictable component of the Gaussian residual process left by the standard diagonal EP cavity.  Universality results show that certain AMP state evolutions extend beyond the Gaussian or rotationally invariant ensembles from which they were derived, including generalized Wigner and white-noise ensembles with heterogeneous entrywise variances \cite{WangZhongFan2024}.  Non-asymptotic concentration results are also available for generalized AMP/VAMP-type algorithms with right rotationally invariant designs \cite{CademartoriRush2024}.  These results ask whether a known AMP/VAMP recursion and its SE persist over broader ensembles or at finite sample sizes.  Our question is instead whether the standard diagonal EP cavity itself produces fresh Gaussian information under a variance-profile Gaussian ensemble.

Finally, the proof uses tools from random matrix theory for correlated matrices.  Matrix-Dyson-equation methods provide deterministic equivalents and stability theory for random matrices with non-identically distributed or correlated entries \cite{BaiSilverstein2010,CouilletDebbah2011,AjankiErdosKruger2017,AjankiErdosKruger2019,Erdos2019MDE}.  We use this framework as an external random-matrix input to control the resolvent of the EP-conditioned variance-profile matrix.  The novelty is not a new local law, but the way the conditioned matrix-Dyson-equation response enters the EP dynamics: it determines both the coordinate-dependent linear response and the covariance kernel of the Gaussian residual process.  This is the mechanism through which variance profiles replace the scalar fresh-noise picture by a profile-dependent Gaussian memory process.

\subsection{Contributions}
In contrast to existing scalar state evolutions for independent Gaussian, unitarily invariant, or rotationally invariant models, the results below show that a variance profile changes the nature of the effective Gaussian channel itself.

This paper develops a finite-time large-system analysis of diagonal EP under variance-profile Gaussian measurements.  The contribution is threefold.  First, we identify the conditional Gaussian object generated by the EP linear module.  Conditioning only on the finite linear history of the sensing matrix, we derive a coordinate-wise Schur representation in which the linear-module output is decomposed into an instantaneous response and a residual Gaussian process.  The covariance of this process is described by a conditioned matrix-Dyson-equation response and therefore depends on the variance profile and on the revealed history.

Second, we show that this Gaussian description does not in general reduce to the fresh scalar Gaussian channels that underlie the usual SE picture for independent Gaussian or rotationally invariant models.  The standard diagonal EP cavity removes the instantaneous response of the incoming message, but it need not remove the part of the current residual that is predictable from previous residuals.  Thus, under a general variance profile, the effective state is not characterized only by one or two scalar variances; it is governed by coordinate-dependent response vectors and a profile-dependent covariance kernel.

Third, we characterize the innovation component of this Gaussian process.  By applying finite-dimensional Gaussian regression, we decompose the Schur residual into a predictable memory term and a fresh Gaussian innovation.  This yields an oracle state-evolution-level correction that exposes the fresh channel hidden inside the residual process.  The correction is used as a theoretical decoupling device rather than as a finite-sample algorithm.  A separate perturbation argument connects the predictable-precision dynamics used in the conditioning proof to the adaptive precision recursion of diagonal EP under finite-time regularity conditions.

\subsection{Organization}
Section~\ref{sec:preliminaries} collects notation, empirical regularity, Gaussian conditioning and regression identities, and the regularized MDE input used throughout the proof.  Section~\ref{sec:system_ep} introduces the variance-profile measurement model, formulates the regularized diagonal EP recursion, and derives the error-domain identities that isolate the fresh-cavity question.  Section~\ref{sec:main_results} states the main results: the finite-time Gaussian-process dynamic theorem, the memory defect of the standard diagonal EP cavity, the oracle memory-corrected state evolution, and the precision-replacement theorem.  Section~\ref{sec:proof_general} proves the general dynamic theorem by finite-time induction.  Section~\ref{sec:proof_main_consequences} derives the memory-defect and corrected-state-evolution consequences.  Section~\ref{sec:precision_replacement} proves the branch-wise precision-replacement result connecting the predictable-precision proof dynamics to the adaptive diagonal EP recursion.  Section~\ref{sec:conclusion} concludes the paper.  The appendices contain the auxiliary probability tools, Gaussian conditioning of the EP history, conditioned MDE construction, Schur-kernel construction, empirical Gaussian-law arguments, memory decomposition, regularity closure, and perturbation estimates.

\section{Preliminaries}
\label{sec:preliminaries}

The purpose of this section is to collect the probabilistic and random-matrix
tools used throughout the proof.  The role of this section is analogous to the
preliminary section in rigorous EP analyses for unitarily invariant
measurements: before introducing the algorithmic recursion, we state the
definitions and limit results that will be repeatedly invoked later.  In the
present variance-profile Gaussian setting, the Haar-matrix tools are replaced by
Gaussian conditioning under finite linear observations, a weakly dependent
Gaussian empirical law, and a regularized correlated-Gaussian matrix Dyson
equation (MDE) input.

\subsection{Notation}
\label{subsec:notation}

We use bold lower-case letters for vectors and bold upper-case letters for
matrices.  Greek letters may denote either scalars or diagonal/vector
parameters; when the object is a vector or matrix, we use boldface, e.g.,
\(\boldsymbol\gamma_t\) and \(\boldsymbol\Gamma_t=\diag(\boldsymbol\gamma_t)\).
Scalar coordinates are written without boldface, e.g., \(x_j\),
\(A_{ij}\), and \(\gamma_{t,j}\).  Finite histories obtained by stacking
vectors columnwise are matrices and are therefore denoted by bold upper-case
letters, e.g., \(\boldsymbol Q_t=(\boldsymbol q_0,\ldots,\boldsymbol q_{t-1})\).
Upper-case scalar random variables, such as \(G_{t,j}\) or \(\mathcal Z_{t,j}\),
follow the standard probability convention and are not vector notation.
For a matrix \(\boldsymbol B\),
\(\boldsymbol B^{\mathsf T}\) denotes the transpose,
\(\boldsymbol B^{-1}\) the inverse when it exists,
\(\boldsymbol B^\dagger\) the Moore--Penrose inverse,
\(\|\boldsymbol B\|\) the operator norm, \(\|\boldsymbol B\|_F\) the
Frobenius norm, and \(\operatorname{Tr}(\boldsymbol B)\) the trace.  The
identity matrix of size \(n\) is denoted by \(\boldsymbol I_n\).  For
symmetric matrices, \(\boldsymbol B\preceq\boldsymbol C\) denotes the
Loewner order.  The map \(\operatorname{vec}(\cdot)\) stacks the columns of a
matrix, and \(\odot\) denotes the Hadamard product.  Inner products are
written as \(\boldsymbol a^{\mathsf T}\boldsymbol b\) in the real case
considered here; the complex analogue would use the Hermitian transpose.  For
vectors on the signal and measurement sides, we use the normalized norms
\begin{align*}
    \|\boldsymbol v\|_N^2
    &:=
    \frac1N\|\boldsymbol v\|^2,
    \qquad
    \boldsymbol v\in\mathbb R^N,
    \\
    \|\boldsymbol u\|_M^2
    &:=
    \frac1M\|\boldsymbol u\|^2,
    \qquad
    \boldsymbol u\in\mathbb R^M .
\end{align*}
The symbols \(O_p(1)\), \(o_p(1)\), and \(o_p^{\ell_2}(1)\) always refer to the
large-system limit \(M,N\to\infty\) with \(M/N\to\delta\in(0,\infty)\), while
the iteration horizon is fixed.  The notation \(o_p^{\ell_2}(1)\) is used only
for empirical vector errors.  More precisely, for signal-side arrays
\(\boldsymbol a_N,\boldsymbol b_N\in\mathbb R^N\),
\begin{equation*}
    \boldsymbol a_N=\boldsymbol b_N+o_p^{\ell_2}(1)
\end{equation*}
means
\begin{equation*}
    \frac1N\|\boldsymbol a_N-\boldsymbol b_N\|^2\overset{p}{\longrightarrow}0.
\end{equation*}
For measurement-side arrays, the same notation uses the normalization \(M^{-1}\).
When a coordinate formula such as
\(a_{t,j}=b_{t,j}+o_p^{\ell_2}(1)\) is displayed, it is always understood in
this empirical vector sense, i.e., the residual array
\(\{a_{t,j}-b_{t,j}\}_{j=1}^N\) is \(o_p^{\ell_2}(1)\).  All empirical smallness
statements are finite-time statements; no uniformity in \(t\to\infty\) is
claimed.

The diagonal EP recursion has two modules.  Module A is the linear Gaussian
module, and module B is the separable prior module.  The message
\(\boldsymbol r_{B\to A}^t\) and precision \(\boldsymbol\Gamma_t\) are the
input to the linear module at iteration \(t\), while
\(\boldsymbol r_{A\to B}^t\) and \(\boldsymbol\Pi_t\) are the output of the
linear module.  Unless a superscript is displayed explicitly, the variables in
the dynamic theorem refer to the predictable-precision recursion used in the
conditioning proof.  The superscript \(\mathrm{act}\) is reserved for the
finite-dimensional adaptive recursion, \(\mathrm{orc}\) for the corresponding
predictable recursion, \(\mathrm G\) for Gaussian reference variables, and
\(\mathrm{SE}\) for scalar state-evolution reference variables.

The paper uses three different conditioning environments.  The symbol
\(\mathcal F_t^{\rm lin}\) denotes the predictable linear-history filtration:
it contains the finite linear histories used to condition the Gaussian matrix,
but it does not contain retained diagonal resolvents or actual adaptive cavity
precisions.  The symbol \(\mathcal G_{r,j}\) denotes the column-wise revealed
history used to expose the scalar projections of column \(j\).  The symbol
\(\mathcal P_t\) denotes the MDE-generated reference environment, while
\(\mathcal R_t\) denotes the finite-time regularity event used in the
induction.  These objects serve different purposes and should not be
interchanged.

We use \(\mathcal Z_{t,j}\) for the coordinate of the Schur Gaussian process
generated by the linear module.  This variable is generally not fresh.  Its
Gaussian regression innovation is
\begin{equation*}
    G_{t,j}
    =
    \mathcal Z_{t,j}
    -
    \mathbb E[\mathcal Z_{t,j}\mid \mathcal Z_{<t,j},\mathcal P_t],
\end{equation*}
and \(W_{t,j}\) denotes the standardized version of \(G_{t,j}\) when the
innovation variance is positive.

For reference, Table~\ref{tab:main_notation} lists the main symbols used in the
paper.
\begin{table}[H]
    \centering
    \caption{Main notation.}
    \label{tab:main_notation}
    \begin{tabularx}{\linewidth}{@{}lX@{}}
        \toprule
        Symbol & Meaning \\
        \midrule
        \(\boldsymbol A\) & Variance-profile Gaussian measurement matrix. \\
        \(\boldsymbol S,\boldsymbol S_j\) & Variance profile and the profile matrix of column \(j\). \\
        \(\boldsymbol q_t,\boldsymbol m_t,\boldsymbol h_t,\boldsymbol p_t\) & Error variables of the prior-to-linear message, linear belief, linear-to-prior cavity, and prior belief. \\
        \(\boldsymbol u_t\) & Measurement residual \(\boldsymbol w-\boldsymbol A\boldsymbol m_t\). \\
        \(T_{t,j}\) & MDE deterministic equivalent of the retained linear variance \(d_{t,j}\). \\
        \(\Theta_j^{r,s}\) & Conditioned MDE two-resolvent response associated with the \(j\)th column covariance. \\
        \(\zeta_j^{r,s}\) & Covariance kernel of the Schur Gaussian process. \\
        \(\mu_{t,j},\tau_{t,j}\) & Predictable memory and fresh innovation variance of the EP cavity. \\
        \bottomrule
    \end{tabularx}
\end{table}

\begin{table}[H]
    \centering
    \caption{Conditioning environments and superscripts.}
    \label{tab:environment_notation}
    \begin{tabularx}{\linewidth}{@{}lX@{}}
        \toprule
        Symbol & Meaning \\
        \midrule
        \(\mathcal F_t^{\rm lin}\) & Predictable linear-history filtration used for Gaussian conditioning. \\
        \(\mathcal G_{r,j}\) & Column-wise revealed history for scalar projections of column \(j\). \\
        \(\mathcal P_t\) & MDE-generated reference environment. \\
        \(\mathcal R_t\) & Regularity event for the finite-time induction. \\
        \({\rm act}\), \({\rm orc}\) & Actual adaptive recursion and predictable/oracle recursion. \\
        \({\rm G}\), \({\rm SE}\) & Gaussian-kernel reference and scalar state-evolution reference. \\
        \bottomrule
    \end{tabularx}
\end{table}

\subsection{Definitions}
\label{subsec:prelim_definitions}

We first define the random matrix class studied in this paper.

\begin{definition}[Variance-profile Gaussian matrix]
\label{def:vpg}
For each pair $(M,N)$, let
\begin{equation*}
    \boldsymbol S=\boldsymbol S_{M,N}
    =
    (s_{ij})_{1\le i\le M,\,1\le j\le N}
    \in \mathbb R_+^{M\times N}
\end{equation*}
be a deterministic variance profile.  A random matrix
$\boldsymbol A\in\mathbb R^{M\times N}$ is called a
\emph{variance-profile Gaussian matrix} with profile $\boldsymbol S$ if
\begin{equation*}
    A_{ij}
    =
    \sqrt{\frac{s_{ij}}{M}}\,Z_{ij},
    \qquad
    Z_{ij}\overset{\mathrm{i.i.d.}}{\sim}\mathcal N(0,1).
\end{equation*}
The profile is called \emph{uniformly elliptic} if there exist constants
$0<s_{\min}\le s_{\max}<\infty$, independent of $M$ and $N$, such that
\begin{equation*}
    s_{\min}\le s_{ij}\le s_{\max}
\end{equation*}
for all $i,j,M,N$.  Throughout the paper, the proportional asymptotic regime is
assumed:
\begin{equation*}
    \frac{M}{N}\to \delta\in(0,\infty).
\end{equation*}
\end{definition}

The normalization $1/M$ in Definition~\ref{def:vpg} keeps the column norms of
$\boldsymbol A$ of constant order.  If $s_{ij}\equiv 1$, the model reduces to the
standard i.i.d. Gaussian sensing matrix.  In general, no separability, low-rank
structure, or isotropy is assumed for $\boldsymbol S$.

\begin{definition}[Profile matrices]
\label{def:profile_matrices}
For each column index $j$, define the measurement-side profile matrix
\begin{equation*}
    \boldsymbol S_j
    :=
    \operatorname{diag}(s_{1j},s_{2j},\ldots,s_{Mj})
    \in\mathbb R^{M\times M}.
\end{equation*}
Under the uniformly elliptic profile condition,
\begin{equation*}
    s_{\min}\boldsymbol I_M
    \preceq
    \boldsymbol S_j
    \preceq
    s_{\max}\boldsymbol I_M
\end{equation*}
for every $j$.
\end{definition}

The matrices $\boldsymbol S_j$ will be used to describe the profile-weighted
residual covariance
\begin{equation*}
    \frac{1}{M}\boldsymbol u_r^{\mathsf T}\boldsymbol S_j\boldsymbol u_s,
\end{equation*}
which replaces the ordinary residual covariance appearing in isotropic models.

We next recall pseudo-Lipschitz functions, following the convention commonly used in rigorous AMP state-evolution analyses \cite{BayatiMontanari2011}.

\begin{definition}[Pseudo-Lipschitz functions]
\label{def:pl}
For $k\ge 1$, a function $\psi:\mathbb R^d\to\mathbb R$ is said to be
pseudo-Lipschitz of order $k$, written $\psi\in \PL(k)$, if there exists a
constant $L>0$ such that
\begin{equation*}
    |\psi(\boldsymbol x)-\psi(\boldsymbol y)|
    \le
    L\|\boldsymbol x-\boldsymbol y\|
    \left(
        1+\|\boldsymbol x\|^{k-1}+\|\boldsymbol y\|^{k-1}
    \right)
\end{equation*}
for all $\boldsymbol x,\boldsymbol y\in\mathbb R^d$.
\end{definition}

A pseudo-Lipschitz function of order one is Lipschitz continuous.  Moreover,
every $\psi\in \PL(k)$ has polynomial growth of order $k$, as shown in
Proposition~\ref{prop:poly_growth} below.

We also introduce a convenient notation for empirical smallness.

\begin{definition}[Empirical $\ell_2$ smallness and admissible arrays]
\label{def:admissible}
For signal-side vectors $\boldsymbol r_N\in\mathbb R^N$, we write
\begin{equation*}
    \boldsymbol r_N=o_p^{\ell_2}(1)
\end{equation*}
if
\begin{equation*}
    \frac1N\|\boldsymbol r_N\|^2
    \overset{p}{\longrightarrow}0.
\end{equation*}
For measurement-side vectors $\boldsymbol r_M\in\mathbb R^M$, the same notation
means
\begin{equation*}
    \frac1M\|\boldsymbol r_M\|^2
    \overset{p}{\longrightarrow}0.
\end{equation*}
The side is always clear from the dimension of the vector under consideration.

A sequence of signal-side vectors $\boldsymbol v_N\in\mathbb R^N$ is called
\emph{admissible} if
\begin{equation*}
    \frac1N\|\boldsymbol v_N\|^2=O_p(1),
    \qquad
    \frac{\|\boldsymbol v_N\|_\infty}{\sqrt N}
    \overset{p}{\longrightarrow}0.
\end{equation*}
A sequence of measurement-side vectors $\boldsymbol u_M\in\mathbb R^M$ is called
\emph{admissible} if
\begin{equation*}
    \frac1M\|\boldsymbol u_M\|^2=O_p(1),
    \qquad
    \frac{\|\boldsymbol u_M\|_\infty}{\sqrt M}
    \overset{p}{\longrightarrow}0.
\end{equation*}
A finite collection of signal-side and measurement-side vectors is called
\emph{regular} if all its elements are admissible, its normalized Gram matrices
are tight, and asymptotically redundant directions are removed or treated by
Moore--Penrose inverses.
\end{definition}

The regularity notion in Definition~\ref{def:admissible} is used only for
finite-time histories.  No statement in this paper concerns the limit
$t\to\infty$.

\begin{definition}[Fresh Gaussian innovation and memory defect]
\label{def:freshness}
Fix a coordinate $j$ and an iteration $t$.  Let $\mathcal P_t$ be an auxiliary
environment and let
\begin{equation*}
    \mathcal Z_{<t,j}
    :=(\mathcal Z_{0,j},\ldots,\mathcal Z_{t-1,j})^{\mathsf T}
\end{equation*}
be the past Gaussian history.  A centered Gaussian residual $\mathcal Z_{t,j}$
is called \emph{fresh relative to the past} if, conditionally on $\mathcal P_t$,
\begin{equation*}
    \mathbb E[\mathcal Z_{t,j}\mid \mathcal Z_{<t,j},\mathcal P_t]=0
\end{equation*}
and $\mathcal Z_{t,j}$ is independent of $\mathcal Z_{<t,j}$.  Equivalently, for
a jointly Gaussian history, freshness is the conditional orthogonality condition
\begin{equation*}
    \operatorname{Cov}(\mathcal Z_{t,j},\mathcal Z_{<t,j}\mid\mathcal P_t)=0.
\end{equation*}
In general, the \emph{memory defect} is the predictable Gaussian-regression term
\begin{equation*}
    \mathfrak m_{t,j}
    :=
    \mathbb E[\mathcal Z_{t,j}\mid \mathcal Z_{<t,j},\mathcal P_t].
\end{equation*}
Thus the current residual is fresh if and only if its memory defect vanishes.
When the EP cavity is rescaled by a deterministic factor, the corresponding
memory defect is rescaled by the same factor; this is the origin of
$\mu_{t,j}$ in Section~\ref{subsec:fresh_awgn_principle}.
\end{definition}

\subsection{Results}
\label{subsec:prelim_results}

We now state the probabilistic and random-matrix results used later.  Standard Gaussian conditioning and regression identities are recalled in the exact form used by the proof; see, for example, \cite[Ch. 2]{Anderson2003}.

\begin{proposition}[Polynomial growth bound]
\label{prop:poly_growth}
For any $k\ge 1$, there exists a constant $C_k>0$ such that
\begin{equation*}
    (a+b)^k\le C_k(a^k+b^k),
    \qquad a,b\ge0.
\end{equation*}
Consequently, if $\psi\in \PL(k)$, then there exists a constant $C_\psi>0$ such
that
\begin{equation*}
    |\psi(\boldsymbol x)|
    \le
    C_\psi(1+\|\boldsymbol x\|^k)
\end{equation*}
for all $\boldsymbol x$.
\end{proposition}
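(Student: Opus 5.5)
The plan has two parts: first the elementary scalar inequality, then the growth bound for $\psi\in\PL(k)$, obtained by comparing $\boldsymbol x$ with the origin.

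\emph{Scalar inequality.} For $a,b\ge0$ we have $a+b\le 2\max(a,b)$. Hence
\[
(a+b)^k\le 2^k\max(a,b)^k\le 2^k(a^k+b^k),
\]
so $C_k=2^k$ works. A sharper constant $C_k=2^{k-1}$ follows from convexity of $u\mapsto u^k$ on $[0,\infty)$ for $k\ge1$, via $\bigl(\tfrac{a+b}{2}\bigr)^k\le \tfrac12(a^k+b^k)$. Only the existence of some constant is needed, so either choice suffices.

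\emph{Growth bound.} Apply Definition~\ref{def:pl} with $\boldsymbol y=\boldsymbol 0$:
\[
|\psi(\boldsymbol x)|\le|\psi(\boldsymbol 0)|+L\|\boldsymbol x\|\left(1+\|\boldsymbol x\|^{k-1}\right)
=|\psi(\boldsymbol 0)|+L\|\boldsymbol x\|+L\|\boldsymbol x\|^k.
\]
When $k=1$ the bracket is understood as $1+1+1=3$, since $\|\cdot\|^0=1$, and the bound reads $|\psi(\boldsymbol x)|\le|\psi(\boldsymbol 0)|+3L\|\boldsymbol x\|$.

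It then remains to absorb the middle term. For $k\ge1$ we have $\|\boldsymbol x\|\le 1+\|\boldsymbol x\|^k$: if $\|\boldsymbol x\|\le1$ the left side is at most $1$, and otherwise $\|\boldsymbol x\|\le\|\boldsymbol x\|^k$. Substituting gives
\[
|\psi(\boldsymbol x)|\le \bigl(|\psi(\boldsymbol 0)|+3L\bigr)\bigl(1+\|\boldsymbol x\|^k\bigr),
\]
which covers the $k=1$ case as well. So we take $C_\psi=|\psi(\boldsymbol 0)|+3L$, which is finite because $\psi$ is real-valued.

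There is no real obstacle here. The only point needing care is the convention $\|\boldsymbol x\|^{0}=1$ when $k=1$, together with the elementary absorption $\|\boldsymbol x\|\le 1+\|\boldsymbol x\|^k$. The scalar inequality $(a+b)^k\le C_k(a^k+b^k)$ is recorded separately because it will be used later to split pseudo-Lipschitz bounds for sums of Gaussian and error terms.
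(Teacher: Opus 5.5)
Your proposal is correct and follows essentially the same route as the paper: convexity of $u\mapsto u^k$ gives $C_k=2^{k-1}$, and taking $\boldsymbol y=\boldsymbol 0$ in the pseudo-Lipschitz bound followed by the absorption $\|\boldsymbol x\|\le 1+\|\boldsymbol x\|^k$ gives the growth bound. Your explicit treatment of the $k=1$ convention ($\|\cdot\|^0=1$, bracket equal to $3$) is a small extra care that the paper's proof glosses over, but it does not change the argument.
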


\begin{IEEEproof}
For $k\ge 1$, the convexity of $x\mapsto x^k$ on $\mathbb R_+$ gives
\begin{equation*}
    \left(\frac{a+b}{2}\right)^k
    \le
    \frac{a^k+b^k}{2}.
\end{equation*}
Hence
\begin{equation*}
    (a+b)^k
    \le
    2^{k-1}(a^k+b^k).
\end{equation*}
Thus the first claim holds with $C_k=2^{k-1}$.

For the second claim, apply Definition~\ref{def:pl} with $\boldsymbol y=\boldsymbol 0$:
\begin{equation*}
    |\psi(\boldsymbol x)|
    \le
    |\psi(\boldsymbol 0)|
    +
    L\|\boldsymbol x\|
    \left(
        1+\|\boldsymbol x\|^{k-1}
    \right).
\end{equation*}
Since $\|\boldsymbol x\|\le 1+\|\boldsymbol x\|^k$ for all $k\ge1$, there is a constant
$C_\psi>0$ such that
\begin{equation*}
    |\psi(\boldsymbol x)|
    \le
    C_\psi(1+\|\boldsymbol x\|^k).
\end{equation*}
\end{IEEEproof}

\begin{lemma}[Gaussian conditioning under finite linear observations]
\label{lem:gaussian_conditioning}
Let
\begin{equation*}
    \boldsymbol A\in\mathbb R^{M\times N}
\end{equation*}
be a variance-profile Gaussian matrix.  Define
\begin{equation*}
    \boldsymbol A_{\mathrm{vec}}
    :=
    \operatorname{vec}(\boldsymbol A)
    \in\mathbb R^{MN}.
\end{equation*}
Then
\begin{equation*}
    \boldsymbol A_{\mathrm{vec}}
    \sim
    \mathcal N(\boldsymbol 0,\boldsymbol\Sigma_A),
\end{equation*}
where
\begin{equation*}
    \boldsymbol\Sigma_A
    =
    \operatorname{diag}\left(\frac{s_{ij}}{M}\right)_{i,j}.
\end{equation*}
Let $\boldsymbol L$ be any deterministic matrix with finitely many rows, and
consider the finite linear observation
\begin{equation*}
    \boldsymbol L\boldsymbol A_{\mathrm{vec}}=\boldsymbol B.
\end{equation*}
Then the conditional distribution of $\boldsymbol A_{\mathrm{vec}}$ given
$\boldsymbol L\boldsymbol A_{\mathrm{vec}}=\boldsymbol B$ is Gaussian and can be written as
\begin{equation*}
    \boldsymbol A_{\mathrm{vec}}\mid
    \boldsymbol L\boldsymbol A_{\mathrm{vec}}=\boldsymbol B
    =
    \boldsymbol A_{\parallel,\mathrm{vec}}
    +
    \boldsymbol A_{\perp,\mathrm{vec}},
\end{equation*}
where
\begin{equation*}
    \boldsymbol A_{\parallel,\mathrm{vec}}
    =
    \boldsymbol\Sigma_A\boldsymbol L^{\mathsf T}
    (\boldsymbol L\boldsymbol\Sigma_A\boldsymbol L^{\mathsf T})^\dagger
    \boldsymbol B,
\end{equation*}
and
\begin{equation*}
    \boldsymbol A_{\perp,\mathrm{vec}}
    \sim
    \mathcal N(\boldsymbol 0,\boldsymbol\Sigma_{A|L})
\end{equation*}
with
\begin{equation*}
    \boldsymbol\Sigma_{A|L}
    =
    \boldsymbol\Sigma_A
    -
    \boldsymbol\Sigma_A\boldsymbol L^{\mathsf T}
    (\boldsymbol L\boldsymbol\Sigma_A\boldsymbol L^{\mathsf T})^\dagger
    \boldsymbol L\boldsymbol\Sigma_A.
\end{equation*}
Moreover,
\begin{equation*}
    \boldsymbol 0
    \preceq
    \boldsymbol\Sigma_{A|L}
    \preceq
    \boldsymbol\Sigma_A.
\end{equation*}
\end{lemma}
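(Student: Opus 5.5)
The plan is to reduce the statement to the standard conditioning formula for a jointly Gaussian vector with a possibly singular observation covariance. First, the law of \(\boldsymbol A_{\mathrm{vec}}\) follows directly from Definition~\ref{def:vpg}. The entries \(A_{ij}=\sqrt{s_{ij}/M}\,Z_{ij}\) are independent, centered and Gaussian with variance \(s_{ij}/M\). Hence \(\boldsymbol A_{\mathrm{vec}}\) is a centered Gaussian vector with diagonal covariance \(\boldsymbol\Sigma_A=\operatorname{diag}(s_{ij}/M)\), indexed in the column-stacking order of \(\operatorname{vec}\).

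For the conditional law, I would set \(\boldsymbol K:=\boldsymbol\Sigma_A\boldsymbol L^{\mathsf T}(\boldsymbol L\boldsymbol\Sigma_A\boldsymbol L^{\mathsf T})^\dagger\) and \(\boldsymbol A_{\perp,\mathrm{vec}}:=\boldsymbol A_{\mathrm{vec}}-\boldsymbol K\boldsymbol L\boldsymbol A_{\mathrm{vec}}\).

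The key step is to show that \(\boldsymbol A_{\perp,\mathrm{vec}}\) is uncorrelated with \(\boldsymbol L\boldsymbol A_{\mathrm{vec}}\). Writing \(\boldsymbol P:=\boldsymbol L\boldsymbol\Sigma_A\boldsymbol L^{\mathsf T}\), the cross-covariance is \(\boldsymbol\Sigma_A\boldsymbol L^{\mathsf T}-\boldsymbol K\boldsymbol P=\boldsymbol\Sigma_A\boldsymbol L^{\mathsf T}(\boldsymbol I-\boldsymbol P^\dagger\boldsymbol P)\). This vanishes because the range of \(\boldsymbol L\boldsymbol\Sigma_A\) equals the range of \(\boldsymbol P\). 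To see this, write \(\boldsymbol P=\boldsymbol F\boldsymbol F^{\mathsf T}\) with \(\boldsymbol F=\boldsymbol L\boldsymbol\Sigma_A^{1/2}\), and note that \(\operatorname{range}(\boldsymbol F)=\operatorname{range}(\boldsymbol F\boldsymbol F^{\mathsf T})\). Since \(\boldsymbol P^\dagger\boldsymbol P\) is the orthogonal projector onto \(\operatorname{range}(\boldsymbol P)\), we get \(\boldsymbol\Sigma_A\boldsymbol L^{\mathsf T}(\boldsymbol I-\boldsymbol P^\dagger\boldsymbol P)=\boldsymbol 0\).

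I expect this handling of a rank-deficient \(\boldsymbol L\) to be the only real obstacle. Everything else is routine, and this point is exactly where the Moore--Penrose inverse replaces an ordinary inverse.

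Since \((\boldsymbol A_{\perp,\mathrm{vec}},\boldsymbol L\boldsymbol A_{\mathrm{vec}})\) is a linear image of a Gaussian vector, it is jointly Gaussian. Being uncorrelated, the two blocks are independent. Consequently, conditioning on \(\boldsymbol L\boldsymbol A_{\mathrm{vec}}=\boldsymbol B\) leaves the law of \(\boldsymbol A_{\perp,\mathrm{vec}}\) unchanged, and the conditional law is that of \(\boldsymbol K\boldsymbol B+\boldsymbol A_{\perp,\mathrm{vec}}\). This is the claimed decomposition with \(\boldsymbol A_{\parallel,\mathrm{vec}}=\boldsymbol K\boldsymbol B\). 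The statement is meaningful for \(\boldsymbol B\) in the support of \(\boldsymbol L\boldsymbol A_{\mathrm{vec}}\), namely \(\operatorname{range}(\boldsymbol P)\), and it defines a regular conditional distribution.

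The covariance of \(\boldsymbol A_{\perp,\mathrm{vec}}\) is \(\boldsymbol\Sigma_A-\boldsymbol K\boldsymbol L\boldsymbol\Sigma_A-\boldsymbol\Sigma_A\boldsymbol L^{\mathsf T}\boldsymbol K^{\mathsf T}+\boldsymbol K\boldsymbol P\boldsymbol K^{\mathsf T}\). Using \(\boldsymbol P^\dagger\boldsymbol P\boldsymbol P^\dagger=\boldsymbol P^\dagger\) and the symmetry of \(\boldsymbol P^\dagger\), this reduces to \(\boldsymbol\Sigma_{A|L}\).

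For the Loewner bounds, \(\boldsymbol\Sigma_{A|L}\succeq\boldsymbol 0\) holds because it is a covariance matrix. The upper bound \(\boldsymbol\Sigma_{A|L}\preceq\boldsymbol\Sigma_A\) holds because the subtracted term \(\boldsymbol\Sigma_A\boldsymbol L^{\mathsf T}\boldsymbol P^\dagger\boldsymbol L\boldsymbol\Sigma_A\) is positive semidefinite, since \(\boldsymbol P^\dagger\succeq\boldsymbol 0\) whenever \(\boldsymbol P\succeq\boldsymbol 0\).
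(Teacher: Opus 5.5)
Your proof is correct and follows the same Gaussian-regression route as the paper. The differences are minor. You verify the Moore--Penrose conditioning formula directly, via the identity $\operatorname{range}(\boldsymbol L\boldsymbol\Sigma_A)=\operatorname{range}(\boldsymbol L\boldsymbol\Sigma_A\boldsymbol L^{\mathsf T})$ and the independence of the regression residual, where the paper simply cites it from Anderson. You also obtain $\boldsymbol\Sigma_{A|L}\preceq\boldsymbol\Sigma_A$ from positive semidefiniteness of the pseudo-inverse, rather than from the paper's projector $\boldsymbol C^{\mathsf T}(\boldsymbol C\boldsymbol C^{\mathsf T})^\dagger\boldsymbol C$ with $\boldsymbol C=\boldsymbol L\boldsymbol\Sigma_A^{1/2}$.
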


\begin{IEEEproof}
Let
\begin{equation*}
    \boldsymbol Y=\boldsymbol L\boldsymbol A_{\mathrm{vec}}.
\end{equation*}
Since $\boldsymbol A_{\mathrm{vec}}$ is Gaussian, the pair
$(\boldsymbol A_{\mathrm{vec}},\boldsymbol Y)$ is jointly Gaussian.  Its covariance
matrices are
\begin{equation*}
    \operatorname{Cov}(\boldsymbol A_{\mathrm{vec}},\boldsymbol Y)
    =
    \boldsymbol\Sigma_A\boldsymbol L^{\mathsf T},
\end{equation*}
and
\begin{equation*}
    \operatorname{Cov}(\boldsymbol Y)
    =
    \boldsymbol L\boldsymbol\Sigma_A\boldsymbol L^{\mathsf T}.
\end{equation*}
The standard Gaussian regression formula for a jointly Gaussian vector
\cite{Anderson2003}, with the Moore--Penrose inverse used to allow possible
rank deficiency, gives the conditional mean
\begin{equation*}
    \mathbb E[
        \boldsymbol A_{\mathrm{vec}}
        \mid
        \boldsymbol Y=\boldsymbol B
    ]
    =
    \boldsymbol\Sigma_A\boldsymbol L^{\mathsf T}
    (\boldsymbol L\boldsymbol\Sigma_A\boldsymbol L^{\mathsf T})^\dagger
    \boldsymbol B.
\end{equation*}
The conditional covariance is
\begin{equation*}
    \boldsymbol\Sigma_A
    -
    \boldsymbol\Sigma_A\boldsymbol L^{\mathsf T}
    (\boldsymbol L\boldsymbol\Sigma_A\boldsymbol L^{\mathsf T})^\dagger
    \boldsymbol L\boldsymbol\Sigma_A.
\end{equation*}
This proves the conditional Gaussian representation.

It remains to show the covariance contraction.  Let
\begin{equation*}
    \boldsymbol C
    =
    \boldsymbol L\boldsymbol\Sigma_A^{1/2}.
\end{equation*}
Then
\begin{equation*}
\begin{aligned}
    \boldsymbol\Sigma_{A|L}
    &=
    \boldsymbol\Sigma_A^{1/2}
    \left[
        \boldsymbol I
        -
        \boldsymbol C^{\mathsf T}
        (\boldsymbol C\boldsymbol C^{\mathsf T})^\dagger
        \boldsymbol C
    \right]
    \boldsymbol\Sigma_A^{1/2}.
\end{aligned}
\end{equation*}
The matrix
\begin{equation*}
    \boldsymbol P_{\boldsymbol C}
    :=
    \boldsymbol C^{\mathsf T}
    (\boldsymbol C\boldsymbol C^{\mathsf T})^\dagger
    \boldsymbol C
\end{equation*}
is the orthogonal projection onto the row space of $\boldsymbol C$.  Hence
$\boldsymbol 0\preceq \boldsymbol P_{\boldsymbol C}\preceq \boldsymbol I$, and consequently
\begin{equation*}
    \boldsymbol 0
    \preceq
    \boldsymbol I-\boldsymbol P_{\boldsymbol C}
    \preceq
    \boldsymbol I.
\end{equation*}
Therefore
\begin{equation*}
    \boldsymbol 0
    \preceq
    \boldsymbol\Sigma_{A|L}
    \preceq
    \boldsymbol\Sigma_A.
\end{equation*}
\end{IEEEproof}

\begin{corollary}[Flat covariance of the conditioned residual]
\label{cor:flat_cov}
Let
\begin{equation*}
    \boldsymbol A\mid
    \boldsymbol L\operatorname{vec}(\boldsymbol A)=\boldsymbol B
    =
    \boldsymbol A_{\parallel}
    +
    \boldsymbol A_{\perp}
\end{equation*}
be the matrix form of Lemma~\ref{lem:gaussian_conditioning}.  Then, for any
deterministic vectors $\boldsymbol p\in\mathbb R^M$ and
$\boldsymbol q\in\mathbb R^N$,
\begin{equation*}
    \operatorname{Var}\!\left(
        \boldsymbol p^{\mathsf T}\boldsymbol A_\perp \boldsymbol q\mid
        \boldsymbol L\operatorname{vec}(\boldsymbol A)=\boldsymbol B
    \right)
    \le
    \frac{s_{\max}}{M}\|\boldsymbol p\|^2\|\boldsymbol q\|^2.
\end{equation*}
\end{corollary}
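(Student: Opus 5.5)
The plan is to write the bilinear form as a linear functional of the vectorized matrix and then apply the Loewner bound $\boldsymbol\Sigma_{A|L}\preceq\boldsymbol\Sigma_A$ from Lemma~\ref{lem:gaussian_conditioning}. For deterministic $\boldsymbol p,\boldsymbol q$ we have
\begin{equation*}
    \boldsymbol p^{\mathsf T}\boldsymbol A_\perp\boldsymbol q
    =
    \sum_{i,j}p_i q_j (A_\perp)_{ij}
    =
    (\boldsymbol q\otimes\boldsymbol p)^{\mathsf T}\boldsymbol A_{\perp,\mathrm{vec}},
\end{equation*}
which uses the column-stacking convention of $\operatorname{vec}$: entry $(i,j)$ sits at position $(j-1)M+i$, matching $(\boldsymbol q\otimes\boldsymbol p)_{(j-1)M+i}=q_jp_i$. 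Conditionally on $\boldsymbol L\operatorname{vec}(\boldsymbol A)=\boldsymbol B$, the term $\boldsymbol A_\parallel$ is fixed (deterministic given $\boldsymbol B$), while $\boldsymbol A_{\perp,\mathrm{vec}}\sim\mathcal N(\boldsymbol 0,\boldsymbol\Sigma_{A|L})$. Hence the conditional variance is exactly the quadratic form $\boldsymbol v^{\mathsf T}\boldsymbol\Sigma_{A|L}\boldsymbol v$ with $\boldsymbol v=\boldsymbol q\otimes\boldsymbol p$.

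Next, I will use the contraction $\boldsymbol\Sigma_{A|L}\preceq\boldsymbol\Sigma_A$ from Lemma~\ref{lem:gaussian_conditioning} to get $\boldsymbol v^{\mathsf T}\boldsymbol\Sigma_{A|L}\boldsymbol v\le\boldsymbol v^{\mathsf T}\boldsymbol\Sigma_A\boldsymbol v$. Since $\boldsymbol\Sigma_A$ is diagonal with entries $s_{ij}/M$, this gives
\begin{equation*}
    \boldsymbol v^{\mathsf T}\boldsymbol\Sigma_A\boldsymbol v
    =
    \sum_{i,j}\frac{s_{ij}}{M}p_i^2q_j^2
    \le
    \frac{s_{\max}}{M}\sum_i p_i^2\sum_j q_j^2
    =
    \frac{s_{\max}}{M}\|\boldsymbol p\|^2\|\boldsymbol q\|^2,
\end{equation*}
where only the upper half of the uniform ellipticity in Definition~\ref{def:vpg} is used.

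There is no real obstacle here. The only points needing care are two bookkeeping checks. First, the Kronecker ordering must match the $\operatorname{vec}$ convention, and either ordering gives the same value of the diagonal quadratic form anyway. Second, the conditional law of $\boldsymbol A_\perp$ is taken exactly as stated in Lemma~\ref{lem:gaussian_conditioning}, so the possibly rank-deficient Moore--Penrose case needs no separate treatment. The bound is uniform in $\boldsymbol B$ and $\boldsymbol L$, which is the ``flat'' property used later.
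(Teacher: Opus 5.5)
Your proof is correct and follows the same route as the paper. The paper writes $\boldsymbol p^{\mathsf T}\boldsymbol A_\perp\boldsymbol q=\boldsymbol H^{\mathsf T}\operatorname{vec}(\boldsymbol A_\perp)$ with $\boldsymbol H=\operatorname{vec}(\boldsymbol p\boldsymbol q^{\mathsf T})=\boldsymbol q\otimes\boldsymbol p$, then applies $\boldsymbol\Sigma_{A|L}\preceq\boldsymbol\Sigma_A$ and the diagonal form of $\boldsymbol\Sigma_A$ exactly as you do. Your side remark that either Kronecker ordering gives the same value is not accurate, since under column stacking only $\boldsymbol q\otimes\boldsymbol p$ represents the bilinear form, but you use that ordering, so the argument is unaffected.
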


\begin{IEEEproof}
Let
\begin{equation*}
    \boldsymbol H
    :=
    \operatorname{vec}(\boldsymbol P\boldsymbol Q^{\mathsf T})
    \in\mathbb R^{MN}.
\end{equation*}
Then
\begin{equation*}
    \boldsymbol P^{\mathsf T}\boldsymbol A_\perp \boldsymbol Q
    =
    \boldsymbol H^{\mathsf T}\operatorname{vec}(\boldsymbol A_\perp).
\end{equation*}
By Lemma~\ref{lem:gaussian_conditioning},
\begin{equation*}
    \boldsymbol\Sigma_{A|L}\preceq \boldsymbol\Sigma_A.
\end{equation*}
Therefore,
\begin{equation*}
\begin{aligned}
    \operatorname{Var}\!\left(
        \boldsymbol p^{\mathsf T}\boldsymbol A_\perp \boldsymbol q\mid
        \boldsymbol L\operatorname{vec}(\boldsymbol A)=\boldsymbol B
    \right)
    &=
    \boldsymbol H^{\mathsf T}\boldsymbol\Sigma_{A|L}\boldsymbol H       \\
    &\le
    \boldsymbol H^{\mathsf T}\boldsymbol\Sigma_A\boldsymbol H            \\
    &=
    \sum_{i=1}^M\sum_{j=1}^N
    \frac{s_{ij}}{M}P_i^2Q_j^2                                  \\
    &\le
    \frac{s_{\max}}{M}
    \left(\sum_{i=1}^M P_i^2\right)
    \left(\sum_{j=1}^N Q_j^2\right)                              \\
    &=
    \frac{s_{\max}}{M}\|\boldsymbol p\|^2\|\boldsymbol q\|^2.
\end{aligned}
\end{equation*}
\end{IEEEproof}

The next result is a weak law for empirical averages of weakly dependent
Gaussian arrays.  It will be invoked after an empirical Gaussian-law argument has shown that
the average off-diagonal covariance is negligible.

\begin{theorem}[Weakly dependent Gaussian empirical law]
\label{thm:weak_gaussian_pl}
Let
\begin{equation*}
    \{\boldsymbol\xi_{j,N}\in\mathbb R^d:1\le j\le N\}
\end{equation*}
be a triangular array of jointly Gaussian random vectors.  Suppose that, for
some $\epsilon>0$,
\begin{equation*}
    \sup_{j,N}\mathbb E\|\boldsymbol\xi_{j,N}\|^{4+\epsilon}<\infty,
\end{equation*}
and that the average off-diagonal covariance vanishes:
\begin{equation*}
    \frac1{N^2}
    \sum_{j\ne k}
    \left\|
        \operatorname{Cov}(\boldsymbol\xi_{j,N},\boldsymbol\xi_{k,N})
    \right\|_{\mathrm{op}}
    \longrightarrow 0 .
\end{equation*}
Then, for every $\psi\in \PL(2)$,
\begin{equation*}
    \frac1N\sum_{j=1}^N
    \psi(\boldsymbol\xi_{j,N})
    -
    \frac1N\sum_{j=1}^N
    \mathbb E[\psi(\boldsymbol\xi_{j,N})]
    \overset{p}{\longrightarrow}0.
\end{equation*}
\end{theorem}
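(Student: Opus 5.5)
The plan is a second-moment argument. Set
\begin{equation*}
    \Delta_N:=\frac1N\sum_{j=1}^N\bigl(\psi(\boldsymbol\xi_{j,N})-\mathbb E[\psi(\boldsymbol\xi_{j,N})]\bigr),
\end{equation*}
and show that \(\mathbb E[\Delta_N^2]\to0\); Chebyshev's inequality then gives the claim. Expanding the square splits the variance into a diagonal part \(N^{-2}\sum_j\operatorname{Var}(\psi(\boldsymbol\xi_{j,N}))\) and an off-diagonal part \(N^{-2}\sum_{j\ne k}\operatorname{Cov}(\psi(\boldsymbol\xi_{j,N}),\psi(\boldsymbol\xi_{k,N}))\).

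\textbf{Diagonal part.} By Proposition~\ref{prop:poly_growth}, \(\psi^2(\boldsymbol x)\le 2C_\psi^2(1+\|\boldsymbol x\|^4)\). The uniform \((4+\epsilon)\)-moment bound therefore gives \(\sup_{j,N}\operatorname{Var}(\psi(\boldsymbol\xi_{j,N}))\le K<\infty\), so the diagonal part is at most \(K/N\to0\).

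\textbf{Off-diagonal part.} The key estimate is a covariance inequality for functions of jointly Gaussian vectors. First I will mollify: set \(\psi_\eta=\psi*\varphi_\eta\), with \(\varphi_\eta\) a standard mollifier supported in the ball of radius \(\eta\).

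Two properties of \(\psi_\eta\) are needed.
\begin{itemize}
\item It approximates \(\psi\) uniformly in \(N\). The \(\PL(2)\) condition gives \(|\psi_\eta(\boldsymbol x)-\psi(\boldsymbol x)|\le L\eta(1+2\|\boldsymbol x\|+\eta)\). Hence \(\sup_{j,N}\mathbb E|\psi_\eta(\boldsymbol\xi_{j,N})-\psi(\boldsymbol\xi_{j,N})|\le C\eta\), and so it suffices to prove the theorem for \(\psi_\eta\) with \(\eta\) fixed.
\item It is \(C^1\) with \(\|\nabla\psi_\eta(\boldsymbol x)\|\le L'(1+\|\boldsymbol x\|)\), with \(L'\) independent of \(\eta\le1\).
\end{itemize}

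For \(j\ne k\), write \(\boldsymbol X=\boldsymbol\xi_{j,N}\), \(\boldsymbol Y=\boldsymbol\xi_{k,N}\) and \(\boldsymbol C_{jk}=\operatorname{Cov}(\boldsymbol X,\boldsymbol Y)\). I will use the Gaussian interpolation identity. Take an independent copy \(\boldsymbol Y'\) of \(\boldsymbol Y\) and set \(\boldsymbol Y_s=\sqrt s\,\boldsymbol Y+\sqrt{1-s}\,\boldsymbol Y'\). Then \(\operatorname{Cov}(\boldsymbol X,\boldsymbol Y_s)=\sqrt s\,\boldsymbol C_{jk}\) and \(\boldsymbol Y_s\overset{d}{=}\boldsymbol Y\). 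The identity reads
\begin{equation*}
    \operatorname{Cov}\bigl(\psi_\eta(\boldsymbol X),\psi_\eta(\boldsymbol Y)\bigr)
    =\int_0^1 \mathbb E\bigl[\nabla\psi_\eta(\boldsymbol X)^{\mathsf T}\boldsymbol C_{jk}\nabla\psi_\eta(\boldsymbol Y_s)\bigr]\,ds .
\end{equation*}
This is the standard formula obtained by differentiating in the cross-covariance (e.g., \cite{Chatterjee2014}). It follows from Gaussian integration by parts and requires only \(C^1\) functions with polynomially growing gradients, so a possibly degenerate Gaussian law is harmless after mollification.

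Cauchy--Schwarz then gives
\begin{equation*}
    \bigl|\operatorname{Cov}(\psi_\eta(\boldsymbol X),\psi_\eta(\boldsymbol Y))\bigr|
    \le\|\boldsymbol C_{jk}\|_{\mathrm{op}}\,
    \bigl(\mathbb E\|\nabla\psi_\eta(\boldsymbol X)\|^2\bigr)^{1/2}\bigl(\mathbb E\|\nabla\psi_\eta(\boldsymbol Y)\|^2\bigr)^{1/2}
    \le K'\|\boldsymbol C_{jk}\|_{\mathrm{op}} ,
\end{equation*}
using the linear gradient growth and uniform second moments. Summing over \(j\ne k\), the off-diagonal part is bounded by \(K'N^{-2}\sum_{j\ne k}\|\boldsymbol C_{jk}\|_{\mathrm{op}}\to0\) by hypothesis.

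\textbf{Conclusion.} For fixed \(\eta\), the above shows \(\mathbb E[\Delta_N^2]\to0\) with \(\psi\) replaced by \(\psi_\eta\). The mollification error is at most \(2C\eta\) in \(L^1\), uniformly in \(N\). Letting \(N\to\infty\) and then \(\eta\to0\) yields convergence in probability for \(\psi\) itself.

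The main obstacle is the covariance inequality. One must justify the interpolation identity with unbounded \(C^1\) functions and a possibly singular joint covariance. One must also check that all constants (the gradient growth of \(\psi_\eta\) and the moments of \(\boldsymbol Y_s\)) are uniform in \(j,k,N,s\). I would handle this by first proving the identity for compactly supported smooth test functions, where integration by parts against the Gaussian density of a slightly regularized vector \(\boldsymbol\xi+\sigma\boldsymbol g\) is clean, and then passing to the limit via dominated convergence using the \((4+\epsilon)\)-moment bound and \(\sigma\to0\).
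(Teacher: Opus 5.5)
\textbf{Verdict: correct, by a different route.} Your argument is correct, but it reaches $\PL(2)$ test functions differently from the paper. Both arguments bound the variance of the empirical average, control the off-diagonal covariances by Gaussian interpolation, and finish with Chebyshev.

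\textbf{The paper's route.} The paper applies the interpolation inequality only to bounded $C^1$ functions with bounded gradient. There $|\operatorname{Cov}(\psi(\boldsymbol\xi_{j,N}),\psi(\boldsymbol\xi_{k,N}))|\le\|\nabla\psi\|_\infty^2\|\operatorname{Cov}(\boldsymbol\xi_{j,N},\boldsymbol\xi_{k,N})\|_{\mathrm{op}}$ needs no moment input, and the diagonal term is trivially $O(N^{-1})$. It then passes to bounded Lipschitz functions by uniform approximation. Finally it reaches $\PL(2)$ through the cutoff $\psi\chi_K$, and the moment hypothesis enters only through uniform integrability of $\|\boldsymbol\xi_{j,N}\|^2$ in that truncation.

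\textbf{Your route.} You keep the quadratic growth, mollify once, and prove an $L^2$-gradient form of the covariance inequality. This uses the linear growth of $\nabla\psi_\eta$ and uniform second moments, plus fourth moments for the diagonal term.
\begin{itemize}
\item \emph{Cost.} You must justify the interpolation identity for unbounded smooth functions. This is routine here, since $\psi_\eta$ is $C^\infty$ with polynomially bounded derivatives of every order. Degenerate covariances need no $\sigma\boldsymbol g$ regularization: write the joint vector as $\boldsymbol m+\boldsymbol B\boldsymbol g$ with $\boldsymbol g$ standard and integrate by parts in $\boldsymbol g$.
\item \emph{Gain.} Your constants $K,K'$ are uniform in $\eta\le1$, and $\psi_\eta(\boldsymbol\xi_{j,N})\to\psi(\boldsymbol\xi_{j,N})$ in $L^2$ at rate $O(\eta)$ uniformly in $j,N$. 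Letting $\eta\to0$ therefore gives the explicit bound $\operatorname{Var}\bigl(N^{-1}\sum_j\psi(\boldsymbol\xi_{j,N})\bigr)\le K/N+K'N^{-2}\sum_{j\ne k}\|\boldsymbol C_{jk}\|_{\mathrm{op}}$ for $\psi$ itself. The truncation route cannot produce such a bound, because $\|\nabla(\psi\chi_K)\|_\infty$ grows with $K$.
\end{itemize}

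\textbf{Two cosmetic corrections to the interpolation step.}
\begin{itemize}
\item With $\boldsymbol Y_s=\sqrt s\,\boldsymbol Y+\sqrt{1-s}\,\boldsymbol Y'$ the cross-covariance is $\sqrt s\,\boldsymbol C_{jk}$, so differentiating in $s$ produces a weight $(2\sqrt s)^{-1}$. Your identity should read $\int_0^1(2\sqrt s)^{-1}\mathbb E[\nabla\psi_\eta(\boldsymbol X)^{\mathsf T}\boldsymbol C_{jk}\nabla\psi_\eta(\boldsymbol Y_s)]\,ds$. Alternatively, interpolate so that the cross-covariance is $s\boldsymbol C_{jk}$, as the paper does. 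Since the weight integrates to one, your bound is unchanged.
\item $\boldsymbol Y_s\overset{d}{=}\boldsymbol Y$ holds only when $\boldsymbol Y$ is centered, and the statement does not assume centering. Interpolate the centered parts instead: $\boldsymbol Y_s=\mathbb E\boldsymbol Y+\sqrt s(\boldsymbol Y-\mathbb E\boldsymbol Y)+\sqrt{1-s}(\boldsymbol Y'-\mathbb E\boldsymbol Y)$.
\end{itemize}
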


\begin{IEEEproof}
We first prove the claim for bounded continuously differentiable functions
$\psi$ with bounded gradient.  The bounded Lipschitz case follows by standard
mollification, and the pseudo-Lipschitz case is treated at the end by
truncation.

Let
\begin{equation*}
    Y_{j,N}
    :=
    \psi(\boldsymbol\xi_{j,N})
    -
    \mathbb E[\psi(\boldsymbol\xi_{j,N})].
\end{equation*}
It suffices to show
\begin{equation*}
    \operatorname{Var}\left(
        \frac1N\sum_{j=1}^N \psi(\boldsymbol\xi_{j,N})
    \right)
    \to0.
\end{equation*}
We have
\begin{equation*}
\begin{aligned}
    \operatorname{Var}\left(
        \frac1N\sum_{j=1}^N \psi(\boldsymbol\xi_{j,N})
    \right)
    &=
    \frac1{N^2}\sum_{j=1}^N
    \operatorname{Var}(\psi(\boldsymbol\xi_{j,N}))                 \\
    &\quad+
    \frac1{N^2}\sum_{j\ne k}
    \operatorname{Cov}(
        \psi(\boldsymbol\xi_{j,N}),
        \psi(\boldsymbol\xi_{k,N})
    ).
\end{aligned}
\end{equation*}
Since $\psi$ is bounded, the diagonal term is $O(N^{-1})$.

We now control the off-diagonal term.  For jointly Gaussian vectors
$(\boldsymbol\xi,\boldsymbol\eta)$, the Gaussian interpolation identity
(the ``smart path'' formula; see, e.g., \cite{Chatterjee2014}) gives
\begin{equation*}
\begin{aligned}
    \operatorname{Cov}(\psi(\boldsymbol\xi),\psi(\boldsymbol\eta))
    =
    \int_0^1
    \mathbb E\!\left[
        \nabla\psi(\boldsymbol\xi_s)^{\mathsf T}
        \operatorname{Cov}(\boldsymbol\xi,\boldsymbol\eta)
        \nabla\psi(\boldsymbol\eta_s)
    \right]ds,
\end{aligned}
\end{equation*}
where $(\boldsymbol\xi_s,\boldsymbol\eta_s)$ is the usual Gaussian interpolation
between independent and fully coupled copies.  Therefore, if
$\|\nabla\psi\|_\infty\le L$, then
\begin{equation*}
    \left|
    \operatorname{Cov}(
        \psi(\boldsymbol\xi_{j,N}),
        \psi(\boldsymbol\xi_{k,N})
    )
    \right|
    \le
    L^2
    \left\|
        \operatorname{Cov}(\boldsymbol\xi_{j,N},\boldsymbol\xi_{k,N})
    \right\|_{\mathrm{op}}.
\end{equation*}
Consequently,
\begin{equation*}
    \frac1{N^2}\sum_{j\ne k}
    \left|
    \operatorname{Cov}(
        \psi(\boldsymbol\xi_{j,N}),
        \psi(\boldsymbol\xi_{k,N})
    )
    \right|
    \to0.
\end{equation*}
Thus the variance of the empirical average converges to zero, and Chebyshev's
inequality yields the desired convergence in probability for bounded
continuously differentiable $\psi$ with bounded gradient.  A bounded Lipschitz
function can be approximated uniformly by such functions; hence the conclusion
also holds for bounded Lipschitz $\psi$.

It remains to extend the result to $\psi\in \PL(2)$.  Let $\chi_K$ be a smooth
cutoff satisfying $0\le\chi_K\le1$, $\chi_K(\boldsymbol x)=1$ for
$\|\boldsymbol x\|\le K$, and $\chi_K(\boldsymbol x)=0$ for $\|\boldsymbol x\|\ge 2K$.
Define
\begin{equation*}
    \psi_K(\boldsymbol x)=\psi(\boldsymbol x)\chi_K(\boldsymbol x).
\end{equation*}
Then $\psi_K$ is bounded Lipschitz for each fixed $K$, so the result already
proved gives
\begin{equation*}
    \frac1N\sum_{j=1}^N
    \psi_K(\boldsymbol\xi_{j,N})
    -
    \frac1N\sum_{j=1}^N
    \mathbb E[\psi_K(\boldsymbol\xi_{j,N})]
    \overset{p}{\longrightarrow}0.
\end{equation*}
By Proposition~\ref{prop:poly_growth}, since $\psi\in \PL(2)$,
\begin{equation*}
    |\psi(\boldsymbol x)|
    \le
    C_\psi(1+\|\boldsymbol x\|^2).
\end{equation*}
Thus
\begin{equation*}
    |\psi(\boldsymbol x)-\psi_K(\boldsymbol x)|
    \le
    C_\psi(1+\|\boldsymbol x\|^2)\boldsymbol 1\{\|\boldsymbol x\|>K\}.
\end{equation*}
Using Hölder's inequality and the uniform $(4+\epsilon)$-moment bound,
\begin{equation*}
    \sup_{j,N}
    \mathbb E\left[
        (1+\|\boldsymbol\xi_{j,N}\|^2)
        \boldsymbol 1\{\|\boldsymbol\xi_{j,N}\|>K\}
    \right]
    \longrightarrow 0
\end{equation*}
as $K\to\infty$.  Hence both the empirical and expectation-level truncation
errors vanish uniformly in probability as $K\to\infty$.  Letting first
$N\to\infty$ and then $K\to\infty$ completes the proof.
\end{IEEEproof}

We next state the only external random-matrix input used in the paper.  It is a
regularized admissible-quadratic-form specialization of the correlated-Gaussian
matrix-Dyson-equation framework for correlated random matrices
\cite{AjankiErdosKruger2017,AjankiErdosKruger2019,Erdos2019MDE}.  In
particular, the stability mechanism and deterministic resolvent approximation
used below are the regularized quadratic-form consequences of that MDE
local-law framework; see, in particular, the MDE stability and local-law results
in \cite{AjankiErdosKruger2017,AjankiErdosKruger2019}.  We state only the form
needed for the EP-conditioned block linearization.  The present paper does not
reprove the local law; its random-matrix task is to verify that the matrices
created by the predictable EP history satisfy the primitive conditions of this
regularized MDE input.

\begin{theorem}[External RMT input: regularized correlated-Gaussian MDE]
\label{thm:external_mde}
\label{lem:regularized_mde_input}
Let
\begin{equation*}
    \boldsymbol K_N=\boldsymbol D_N+\boldsymbol W_N
\end{equation*}
be an $n_N\times n_N$ real symmetric random matrix, where $\boldsymbol W_N$ is
centered Gaussian.  Define its covariance operator by
\begin{equation*}
    \mathcal S_N[\boldsymbol R]
    =
    \mathbb E[\boldsymbol W_N \boldsymbol R \boldsymbol W_N].
\end{equation*}
Assume that the following primitive conditions hold.

\begin{enumerate}
    \item \emph{Flat covariance:} for all admissible deterministic vectors
    $\boldsymbol a$ and $\boldsymbol b$,
    \begin{equation*}
        \operatorname{Var}(\boldsymbol a^{\mathsf T}\boldsymbol W_N\boldsymbol b)
        \le
        \frac{C}{N}\|\boldsymbol a\|^2\|\boldsymbol b\|^2.
    \end{equation*}

    \item \emph{Bounded deterministic deformation:}
    \begin{equation*}
        \|\boldsymbol D_N\|\le C.
    \end{equation*}

    \item \emph{Regularized block loading:} $\boldsymbol D_N$ has the block form
    \begin{equation*}
        \boldsymbol D_N=
        \begin{pmatrix}
            \boldsymbol I_M & \boldsymbol B_N\\
            \boldsymbol B_N^{\mathsf T} & -\boldsymbol\Gamma_N
        \end{pmatrix},
    \end{equation*}
    with
    \begin{equation*}
        \|\boldsymbol B_N\|\le C,
        \qquad
        0<\gamma_{\min}\boldsymbol I_N
        \preceq
        \boldsymbol\Gamma_N
        \preceq
        \gamma_{\max}\boldsymbol I_N.
    \end{equation*}

    \item \emph{Perturbation uniformity:} the same conclusions below hold
    uniformly for
    \begin{equation*}
        \boldsymbol K_N(\theta)
        =
        \boldsymbol D_N+\theta\boldsymbol H_N+\boldsymbol W_N
    \end{equation*}
    for all $|\theta|\le\theta_0$ and all bounded admissible deterministic
    insertions $\boldsymbol H_N$.
\end{enumerate}

Then the MDE
\begin{equation*}
    \boldsymbol M_N^{-1}
    =
    \boldsymbol D_N-\mathcal S_N[\boldsymbol M_N]
\end{equation*}
has a unique bounded stable solution.  Moreover, if
\begin{equation*}
    \boldsymbol G_N=\boldsymbol K_N^{-1},
\end{equation*}
then, for admissible deterministic vectors $\boldsymbol a$ and $\boldsymbol b$,
\begin{equation*}
    \boldsymbol a^{\mathsf T}
    (\boldsymbol G_N-\boldsymbol M_N)
    \boldsymbol b
    =
    o_p(1).
\end{equation*}
Furthermore, the averaged diagonal deterministic equivalent holds:
\begin{equation*}
    \frac1N\sum_{j=1}^N
    \left|
        \boldsymbol E_{M+j}^{\mathsf T}
        (\boldsymbol G_N-\boldsymbol M_N)
        \boldsymbol E_{M+j}
    \right|^2
    \overset{p}{\longrightarrow}0.
\end{equation*}
\end{theorem}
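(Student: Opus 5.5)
This statement is an external input specialized from the Ajanki--Erd\H{o}s--Kr\"uger MDE framework, so the plan is to reduce it to the stability and local-law theorems of \cite{AjankiErdosKruger2017,AjankiErdosKruger2019,Erdos2019MDE}. We do not propose to rederive a local law.

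\textbf{Step 1: existence and uniqueness of $\boldsymbol M_N$.} The key observation is that the regularized block loading makes $\boldsymbol D_N$ invertible with a uniformly bounded inverse. Taking the Schur complement with respect to the lower block gives
\[
-\boldsymbol\Gamma_N-\boldsymbol B_N^{\mathsf T}\boldsymbol B_N\preceq-\gamma_{\min}\boldsymbol I_N .
\]
This signature-separated structure, with a positive-definite upper block and a negative-definite lower Schur block, plays the role of the spectral parameter $z$ away from the real axis in the usual MDE theory.

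We then define $\boldsymbol M_N$ as the fixed point of $\boldsymbol M\mapsto(\boldsymbol D_N-\mathcal S_N[\boldsymbol M])^{-1}$ on a set of block matrices with the same inertia and bounded norm. The flat covariance bound controls $\|\mathcal S_N[\boldsymbol R]\|\le C\|\boldsymbol R\|$. I would first try to obtain the fixed point by continuation from the spectral-parameter case. Concretely, add $\mathrm i\eta$ to the resolvent, use the existence and uniqueness of the MDE solution with positive imaginary part from \cite{AjankiErdosKruger2019}, and let $\eta\downarrow0$. The uniform invertibility from the block loading keeps the solution bounded in this limit.

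\textbf{Step 2: stability.} Linear stability means the stability operator $\mathcal L=\mathrm{Id}-\boldsymbol M_N\mathcal S_N[\cdot]\boldsymbol M_N$ has a bounded inverse. We would establish this using the flatness upper bound together with the gap $\gamma_{\min}$, via the standard saturated self-energy argument.

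\textbf{Step 3: concentration via the perturbed resolvent.} We write
\[
\boldsymbol G_N=\boldsymbol M_N-\boldsymbol M_N\mathcal L^{-1}[\boldsymbol E]+\ldots,
\qquad
\boldsymbol E=-\boldsymbol W_N\boldsymbol G_N-\mathcal S_N[\boldsymbol G_N]\boldsymbol G_N,
\]
where $\boldsymbol E$ is the self-consistent error.

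The isotropic bound $\boldsymbol a^{\mathsf T}\boldsymbol E\boldsymbol b=o_p(1)$ follows from Gaussian integration by parts. Because $\boldsymbol W_N$ is Gaussian, we have $\mathbb E[\boldsymbol W_N f(\boldsymbol W_N)]=\mathbb E[\mathcal S_N[\partial f]]$ in the appropriate sense. Combined with the Gaussian Poincar\'e inequality and flat covariance, this gives second-moment bounds of order $N^{-1}$ for quadratic forms of $\boldsymbol G_N$ in admissible directions. These bounds require an a priori bound on $\|\boldsymbol G_N\|$.

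\textbf{Main obstacle.} The hardest step is precisely this a priori control of $\|\boldsymbol G_N\|$, since $\boldsymbol K_N$ is indefinite and could have eigenvalues near zero. I would try to use the block structure. The upper block $\boldsymbol I_M$ plus a fluctuation is invertible with high probability because the fluctuation has operator norm $O(1)$ and does not vanish. Since the $M\times M$ block of $\boldsymbol W_N$ need not be small, it is cleaner to rely on the perturbation-uniformity hypothesis (4). This hypothesis lets us interpolate in $\theta$ from a regime with deterministic invertibility and propagate the bound by a continuity and bootstrap argument, as in the local-law proofs.

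\textbf{Step 4: averaged diagonal bound.} The averaged diagonal statement follows from the isotropic bound applied to $\boldsymbol a=\boldsymbol b=\boldsymbol E_{M+j}$, which are admissible. We use uniform second-moment bounds in $j$ from Step 3 and then average via Markov's inequality.
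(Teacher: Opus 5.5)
There is a genuine gap, and it sits exactly at the point you call the main obstacle. Neither mechanism you offer closes it.

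The spectral parameter here is $z=0$ on the real axis. Steps 1--3 all require $0$ to stay at distance $\gtrsim 1$, uniformly in $N$, from two sets:
\begin{itemize}
\item from $\operatorname{spec}(\boldsymbol K_N)$, for the a priori bound on $\|\boldsymbol G_N\|$ used in the integration by parts;
\item from the support of the self-consistent density of $\boldsymbol M_N(z)$, so that the $\eta\downarrow0$ limit is a real symmetric solution with bounded $\mathcal L^{-1}$.
\end{itemize}
Invertibility of the indefinite matrix $\boldsymbol D_N$ gives neither; it is not an analogue of $\operatorname{Im}z>0$.

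In fact conditions 1--3 alone do not imply the conclusion. Take $\boldsymbol B_N=\boldsymbol 0$, $\boldsymbol\Gamma_N=\gamma\boldsymbol I_N$, and $\boldsymbol W_N$ a GOE matrix with entry variance $c/(M+N)$, which is flat with $C=2c$. For large $c$ the spectrum of $\boldsymbol K_N$ is a single interval with positive density at $0$. Then $\|\boldsymbol G_N\|$ is of order $N$, the diagonal entries of $\boldsymbol G_N$ do not concentrate, and $\boldsymbol M_N(\mathrm i\eta)$ tends to a matrix with nonzero imaginary part. A bounded real fixed point with the inertia of $\boldsymbol D_N$ still exists, but it is the wrong branch. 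So neither the continuation nor the inertia-preserving fixed point of Step 1 produces a deterministic equivalent of $\boldsymbol G_N$.

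Step 2 fails for the same reason. The saturated-self-energy argument of \cite{AjankiErdosKruger2019} rests on $\operatorname{Im}\boldsymbol M>0$ and on two-sided flatness. Here only an upper covariance bound is assumed, and $\operatorname{Im}\boldsymbol M=0$ at the point of interest.

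Hypothesis (4) cannot supply the a priori bound either. Read literally, it already asserts the conclusions at $\theta=0$, so invoking it is circular. In any case an insertion $\theta\boldsymbol H_N$ with $|\theta|\le\theta_0$ does not connect $\boldsymbol K_N$ to a deterministically invertible regime. Local-law bootstraps run in $\eta$, starting from $\eta\sim1$ where $\|\boldsymbol G\|\le\eta^{-1}$ is trivial, and continuing them down to $\eta=0$ again requires a gap.

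The missing ingredient is structural. The paper gives no proof; it cites the MDE literature and says the positive loading keeps the resolvent off-singular. That remark is true only because, in its application (Appendix~\ref{app:bounded_deformation_mde}), $\boldsymbol W_N=\begin{pmatrix}\boldsymbol 0&\boldsymbol X\\ \boldsymbol X^{\mathsf T}&\boldsymbol 0\end{pmatrix}$ has vanishing diagonal blocks. Write $\boldsymbol Y=\boldsymbol B_N+\boldsymbol X$. An eigenvector $(\boldsymbol x,\boldsymbol y)$ of $\boldsymbol K_N$ with eigenvalue $\lambda\in(-\gamma_{\min},1)$ would satisfy $\boldsymbol x=-\boldsymbol Y\boldsymbol y/(1-\lambda)$ and $\bigl((1-\lambda)^{-1}\boldsymbol Y^{\mathsf T}\boldsymbol Y+\boldsymbol\Gamma_N+\lambda\boldsymbol I_N\bigr)\boldsymbol y=\boldsymbol 0$. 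This is impossible, since that matrix is positive definite.

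Hence $\|\boldsymbol G_N\|\le\max\{1,\gamma_{\min}^{-1}\}$ surely, with uniform bounds for $z$ in a complex neighbourhood of $0$ and for small insertions $\theta\boldsymbol H_N$. Equivalently, $[\boldsymbol G_N]_{22}=-(\boldsymbol\Gamma_N+\boldsymbol Y^{\mathsf T}\boldsymbol Y)^{-1}$ is a regularized Gram resolvent at a negative spectral parameter, where positivity genuinely plays the role of $\operatorname{Im}z>0$.

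You must then transfer this gap to the MDE, so that the self-consistent density vanishes on $(-\gamma_{\min},1)$. One route is an amplification argument. Gaussian models of growing dimension with covariance $\mathcal S_N$ keep the block off-diagonal form and hence the same gap. Their partially traced mean resolvents converge to $\boldsymbol M_N(z)$ for $\operatorname{Im}z>0$. Stability at $z=0$ must then be derived from this gap, not from the saturated self-energy.

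Finally, the $O(N^{-1})$ second moments in Step 3 hold for unit vectors. For admissible vectors in the paper's normalization, $\|\boldsymbol a\|\asymp\sqrt N$, the fluctuations of $\boldsymbol a^{\mathsf T}\boldsymbol G_N\boldsymbol b$ are of order $\sqrt N$. The isotropic statement therefore has to be normalized by $\|\boldsymbol a\|\|\boldsymbol b\|$.
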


\begin{remark}[Source and use of the MDE input]
\label{rem:mde_source_use}
Theorem~\ref{thm:external_mde} is not an additional algorithmic assumption.  It
is the regularized quadratic-form version in which the correlated-Gaussian MDE
stability and local-law theory is used in this paper.  The underlying existence,
stability, and deterministic-resolvent approximation are standard consequences
of the MDE framework for random matrices with correlations
\cite{AjankiErdosKruger2017,AjankiErdosKruger2019,Erdos2019MDE}.  In the present
EP application, the compact positive loading in the linear module keeps the
block resolvent in a stable off-singular regime.  The EP-specific work is to
verify that, after conditioning on the finite linear history, the resulting
block matrix has flat covariance, bounded deterministic deformation, and
compact loading; these verifications are carried out in
Appendix~\ref{app:bounded_deformation_mde}.  The perturbative two-resolvent
form in Corollary~\ref{cor:two_resolvent} follows by applying the same stable
MDE input to a bounded deterministic insertion and differentiating the stable
MDE solution.
\end{remark}

The perturbation-uniform part of Theorem~\ref{thm:external_mde} yields a
two-resolvent deterministic equivalent.

\begin{corollary}[Perturbative two-resolvent equivalent]
\label{cor:two_resolvent}
Let $\boldsymbol K=\boldsymbol D+\boldsymbol W$ and $\boldsymbol M$ satisfy the assumptions and
conclusion of Theorem~\ref{thm:external_mde}.  Let $\boldsymbol H$ be a bounded
admissible deterministic insertion, and define
\begin{equation*}
    \boldsymbol G(\theta)
    =
    (\boldsymbol K+\theta\boldsymbol H)^{-1}.
\end{equation*}
Let $\boldsymbol M(\theta)$ be the MDE solution associated with
$\boldsymbol D+\theta\boldsymbol H$:
\begin{equation*}
    \boldsymbol M(\theta)^{-1}
    =
    \boldsymbol D+\theta\boldsymbol H-\mathcal S[\boldsymbol M(\theta)].
\end{equation*}
Define
\begin{equation*}
    \mathcal L[\boldsymbol H]
    :=
    -\left.
    \frac{d}{d\theta}
    \boldsymbol M(\theta)
    \right|_{\theta=0}.
\end{equation*}
Then $\mathcal L[\boldsymbol H]$ is the solution of
\begin{equation*}
    \mathcal L[\boldsymbol H]
    =
    \boldsymbol M\boldsymbol H\boldsymbol M
    +
    \boldsymbol M\mathcal S[\mathcal L[\boldsymbol H]]\boldsymbol M,
\end{equation*}
and, in the admissible quadratic-form sense of
Theorem~\ref{thm:external_mde},
\begin{equation*}
    \boldsymbol a^{\mathsf T}
    \bigl(\boldsymbol G\boldsymbol H\boldsymbol G-\mathcal L[\boldsymbol H]\bigr)
    \boldsymbol b
    \overset{p}{\longrightarrow}0
\end{equation*}
for all admissible deterministic test vectors $\boldsymbol a$ and $\boldsymbol b$.
\end{corollary}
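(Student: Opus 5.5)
The plan is to derive everything from the perturbation-uniform part of Theorem~\ref{thm:external_mde} (item~4), together with the exact resolvent identity
\begin{equation*}
    \boldsymbol G(\theta)-\boldsymbol G(0)=-\theta\,\boldsymbol G(\theta)\boldsymbol H\boldsymbol G(0).
\end{equation*}
We need two things: the linear equation characterizing \(\mathcal L[\boldsymbol H]\), and the quadratic-form convergence of \(\boldsymbol G\boldsymbol H\boldsymbol G\).

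\textbf{The equation for \(\mathcal L\).} Start from the stable MDE solution \(\boldsymbol M(\theta)\) for the deformation \(\boldsymbol D+\theta\boldsymbol H\). Item~4 guarantees that this solution exists, is unique and is bounded uniformly for \(|\theta|\le\theta_0\). Stability means the linearized operator \(\mathcal J:=\mathrm{Id}-\boldsymbol M\mathcal S[\cdot]\boldsymbol M\) has a bounded inverse. The implicit function theorem applied to \(\Phi(\theta,\boldsymbol X)=\boldsymbol X^{-1}-\boldsymbol D-\theta\boldsymbol H+\mathcal S[\boldsymbol X]\) then shows that \(\theta\mapsto\boldsymbol M(\theta)\) is differentiable, with derivative bounded in operator norm uniformly in \(N\).

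Differentiating \(\boldsymbol M(\theta)^{-1}=\boldsymbol D+\theta\boldsymbol H-\mathcal S[\boldsymbol M(\theta)]\) and using \(\frac{d}{d\theta}\boldsymbol M^{-1}=-\boldsymbol M^{-1}\boldsymbol M'\boldsymbol M^{-1}\) gives
\begin{equation*}
    -\boldsymbol M^{-1}\boldsymbol M'\boldsymbol M^{-1}=\boldsymbol H-\mathcal S[\boldsymbol M'].
\end{equation*}
Multiplying by \(\boldsymbol M\) on both sides and setting \(\mathcal L=-\boldsymbol M'(0)\) yields exactly \(\mathcal L=\boldsymbol M\boldsymbol H\boldsymbol M+\boldsymbol M\mathcal S[\mathcal L]\boldsymbol M\). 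Uniqueness of the solution follows from invertibility of \(\mathcal J\).

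\textbf{Quadratic-form convergence.} Write \(f_N(\theta):=\boldsymbol a^{\mathsf T}\boldsymbol G(\theta)\boldsymbol b\) and \(m_N(\theta):=\boldsymbol a^{\mathsf T}\boldsymbol M(\theta)\boldsymbol b\). The quantity to control is \(\boldsymbol a^{\mathsf T}\boldsymbol G\boldsymbol H\boldsymbol G\boldsymbol b=-f_N'(0)\).

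Use a finite-difference comparison at a small fixed \(\theta\):
\begin{equation*}
    -f_N'(0)=\frac{f_N(0)-f_N(\theta)}{\theta}+R_N(\theta),
    \qquad
    R_N(\theta)=\theta\,\boldsymbol a^{\mathsf T}\boldsymbol G(\theta)\boldsymbol H\boldsymbol G\boldsymbol H\boldsymbol G\boldsymbol b,
\end{equation*}
which follows from iterating the resolvent identity. On the positive-loading event the resolvents are bounded in operator norm: the block structure with \(\boldsymbol\Gamma\succeq\gamma_{\min}\) keeps \(\boldsymbol K\) away from singularity via the Schur complement \(\boldsymbol\Gamma+\boldsymbol B^{\mathsf T}\boldsymbol B\)-type bound. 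Hence \(|R_N|\le C\theta\|\boldsymbol a\|\|\boldsymbol b\|\) with high probability.

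Applying item~4 at \(\theta\) and at \(0\) gives \(f_N(\theta)-f_N(0)=m_N(\theta)-m_N(0)+o_p(1)\) for each fixed \(\theta\). The smoothness of \(\boldsymbol M(\theta)\) gives \((m_N(0)-m_N(\theta))/\theta=\boldsymbol a^{\mathsf T}\mathcal L\boldsymbol b+O(\theta)\), provided the second derivative of \(\boldsymbol M\) is uniformly bounded; this again comes from the implicit function theorem with the bounded inverse of \(\mathcal J\). Combining these, \(\limsup_N\mathbb P(|{-f_N'(0)}-\boldsymbol a^{\mathsf T}\mathcal L\boldsymbol b|>C\theta)=0\) for every small \(\theta\). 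Letting \(\theta\downarrow0\) proves the claim.

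\textbf{Main obstacle.} The delicate step is the uniform operator-norm bound on \(\boldsymbol G(\theta)\), which is needed for \(R_N\). The matrix \(\boldsymbol K\) is indefinite, so its resolvent is not automatically bounded, and the random part \(\boldsymbol W\) perturbs the off-diagonal block. I would first try the explicit block inversion, which reduces boundedness to a lower bound on \(\boldsymbol\Gamma+(\boldsymbol B+\boldsymbol A)^{\mathsf T}(\boldsymbol B+\boldsymbol A)\succeq\gamma_{\min}\) for the perturbed blocks. For \(\theta\) small, the insertion \(\theta\boldsymbol H\) is a bounded perturbation preserving this with constant \(\gamma_{\min}/2\).

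If \(\boldsymbol W\) also has a nonzero diagonal block (for example, noise in the upper-left block), I would instead restrict to a high-probability event where \(\|\boldsymbol W\|\) is bounded. This holds by Gaussian norm bounds, given flat covariance and the ellipticity of Definition~\ref{def:vpg}. The rest of the argument is then deterministic on that event.
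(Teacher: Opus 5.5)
Your proposal is correct and is essentially the paper's argument: differentiate the MDE at $\theta=0$ to get the linear equation for $\mathcal L[\boldsymbol H]$, with stability giving uniqueness, and pass $\frac{d}{d\theta}\boldsymbol G(\theta)=-\boldsymbol G(\theta)\boldsymbol H\boldsymbol G(\theta)$ through the deterministic equivalent using the perturbation-uniform part of Theorem~\ref{thm:external_mde}. Your finite-difference identity, the remainder bound from $\|\boldsymbol G(\theta)\|=O(1)$, and the uniform bound on $\boldsymbol M''$ just supply the details behind the paper's one-line claim that uniformity ``justifies the differentiation'' (when $\boldsymbol W$ is off-diagonal, as in the paper, block inversion gives the resolvent bound deterministically in terms of $\gamma_{\min}$).

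Two small points. First, your $O(\theta)$ errors carry a factor $\|\boldsymbol a\|\|\boldsymbol b\|$, so the conclusion holds only at the normalization in which the theorem itself is meaningful, namely bounded-norm test vectors or forms normalized by $\|\boldsymbol a\|\|\boldsymbol b\|$. Second, your fallback for noise in the diagonal blocks would need a lower bound on the smallest singular value of $\boldsymbol K$, which $\|\boldsymbol W\|=O(1)$ alone does not give.
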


\begin{IEEEproof}
Since
\begin{equation*}
    \frac{d}{d\theta}\boldsymbol G(\theta)
    =
    -\boldsymbol G(\theta)\boldsymbol H\boldsymbol G(\theta),
\end{equation*}
the desired random two-resolvent object is obtained by differentiating the
resolvent.  On the deterministic side, differentiating
\begin{equation*}
    \boldsymbol M(\theta)^{-1}
    =
    \boldsymbol D+\theta\boldsymbol H-\mathcal S[\boldsymbol M(\theta)]
\end{equation*}
at $\theta=0$ gives
\begin{equation*}
    -\boldsymbol M^{-1}\boldsymbol M'(0)\boldsymbol M^{-1}
    =
    \boldsymbol H-\mathcal S[\boldsymbol M'(0)].
\end{equation*}
With
\begin{equation*}
    \mathcal L[\boldsymbol H]=-\boldsymbol M'(0),
\end{equation*}
this becomes
\begin{equation*}
    \mathcal L[\boldsymbol H]
    =
    \boldsymbol M\boldsymbol H\boldsymbol M
    +
    \boldsymbol M\mathcal S[\mathcal L[\boldsymbol H]]\boldsymbol M.
\end{equation*}
The perturbation-uniform deterministic equivalent in
Theorem~\ref{thm:external_mde} justifies the differentiation at the level of
admissible quadratic forms.
\end{IEEEproof}

Finally, we record the Gaussian regression identity used to separate the
predictable memory component from the fresh innovation.

\begin{lemma}[Gaussian regression with a possibly singular covariance]
\label{lem:gaussian_regression}
Let
\begin{equation*}
    \begin{pmatrix}
        \boldsymbol Y_{<}\\
        Y_t
    \end{pmatrix}
\end{equation*}
be a jointly Gaussian vector with covariance
\begin{equation*}
    \begin{pmatrix}
        \boldsymbol C_{<,<} & \boldsymbol c_{<,t}\\
        \boldsymbol c_{t,<} & c_{t,t}
    \end{pmatrix}.
\end{equation*}
Then
\begin{equation*}
    Y_t
    =
    \boldsymbol c_{t,<}
    \boldsymbol C_{<,<}^{\dagger}
    \boldsymbol Y_{<}
    +
    G_t,
\end{equation*}
where \(G_t\) is Gaussian and independent of \(\boldsymbol Y_<\), with variance
\begin{equation*}
    \nu_t
    =
    c_{t,t}
    -
    \boldsymbol c_{t,<}
    \boldsymbol C_{<,<}^{\dagger}
    \boldsymbol c_{<,t}.
\end{equation*}
The Moore--Penrose inverse automatically removes redundant Gaussian-history
directions, so no full-rank assumption on \(\boldsymbol C_{<,<}\) is required.
\end{lemma}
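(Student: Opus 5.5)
The plan is to reduce the claim to the ordinary Gaussian regression formula by means of a linear change of coordinates that isolates the directions in which $\boldsymbol Y_<$ actually varies. Write $\boldsymbol\beta^{\mathsf T}:=\boldsymbol c_{t,<}\boldsymbol C_{<,<}^{\dagger}$ and define
\begin{equation*}
    G_t:=Y_t-\boldsymbol\beta^{\mathsf T}\boldsymbol Y_<.
\end{equation*}
With this definition the decomposition holds by construction. Because $G_t$ is a linear functional of a jointly Gaussian vector, the pair $(\boldsymbol Y_<,G_t)$ is itself jointly Gaussian. Consequently, independence of $G_t$ from $\boldsymbol Y_<$ follows once we show $\operatorname{Cov}(\boldsymbol Y_<,G_t)=\boldsymbol 0$. (If the means are nonzero, I would first center the vector; the statement concerns covariances only, and the centering constant can be absorbed into $G_t$.)

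The key step is the covariance computation
\begin{equation*}
    \operatorname{Cov}(\boldsymbol Y_<,G_t)
    =\boldsymbol c_{<,t}-\boldsymbol C_{<,<}\boldsymbol C_{<,<}^{\dagger}\boldsymbol c_{<,t}.
\end{equation*}
The matrix $\boldsymbol C_{<,<}\boldsymbol C_{<,<}^{\dagger}$ is the orthogonal projector onto $\operatorname{range}(\boldsymbol C_{<,<})$, so this covariance vanishes exactly when $\boldsymbol c_{<,t}\in\operatorname{range}(\boldsymbol C_{<,<})$. This range inclusion is the one point that requires an actual argument, since without full rank it is not automatic for an arbitrary matrix; it comes from positive semidefiniteness of the joint covariance. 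Take $\boldsymbol v\in\ker\boldsymbol C_{<,<}$. Then $\operatorname{Var}(\boldsymbol v^{\mathsf T}\boldsymbol Y_<)=0$, and Cauchy--Schwarz gives
\begin{equation*}
    |\boldsymbol v^{\mathsf T}\boldsymbol c_{<,t}|
    =|\operatorname{Cov}(\boldsymbol v^{\mathsf T}\boldsymbol Y_<,Y_t)|
    \le\sqrt{\operatorname{Var}(\boldsymbol v^{\mathsf T}\boldsymbol Y_<)\,\operatorname{Var}(Y_t)}=0.
\end{equation*}
Hence $\boldsymbol c_{<,t}\perp\ker\boldsymbol C_{<,<}=\operatorname{range}(\boldsymbol C_{<,<})^{\perp}$, where the equality uses the symmetry of $\boldsymbol C_{<,<}$.

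For the variance, expand and use the Moore--Penrose identity $\boldsymbol C^{\dagger}\boldsymbol C\boldsymbol C^{\dagger}=\boldsymbol C^{\dagger}$, together with the symmetry of $\boldsymbol C_{<,<}^{\dagger}$:
\begin{equation*}
    \operatorname{Var}(G_t)=c_{t,t}-2\boldsymbol\beta^{\mathsf T}\boldsymbol c_{<,t}+\boldsymbol\beta^{\mathsf T}\boldsymbol C_{<,<}\boldsymbol\beta
    =c_{t,t}-\boldsymbol c_{t,<}\boldsymbol C_{<,<}^{\dagger}\boldsymbol c_{<,t}=\nu_t.
\end{equation*}
Finally, the remark about redundant directions follows from the range inclusion above. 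The coefficient $\boldsymbol\beta$ has no component along $\ker\boldsymbol C_{<,<}$, and the degenerate combinations $\boldsymbol v^{\mathsf T}\boldsymbol Y_<$ are almost surely constant. Replacing $\boldsymbol C_{<,<}^{\dagger}$ by any generalized inverse therefore changes $\boldsymbol\beta^{\mathsf T}\boldsymbol Y_<$ only on a null set, so no full-rank assumption is needed.
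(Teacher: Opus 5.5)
Your proof is correct and follows essentially the same route as the paper. You define $G_t$ as the regression residual, show $\operatorname{Cov}(\boldsymbol Y_<,G_t)=\boldsymbol 0$ from the inclusion of $\boldsymbol c_{<,t}$ in the range of $\boldsymbol C_{<,<}$, and compute $\nu_t$ using $\boldsymbol C^{\dagger}\boldsymbol C\boldsymbol C^{\dagger}=\boldsymbol C^{\dagger}$. Your kernel/Cauchy--Schwarz justification of that inclusion matches the paper's appendix version (Lemma~\ref{lem:gaussian_regression_appendix}), whereas the main-text proof only asserts it, and your opening sentence announces a change-of-coordinates reduction that the argument never actually uses.
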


\begin{IEEEproof}
For a jointly Gaussian vector, the conditional expectation of \(Y_t\) given
\(\boldsymbol Y_<\) is the orthogonal projection of \(Y_t\) onto the closed
linear span of \(\boldsymbol Y_<\) in \(L^2\).  This projection is
\begin{equation*}
    \mathbb E[Y_t\mid \boldsymbol Y_<]
    =
    \boldsymbol c_{t,<}
    \boldsymbol C_{<,<}^{\dagger}
    \boldsymbol Y_< .
\end{equation*}
Define
\begin{equation*}
    G_t
    =
    Y_t
    -
    \boldsymbol c_{t,<}
    \boldsymbol C_{<,<}^{\dagger}
    \boldsymbol Y_< .
\end{equation*}
Its covariance with the past is
\begin{equation*}
\begin{aligned}
    \operatorname{Cov}(G_t,\boldsymbol Y_<)
    &=
    \boldsymbol c_{t,<}
    -
    \boldsymbol c_{t,<}
    \boldsymbol C_{<,<}^{\dagger}
    \operatorname{Cov}(\boldsymbol Y_<,\boldsymbol Y_<)       \\
    &=
    \boldsymbol c_{t,<}
    -
    \boldsymbol c_{t,<}
    \boldsymbol C_{<,<}^{\dagger}
    \boldsymbol C_{<,<}.
\end{aligned}
\end{equation*}
For a valid covariance matrix, the row vector \(\boldsymbol c_{t,<}\) lies in
the row space of \(\boldsymbol C_{<,<}\).  Hence
\begin{equation*}
    \boldsymbol c_{t,<}
    \boldsymbol C_{<,<}^{\dagger}
    \boldsymbol C_{<,<}
    =
    \boldsymbol c_{t,<},
\end{equation*}
and therefore
\begin{equation*}
    \operatorname{Cov}(G_t,\boldsymbol Y_<)=\boldsymbol 0.
\end{equation*}
Since \((G_t,\boldsymbol Y_<)\) is jointly Gaussian, zero covariance implies
independence.  The variance of \(G_t\) is
\begin{equation*}
    \operatorname{Var}(G_t)
    =
    c_{t,t}
    -
    \boldsymbol c_{t,<}
    \boldsymbol C_{<,<}^{\dagger}
    \boldsymbol c_{<,t}.
\end{equation*}
This proves the claim.
\end{IEEEproof}

\section{System Model and Diagonal Expectation Propagation}
\label{sec:system_model}
\label{sec:system_ep}

In this section, we introduce the measurement model and formulate the diagonal
EP recursion analyzed in this paper.  The presentation follows the standard
two-module interpretation of EP: a linear Gaussian module incorporates the
likelihood, while a separable prior module incorporates the prior distribution
of the signal.  We then rewrite the recursion in terms of estimation errors.
This error-domain representation is the starting point of the conditioning
argument developed in the sequel.

\subsection{Assumptions}
\label{subsec:system_assumptions}

We consider the real-valued linear measurement model
\begin{equation}
    \boldsymbol y
    =
    \boldsymbol A\boldsymbol x+\boldsymbol w,
    \label{eq:system_model}
\end{equation}
where $\boldsymbol x\in\mathbb R^N$ is the unknown signal vector,
$\boldsymbol A\in\mathbb R^{M\times N}$ is the measurement matrix, and
$\boldsymbol w\in\mathbb R^M$ is the noise vector.  The large-system limit is
taken in the proportional regime
\begin{equation}
    M/N\to\delta\in(0,\infty).
\end{equation}

We first state the assumptions used throughout the paper.

\begin{assumption}[Signal prior]
\label{ass:signal}
The signal vector
\begin{equation}
    \boldsymbol x=(x_1,\ldots,x_N)^{\mathsf T}
\end{equation}
has independent and identically distributed elements drawn from a non-degenerate
distribution $P_X$.  The scalar random variable $X\sim P_X$ satisfies
\begin{equation}
    0<\mathbb E[X^2]<\infty,
    \qquad
    \mathbb E|X|^{4+\epsilon}<\infty
\end{equation}
for some $\epsilon>0$.
\end{assumption}

No centering or unit-variance normalization is imposed.  The scalar denoiser
$\eta(\cdot;\pi)$ and posterior variance $v_B(\cdot;\pi)$ are always defined
with respect to the actual prior $P_X$.  The finite moment condition is used to
obtain empirical energy bounds, no-spike conditions, and uniform integrability
in the pseudo-Lipschitz convergence arguments.  The exponent $4+\epsilon$ is a
technical regularity condition of the present proof rather than a claimed
optimal requirement.  In particular, finite second moment alone does not imply
finite fourth moment; a more refined truncation argument may weaken this
condition, but such a sharpening is not pursued here.

\begin{assumption}[Variance-profile Gaussian measurements]
\label{ass:matrix}
The measurement matrix $\boldsymbol A$ is a uniformly elliptic
variance-profile Gaussian matrix in the sense of
Definition~\ref{def:vpg}.  Equivalently,
\begin{equation}
    A_{ij}
    =
    \sqrt{\frac{s_{ij}}{M}}Z_{ij},
    \qquad
    Z_{ij}\overset{\mathrm{i.i.d.}}{\sim}\mathcal N(0,1),
\end{equation}
where the deterministic profile satisfies
\begin{equation}
    0<s_{\min}\le s_{ij}\le s_{\max}<\infty
\end{equation}
uniformly in $i,j,M,N$.  The matrix $\boldsymbol A$ is independent of
$\boldsymbol x$ and $\boldsymbol w$.
\end{assumption}

Unlike unitarily invariant measurement models, the ensemble in
Assumption~\ref{ass:matrix} does not possess a Haar singular-vector
representation in general.  The proof therefore cannot rely on rotational
invariance.  Instead, it will use Gaussian conditioning under finite linear
histories and the correlated-Gaussian MDE input stated in
Theorem~\ref{thm:external_mde}.

\begin{assumption}[Noise]
\label{ass:noise}
The noise vector is Gaussian,
\begin{equation}
    \boldsymbol w\sim
    \mathcal N(\boldsymbol 0,\gamma_w^{-1}\boldsymbol I_M),
\end{equation}
and is independent of $\boldsymbol A$ and $\boldsymbol x$.  The noise precision
$\gamma_w>0$ is assumed known by the algorithm.
\end{assumption}

The Gaussian noise assumption is not essential for the linear algebraic part of
the proof, but it keeps the Bayesian interpretation of the linear module
transparent and avoids additional moment assumptions on transformed noise
vectors.

\subsection{Diagonal Expectation Propagation}
\label{subsec:diagonal_ep}

We now formulate the diagonal EP recursion.  The algorithm alternates between a
linear Gaussian module, referred to as module A, and a separable prior module,
referred to as module B.  At iteration $t$, module A receives a Gaussian
message from module B with mean
\begin{equation}
    \boldsymbol r_{B\to A}^t\in\mathbb R^N
\end{equation}
and diagonal precision matrix
\begin{equation}
    \boldsymbol\Gamma_t
    =
    \operatorname{diag}(\gamma_{t,1},\ldots,\gamma_{t,N}).
\end{equation}
The Gaussian message from module B is the diagonal Gaussian factor
\begin{equation}
    q_{B\to A}^t(\boldsymbol x)
    =
    \mathcal N\!\left(
        \boldsymbol x;
        \boldsymbol r_{B\to A}^t,
        \boldsymbol\Gamma_t^{-1}
    \right).
\end{equation}
Thus the word ``diagonal'' refers to the covariance retained by the EP message,
not to an additional probabilistic assumption on the true signal.

Combining this Gaussian message with the likelihood induced by
\eqref{eq:system_model}, module A forms the Gaussian belief
\begin{equation}
    b_A^t(\boldsymbol x)
    \propto
    \exp\left\{
        -\frac{\gamma_w}{2}
        \|\boldsymbol y-\boldsymbol A\boldsymbol x\|^2
        -\frac12
        \|\boldsymbol x-\boldsymbol r_{B\to A}^t\|_{\boldsymbol\Gamma_t}^2
    \right\},
\end{equation}
where
\begin{equation}
    \|\boldsymbol v\|_{\boldsymbol\Gamma_t}^2
    =
    \boldsymbol v^{\mathsf T}\boldsymbol\Gamma_t\boldsymbol v.
\end{equation}
Thus the covariance and mean of the module-A belief are
\begin{align}
    \boldsymbol C_t
    &=
    \left(
        \gamma_w\boldsymbol A^{\mathsf T}\boldsymbol A
        +
        \boldsymbol\Gamma_t
    \right)^{-1},
    \label{eq:C_t_def}
    \\
    \widehat{\boldsymbol x}_{A}^t
    &=
    \boldsymbol C_t
    \left(
        \gamma_w\boldsymbol A^{\mathsf T}\boldsymbol y
        +
        \boldsymbol\Gamma_t\boldsymbol r_{B\to A}^t
    \right).
    \label{eq:xA_hat_def}
\end{align}

Let
\begin{equation}
    d_{t,j}
    =
    [\boldsymbol C_t]_{jj},
    \qquad
    \boldsymbol D_t
    =
    \operatorname{diag}(d_{t,1},\ldots,d_{t,N}).
\end{equation}
The diagonal EP approximation retains only the diagonal marginal variances of
the Gaussian belief.  The outgoing Gaussian message from module A to module B
is obtained by subtracting the incoming Gaussian natural parameters from the
diagonal marginal belief.  Its diagonal precision matrix is
\begin{equation}
    \boldsymbol\Pi_t
    =
    \boldsymbol D_t^{-1}
    -
    \boldsymbol\Gamma_t,
    \label{eq:Pi_t_def}
\end{equation}
and its mean $\boldsymbol r_{A\to B}^t$ is defined through
\begin{equation}
    \boldsymbol\Pi_t
    \boldsymbol r_{A\to B}^t
    =
    \boldsymbol D_t^{-1}\widehat{\boldsymbol x}_{A}^t
    -
    \boldsymbol\Gamma_t\boldsymbol r_{B\to A}^t .
    \label{eq:r_A_to_B_def}
\end{equation}

In stable implementations, the diagonal precisions in
\eqref{eq:Pi_t_def} and in the corresponding prior-to-linear update below are
kept in compact positive intervals.  We follow this standard convention and
record it as a regularity condition rather than as a separate algorithmic
variant.

Module B processes the components of $\boldsymbol r_{A\to B}^t$ separately.
For a scalar precision $\pi>0$, define the scalar Gaussian observation model
\begin{equation}
    R=X+\pi^{-1/2}Z,
    \qquad
    Z\sim\mathcal N(0,1),
    \qquad
    X\sim P_X,
\end{equation}
with $X$ and $Z$ independent.  The scalar posterior mean and variance are
denoted by
\begin{align}
    \eta(r;\pi)
    &=
    \mathbb E[X\mid R=r],
    \label{eq:scalar_eta_def}
    \\
    v_B(r;\pi)
    &=
    \operatorname{Var}(X\mid R=r).
    \label{eq:scalar_vB_def}
\end{align}
Equivalently,
\begin{align}
    \eta(r;\pi)
    &=
    \frac{
        \int x\,p_X(x)
        \exp\{-\frac{\pi}{2}(r-x)^2\}\,dx
    }{
        \int p_X(x)
        \exp\{-\frac{\pi}{2}(r-x)^2\}\,dx
    },
    \\
    v_B(r;\pi)
    &=
    \frac{
        \int (x-\eta(r;\pi))^2 p_X(x)
        \exp\{-\frac{\pi}{2}(r-x)^2\}\,dx
    }{
        \int p_X(x)
        \exp\{-\frac{\pi}{2}(r-x)^2\}\,dx
    }.
\end{align}

Applying these scalar functions componentwise gives the module-B posterior mean
\begin{equation}
    \widehat{x}_{B,j}^t
    =
    \eta(r_{A\to B,j}^t;\pi_{t,j}),
    \qquad
    j=1,\ldots,N,
    \label{eq:xB_hat_component}
\end{equation}
where
\begin{equation}
    \boldsymbol\Pi_t
    =
    \operatorname{diag}(\pi_{t,1},\ldots,\pi_{t,N}).
\end{equation}
The corresponding diagonal posterior variance matrix is
\begin{equation}
    \boldsymbol V_{B,t}
    =
    \operatorname{diag}(v_{B,t,1},\ldots,v_{B,t,N}),
    \qquad
    v_{B,t,j}
    =
    v_B(r_{A\to B,j}^t;\pi_{t,j}).
    \label{eq:VB_t_def}
\end{equation}

The outgoing Gaussian message from module B to module A is again obtained by
extrinsic subtraction of natural parameters.  Its diagonal precision matrix is
\begin{equation}
    \boldsymbol\Gamma_{t+1}
    =
    \boldsymbol V_{B,t}^{-1}
    -
    \boldsymbol\Pi_t,
    \label{eq:Gamma_next_def}
\end{equation}
and its mean $\boldsymbol r_{B\to A}^{t+1}$ satisfies
\begin{equation}
    \boldsymbol\Gamma_{t+1}
    \boldsymbol r_{B\to A}^{t+1}
    =
    \boldsymbol V_{B,t}^{-1}
    \widehat{\boldsymbol x}_{B}^t
    -
    \boldsymbol\Pi_t
    \boldsymbol r_{A\to B}^t .
    \label{eq:r_B_to_A_next_def}
\end{equation}
The iteration is initialized with a deterministic Gaussian message, for example
\begin{equation}
    \boldsymbol r_{B\to A}^0=\boldsymbol 0,
    \qquad
    \boldsymbol\Gamma_0=\gamma_0\boldsymbol I_N
\end{equation}
for some $\gamma_0>0$ in the admissible precision range.

Table~\ref{tab:ep_iteration} summarizes the initialization and one EP iteration
by pointing to the numbered equations defining each step.  The table is included
only as a reader aid; the analysis below uses the exact identities
\eqref{eq:C_t_def}--\eqref{eq:r_B_to_A_next_def} and their error-domain forms.

\begin{table}[H]
    \centering
    \caption{Regularized diagonal EP recursion.}
    \label{tab:ep_iteration}
    \begin{tabularx}{\linewidth}{@{}lX@{}}
        \toprule
        Stage & Definition \\
        \midrule
        Initialization & Choose a deterministic regular message \(\boldsymbol r_{B\to A}^0\) and a clipped diagonal precision \(\boldsymbol\Gamma_0\). \\
        Linear belief & Compute \(\boldsymbol C_t\) and \(\widehat{\boldsymbol x}_A^t\) from \eqref{eq:C_t_def} and \eqref{eq:xA_hat_def}. \\
        Linear-to-prior message & Form \(\boldsymbol\Pi_t\) and \(\boldsymbol r_{A\to B}^t\) by \eqref{eq:Pi_t_def} and \eqref{eq:r_A_to_B_def}, followed by clipping to the admissible precision interval. \\
        Prior belief & Apply the scalar posterior maps \eqref{eq:scalar_eta_def}--\eqref{eq:scalar_vB_def} componentwise as in \eqref{eq:xB_hat_component} and \eqref{eq:VB_t_def}. \\
        Prior-to-linear message & Form \(\boldsymbol\Gamma_{t+1}\) and \(\boldsymbol r_{B\to A}^{t+1}\) by \eqref{eq:Gamma_next_def} and \eqref{eq:r_B_to_A_next_def}, again with precision clipping. \\
        \bottomrule
    \end{tabularx}
\end{table}

We impose the following regularity on the scalar module and the diagonal
Gaussian messages.

\begin{assumption}[Regularized scalar module and compact precisions]
\label{ass:ep_regular}
For every fixed iteration horizon $T$, the diagonal Gaussian messages are
regularized so that the precisions used by the recursion lie in compact
positive intervals:
\begin{equation}
    0<\pi_{\min}
    \le
    \pi_{t,j}
    \le
    \pi_{\max}<\infty,
\end{equation}
and
\begin{equation}
    0<\gamma_{\min}
    \le
    \gamma_{t,j}
    \le
    \gamma_{\max}<\infty.
\end{equation}
Moreover, on compact precision intervals, the scalar maps
$\eta(r;\pi)$ and $v_B(r;\pi)$ preserve $\PL(2)$ empirical convergence and the
finite moment bounds required in Definition~\ref{def:admissible}.
\end{assumption}

Assumption~\ref{ass:ep_regular} is the standing scalar-module regularity used
throughout the paper.  It should be read as the usual clipping/regularization
convention for diagonal EP, not as a new algorithmic variant.  The regularity
event in Section~\ref{sec:dynamic_proof} adds only finite-history
admissibility and Gram-stability conditions needed for the conditioning proof,
while Section~\ref{sec:precision_replacement} states the additional stability
requirements needed to replace the predictable precisions by the actual
finite-sample adaptive precisions.  The finite-dimensional recursion itself is
given by \eqref{eq:C_t_def}--\eqref{eq:r_B_to_A_next_def}; deterministic
equivalents of the diagonal quantities $d_{t,j}$ are not inserted into the
algorithm, but will be derived later from the MDE.

\subsection{Error Recursion}
\label{subsec:error_recursion}

For the large-system analysis, it is convenient to rewrite the diagonal EP
recursion in terms of estimation errors.  Define
\begin{align}
    \boldsymbol q_t
    &:=
    \boldsymbol r_{B\to A}^t-\boldsymbol x,
    \\
    \boldsymbol m_t
    &:=
    \widehat{\boldsymbol x}_{A}^t-\boldsymbol x,
    \\
    \boldsymbol h_t
    &:=
    \boldsymbol r_{A\to B}^t-\boldsymbol x,
    \\
    \boldsymbol p_t
    &:=
    \widehat{\boldsymbol x}_{B}^t-\boldsymbol x .
\end{align}
Substituting the system model \eqref{eq:system_model} into
\eqref{eq:xA_hat_def} gives the exact linear-module error recursion
\begin{equation}
    \boldsymbol m_t
    =
    \boldsymbol C_t
    \left(
        \gamma_w\boldsymbol A^{\mathsf T}\boldsymbol w
        +
        \boldsymbol\Gamma_t\boldsymbol q_t
    \right).
    \label{eq:m_error_recursion}
\end{equation}

Similarly, subtracting $\boldsymbol x$ from
\eqref{eq:r_A_to_B_def} and using
\begin{equation}
    \boldsymbol\Pi_t
    =
    \boldsymbol D_t^{-1}
    -
    \boldsymbol\Gamma_t
\end{equation}
yields
\begin{equation}
    \boldsymbol\Pi_t\boldsymbol h_t
    =
    \boldsymbol D_t^{-1}\boldsymbol m_t
    -
    \boldsymbol\Gamma_t\boldsymbol q_t .
\end{equation}
The prior-module error is componentwise
\begin{equation}
    p_{t,j}
    =
    \eta(x_j+h_{t,j};\pi_{t,j})-x_j,
    \qquad
    j=1,\ldots,N.
\end{equation}
Finally, subtracting $\boldsymbol x$ from
\eqref{eq:r_B_to_A_next_def} gives
\begin{equation}
    \boldsymbol\Gamma_{t+1}\boldsymbol q_{t+1}
    =
    \boldsymbol V_{B,t}^{-1}\boldsymbol p_t
    -
    \boldsymbol\Pi_t\boldsymbol h_t .
    \label{eq:q_next_error_recursion}
\end{equation}

Equations \eqref{eq:m_error_recursion}--\eqref{eq:q_next_error_recursion} are
exact finite-dimensional identities.  They will be analyzed in the large-system
limit by conditioning on the past error history.

We also define the measurement residual
\begin{equation}
    \boldsymbol u_t
    :=
    \boldsymbol w-\boldsymbol A\boldsymbol m_t.
    \label{eq:u_t_def}
\end{equation}
The residual $\boldsymbol u_t$ plays the role of the measurement-side history
created by the linear module.

To express the past iterations compactly, define the history matrices
\begin{align}
    \boldsymbol Q_t
    &:=
    (\boldsymbol q_0,\ldots,\boldsymbol q_{t-1})
    \in\mathbb R^{N\times t},
    \\
    \boldsymbol M_t
    &:=
    (\boldsymbol m_0,\ldots,\boldsymbol m_{t-1})
    \in\mathbb R^{N\times t},
    \\
    \boldsymbol H_t
    &:=
    (\boldsymbol h_0,\ldots,\boldsymbol h_{t-1})
    \in\mathbb R^{N\times t},
    \\
    \boldsymbol P_t
    &:=
    (\boldsymbol p_0,\ldots,\boldsymbol p_{t-1})
    \in\mathbb R^{N\times t},
    \\
    \boldsymbol U_t
    &:=
    (\boldsymbol u_0,\ldots,\boldsymbol u_{t-1})
    \in\mathbb R^{M\times t}.
\end{align}
From \eqref{eq:u_t_def}, the first set of linear history constraints is
\begin{equation}
    \boldsymbol A\boldsymbol M_t
    =
    \boldsymbol w\boldsymbol 1_t^{\mathsf T}
    -
    \boldsymbol U_t,
    \label{eq:history_AM}
\end{equation}
where $\boldsymbol 1_t$ denotes the all-one vector in $\mathbb R^t$.

A second set of linear constraints follows from the normal equations for the
linear module.  From \eqref{eq:m_error_recursion},
\begin{equation}
    \left(
        \gamma_w\boldsymbol A^{\mathsf T}\boldsymbol A
        +
        \boldsymbol\Gamma_t
    \right)\boldsymbol m_t
    =
    \gamma_w\boldsymbol A^{\mathsf T}\boldsymbol w
    +
    \boldsymbol\Gamma_t\boldsymbol q_t.
\end{equation}
Using $\boldsymbol u_t=\boldsymbol w-\boldsymbol A\boldsymbol m_t$, this becomes
\begin{equation}
    \gamma_w\boldsymbol A^{\mathsf T}\boldsymbol u_t
    =
    \boldsymbol\Gamma_t(\boldsymbol m_t-\boldsymbol q_t).
\end{equation}
Stacking this identity over previous iterations yields
\begin{equation}
    \boldsymbol A^{\mathsf T}\boldsymbol U_t
    =
    \gamma_w^{-1}
    \left(
        \boldsymbol\Gamma_0(\boldsymbol m_0-\boldsymbol q_0),
        \ldots,
        \boldsymbol\Gamma_{t-1}(\boldsymbol m_{t-1}-\boldsymbol q_{t-1})
    \right).
    \label{eq:history_ATU}
\end{equation}

The two identities \eqref{eq:history_AM} and \eqref{eq:history_ATU} are the
linear observations of $\boldsymbol A$ generated by the EP history.  They are
the starting point for the Gaussian conditioning analysis in the next sections.
Conditioned on a regular finite-time history, the variance-profile Gaussian
matrix will be decomposed as
\begin{equation}
    \boldsymbol A
    =
    \boldsymbol A_{\parallel,t}
    +
    \boldsymbol A_{\perp,t},
\end{equation}
where the first term is the conditional mean and the second term is a centered
correlated Gaussian residual.

\section{Main Results}
\label{sec:main_results}

This section states the main asymptotic results of the paper.  The guiding
question is whether the diagonal EP cavity delivered by the linear module can
be interpreted as a fresh scalar AWGN observation of each signal coordinate.
Such a fresh-channel interpretation is the basis of scalar state evolution in
unitarily invariant models.  In the variance-profile Gaussian setting, however,
the correct conclusion is more subtle: a Gaussian description still exists,
but the Gaussian term is generally not an innovation.  It contains a
predictable component generated by the past Gaussian history.

The main results below make this statement precise.  We first introduce the
MDE-generated quantities that replace the scalar state variables of classical
EP state evolution.  We then state a finite-time dynamic theorem, from which
the memory defect of standard diagonal EP and the memory-corrected
coordinate-wise state evolution follow.

The proof locations of the main statements are summarized in
Table~\ref{tab:proof_roadmap}.  The general dynamic theorem is proved in
Section~\ref{sec:dynamic_proof}; its technical inputs are proved in
Appendices~\ref{app:gaussian_conditioning_histories}--\ref{app:regularity_closure}.
The memory-defect and corrected-state-evolution consequences are proved in
Section~\ref{sec:proof_main_consequences}.  The precision-replacement theorem is
proved in Section~\ref{subsec:proof_precision_replacement}, with auxiliary
estimates collected in Appendix~\ref{app:precision_replacement}.

\begin{table}[H]
    \centering
    \caption{Proof roadmap for the main results.}
    \label{tab:proof_roadmap}
    \begin{tabularx}{\linewidth}{@{}lXl@{}}
        \toprule
        Result & Role & Proof location \\
        \midrule
        Theorem~\ref{thm:general_dynamic} & Finite-time Gaussian-process dynamics of the linear EP module. & Section~\ref{sec:dynamic_proof} \\
        Theorem~\ref{thm:memory_defect} and Corollary~\ref{cor:no_memory} & Memory defect of the standard cavity and the no-memory special case. & Section~\ref{subsec:proof_memory_defect} \\
        Theorem~\ref{thm:corrected_se} & Oracle innovation channel and corrected coordinate-wise SE. & Section~\ref{subsec:proof_corrected_se} \\
        Theorem~\ref{thm:precision_replacement} & Connection between predictable-precision dynamics and adaptive diagonal EP. & Section~\ref{subsec:proof_precision_replacement} \\
        \bottomrule
    \end{tabularx}
\end{table}

\subsection{Fresh-AWGN Principle and Profile-Dependent State Variables}
\label{subsec:fresh_awgn_principle}

The prior module in diagonal EP applies a scalar Bayesian denoiser to the
incoming message from the linear module.  This denoising step is matched if the
message can be interpreted, asymptotically, as a scalar AWGN observation
\begin{equation}
    r_{A\to B,t,j}
    =
    x_j+\sqrt{\tau_{t,j}}\,W_{t,j},
    \qquad
    W_{t,j}\sim\mathcal N(0,1),
\end{equation}
with $W_{t,j}$ fresh relative to the past in the sense of
Definition~\ref{def:freshness}.  In unitarily invariant settings,
this fresh-AWGN mechanism is what ultimately leads to a scalar state evolution.
In the present variance-profile Gaussian setting, the linear module still
produces a Gaussian kernel, but the resulting Gaussian process need not be an
innovation process.  The purpose of this subsection is to define the
coordinate-wise state variables that describe this phenomenon.

For each iteration $t$, let
\begin{equation}
    \boldsymbol C_t
    =
    \left(
        \gamma_w\boldsymbol A^{\mathsf T}\boldsymbol A
        +
        \boldsymbol\Gamma_t
    \right)^{-1}
\end{equation}
be the linear-module covariance matrix defined in
\eqref{eq:C_t_def}.  The diagonal entries
\begin{equation}
    d_{t,j}=[\boldsymbol C_t]_{jj}
\end{equation}
will be shown to admit deterministic equivalents generated by a conditioned
MDE.  We denote these equivalents by
\begin{equation}
    T_{t,j}.
\end{equation}
Thus, in the finite-time limit considered in this paper,
\begin{equation}
    d_{t,j}
    =
    T_{t,j}
    +
    o_p^{\ell_2}(1).
\end{equation}
The precise MDE construction of $T_{t,j}$ is given in
Section~\ref{sec:dynamic_proof}.  At the level of the main results, it is
sufficient to regard $T_{t,j}$ as the MDE response of the $j$th signal
coordinate at iteration $t$.

The variance profile enters the Gaussian kernel through a conditioned
column-covariance response of the measurement residuals.  We make this object
explicit because it is the point at which the variance-profile model differs
from the rotationally invariant case.

\begin{definition}[Conditioned two-resolvent response]
\label{def:conditioned_two_resolvent_response}
Fix $r,s\le t$ and a coordinate $j$.  Let
$\boldsymbol\Xi_{j|r,s}$ denote the conditional covariance operator of the
unexposed component of the $j$th Gaussian column after the linear history and
the previously revealed Schur scalar projections have been conditioned upon.
The number
\begin{equation}
    \Theta_j^{r,s}
\end{equation}
is defined as the deterministic equivalent, produced by the conditioned MDE
and its linearized two-resolvent response, of the quadratic form
\begin{equation}
    (\boldsymbol u_r)^{\mathsf T}
    \boldsymbol\Xi_{j|r,s}
    \boldsymbol u_s .
\end{equation}
Equivalently, $\Theta_j^{r,s}$ is the output of the linearized MDE response
with the block insertion corresponding to the conditioned covariance of the
$j$th column.
\end{definition}

In the absence of history conditioning, the conditional covariance reduces to
$M^{-1}\boldsymbol S_j$ and Definition~\ref{def:conditioned_two_resolvent_response}
reduces to the intuitive profile-weighted residual quadratic form
\begin{equation}
    \frac1M
    \boldsymbol u_r^{\mathsf T}
    \boldsymbol S_j
    \boldsymbol u_s,
    \label{eq:Theta_meaning}
\end{equation}
where $\boldsymbol S_j$ is the column-profile matrix in
Definition~\ref{def:profile_matrices}, and
\begin{equation}
    \boldsymbol u_t
    =
    \boldsymbol w-\boldsymbol A\boldsymbol m_t
\end{equation}
is the measurement residual defined in \eqref{eq:u_t_def}.  In the sequel,
$\Theta_j^{r,s}$ always denotes the conditioned response in
Definition~\ref{def:conditioned_two_resolvent_response};
\eqref{eq:Theta_meaning} is only its unconditioned special case.

The covariance kernel
of the coordinate-wise Schur Gaussian process is then
\begin{equation}
    \zeta_j^{r,s}
    =
    \gamma_w^2
    T_{r,j}T_{s,j}
    \Theta_j^{r,s}.
\end{equation}

For each coordinate $j$, define the past Gaussian-history vector
\begin{equation}
    \boldsymbol{\mathcal Z}_{<t,j}
    =
    (\mathcal Z_{0,j},\mathcal Z_{1,j},\ldots,\mathcal Z_{t-1,j})^{\mathsf T},
\end{equation}
and the covariance blocks
\begin{equation}
    \boldsymbol\zeta_j^{<t,<t}
    =
    (\zeta_j^{r,s})_{0\le r,s<t},
\end{equation}
\begin{equation}
    \boldsymbol\zeta_j^{t,<t}
    =
    (\zeta_j^{t,0},\ldots,\zeta_j^{t,t-1}),
    \qquad
    \boldsymbol\zeta_j^{<t,t}
    =
    (\boldsymbol\zeta_j^{t,<t})^{\mathsf T}.
\end{equation}
The Gaussian regression of $\mathcal Z_{t,j}$ onto its past is
\begin{equation}
    \mathcal Z_{t,j}
    =
    \boldsymbol\zeta_j^{t,<t}
    (\boldsymbol\zeta_j^{<t,<t})^\dagger
    \boldsymbol{\mathcal Z}_{<t,j}
    +
    G_{t,j},
    \label{eq:Z_gaussian_regression}
\end{equation}
where $G_{t,j}$ is Gaussian and independent of
$\boldsymbol{\mathcal Z}_{<t,j}$, conditionally on the MDE-generated environment.  Its
variance is
\begin{equation}
    \nu_{t,j}
    =
    \zeta_j^{t,t}
    -
    \boldsymbol\zeta_j^{t,<t}
    (\boldsymbol\zeta_j^{<t,<t})^\dagger
    \boldsymbol\zeta_j^{<t,t}.
\end{equation}
When $t=0$, the past is empty and the convention is
\begin{equation}
    \nu_{0,j}=\zeta_j^{0,0}.
\end{equation}

Let
\begin{equation}
    \bar\pi_{t,j}
    =
    \operatorname{Proj}_{[\pi_{\min},\pi_{\max}]}
    \left(T_{t,j}^{-1}-\gamma_{t,j}\right)
\end{equation}
be the MDE-level diagonal EP cavity precision, where
$\operatorname{Proj}_{[a,b]}(x)=\min\{b,\max\{a,x\}\}$.  Define the cavity
scaling factor
\begin{equation}
    \alpha_{t,j}
    =
    \frac{1}{\bar\pi_{t,j}T_{t,j}}.
\end{equation}
The predictable memory component and the innovation variance of the EP cavity
are then
\begin{align}
    \mu_{t,j}
    &=
    \alpha_{t,j}
    \boldsymbol\zeta_j^{t,<t}
    (\boldsymbol\zeta_j^{<t,<t})^\dagger
    \boldsymbol{\mathcal Z}_{<t,j},
    \label{eq:mu_def}
    \\
    \tau_{t,j}
    &=
    \alpha_{t,j}^2\nu_{t,j}.
    \label{eq:tau_def}
\end{align}
The collection
\begin{equation}
    \mathcal P_t
    =
    \{T_{s,j},\Theta_j^{r,s},\zeta_j^{r,s},
      \bar\pi_{s,j},\tau_{s,j}:0\le r,s\le t,\ 1\le j\le N\}
\end{equation}
will be called the MDE-generated environment up to time $t$.

The central point is that $\mu_{t,j}$ is generally nonzero.  In the terminology
of Definition~\ref{def:freshness}, it is the EP-cavity memory defect.  Hence the
standard diagonal EP cavity is not merely corrupted by Gaussian noise; it is
shifted by a component predictable from the past Gaussian history.  The next
theorem states the finite-time dynamic result from which all main conclusions
follow.

\subsection{General Finite-Time Dynamic Theorem}
\label{subsec:general_dynamic_theorem}

The following theorem is the technical backbone of the paper.  It is the
variance-profile counterpart of a finite-time EP state-evolution theorem.
Rather than producing two scalar variance recursions, it produces a
coordinate-wise Gaussian history governed by the MDE-generated environment
defined above.  All variables in this theorem belong to the
predictable-precision dynamics used in the conditioning proof.  The actual
finite-dimensional adaptive precision recursion is connected to this dynamics
only later, through Theorem~\ref{thm:precision_replacement}.

\begin{theorem}[General finite-time dynamic theorem]
\label{thm:general_dynamic}
Suppose Assumptions~\ref{ass:signal}--\ref{ass:ep_regular} hold.  Fix an
arbitrary finite iteration horizon $T$.  Then, for every $t\le T$, the
predictable-precision diagonal EP dynamics associated with
Section~\ref{sec:system_model} satisfies the following properties.

\emph{1) MDE response of the linear covariance.}  The retained diagonal
variance satisfies
\begin{equation}
    \frac1N\sum_{j=1}^N
    |d_{t,j}-T_{t,j}|^2
    \overset{p}{\longrightarrow}0.
\end{equation}

\emph{2) Coordinate-wise Gaussian kernel.}  There exist Schur Gaussian process coordinates
$\{\mathcal Z_{s,j}:0\le s\le t,
1\le j\le N\}$ such that
\begin{equation}
    m_{t,j}
    =
    \gamma_{t,j}T_{t,j}q_{t,j}
    +
    \mathcal Z_{t,j}
    +
    \Delta_{t,j},
    \label{eq:main_linear_kernel}
\end{equation}
where
\begin{equation}
    \frac1N\sum_{j=1}^N|\Delta_{t,j}|^2
    \overset{p}{\longrightarrow}0.
\end{equation}
Conditionally on $\mathcal P_t$, the Gaussian history
\begin{equation}
    (\mathcal Z_{0,j},\mathcal Z_{1,j},\ldots,\mathcal Z_{t,j})
\end{equation}
has covariance kernel
\begin{equation}
    \operatorname{Cov}
    (\mathcal Z_{r,j}^{\mathrm G},\mathcal Z_{s,j}^{\mathrm G}\mid\mathcal P_t)
    =
    \zeta_j^{r,s},
    \qquad
    0\le r,s\le t.
\end{equation}

\emph{3) Empirical Gaussian law.}  Let $\mathcal X_{j,N}^t$ denote any finite
coordinate history formed from the variables generated by the EP recursion up
to time $t$, and let $\mathcal X_{j,\mathrm G}^t$ denote the corresponding
Gaussian-kernel reference history obtained by replacing the linear-module
residuals with the Gaussian Schur variables in \eqref{eq:main_linear_kernel}.  Then,
for every $\psi\in \PL(2)$,
\begin{equation}
    \frac1N\sum_{j=1}^N
    \psi(\mathcal X_{j,N}^t)
    -
    \frac1N\sum_{j=1}^N
    \mathbb E_{\mathrm G,j}
    \!\left[
        \psi(\mathcal X_{j,\mathrm G}^t)
        \mid
        \mathcal P_t
    \right]
    \overset{p}{\longrightarrow}0.
    \label{eq:empirical_gaussian_law}
\end{equation}

\emph{4) EP cavity decomposition.}  The standard diagonal EP cavity error
satisfies
\begin{equation}
    h_{t,j}
    =
    \mu_{t,j}
    +
    \sqrt{\tau_{t,j}}\,W_{t,j}
    +
    \varepsilon_{t,j},
\end{equation}
where $W_{t,j}\sim\mathcal N(0,1)$ is independent of the past Gaussian history
conditionally on $\mathcal P_t$, and
\begin{equation}
    \frac1N\sum_{j=1}^N|\varepsilon_{t,j}|^2
    \overset{p}{\longrightarrow}0.
\end{equation}

\emph{5) Regularity closure.}  The vectors
\begin{equation}
    \boldsymbol m_t,\quad
    \boldsymbol h_t,\quad
    \boldsymbol p_t,\quad
    \boldsymbol q_{t+1}
\end{equation}
remain signal-side admissible, and the measurement residual
\begin{equation}
    \boldsymbol u_t
    =
    \boldsymbol w-\boldsymbol A\boldsymbol m_t
\end{equation}
remains measurement-side admissible.  Hence the finite-time induction can be
continued up to iteration $T$.
\end{theorem}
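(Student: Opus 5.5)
The proof will be a finite-time induction on $t\le T$, in the spirit of the Bayati--Montanari conditioning method, with the induction hypothesis being the conjunction of items 1)--5) at all earlier times on the regularity event $\mathcal R_{t-1}$. At step $t$ we will condition on the predictable linear-history field $\mathcal F_t^{\rm lin}$ generated by the constraints \eqref{eq:history_AM} and \eqref{eq:history_ATU}. Since the precisions are predictable, these constraints are finitely many linear observations of $\boldsymbol A_{\mathrm{vec}}$. Lemma~\ref{lem:gaussian_conditioning} then gives $\boldsymbol A=\boldsymbol A_{\parallel,t}+\boldsymbol A_{\perp,t}$. Here $\boldsymbol A_{\parallel,t}$ is a finite-rank matrix built from $\boldsymbol M_t,\boldsymbol U_t$ and the profile. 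The residual $\boldsymbol A_{\perp,t}$ has flat covariance by Corollary~\ref{cor:flat_cov}.

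\textbf{Item 1.} Write $\boldsymbol C_t$ as the lower-right block of the inverse of the linearization
\[
\begin{pmatrix}\gamma_w^{-1}\boldsymbol I_M & \boldsymbol A\\ \boldsymbol A^{\mathsf T} & -\boldsymbol\Gamma_t\end{pmatrix},
\]
suitably rescaled. Put $\boldsymbol A_{\parallel,t}$ into the deterministic deformation (after conditioning) and $\boldsymbol A_{\perp,t}$ into $\boldsymbol W_N$. We then verify the primitive conditions of Theorem~\ref{thm:external_mde}:
\begin{itemize}
\item flat covariance, from Corollary~\ref{cor:flat_cov};
\item bounded deformation, from $\|\boldsymbol A_{\parallel,t}\|=O_p(1)$, which follows from Gram tightness of the admissible histories;
\item compact loading, from Assumption~\ref{ass:ep_regular}.
\end{itemize}
The averaged diagonal conclusion of that theorem then defines $T_{t,j}$ as the diagonal of the conditioned MDE solution and gives item~1.

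\textbf{Item 2.} Apply a Schur complement column by column. Write $\boldsymbol a_j$ for column $j$ and $\boldsymbol A^{(j)}$ for the matrix with that column removed. The normal equation gives
\[
m_{t,j}=d_{t,j}\bigl(\gamma_t q_{t,j}\gamma_{t,j}^{-1}\gamma_{t,j}+\gamma_w\boldsymbol a_j^{\mathsf T}\boldsymbol u_t^{(j)}\bigr)
\]
up to leave-one-out corrections, where $\boldsymbol u_t^{(j)}$ is the leave-$j$-out residual. Replacing $d_{t,j}$ by $T_{t,j}$ yields the instantaneous term $\gamma_{t,j}T_{t,j}q_{t,j}$.

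For the remaining term, set $\mathcal Z_{t,j}:=\gamma_wT_{t,j}\,\boldsymbol a_{j,\perp}^{\mathsf T}\boldsymbol u_t^{(j)}$. Conditionally on $\mathcal G_{t,j}$ (which reveals the history projections of column $j$), this is a linear functional of the Gaussian unexposed column, and it is independent of $\boldsymbol u^{(j)}$. It is therefore exactly Gaussian, with covariance $\gamma_w^2T_{r,j}T_{s,j}(\boldsymbol u_r^{(j)})^{\mathsf T}\boldsymbol\Xi_{j|r,s}\boldsymbol u_s^{(j)}$. Corollary~\ref{cor:two_resolvent}, applied with the insertion given by the $j$th column covariance, replaces this random quadratic form by $\Theta_j^{r,s}$, which yields the kernel $\zeta_j^{r,s}$. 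The parallel part $\boldsymbol a_{j,\parallel}^{\mathsf T}\boldsymbol u_t$ and the leave-one-out corrections are collected into $\Delta_{t,j}$. Showing $\Delta_{t,j}=o_p^{\ell_2}(1)$ needs the no-spike part of admissibility together with the MDE quadratic-form bounds.

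\textbf{Item 3.} The array $\{(\mathcal Z_{s,j})_{s\le t}\}_j$ is jointly Gaussian given $\mathcal P_t$. Cross-covariances between distinct columns come only through the finite-rank history, so their averaged operator norm is $O(1/N)$. Theorem~\ref{thm:weak_gaussian_pl} then applies to $\psi\in\PL(2)$. The $(4+\epsilon)$ moments come from Assumption~\ref{ass:signal} and the compactness of the kernel. Pseudo-Lipschitz continuity transfers from the Gaussian reference history to the actual history, using items 1--2 and Cauchy--Schwarz.

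\textbf{Item 4.} From $\boldsymbol\Pi_t\boldsymbol h_t=\boldsymbol D_t^{-1}\boldsymbol m_t-\boldsymbol\Gamma_t\boldsymbol q_t$ and item 2, the instantaneous term cancels exactly: $T_{t,j}^{-1}\gamma_{t,j}T_{t,j}q_{t,j}-\gamma_{t,j}q_{t,j}=0$. This leaves $h_{t,j}=\alpha_{t,j}\mathcal Z_{t,j}+o_p^{\ell_2}(1)$. Here $\pi_{t,j}$ is replaced by $\bar\pi_{t,j}$, which is controlled by item~1 and the Lipschitz property of the projection. Lemma~\ref{lem:gaussian_regression} applied to $\mathcal Z_{\le t,j}$ then gives $\mu_{t,j}+\sqrt{\tau_{t,j}}W_{t,j}$.

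\textbf{Item 5.} Regularity closure follows from items 3--4:
\begin{itemize}
\item empirical second moments are tight by the $\PL(2)$ law;
\item $\|\cdot\|_\infty/\sqrt N\to0$ follows from the Gaussian tails plus the $4+\epsilon$ moments of $x_j$;
\item Assumption~\ref{ass:ep_regular} propagates these bounds to $\boldsymbol p_t$ and $\boldsymbol q_{t+1}$;
\item $\boldsymbol u_t$ is handled via $\|\boldsymbol A\|=O_p(1)$;
\item redundant directions are removed by the Moore--Penrose convention.
\end{itemize}

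\textbf{Main obstacle.} I expect the hardest step to be the second half of item 2: proving that the leave-one-out and parallel corrections are uniformly small in $\ell_2$, and that the random conditioned quadratic forms converge to $\Theta_j^{r,s}$ on average over $j$ rather than only for fixed deterministic test vectors. The vectors $\boldsymbol u_s^{(j)}$ are random, so the deterministic-vector statement of Theorem~\ref{thm:external_mde} has to be applied conditionally on the history. My plan is to condition on $\mathcal F_t^{\rm lin}$, which makes the test vectors measurable, and to bound the $j$-dependence by rank-one resolvent identities with $O(N^{-1/2})$ errors.
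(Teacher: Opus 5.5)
Your item~2 has a step that fails. You set $\mathcal Z_{t,j}:=\gamma_wT_{t,j}\boldsymbol a_{j,\perp}^{\mathsf T}\boldsymbol u_t^{(j)}$ and move the conditional-mean part $\gamma_wT_{t,j}\boldsymbol a_{j,\parallel}^{\mathsf T}\boldsymbol u_t^{(j)}$ into $\Delta_{t,j}$, which you intend to show is $o_p^{\ell_2}(1)$. But that term is of order one per coordinate. By the finite-history representation in Lemma~\ref{lem:C_regression_representation}, the $j$th column of $\boldsymbol A_{\parallel,t}$ is $M^{-1}\sum_\ell c_{\ell,t}\,q_{\ell,t,j}\,\boldsymbol S_j\boldsymbol p_{\ell,t}$. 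Hence
\[
\boldsymbol a_{j,\parallel}^{\mathsf T}\boldsymbol u_t
=\sum_{\ell=1}^{L_t}c_{\ell,t}\,q_{\ell,t,j}\,\frac1M\boldsymbol p_{\ell,t}^{\mathsf T}\boldsymbol S_j\boldsymbol u_t .
\]
This is a combination of the $j$th entries of signal-side history vectors, and its coefficients $M^{-1}\boldsymbol p_{\ell,t}^{\mathsf T}\boldsymbol S_j\boldsymbol u_t$ do not vanish: the $\boldsymbol p_{\ell,t}$ include past residuals $\boldsymbol u_s$, and all residuals share the noise $\boldsymbol w$. So its empirical $\ell_2$ norm is of order one. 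Already for $s_{ij}\equiv1$ it contains the Bayati--Montanari conditional-mean piece $\sum_s c_s\,\boldsymbol a_j^{\mathsf T}\boldsymbol u_s$, where $c_s$ are the regression coefficients of $\boldsymbol u_t$ on the past residuals. That piece must be carried through the recursion, not discarded. No-spike bounds or MDE quadratic-form estimates cannot make it small, so $\frac1N\sum_j|\Delta_{t,j}|^2\to0$ fails for your decomposition.

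This is not a technicality: the discarded term is exactly the memory the theorem is about. Suppose, as in the paper, $\mathcal G_{t,j}$ contains the previously revealed projections of $\boldsymbol a_j$. Then your $\mathcal Z_{t,j}$ is centered given $\mathcal G_{t,j}$, while $\mathcal Z_{<t,j}$ is (asymptotically) $\mathcal G_{t,j}$-measurable. Your process therefore has vanishing cross-time covariance, and item~4 would return $\mu_{t,j}=0$ identically. That is inconsistent with the kernel $\zeta_j^{r,s}$ of the statement, whose off-diagonal entries are generally nonzero.

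The paper does not split the column. It uses the one-step column cavity $\bar{\boldsymbol u}_t^{(j)}$ with the current $\boldsymbol q_t,\boldsymbol\Gamma_t$ frozen, for which $m_{t,j}=d_{t,j}\bigl(\gamma_{t,j}q_{t,j}+\gamma_w\boldsymbol a_j^{\mathsf T}\bar{\boldsymbol u}_t^{(j)}\bigr)$ holds exactly, so there are no leave-one-out corrections to control. It then defines $\mathcal Z_{t,j}=\gamma_wT_{t,j}\boldsymbol a_j^{\mathsf T}\bar{\boldsymbol u}_t^{(j)}$ with the full column. The remainder is then only $\Delta_{t,j}=(d_{t,j}-T_{t,j})r_{t,j}^{\rm sch}$, which is small by item~1 and the bounded Schur-drive energy. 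Joint Gaussianity of $(\mathcal Z_{0,j},\ldots,\mathcal Z_{t,j})$ comes from sequential column-wise conditioning, in which the conditional mean $\boldsymbol a_{j|r}$ is a linear function of earlier revealed projections. That is what creates the temporal covariance that the regression in item~4 extracts as $\mu_{t,j}$.

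To fix your argument, keep the conditional-mean part inside the Gaussian process (or as an explicit predictable term) rather than in $\Delta_{t,j}$. The rest of your outline (items~1, 3, 5 and the regression step) then follows the paper's route.
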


The proof of Theorem~\ref{thm:general_dynamic} is given in
Section~\ref{sec:dynamic_proof}.  Its structure follows the error recursion of
Section~\ref{subsec:error_recursion}: the past iterates impose finite linear
constraints on $\boldsymbol A$; Gaussian conditioning decomposes
$\boldsymbol A$ into a bounded deformation and a centered correlated Gaussian
residual; the external MDE input gives the linear response $T_{t,j}$; and a
conditioned Schur-kernel and empirical Gaussian-law argument upgrades the
coordinate-wise Gaussian kernel to the empirical law
\eqref{eq:empirical_gaussian_law}.  The theorem deliberately avoids
conditioning on actual retained diagonal resolvents or adaptive cavity
precisions.  Those nonlinear finite-sample quantities are handled separately by
the precision-replacement theorem.

\subsection{Consequences: Memory Defect and Corrected State Evolution}
\label{subsec:memory_defect_corrected_se}

Theorem~\ref{thm:general_dynamic} has two immediate consequences.  The first is
a negative statement for the usual fresh-AWGN interpretation of standard
diagonal EP: the cavity is Gaussian, but it is generally shifted by a
predictable memory term.  The second is a positive statement: after removing
this predictable component, one obtains a fresh coordinate-wise AWGN channel.

\begin{theorem}[Memory defect of standard diagonal EP]
\label{thm:memory_defect}
Under the assumptions of Theorem~\ref{thm:general_dynamic}, the standard
diagonal EP message from module A to module B satisfies
\begin{equation}
    r_{A\to B,t,j}
    =
    x_j
    +
    \mu_{t,j}
    +
    \sqrt{\tau_{t,j}}\,W_{t,j}
    +
    o_p^{\ell_2}(1).
    \label{eq:std_channel_memory}
\end{equation}
Equivalently,
\begin{equation}
    h_{t,j}
    =
    \mu_{t,j}
    +
    \sqrt{\tau_{t,j}}\,W_{t,j}
    +
    o_p^{\ell_2}(1).
\end{equation}
Thus the standard EP cavity is not, in general, a fresh AWGN observation of
$x_j$.  It is a shifted Gaussian observation whose shift is predictable from
the past Gaussian history.
\end{theorem}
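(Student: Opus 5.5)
Theorem~\ref{thm:memory_defect} follows essentially from part~4) of Theorem~\ref{thm:general_dynamic}, together with the error-domain identity for the cavity mean. The first display is the second one plus the signal. Since $\boldsymbol h_t=\boldsymbol r_{A\to B}^t-\boldsymbol x$ by definition, \eqref{eq:std_channel_memory} is obtained by adding $x_j$ to both sides of the $h_{t,j}$ decomposition. The work is therefore to justify the decomposition of $h_{t,j}$. The decisive content sits in the theorem, and the plan is to make explicit how the cavity algebra converts \eqref{eq:main_linear_kernel} into the pair $(\mu_{t,j},\tau_{t,j})$.

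\emph{Cavity algebra.} Start from $\boldsymbol\Pi_t\boldsymbol h_t=\boldsymbol D_t^{-1}\boldsymbol m_t-\boldsymbol\Gamma_t\boldsymbol q_t$ and substitute the linear kernel \eqref{eq:main_linear_kernel}. Replace $d_{t,j}$ by $T_{t,j}$ using part~1), and replace $\pi_{t,j}$ by $\bar\pi_{t,j}$; in the predictable-precision dynamics this is the MDE-level projected cavity precision. This gives
\begin{equation*}
    \bar\pi_{t,j}h_{t,j}
    =
    T_{t,j}^{-1}\bigl(\gamma_{t,j}T_{t,j}q_{t,j}+\mathcal Z_{t,j}\bigr)
    -\gamma_{t,j}q_{t,j}
    +\text{error}
    =
    T_{t,j}^{-1}\mathcal Z_{t,j}+\text{error}.
\end{equation*}
The instantaneous response therefore cancels exactly, and $h_{t,j}=\alpha_{t,j}\mathcal Z_{t,j}+\varepsilon'_{t,j}$. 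The error terms are products of bounded factors with $o_p^{\ell_2}(1)$ arrays, possibly multiplied by admissible arrays $q_{t,j},m_{t,j}$. These are controlled because $T_{t,j}$, $\bar\pi_{t,j}$, $d_{t,j}$, $\gamma_{t,j}$ and $\pi_{t,j}$ are confined to compact positive intervals by Assumption~\ref{ass:ep_regular}. I will handle the products by truncation plus Cauchy--Schwarz, invoking the no-spike condition of admissibility from part~5).

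\emph{Regression.} Apply Lemma~\ref{lem:gaussian_regression}, conditionally on $\mathcal P_t$, to the history $(\mathcal Z_{0,j},\dots,\mathcal Z_{t,j})$ with kernel $\zeta_j^{r,s}$. This yields \eqref{eq:Z_gaussian_regression} with $G_{t,j}$ independent of the past and of variance $\nu_{t,j}$. Multiplying by $\alpha_{t,j}$, as in Definition~\ref{def:freshness}, gives the memory term $\mu_{t,j}$ of \eqref{eq:mu_def}. The remaining piece is $\alpha_{t,j}G_{t,j}=\sqrt{\tau_{t,j}}W_{t,j}$, where $W_{t,j}=G_{t,j}/\sqrt{\nu_{t,j}}$ when $\nu_{t,j}>0$. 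When $\nu_{t,j}=0$ we have $\tau_{t,j}=0$ and take $W_{t,j}$ to be an independent standard normal.

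\emph{Main obstacle.} The main obstacle is that \eqref{eq:main_linear_kernel} identifies $\mathcal Z_{t,j}$ only through the Gaussian reference kernel, while the regression coefficients $\boldsymbol\zeta_j^{t,<t}(\boldsymbol\zeta_j^{<t,<t})^\dagger$ may be discontinuous where the Gram matrix is nearly singular. To handle this, I will use the regularity event $\mathcal R_t$, which removes asymptotically redundant directions, so that the coefficients are uniformly bounded across $j$ on average. The $o_p^{\ell_2}(1)$ errors from using the realized past $\mathcal Z_{<t,j}$ then transfer to the memory term. The error $\varepsilon_{t,j}$ is thus the sum of $\varepsilon'_{t,j}$ and these coefficient-times-error terms, each of vanishing empirical second moment.
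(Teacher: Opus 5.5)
Your first paragraph is exactly the paper's proof of Theorem~\ref{thm:memory_defect}: it substitutes part~4) of Theorem~\ref{thm:general_dynamic} into $\boldsymbol r_{A\to B}^t=\boldsymbol x+\boldsymbol h_t$, and your cavity-algebra-plus-regression sketch only re-derives part~4) along the lines of Appendix~\ref{app:cavity_memory_decomposition}, so it is not needed here. If you keep that sketch, note that the no-spike condition does not control products such as $(T_{t,j}/d_{t,j}-1)q_{t,j}$ in empirical $\ell_2$; for that you need uniform integrability of the empirical second moments, as in Lemma~\ref{lem:J_product_stability}.
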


The proof of Theorem~\ref{thm:memory_defect} is given in
Section~\ref{subsec:proof_memory_defect}.

The term $\mu_{t,j}$ is the memory defect.  It is not a small technical
remainder; it is the conditional mean of the current Gaussian residual given
its past.  Therefore, unless this conditional mean vanishes, the scalar
posterior mean
\begin{equation}
    \eta(r_{A\to B,t,j};\pi_{t,j})
\end{equation}
is not matched to the fresh channel
\begin{equation}
    X+\sqrt{\tau_{t,j}}\,W.
\end{equation}

\begin{corollary}[No-memory special case]
\label{cor:no_memory}
If, for every fixed $t$ and all coordinates $j$,
\begin{equation}
    \boldsymbol\zeta_j^{t,<t}
    =
    \boldsymbol 0,
    \label{eq:no_memory_condition}
\end{equation}
then
\begin{equation}
    \mu_{t,j}=0,
\end{equation}
and the standard diagonal EP cavity is asymptotically fresh:
\begin{equation}
    h_{t,j}
    =
    \sqrt{\tau_{t,j}}\,W_{t,j}
    +
    o_p^{\ell_2}(1).
    \label{eq:no_memory_fresh}
\end{equation}
\end{corollary}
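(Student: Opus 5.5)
The corollary follows from Theorem~\ref{thm:memory_defect}, equivalently from item~4) of Theorem~\ref{thm:general_dynamic}, once we check that condition \eqref{eq:no_memory_condition} kills the regression coefficient in \eqref{eq:mu_def}. The proof has three steps.

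First, fix $t$ and $j$. By definition \eqref{eq:mu_def},
\begin{equation*}
    \mu_{t,j}
    =
    \alpha_{t,j}\,
    \boldsymbol\zeta_j^{t,<t}
    (\boldsymbol\zeta_j^{<t,<t})^\dagger
    \boldsymbol{\mathcal Z}_{<t,j}.
\end{equation*}
The coefficient $\alpha_{t,j}=(\bar\pi_{t,j}T_{t,j})^{-1}$ is finite. Indeed, $\bar\pi_{t,j}\ge\pi_{\min}>0$ by the projection onto $[\pi_{\min},\pi_{\max}]$. Also, $T_{t,j}>0$ as the deterministic equivalent of a diagonal entry of the positive definite matrix $\boldsymbol C_t$, which the MDE input of Theorem~\ref{thm:external_mde} with compact loading keeps bounded away from $0$. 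Hence, if $\boldsymbol\zeta_j^{t,<t}=\boldsymbol 0$, the row vector $\boldsymbol\zeta_j^{t,<t}(\boldsymbol\zeta_j^{<t,<t})^\dagger$ vanishes identically. This holds whatever the rank of $\boldsymbol\zeta_j^{<t,<t}$, since the Moore--Penrose inverse is always defined, as in Lemma~\ref{lem:gaussian_regression}. Therefore $\mu_{t,j}=0$ exactly, for every $j$, and not merely in the $o_p^{\ell_2}(1)$ sense.

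Second, the same hypothesis simplifies the innovation variance: $\nu_{t,j}=\zeta_j^{t,t}$, so $\tau_{t,j}=\alpha_{t,j}^2\zeta_j^{t,t}$. By Definition~\ref{def:freshness}, the conditional covariance of $\mathcal Z_{t,j}$ with the past vanishes. For a jointly Gaussian history, this means $\mathcal Z_{t,j}=G_{t,j}$ is itself fresh. Substituting $\mu_{t,j}=0$ into item~4) of Theorem~\ref{thm:general_dynamic}, or into Theorem~\ref{thm:memory_defect}, gives
\begin{equation*}
    h_{t,j}=\sqrt{\tau_{t,j}}\,W_{t,j}+\varepsilon_{t,j},
    \qquad
    \frac1N\sum_{j}|\varepsilon_{t,j}|^2\overset{p}{\longrightarrow}0.
\end{equation*}
Here $W_{t,j}$ is independent of the past Gaussian history conditionally on $\mathcal P_t$. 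This is \eqref{eq:no_memory_fresh}.

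Third, there is one subtlety. If $\nu_{t,j}=0$ for some $j$, then $W_{t,j}$ is not defined as a standardized variable, but any standard normal works since $\tau_{t,j}=0$. Beyond this, the only point needing care is bookkeeping: the hypothesis is imposed for every $t\le T$ and all $j$, while the error term is controlled uniformly by Theorem~\ref{thm:general_dynamic} on the fixed horizon. Since $\mu_{t,j}=0$ exactly, no additional averaging over $j$ is needed. The main obstacle is therefore only the positivity of $T_{t,j}$, which guarantees that $\alpha_{t,j}$ is finite, and I would cite it from the MDE construction rather than reprove it.
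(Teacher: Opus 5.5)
Your proposal is correct and follows the paper's own argument: the hypothesis $\boldsymbol\zeta_j^{t,<t}=\boldsymbol 0$ annihilates the regression coefficient in \eqref{eq:mu_def}, so $\mu_{t,j}=0$ exactly, and substituting this into the cavity decomposition of Theorem~\ref{thm:memory_defect} (item~4 of Theorem~\ref{thm:general_dynamic}) yields \eqref{eq:no_memory_fresh} with the same $o_p^{\ell_2}(1)$ remainder. Your extra remarks on the finiteness of $\alpha_{t,j}$ and on the $\nu_{t,j}=0$ convention match what the paper already assumes in Appendix~\ref{app:cavity_memory_decomposition}. They are harmless but not needed.
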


The corollary follows immediately from Theorem~\ref{thm:memory_defect} and the
Gaussian-regression formula \eqref{eq:Z_gaussian_regression}.

Condition~\eqref{eq:no_memory_condition} is the profile-dependent analogue of
the innovation property available in rotationally invariant settings.  For a
general variance profile, it need not hold.

\begin{remark}[Sanity check: collapse to the innovation case]
\label{rem:sanity_innovation_collapse}
If the conditioned MDE response is temporally diagonal in the sense of
\eqref{eq:no_memory_condition}, then the Gaussian regression in
\eqref{eq:Z_gaussian_regression} has zero predictable part.  Consequently the
standard and corrected cavities coincide up to the vanishing empirical
remainder, and the coordinate-wise SE reduces to the usual fresh-AWGN form.
This is the precise sense in which the present theorem is consistent with the
innovation mechanism proved in rotationally invariant EP analyses.  The point
of the variance-profile result is that temporal diagonality of the conditioned
kernel is no longer automatic.
\end{remark}

The correction below is an oracle state-evolution construction.  It is
introduced to identify the innovation channel implied by the Gaussian regression
of the Schur residual process.  It is not claimed here that the memory
coefficients can be estimated from one finite-dimensional observation or that
\eqref{eq:corrected_cavity_def} is a directly implementable finite-sample EP
update.

We now define the memory-corrected cavity by subtracting the predictable
Gaussian-history component:
\begin{equation}
    \widetilde h_{t,j}
    =
    h_{t,j}-\mu_{t,j},
    \qquad
    \widetilde r_{A\to B,t,j}
    =
    x_j+\widetilde h_{t,j}.
    \label{eq:corrected_cavity_def}
\end{equation}
This correction is a Gaussian innovation decomposition at the level of state
evolution.  It removes the conditional mean of the Gaussian residual and leaves
the fresh innovation.

\begin{theorem}[Memory-corrected coordinate-wise state evolution]
\label{thm:corrected_se}
Under the assumptions of Theorem~\ref{thm:general_dynamic}, the corrected
cavity satisfies
\begin{equation}
    \widetilde h_{t,j}
    =
    \sqrt{\tau_{t,j}}\,W_{t,j}
    +
    o_p^{\ell_2}(1).
\end{equation}
Equivalently,
\begin{equation}
    \widetilde r_{A\to B,t,j}
    =
    x_j
    +
    \sqrt{\tau_{t,j}}\,W_{t,j}
    +
    o_p^{\ell_2}(1).
\end{equation}

Let $\mathcal X_{j,N}^{t,\mathrm{corr}}$ denote any finite coordinate history
generated by the corrected scalar channel up to time $t$, and let
$\mathcal X_{j,\mathrm{SE}}^{t,\mathrm{corr}}$ denote its state-evolution
reference generated by
\begin{equation}
    R_{t,j}^{\mathrm{SE}}
    =
    X_j+\sqrt{\tau_{t,j}}\,W_{t,j},
\end{equation}
with posterior updates applied componentwise using the scalar prior $P_X$.
Then, for every $\psi\in \PL(2)$,
\begin{equation}
    \frac1N\sum_{j=1}^N
    \psi(\mathcal X_{j,N}^{t,\mathrm{corr}})
    -
    \frac1N\sum_{j=1}^N
    \mathbb E_{\mathrm{SE},j}
    \!\left[
        \psi(\mathcal X_{j,\mathrm{SE}}^{t,\mathrm{corr}})
        \mid
        \mathcal P_t
    \right]
    \overset{p}{\longrightarrow}0.
    \label{eq:corrected_empirical_se}
\end{equation}
\end{theorem}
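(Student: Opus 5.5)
The plan is to derive Theorem~\ref{thm:corrected_se} directly from Theorem~\ref{thm:memory_defect}, item 4 of Theorem~\ref{thm:general_dynamic}, and the Gaussian regression of Lemma~\ref{lem:gaussian_regression}. No new random-matrix input should be needed. For the first claim, substitute the decomposition $h_{t,j}=\mu_{t,j}+\sqrt{\tau_{t,j}}W_{t,j}+\varepsilon_{t,j}$ into the definition \eqref{eq:corrected_cavity_def}. Because $\mu_{t,j}$ is subtracted exactly, this gives $\widetilde h_{t,j}=\sqrt{\tau_{t,j}}W_{t,j}+\varepsilon_{t,j}$, and $N^{-1}\sum_j|\varepsilon_{t,j}|^2\to0$ in probability is exactly $o_p^{\ell_2}(1)$. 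The statement for $\widetilde r_{A\to B,t,j}$ follows by adding $x_j$.

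The subtle point is that $\mu_{t,j}$ in \eqref{eq:mu_def} is defined from the Gaussian Schur coordinates $\boldsymbol{\mathcal Z}_{<t,j}$. I would make this explicit as follows. By \eqref{eq:main_linear_kernel} at earlier times, these coordinates are realized on the same probability space as the EP history. Item 4 is phrased with exactly these coordinates, so no further coupling argument is required. I would also check that $\alpha_{t,j}$ is uniformly bounded, since $\bar\pi_{t,j}\ge\pi_{\min}$ and $T_{t,j}$ is bounded below by the MDE stability bound with compact loading. Together these guarantee $\mu_{t,j}$ is admissible and that no remainders get amplified.

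For the empirical law \eqref{eq:corrected_empirical_se}, I would apply item 3 of Theorem~\ref{thm:general_dynamic} to the corrected history. The idea is that $\mathcal X_{j,N}^{t,\mathrm{corr}}$ is built from $x_j$, $\widetilde r_{A\to B,s,j}$ ($s\le t$), and scalar posterior maps applied to these. Each entry is therefore a fixed function of $(x_j,\mathcal Z_{\le t,j},q_{\le t,j},\mathcal P_t)$ plus $o_p^{\ell_2}(1)$ errors.

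The key steps are as follows.

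\begin{enumerate}
\item Write $\psi(\mathcal X^{t,\mathrm{corr}}_{j,N})=\phi(\mathcal X^t_{j,N})+$error, where $\phi$ composes $\psi$ with the linear memory subtraction and the scalar maps. This $\phi$ is $\PL(2)$ uniformly in $j$, using boundedness of the regression coefficients $\alpha_{t,j}\boldsymbol\zeta_j^{t,<t}(\boldsymbol\zeta_j^{<t,<t})^\dagger$ on the regularity event and Assumption~\ref{ass:ep_regular}.
\item Control the error through the $\PL(2)$ inequality combined with Cauchy--Schwarz and the second-moment bounds from item 5.
\item Invoke \eqref{eq:empirical_gaussian_law} to replace the empirical average by the Gaussian-kernel expectation.
\item In the Gaussian reference, Lemma~\ref{lem:gaussian_regression} shows that $\mathcal Z_{t,j}^{\mathrm G}-\boldsymbol\zeta_j^{t,<t}(\boldsymbol\zeta_j^{<t,<t})^\dagger\boldsymbol{\mathcal Z}^{\mathrm G}_{<t,j}=G_{t,j}$ is independent of the past with variance $\nu_{t,j}$. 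Hence $\alpha_{t,j}G_{t,j}\overset{d}{=}\sqrt{\tau_{t,j}}W_{t,j}$, and the corrected Gaussian history has exactly the law of the SE reference generated by $X_j+\sqrt{\tau_{t,j}}W_{t,j}$.
\end{enumerate}

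I expect the main obstacle to be the uniform $\PL(2)$ control in step 1 when $\boldsymbol\zeta_j^{<t,<t}$ is nearly singular. The Moore--Penrose inverse is discontinuous, so regression coefficients could blow up for some coordinates. I would first rely on the Gram-stability part of the regularity event $\mathcal R_t$, which removes asymptotically redundant directions. If that is insufficient, the fallback is to bound the variance of the regression term by $\zeta_j^{t,t}$ and avoid needing coefficient bounds at all. Specifically, the term $\alpha_{t,j}\boldsymbol\zeta_j^{t,<t}(\boldsymbol\zeta_j^{<t,<t})^\dagger\boldsymbol{\mathcal Z}_{<t,j}$ has conditional variance at most $\alpha_{t,j}^2\zeta_j^{t,t}$, which is bounded. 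This gives the uniform second and $(4+\epsilon)$ moments needed for the truncation step of Theorem~\ref{thm:weak_gaussian_pl}.
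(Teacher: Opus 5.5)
Your proposal is correct and follows the paper's proof. The channel statement comes from subtracting $\mu_{t,j}$ exactly in the item-4 decomposition, and the empirical law comes from item~3 of Theorem~\ref{thm:general_dynamic} together with Assumption~\ref{ass:ep_regular}; your step~4 (Lemma~\ref{lem:gaussian_regression} applied to the reference history) spells out what the paper compresses into the one-line remark that the memory-subtracted Gaussian reference is generated by the fresh channel. On the obstacle you flag, the paper is equally silent here: it only asserts in Appendix~\ref{app:regularity_closure} that the regression coefficients are bounded and $\tau_{t,j}$ is bounded above and below via Assumption~\ref{ass:stable_memory_regression}, and that assumption (not the Gram stability in $\mathcal R_t$, which concerns history vectors rather than the kernels $\boldsymbol\zeta_j^{<t,<t}$, nor your variance-bound fallback, which gives moments but no uniform Lipschitz constant) is what supplies the needed control, including keeping the precision $\tau_{t,j}^{-1}$ in a compact interval.
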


The proof of Theorem~\ref{thm:corrected_se} is given in
Section~\ref{subsec:proof_corrected_se}.

The state evolution in Theorem~\ref{thm:corrected_se} is coordinate-wise and
quenched with respect to the MDE-generated environment $\mathcal P_t$.  It is
not, in general, reducible to a pair of scalar variance recursions.  The
reduction to a scalar SE requires additional symmetry of the variance profile.

\begin{remark}[Meaning of the correction]
The correction in \eqref{eq:corrected_cavity_def} should be understood as an
SE-level Gaussian regression correction.  It identifies the fresh scalar
channel obtained after removing the predictable part of the Gaussian residual.
Without further structural assumptions, it is not claimed that all memory
coefficients can be estimated from a single finite-dimensional realization.
Designing practical low-complexity estimators or approximations of this oracle
correction is left for future work.
\end{remark}

\subsection{Precision Replacement for Adaptive Diagonal EP}
\label{subsec:precision_replacement}

The dynamic theorem above is stated in the form used by the conditioning proof,
where the diagonal precision sequence is represented by its predictable
MDE-level counterpart.  The diagonal EP recursion in
Section~\ref{sec:system_model}, however, computes the precisions from the
finite-dimensional retained variances and scalar posterior variances.  The
following theorem connects these two descriptions.

For a vector $\boldsymbol a\in\mathbb R^N$, write
\begin{equation}
    \|\boldsymbol a\|_N^2
    =
    \frac1N\|\boldsymbol a\|^2.
\end{equation}

\begin{theorem}[Branch-wise precision replacement]
\label{thm:precision_replacement}
Fix a finite horizon $T$ and assume the additional replacement regularity
conditions stated in Section~\ref{subsec:replacement_regular_conditions}.  Consider
either of the following two branches:
\begin{equation}
    b\in\{\mathrm{std},\mathrm{corr}\},
\end{equation}
where $\mathrm{std}$ denotes the standard diagonal EP branch using
$\boldsymbol h_t$, and $\mathrm{corr}$ denotes the memory-corrected branch using
$\widetilde{\boldsymbol h}_t$.

Let
\begin{equation}
    (\boldsymbol q_t^{b,\mathrm{act}},
     \boldsymbol\gamma_t^{b,\mathrm{act}})
\end{equation}
be the finite-dimensional adaptive recursion generated by the diagonal EP
updates of Section~\ref{sec:system_model}, and let
\begin{equation}
    (\boldsymbol q_t^{b,\mathrm{orc}},
     \boldsymbol\gamma_t^{b,\mathrm{orc}})
\end{equation}
be the corresponding predictable recursion described by the MDE-generated
state variables in Theorem~\ref{thm:general_dynamic}.  If the two recursions
are initialized identically, then, for every fixed $t\le T$,
\begin{align}
    \left\|
        \boldsymbol\gamma_t^{b,\mathrm{act}}
        -
        \boldsymbol\gamma_t^{b,\mathrm{orc}}
    \right\|_N
    &\overset{p}{\longrightarrow}0,
    \label{eq:gamma_replacement}
    \\
    \left\|
        \boldsymbol q_t^{b,\mathrm{act}}
        -
        \boldsymbol q_t^{b,\mathrm{orc}}
    \right\|_N
    &\overset{p}{\longrightarrow}0.
    \label{eq:q_replacement}
\end{align}
\end{theorem}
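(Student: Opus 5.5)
The plan is a finite-time induction on $t\le T$, run separately for each branch $b$. The inductive hypothesis $\mathcal H_t$ is that $\|\boldsymbol\gamma_s^{b,\mathrm{act}}-\boldsymbol\gamma_s^{b,\mathrm{orc}}\|_N\plim 0$ and $\|\boldsymbol q_s^{b,\mathrm{act}}-\boldsymbol q_s^{b,\mathrm{orc}}\|_N\plim0$ for all $s\le t$. The base case $t=0$ is trivial because the initializations coincide. For the step $t\to t+1$, we follow one EP sweep through \eqref{eq:C_t_def}--\eqref{eq:r_B_to_A_next_def}. At each stage we bound the $\|\cdot\|_N$ discrepancy between the actual and oracle quantities by a constant multiple of the discrepancies already controlled, plus an $o_p(1)$ term supplied by Theorem~\ref{thm:general_dynamic}.

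\textbf{Linear module.} Write $\boldsymbol C_t^{\mathrm{act}}-\boldsymbol C_t^{\mathrm{orc}}=\boldsymbol C_t^{\mathrm{act}}(\boldsymbol\Gamma_t^{\mathrm{orc}}-\boldsymbol\Gamma_t^{\mathrm{act}})\boldsymbol C_t^{\mathrm{orc}}$ (resolvent identity). Both resolvents have operator norm at most $\gamma_{\min}^{-1}$ by Assumption~\ref{ass:ep_regular}. The mean error \eqref{eq:m_error_recursion} is therefore Lipschitz in $(\boldsymbol\Gamma_t,\boldsymbol q_t)$ in the $\|\cdot\|_N$ norm. The operator bounds on $\boldsymbol A$ and on $\boldsymbol A^{\mathsf T}\boldsymbol w/\sqrt N$ hold with high probability, and admissibility of $\boldsymbol q_t^{\mathrm{orc}}$ comes from part 5 of Theorem~\ref{thm:general_dynamic}. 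Together these give $\|\boldsymbol m_t^{\mathrm{act}}-\boldsymbol m_t^{\mathrm{orc}}\|_N\plim0$.

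The diagonal entries need more care, because a perturbation small in $\|\cdot\|_N$ does not in general give small diagonal entries. Here we use that the perturbation $\boldsymbol E=\boldsymbol\Gamma_t^{\mathrm{orc}}-\boldsymbol\Gamma_t^{\mathrm{act}}$ is itself diagonal:
\[
|d^{\mathrm{act}}_{t,j}-d^{\mathrm{orc}}_{t,j}|\le \sum_k |C^{\mathrm{act}}_{jk}||E_{kk}||C^{\mathrm{orc}}_{kj}|.
\]
Averaging the square over $j$ and applying Cauchy--Schwarz with $\sum_k|C_{jk}|^2\le\|\boldsymbol C\|^2$ gives
\[
\frac1N\sum_j|d^{\mathrm{act}}_{t,j}-d^{\mathrm{orc}}_{t,j}|^2\le \gamma_{\min}^{-4}\max_k|E_{kk}|^2 .
\]
This is too crude, since it involves a max rather than an average. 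So I would instead combine part 1 of Theorem~\ref{thm:general_dynamic} with the perturbation-uniform MDE statement (Theorem~\ref{thm:external_mde}, item 4, and Corollary~\ref{cor:two_resolvent}). Treating $\boldsymbol E$ as a bounded diagonal insertion, the averaged diagonal of $\boldsymbol C\boldsymbol E\boldsymbol C$ is approximated by $\mathcal L[\boldsymbol E]$. The MDE solution map is Lipschitz in $\boldsymbol\Gamma$ in the averaged $\ell_2$ sense by stability, which yields $\|\boldsymbol d_t^{\mathrm{act}}-\boldsymbol T_t^{\mathrm{orc}}\|_N\plim0$.

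One point needs attention: $\boldsymbol\Gamma_t^{\mathrm{act}}$ is random and depends on $\boldsymbol A$, so the MDE input does not apply to it directly. I would handle this with a net or discretization argument over the compact clipped precision range. Alternatively, and preferably, I would use the replacement regularity conditions of Section~\ref{subsec:replacement_regular_conditions}, which I expect to supply precisely a uniform-in-$\boldsymbol\Gamma$ Lipschitz bound for the averaged diagonal map.

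\textbf{Cavity and prior module.} The precision $\boldsymbol\Pi_t$ is obtained by the $1$-Lipschitz projection of $\boldsymbol D_t^{-1}-\boldsymbol\Gamma_t$. Since $d_{t,j}\in[(\gamma_{\max}+\gamma_w\|\boldsymbol A\|^2)^{-1},\gamma_{\min}^{-1}]$, inversion is Lipschitz, and hence $\|\boldsymbol\pi_t^{\mathrm{act}}-\boldsymbol\pi_t^{\mathrm{orc}}\|_N\plim0$. The cavity error is
\[
\boldsymbol h_t=\boldsymbol\Pi_t^{-1}\bigl(\boldsymbol D_t^{-1}\boldsymbol m_t-\boldsymbol\Gamma_t\boldsymbol q_t\bigr).
\]
Each factor is a bounded diagonal multiplier, and the vectors multiplied are admissible. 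The product of a $\|\cdot\|_N$-small diagonal difference with a vector having vanishing $\|\cdot\|_\infty/\sqrt N$ is not automatically small. I would therefore split on $\{j:|\Delta\pi_j|>\epsilon\}$, a set of vanishing density, and use uniform integrability from the $4+\epsilon$ moments in part 3 of Theorem~\ref{thm:general_dynamic} to control the tails. For the corrected branch we also subtract $\mu_{t,j}$. This term is a fixed oracle functional of $\mathcal P_t$ and the Gaussian history, so it is common to both recursions up to the already-controlled errors.

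Next, $p_{t,j}$ and $v_{B,t,j}$ are obtained by applying $\eta$ and $v_B$, which are locally Lipschitz on the compact $\pi$-range with the $\PL(2)$ control of Assumption~\ref{ass:ep_regular}. The same truncation argument transfers the discrepancy. Finally, \eqref{eq:Gamma_next_def} and \eqref{eq:q_next_error_recursion} involve clipping and division by precisions bounded below, and they give $\mathcal H_{t+1}$.

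\textbf{Main obstacle.} The hard step is the diagonal-resolvent comparison: showing that an $\ell_2$-averaged perturbation of a random, $\boldsymbol A$-dependent diagonal loading moves the retained variances $d_{t,j}$ only by $o_p(1)$ in average. The plan there is to prove the averaged Lipschitz bound
\[
\frac1N\sum_j\bigl|[\boldsymbol C(\boldsymbol\Gamma)-\boldsymbol C(\boldsymbol\Gamma')]_{jj}\bigr|^2\le K\|\boldsymbol\gamma-\boldsymbol\gamma'\|_N^2,
\]
holding with high probability and uniformly over the clipped box. I would attempt this by a Hadamard-product bound $\|(\boldsymbol C\odot\boldsymbol C^{\mathsf T})\|\le\|\boldsymbol C\|^2$ applied to the diagonal map $\boldsymbol e\mapsto\diag(\boldsymbol C\diag(\boldsymbol e)\boldsymbol C')$. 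This map equals $(\boldsymbol C\odot\boldsymbol C')\boldsymbol e$, which has operator norm at most $\|\boldsymbol C\|\|\boldsymbol C'\|$ (Schur product theorem). It would give the Lipschitz bound directly, deterministically, and without any net argument, which is why I would try it first.
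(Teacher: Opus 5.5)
Your architecture matches the paper's proof: a branch-wise finite induction, the resolvent identity for the linear module, non-expansive clipping and Lipschitz inversion on compact intervals, and uniform integrability for products of a diagonal difference that is small only in $\|\cdot\|_N$ with a vector. That last device is already needed for $\boldsymbol m_t$, so the map there is not literally ``Lipschitz in $\|\cdot\|_N$''. The paper writes $\boldsymbol m_a-\boldsymbol m_o=\boldsymbol C_a[\boldsymbol\Gamma_a(\boldsymbol q_a-\boldsymbol q_o)+(\boldsymbol\Gamma_a-\boldsymbol\Gamma_o)(\boldsymbol q_o-\boldsymbol m_o)]$ and applies its product-stability lemma. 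The step you call the main obstacle is elementary, and your Hadamard bound settles it. It is equivalent to the paper's estimate $\|\operatorname{diag}(\boldsymbol R)\|\le\|\boldsymbol R\|_F$ combined with $\|\boldsymbol C_a(\boldsymbol\Gamma_a-\boldsymbol\Gamma_o)\boldsymbol C_o\|_F\le\gamma_{\min}^{-2}\|\boldsymbol\Gamma_a-\boldsymbol\Gamma_o\|_F$. After that, part~1 of Theorem~\ref{thm:general_dynamic} is used only to pass from $\boldsymbol d_t^{\mathrm{orc}}$ to $\boldsymbol T_t^{\mathrm{orc}}$. The MDE-perturbation and net detours are unnecessary, and the replacement conditions do not supply a diagonal-map Lipschitz bound (none is needed).

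The genuine gap is in the corrected branch. There the prior module receives $(\boldsymbol h_t-\boldsymbol\mu_t,\boldsymbol\tau_t^{-1})$. In each recursion, $\boldsymbol\mu_t$ and $\boldsymbol\tau_t$ are computed from that recursion's own $\alpha_{t,j}$, $\boldsymbol\zeta_j^{t,<t}$, $\boldsymbol\zeta_j^{<t,<t}$ and $\boldsymbol{\mathcal Z}_{<t,j}$. Your claim that $\mu_{t,j}$ is common to both recursions ``up to the already-controlled errors'' presupposes that the regression map $(\alpha,\boldsymbol\zeta,\boldsymbol{\mathcal Z}_{<t})\mapsto\alpha\boldsymbol\zeta^{t,<t}(\boldsymbol\zeta^{<t,<t})^\dagger\boldsymbol{\mathcal Z}_{<t}$ is continuous. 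But the Moore--Penrose inverse is discontinuous at rank changes (compare $\operatorname{diag}(1,\epsilon)^\dagger$ with $\operatorname{diag}(1,0)^\dagger$). Hence an $o(1)$ discrepancy in $\boldsymbol\zeta_j^{<t,<t}$ and in the past history can move $\mu_{t,j}$ by $O(1)$.

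The paper handles this in a separate step, Proposition~\ref{prop:memory_map_stability}, which rests on Assumption~\ref{ass:stable_memory_regression}. The bound $\|(\boldsymbol\zeta_j^{<t,<t})^\dagger\|\le c_\zeta^{-1}$ makes the pseudo-inverse Lipschitz on the retained subspace. The bounds $\tau_{\min}\le\tau_{t,j}\le\tau_{\max}$ both stabilize $\boldsymbol\tau_t^{-1}$ and keep the corrected-branch precision in a compact interval. Your plan never controls $\boldsymbol\tau_t^{-1}$; your prior-module step is stated only on the compact $\pi$-range.

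A smaller omission of the same kind affects both branches. In \eqref{eq:q_next_branch_replacement} the factor $(v_{B,t,j}^{b})^{-1}$ multiplies $p_{t,j}^{b}$ without clipping, so you need $v_{B,t,j}^{b}\ge v_{\min}$. Nothing in your argument provides this; dividing by $\gamma_{t+1,j}\ge\gamma_{\min}$ is not where the problem lies. The paper takes this lower bound, together with joint $\|\cdot\|_N$-stability of $(r,\rho)\mapsto(\eta,v_B)$, from Assumption~\ref{ass:regularized_scalar_module}.
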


The proof of Theorem~\ref{thm:precision_replacement} is given in
Section~\ref{subsec:proof_precision_replacement}.

Theorem~\ref{thm:precision_replacement} is branch-wise.  It does not state that
the standard and corrected branches are asymptotically equivalent to each
other.  Rather, it states that within each branch, the finite-dimensional
adaptive precision update can be replaced, in empirical norm, by its
MDE-predictable counterpart.  Thus the predictable-precision dynamic theorem
applies to the adaptive diagonal EP recursion.

\section{Proof of the General Finite-Time Dynamic Theorem}
\label{sec:dynamic_proof}
\label{sec:proof_general}

This section proves Theorem~\ref{thm:general_dynamic}.  The proof follows the
same high-level strategy as rigorous finite-time EP analyses for rotationally
invariant matrices, but the conditioning mechanism is different.  In the
unitarily invariant setting, the past iterates leave a fresh Haar component in
the orthogonal complement of the history.  In the present variance-profile
Gaussian setting, the past iterates impose finite linear observations on
$\boldsymbol A$.  Conditioning therefore leaves a centered correlated Gaussian
matrix rather than a Haar matrix.  The role of rotational averaging is replaced
by the regularized correlated-Gaussian MDE input, a Schur-complement
representation under the actual conditioned history, and an empirical Gaussian
law for the resulting column-wise Gaussian process.

Throughout this section, the diagonal precision sequence is considered in the
predictable form needed for the conditioning proof.  More precisely, at the
beginning of the $t$th linear step, the diagonal loading
\begin{equation}
\boldsymbol\Gamma_t
=
\operatorname{diag}(\gamma_{t,1},\ldots,\gamma_{t,N})
\end{equation}
is regarded as part of the regular finite-time environment and satisfies the
compactness condition in Assumption~\ref{ass:ep_regular}.  The finite-dimensional
adaptive update of the diagonal EP precisions is connected to this
predictable-precision form in Section~\ref{sec:precision_replacement}.  The
present section proves the Gaussian dynamic theorem under the predictable
loading, which is the form in which conditioning on the past leaves only finite
linear information about $\boldsymbol A$.

\subsection{Proof Strategy and Induction Hypotheses}
\label{subsec:proof_strategy_induction}

Let
\begin{equation}
\boldsymbol Q_t
=
(\boldsymbol q_0,\ldots,\boldsymbol q_{t-1}),
\quad
\boldsymbol M_t
=
(\boldsymbol m_0,\ldots,\boldsymbol m_{t-1}),
\end{equation}
\begin{equation}
\boldsymbol H_t
=
(\boldsymbol h_0,\ldots,\boldsymbol h_{t-1}),
\quad
\boldsymbol P_t
=
(\boldsymbol p_0,\ldots,\boldsymbol p_{t-1}),
\end{equation}
and
\begin{equation}
\boldsymbol U_t
=
(\boldsymbol u_0,\ldots,\boldsymbol u_{t-1})
\end{equation}
be the finite histories up to, but not including, iteration $t$.  These
histories are used for Gaussian conditioning only through the finite linear
constraints displayed below.  The scalar-module histories $\boldsymbol H_t$ and
$\boldsymbol P_t$ are tracked in the induction regularity event, but they are
not themselves added to the Gaussian conditioning field.  This convention
prevents the conditioning argument from accidentally conditioning on nonlinear
resolvent information about $\boldsymbol A$.

The induction is carried out on a regularity event, denoted by
$\mathcal R_t$.  Informally, $\mathcal R_t$ is the event that the finite
history up to time $t$ is admissible in the sense of
Definition~\ref{def:admissible}, the diagonal precisions remain in compact
positive intervals, and the finite normalized Gram matrices associated with the
history are tight after removing asymptotically redundant directions.  The
Moore--Penrose inverse is used whenever a finite history contains redundant
directions.  This convention avoids unnecessary full-rank assumptions and is
consistent with the Gaussian regression formula in
Lemma~\ref{lem:gaussian_regression}.

The induction has three transitions.  The linear transition identifies the
conditional law of the LMMSE module under the predictable linear history.  The
cavity transition converts the Schur residual produced by the linear module into
a Gaussian process with a regression memory term.  The scalar transition passes
this effective channel through the separable prior module and verifies that the
new histories remain regular.  Thus the proof is not a collection of separate
limits; each proposition below answers one of the questions required to advance
from iteration $t$ to iteration $t+1$.

For every fixed horizon $T$, the induction establishes
\begin{equation}
    \mathcal R_t
    \Longrightarrow
    \text{all conclusions of Theorem~\ref{thm:general_dynamic} at time }t,
\end{equation}
and then proves the closure implication
\begin{equation}
    \mathcal R_t
    \Longrightarrow
    \mathcal R_{t+1}.
\end{equation}
Since the initialization is deterministic and regular, $\mathcal R_0$ holds
with probability tending to one.  Finite induction then proves the theorem.

The remainder of the section follows this order.  First, the exact history
constraints are used to condition the measurement matrix and to obtain the MDE
responses $T_{t,j}$ and $\Theta_j^{r,s}$.  Second, a current-step Schur complement
separates the instantaneous response from a column-wise Gaussian process, and an
empirical Gaussian law upgrades the coordinate description to pseudo-Lipschitz
convergence.  Third, Gaussian regression extracts the fresh innovation seen by
the corrected cavity, and the scalar module closes the regularity event.  The
proofs of the technical propositions are deferred to the appendices; this
section records their roles and assembles them into the induction.

\subsection{Conditioned Linear Module}
\label{subsec:conditioned_linear_module}

We start from the exact history constraints derived in
Section~\ref{subsec:error_recursion}.  By the definition
\begin{equation}
\boldsymbol u_t
=
\boldsymbol w-\boldsymbol A\boldsymbol m_t,
\end{equation}
the past linear-module errors satisfy
\begin{equation}
	\boldsymbol A\boldsymbol M_t
	=
	\boldsymbol w\boldsymbol 1_t^{\mathsf T}
	-
	\boldsymbol U_t .
	\label{eq:proof_history_AM}
\end{equation}
Furthermore, the normal equation of the linear module gives
\begin{equation}
\gamma_w
\boldsymbol A^{\mathsf T}\boldsymbol u_s
=
\boldsymbol\Gamma_s(\boldsymbol m_s-\boldsymbol q_s),
\qquad 0\le s<t.
\end{equation}
Stacking over $s<t$ yields
\begin{equation}
	\boldsymbol A^{\mathsf T}\boldsymbol U_t
	=
	\gamma_w^{-1}
	\left(
	\boldsymbol\Gamma_0(\boldsymbol m_0-\boldsymbol q_0),
	\ldots,
	\boldsymbol\Gamma_{t-1}(\boldsymbol m_{t-1}-\boldsymbol q_{t-1})
	\right).
	\label{eq:proof_history_ATU}
\end{equation}
Equations \eqref{eq:proof_history_AM} and \eqref{eq:proof_history_ATU} are
finite linear observations of the variance-profile Gaussian matrix
$\boldsymbol A$.  Equivalently, after vectorization, there exist a finite-rank
linear operator $\boldsymbol L_t$ and a vector $\boldsymbol b_t$ such that
\begin{equation}
    \boldsymbol L_t\operatorname{vec}(\boldsymbol A)
    =
    \boldsymbol b_t.
    \label{eq:proof_linear_history_operator}
\end{equation}
The predictable linear-history filtration used below is
\begin{equation}
    \mathcal F_t^{\rm lin}
    =
    \sigma\!\left(
    \boldsymbol x,\boldsymbol w,
    \boldsymbol\Gamma_0,\ldots,\boldsymbol\Gamma_t,
    \boldsymbol L_t,
    \boldsymbol b_t
    \right).
    \label{eq:proof_filtration}
\end{equation}
The notation $\mathcal F_t$ will be used below as shorthand for
$\mathcal F_t^{\rm lin}$.  This filtration contains the external variables,
the predictable diagonal loadings, and the finite linear observations
\eqref{eq:proof_linear_history_operator}.  It does not contain the actual
retained variance $\boldsymbol D_s$, the actual finite-sample cavity precision
$\boldsymbol\Pi_s$, or any other nonlinear resolvent functional of
$\boldsymbol A$.  The MDE-level precision variables used by the predictable
recursion are deterministic functions of the reference environment and are not
additional observations of the Gaussian matrix.  The adaptive precisions
computed from the finite-dimensional diagonals are connected to this
predictable history only in the precision-replacement argument of
Section~\ref{sec:precision_replacement}.  Consequently, for the conditional
law of $\boldsymbol A$, conditioning on $\mathcal F_t$ is precisely
conditioning on finite linear observations and predictable quantities.  The
matrix therefore remains Gaussian after conditioning, although its covariance is
no longer entrywise independent in general.

The first question in the linear transition is therefore: what is the law of
$\boldsymbol A$ after these linear observations have been fixed?  The answer is
the Gaussian analogue of the conditional Haar representation used in
rotationally invariant analyses.  The matrix remains Gaussian, but its residual
covariance becomes history-dependent.

\begin{proposition}[Conditional Gaussian representation]
	\label{prop:conditional_gaussian_history}
	Assume that the regularity event $\mathcal R_t$ holds.  Conditionally on
	$\mathcal F_t$, the measurement matrix admits the decomposition
	\begin{equation}
		\boldsymbol A
		=
		\boldsymbol A_{\parallel,t}
		+
		\boldsymbol A_{\perp,t},
	\end{equation}
	where
	\begin{equation}
	\boldsymbol A_{\parallel,t}
	=
	\mathbb E[\boldsymbol A\mid\mathcal F_t]
	\end{equation}
	and $\boldsymbol A_{\perp,t}$ is a centered correlated Gaussian matrix:
	\begin{equation}
	\mathbb E[\boldsymbol A_{\perp,t}\mid\mathcal F_t]=\boldsymbol 0.
	\end{equation}
	Moreover, for all deterministic vectors
	$\boldsymbol p\in\mathbb R^M$ and $\boldsymbol q\in\mathbb R^N$,
	\begin{equation}
		\operatorname{Var}\!\left(
		\boldsymbol p^{\mathsf T}
		\boldsymbol A_{\perp,t}
		\boldsymbol q
		\,\middle|\,
		\mathcal F_t
		\right)
		\le
		\frac{C}{M}
		\|\boldsymbol p\|^2\|\boldsymbol q\|^2,
	\end{equation}
	where $C$ is independent of $M,N,t$.
\end{proposition}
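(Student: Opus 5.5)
The plan is to reduce Proposition~\ref{prop:conditional_gaussian_history} to Lemma~\ref{lem:gaussian_conditioning} and Corollary~\ref{cor:flat_cov}. The one point that needs care is the justification that conditioning on $\mathcal F_t$ is conditioning on finitely many linear observations of $\operatorname{vec}(\boldsymbol A)$, with a coefficient operator that is itself determined by quantities independent of the unrevealed part of $\boldsymbol A$.

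\textbf{Step 1: make the history predictable.} Write the constraints \eqref{eq:proof_history_AM} and \eqref{eq:proof_history_ATU} as $\boldsymbol L_t\operatorname{vec}(\boldsymbol A)=\boldsymbol b_t$. The rows of $\boldsymbol L_t$ are built from $\boldsymbol m_s$ (rows $\boldsymbol e_i^{\mathsf T}\otimes\boldsymbol m_s^{\mathsf T}$) and $\boldsymbol u_s$ (rows $\boldsymbol u_s^{\mathsf T}\otimes\boldsymbol e_j^{\mathsf T}$), for $s<t$. The right-hand sides are built from $\boldsymbol w,\boldsymbol U_t,\boldsymbol\Gamma_s,\boldsymbol m_s,\boldsymbol q_s$.

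In the predictable-precision recursion, $\boldsymbol\Gamma_s$ and the MDE-level cavity precisions are deterministic functions of $(\boldsymbol x,\boldsymbol w)$ and the reference environment. By induction on $s$, each new $\boldsymbol q_{s+1}$, $\boldsymbol m_s$, $\boldsymbol u_s$ is then a measurable function of $(\boldsymbol x,\boldsymbol w)$ and of the previously revealed linear observations of $\boldsymbol A$.

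This is the standard Bolthausen/Bayati--Montanari conditioning argument. The event $\{\mathcal F_t\text{ history}=\text{fixed}\}$ is the event that a finite sequence of adaptively chosen linear functionals of $\operatorname{vec}(\boldsymbol A)$ takes given values. At each stage, the next functional is fixed given the past values. Since $\boldsymbol A$ is independent of $(\boldsymbol x,\boldsymbol w)$ by Assumption~\ref{ass:matrix}, the conditional law of $\operatorname{vec}(\boldsymbol A)$ given $\mathcal F_t$ equals its law given $\boldsymbol L\operatorname{vec}(\boldsymbol A)=\boldsymbol b$, with $\boldsymbol L=\boldsymbol L_t$ frozen at its realized value.

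\textbf{Step 2: apply Gaussian conditioning.} Lemma~\ref{lem:gaussian_conditioning} now gives the decomposition. We set $\boldsymbol A_{\parallel,t}$ to be the matrix form of $\boldsymbol\Sigma_A\boldsymbol L_t^{\mathsf T}(\boldsymbol L_t\boldsymbol\Sigma_A\boldsymbol L_t^{\mathsf T})^\dagger\boldsymbol b_t$, which is $\mathbb E[\boldsymbol A\mid\mathcal F_t]$. The remainder $\boldsymbol A_{\perp,t}$ is centered Gaussian with covariance $\boldsymbol\Sigma_{A|L_t}$. The Moore--Penrose inverse handles redundant rows; on $\mathcal R_t$ the Gram matrices are tight, though rank deficiency does not even matter here.

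\textbf{Step 3: the variance bound.} The bound follows verbatim from Corollary~\ref{cor:flat_cov}, using $\boldsymbol\Sigma_{A|L_t}\preceq\boldsymbol\Sigma_A$, with $C=s_{\max}$. This constant is independent of $M,N,t$ because the contraction holds for every $\boldsymbol L$.

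\textbf{Main obstacle.} The difficulty is Step 1. We must ensure that no nonlinear functional of $\boldsymbol A$ enters $\boldsymbol L_t$ or the loadings. If $\boldsymbol\Gamma_s$ depended on the actual diagonals $d_{s,j}$, the conditioning event would no longer be linear and the Gaussian law could fail. The remedy is exactly the predictable-precision convention: $\boldsymbol\Gamma_s$ and $\bar\pi_{s,j}$ are taken from $\mathcal P_s$, and $\boldsymbol m_s$ is expressed through the resolvent only via the linear constraints it satisfies. The adaptive precisions are then deferred to Theorem~\ref{thm:precision_replacement}.

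A subtle point I would check is that $\boldsymbol m_s$ itself, being a resolvent output, is revealed only through the linear identities \eqref{eq:proof_history_AM} and \eqref{eq:proof_history_ATU}. I would argue this as in the AMP conditioning technique: $\boldsymbol m_s$ is treated as a fixed vector once the sigma-field is fixed, and the constraints it induces are linear in $\boldsymbol A$.
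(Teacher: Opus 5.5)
\textbf{Step~1 has a genuine gap, and the paper's proof has the same one.} Your Steps~2 and~3 are exactly the paper's argument in Appendix~\ref{app:gaussian_conditioning_histories}: Lemma~\ref{lem:gaussian_conditioning} with a Moore--Penrose inverse, the contraction $\boldsymbol\Sigma_{A|t}\preceq\boldsymbol\Sigma_A$, and $C=s_{\max}$.

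It is not true that $\boldsymbol m_s,\boldsymbol u_s$ are functions of $(\boldsymbol x,\boldsymbol w)$ and of \emph{previously} revealed linear observations. The estimate $\boldsymbol m_s=\boldsymbol C_s(\gamma_w\boldsymbol A^{\mathsf T}\boldsymbol w+\boldsymbol\Gamma_s\boldsymbol q_s)$ depends on the whole resolvent. The predictable-precision convention removes the nonlinearity in $\boldsymbol\Gamma_s$, but not this one. So the coefficients $\boldsymbol m_s,\boldsymbol u_s$ of the step-$s$ functionals are not fixed by earlier values; they are determined implicitly by the same $\boldsymbol A$.

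What is true is weaker. By uniqueness of the normal equations, the fiber $\{\boldsymbol A:(\boldsymbol m_s(\boldsymbol A),\boldsymbol u_s(\boldsymbol A))_{s<t}=\text{given}\}$ coincides with the affine set $\{\boldsymbol L_t\operatorname{vec}(\boldsymbol A)=\boldsymbol b_t\}$. That does not suffice (Borel--Kolmogorov). By the coarea formula, the conditional density on a fiber of a nonlinear statistic $F$ is proportional to $p(\boldsymbol A)/JF(\boldsymbol A)$. The Bolthausen/Bayati--Montanari argument works because each AMP observation is a linear functional whose coefficient is fixed by earlier observations, so $JF$ is constant on fibers. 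Here that fails already for $t=1$. Implicit differentiation of $\boldsymbol A\boldsymbol m+\boldsymbol u=\boldsymbol w$, $\gamma_w\boldsymbol A^{\mathsf T}\boldsymbol u=\boldsymbol\Gamma_0(\boldsymbol m-\boldsymbol q_0)$ gives $JF\propto\det(\boldsymbol\Gamma_0+\gamma_w\boldsymbol A^{\mathsf T}\boldsymbol A)^{-1}$ along the fiber. Hence, given $\mathcal F_1^{\rm lin}$ (equivalently $(\boldsymbol x,\boldsymbol w,\boldsymbol m_0,\boldsymbol u_0)$), the law of $\boldsymbol A$ is the Gaussian conditioned on $\boldsymbol L_1\operatorname{vec}(\boldsymbol A)=\boldsymbol b_1$ with $\boldsymbol L_1,\boldsymbol b_1$ frozen, tilted by $\det(\boldsymbol\Gamma_0+\gamma_w\boldsymbol A^{\mathsf T}\boldsymbol A)$. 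That law is not Gaussian, and in general it is not centered at $\boldsymbol A_{\parallel,1}$.

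The paper does not close this gap either. Appendix~\ref{app:gaussian_conditioning_histories} fixes $\boldsymbol M_t,\boldsymbol U_t$ and declares $\boldsymbol L_t,\boldsymbol b_t$ deterministic. Section~\ref{subsec:conditioned_linear_module} then asserts that conditioning on $\mathcal F_t^{\rm lin}$ ``is precisely'' linear conditioning. This is the same identification you try to justify. What your Steps~2--3 (and the paper) actually prove is the frozen-coefficient statement: for deterministic $(\boldsymbol L,\boldsymbol b)$, the Gaussian law of $\operatorname{vec}(\boldsymbol A)$ given $\boldsymbol L\operatorname{vec}(\boldsymbol A)=\boldsymbol b$ has the stated decomposition and flat covariance with $C=s_{\max}$.

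To reach the true conditional law given $\mathcal F_t$ you need an extra idea. Two possible routes:
\begin{enumerate}
\item Show that the Jacobian tilt has an asymptotically negligible effect on the admissible bilinear forms used downstream. This yields only an approximate Gaussian statement, not an exact one.
\item Replace each resolvent by a fixed-degree polynomial in $\gamma_w\boldsymbol A^{\mathsf T}\boldsymbol A+\boldsymbol\Gamma_s$, for example a truncated Neumann series. This is uniformly accurate because $\boldsymbol\Gamma_s\succeq\gamma_{\min}\boldsymbol I$ and $\|\boldsymbol A\|=O_p(1)$. The history then becomes a genuine adaptive sequence of matrix--vector products, so sequential Gaussian conditioning applies exactly, and the truncation is removed at the end.
\end{enumerate}
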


\begin{IEEEproof}
	See Appendix~\ref{app:gaussian_conditioning_histories}.
\end{IEEEproof}

The conditional mean $\boldsymbol A_{\parallel,t}$ is a finite-rank
deformation determined by the history.  The next proposition states that this
deformation remains bounded on the regular history event and that the resulting
conditioned linearization falls within the class covered by the external MDE
input.

\begin{proposition}[Bounded deformation and MDE response]
	\label{prop:bounded_deformation_mde}
	Assume $\mathcal R_t$.  Then
	\begin{equation}
		\|\boldsymbol A_{\parallel,t}\|=O_p(1).
	\end{equation}
	Consequently, the block linearization
	\begin{equation}
		\boldsymbol K_t
		=
		\begin{pmatrix}
			\boldsymbol I_M
			&
			\sqrt{\gamma_w}\boldsymbol A
			\\
			\sqrt{\gamma_w}\boldsymbol A^{\mathsf T}
			&
			-\boldsymbol\Gamma_t
		\end{pmatrix}
	\end{equation}
	can be written as
	\begin{equation}
	\boldsymbol K_t
	=
	\boldsymbol D_t^{\mathrm{lin}}
	+
	\boldsymbol W_t,
	\end{equation}
	where
	\begin{equation}
	\boldsymbol D_t^{\mathrm{lin}}
	=
	\begin{pmatrix}
		\boldsymbol I_M
		&
		\sqrt{\gamma_w}\boldsymbol A_{\parallel,t}
		\\
		\sqrt{\gamma_w}\boldsymbol A_{\parallel,t}^{\mathsf T}
		&
		-\boldsymbol\Gamma_t
	\end{pmatrix}
	\end{equation}
	is a bounded deterministic deformation conditionally on $\mathcal F_t$, and
	\begin{equation}
	\boldsymbol W_t
	=
	\begin{pmatrix}
		\boldsymbol 0
		&
		\sqrt{\gamma_w}\boldsymbol A_{\perp,t}
		\\
		\sqrt{\gamma_w}\boldsymbol A_{\perp,t}^{\mathsf T}
		&
		\boldsymbol 0
	\end{pmatrix}
	\end{equation}
	is a centered correlated Gaussian matrix satisfying the flat covariance
	condition of Theorem~\ref{thm:external_mde}.  Hence the external MDE input
	applies to $\boldsymbol K_t$.  In particular, if
	\begin{equation}
	\boldsymbol G_t=\boldsymbol K_t^{-1}
	\end{equation}
	and $\boldsymbol M_t^{\mathrm{MDE}}$ denotes the corresponding MDE solution,
	then the lower-right block of $\boldsymbol G_t$ gives the deterministic
	equivalent of the linear-module covariance.  Equivalently,
	\begin{equation}
		\frac1N\sum_{j=1}^N
		|d_{t,j}-T_{t,j}|^2
		\overset{p}{\longrightarrow}0.
	\end{equation}
\end{proposition}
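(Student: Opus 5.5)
The proof has three parts. First, bound the conditional mean $\boldsymbol A_{\parallel,t}$ explicitly. Second, check that the block linearization $\boldsymbol K_t$ meets the hypotheses of Theorem~\ref{thm:external_mde}. Third, read off the lower-right block of the resolvent.

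For the first part, Lemma~\ref{lem:gaussian_conditioning} gives $\operatorname{vec}(\boldsymbol A_{\parallel,t})=\boldsymbol\Sigma_A\boldsymbol L_t^{\mathsf T}(\boldsymbol L_t\boldsymbol\Sigma_A\boldsymbol L_t^{\mathsf T})^\dagger\boldsymbol b_t$. The constraints \eqref{eq:proof_history_AM}--\eqref{eq:proof_history_ATU} are of the form $\boldsymbol A\boldsymbol m_s$ and $\boldsymbol A^{\mathsf T}\boldsymbol u_s$. Because $\boldsymbol\Sigma_A$ is diagonal with entries $s_{ij}/M$, the range of $\boldsymbol\Sigma_A\boldsymbol L_t^{\mathsf T}$ consists of matrices of the form
\[
\sum_s \bigl(\boldsymbol S\odot(\boldsymbol\alpha_s\boldsymbol m_s^{\mathsf T})\bigr)/M+\sum_s\bigl(\boldsymbol S\odot(\boldsymbol u_s\boldsymbol\beta_s^{\mathsf T})\bigr)/M,
\]
with coefficient vectors $\boldsymbol\alpha_s\in\mathbb R^M$, $\boldsymbol\beta_s\in\mathbb R^N$.

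These coefficients solve a finite block system whose Gram matrix involves only profile-weighted inner products, such as $M^{-1}\boldsymbol m_r^{\mathsf T}\operatorname{diag}(\ldots)\boldsymbol m_s$ and $M^{-1}\boldsymbol u_r^{\mathsf T}\boldsymbol S_j\boldsymbol u_s$. By uniform ellipticity, these weighted Gram matrices are comparable in Loewner order to the unweighted normalized Gram matrices. On $\mathcal R_t$ the unweighted ones are tight and non-degenerate after the Moore--Penrose removal of redundant directions. The right-hand side $\boldsymbol b_t$ has normalized norm $O_p(1)$, since $\boldsymbol w,\boldsymbol U_t$ and $\boldsymbol\Gamma_s(\boldsymbol m_s-\boldsymbol q_s)$ are admissible. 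For each term, $\|\boldsymbol S\odot(\boldsymbol a\boldsymbol b^{\mathsf T})\|\le s_{\max}\|\boldsymbol a\|\|\boldsymbol b\|$, using the Schur-product bound for an entrywise-bounded Hadamard multiplier of a rank-one matrix. Hence each term has operator norm $O_p(1)$ under the $1/M$ normalization, and so does their finite sum. This gives $\|\boldsymbol A_{\parallel,t}\|=O_p(1)$.

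For the second part, I split $\boldsymbol K_t=\boldsymbol D_t^{\mathrm{lin}}+\boldsymbol W_t$ and check the hypotheses in turn.
\begin{itemize}
\item \emph{Bounded deformation and block loading:} these follow from the bound on $\boldsymbol A_{\parallel,t}$ together with $\gamma_{\min}\preceq\boldsymbol\Gamma_t\preceq\gamma_{\max}$, since $\boldsymbol\Gamma_t$ is $\mathcal F_t$-measurable under the predictable-precision convention. I work on the event $\|\boldsymbol A_{\parallel,t}\|\le C$, whose probability tends to one.
\item \emph{Flat covariance:} for $\boldsymbol a=(\boldsymbol a_1,\boldsymbol a_2)$ and $\boldsymbol b=(\boldsymbol b_1,\boldsymbol b_2)$ one has $\boldsymbol a^{\mathsf T}\boldsymbol W_t\boldsymbol b=\sqrt{\gamma_w}\,(\boldsymbol a_1^{\mathsf T}\boldsymbol A_{\perp,t}\boldsymbol b_2+\boldsymbol b_1^{\mathsf T}\boldsymbol A_{\perp,t}\boldsymbol a_2)$. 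Proposition~\ref{prop:conditional_gaussian_history} (equivalently Corollary~\ref{cor:flat_cov}) then gives the $C/N$ bound, using $M/N\to\delta$.
\item \emph{Perturbation uniformity:} a small bounded insertion $\theta\boldsymbol H_N$ preserves the previous two properties, so it is covered by the same argument.
\end{itemize}
Theorem~\ref{thm:external_mde} therefore applies conditionally on $\mathcal F_t$ and yields $\boldsymbol M_t^{\mathrm{MDE}}$.

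For the third part, I use the Schur complement of the $\boldsymbol I_M$ block. The lower-right block of $\boldsymbol K_t^{-1}$ equals $(-\boldsymbol\Gamma_t-\gamma_w\boldsymbol A^{\mathsf T}\boldsymbol A)^{-1}=-\boldsymbol C_t$. Hence $d_{t,j}=-\boldsymbol E_{M+j}^{\mathsf T}\boldsymbol G_t\boldsymbol E_{M+j}$, and I define $T_{t,j}:=-\boldsymbol E_{M+j}^{\mathsf T}\boldsymbol M_t^{\mathrm{MDE}}\boldsymbol E_{M+j}$. The averaged diagonal statement of Theorem~\ref{thm:external_mde} then gives the claim conditionally on $\mathcal F_t$. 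The conditional statement becomes unconditional by dominated convergence, since the averaged squared error is bounded by $2\gamma_{\min}^{-2}$ because $\|\boldsymbol C_t\|\le\gamma_{\min}^{-1}$.

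The main obstacle is the first part: showing that the weighted Gram system defining $\boldsymbol A_{\parallel,t}$ is uniformly well conditioned on the range where it is inverted. Near-redundant directions could make the pseudoinverse blow up, so the argument relies on the tightness and redundancy-removal clause of $\mathcal R_t$. I would handle this by replacing $\dagger$ with inversion on a fixed asymptotic rank subspace. I would then show that weighting by $\boldsymbol S$ only distorts eigenvalues by the factors $[s_{\min},s_{\max}]$, using a Kronecker/Hadamard comparison of $\boldsymbol L_t\boldsymbol\Sigma_A\boldsymbol L_t^{\mathsf T}$ with its $\boldsymbol S\equiv 1$ counterpart.
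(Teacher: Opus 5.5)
Your proposal is correct and follows essentially the paper's route (Appendices~\ref{app:bounded_deformation_mde}--\ref{app:mde_responses}). Like the paper, you use a profile-weighted regression representation of $\boldsymbol A_{\parallel,t}$ bounded termwise by $\frac{s_{\max}}{M}\|\boldsymbol p\|\|\boldsymbol q\|$, then verify flat covariance, bounded deformation and loading, and read off the lower-right block through the Schur complement. Your vector coefficients $\boldsymbol\alpha_s\in\mathbb R^M$, $\boldsymbol\beta_s\in\mathbb R^N$ are in fact more accurate than Lemma~\ref{lem:C_regression_representation}: its $\boldsymbol p_{\ell,t}$ are not in general finite combinations of history vectors once weights such as $M^{-1}\sum_j s_{ij}m_{s,j}^2$ vary with $i$.

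The obstacle you flag can also be avoided entirely. Since $\boldsymbol b_t=\boldsymbol L_t\operatorname{vec}(\boldsymbol A)$, your Loewner comparison gives $\|\boldsymbol A_{\parallel,t}\|_F^2\le\frac{s_{\max}}{s_{\min}}\|\boldsymbol A\boldsymbol P_M+\boldsymbol P_U\boldsymbol A(\boldsymbol I-\boldsymbol P_M)\|_F^2\le 2t\,\frac{s_{\max}}{s_{\min}}\|\boldsymbol A\|^2$, where $\boldsymbol P_M,\boldsymbol P_U$ are the orthogonal projectors onto the column spans of the history matrices $\boldsymbol M_t,\boldsymbol U_t$. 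So no conditioning of the Gram system is needed.
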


\begin{IEEEproof}
	See Appendix~\ref{app:bounded_deformation_mde}.
\end{IEEEproof}

Proposition~\ref{prop:bounded_deformation_mde} proves the first part of
Theorem~\ref{thm:general_dynamic}: the retained diagonal variance of the linear
Gaussian module is controlled by the MDE response $T_{t,j}$.  The next step is
to identify the Gaussian kernel produced by the linear estimator.

\subsection{Gaussian Kernel and Empirical Upgrade}
\label{subsec:gaussian_kernel_empirical_upgrade}

The covariance of the coordinate-wise Gaussian kernel depends on
profile-weighted residual quadratic forms.  These are not scalar traces; they
retain the column profile $\boldsymbol S_j$.  The following proposition is the
two-resolvent consequence of the MDE input.

\begin{proposition}[Conditioned two-resolvent response]
	\label{prop:two_resolvent_response}
	Assume $\mathcal R_t$.  For any fixed $r,s\le t$, let
	$\boldsymbol\Xi_{j|r,s}$ denote the conditional covariance operator of the
	unexposed part of the $j$th column appearing in the Schur residual pair
	$(\mathcal Z_{r,j},\mathcal Z_{s,j})$.  Then
	\begin{equation}
		(\boldsymbol u_r)^{\mathsf T}
		\boldsymbol\Xi_{j|r,s}
		\boldsymbol u_s
		=
		\Theta_j^{r,s}
		+
		o_p(1)
	\end{equation}
	in the admissible quadratic-form sense, uniformly over fixed $r,s$ and in
	empirical average over $j$.  In the unconditioned special case
	$\boldsymbol\Xi_{j|r,s}=M^{-1}\boldsymbol S_j$, this reduces to
	\begin{equation}
		\frac1M
		\boldsymbol u_r^{\mathsf T}\boldsymbol S_j\boldsymbol u_s
		=
		\Theta_j^{r,s}+o_p(1).
	\end{equation}
\end{proposition}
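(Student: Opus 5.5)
The plan is to write the quadratic form $(\boldsymbol u_r)^{\mathsf T}\boldsymbol\Xi_{j|r,s}\boldsymbol u_s$ as a two-resolvent functional of the block linearization $\boldsymbol K_t$ from Proposition~\ref{prop:bounded_deformation_mde}, and then apply Corollary~\ref{cor:two_resolvent}. The structural input is that the residuals are themselves resolvent images of the noise and the predictable messages. From \eqref{eq:m_error_recursion} and $\boldsymbol u_s=\boldsymbol w-\boldsymbol A\boldsymbol m_s$, one checks that $\boldsymbol u_s$ is, up to the factor $\sqrt{\gamma_w}$ conventions, the upper block of $\boldsymbol G_s(\boldsymbol w,\,\boldsymbol\Gamma_s\boldsymbol q_s)^{\mathsf T}$. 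Here the source vector contains only $\mathcal F_s$-measurable quantities. Hence
\[
(\boldsymbol u_r)^{\mathsf T}\boldsymbol\Xi_{j|r,s}\boldsymbol u_s
=\boldsymbol a_r^{\mathsf T}\boldsymbol G_r\,\boldsymbol H_j\,\boldsymbol G_s\boldsymbol a_s,
\]
where $\boldsymbol a_r=(\boldsymbol w,\boldsymbol\Gamma_r\boldsymbol q_r)$ and $\boldsymbol H_j$ is the block insertion that places $\boldsymbol\Xi_{j|r,s}$ in the upper-left $M\times M$ block. For $r=s$ this is exactly the form $\boldsymbol a^{\mathsf T}\boldsymbol G\boldsymbol H\boldsymbol G\boldsymbol b$ of Corollary~\ref{cor:two_resolvent}.

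For $r\neq s$ the two resolvents carry different loadings $\boldsymbol\Gamma_r,\boldsymbol\Gamma_s$. I would handle this by a joint block linearization of size $2(M+N)$ with diagonal loading $\operatorname{diag}(\boldsymbol\Gamma_r,\boldsymbol\Gamma_s)$, whose off-diagonal coupling block is $\theta\boldsymbol H_j$. Differentiating its resolvent at $\theta=0$ produces $\boldsymbol G_r\boldsymbol H_j\boldsymbol G_s$ in the off-diagonal block. The enlarged matrix still satisfies the three primitive hypotheses of Theorem~\ref{thm:external_mde}:
\begin{itemize}[label={}]
\item flat covariance, via Proposition~\ref{prop:conditional_gaussian_history} and Corollary~\ref{cor:flat_cov} with $\max(r,s)\le t$;
\item bounded deformation, via Proposition~\ref{prop:bounded_deformation_mde};
\item regularized loading, via Assumption~\ref{ass:ep_regular}.
\end{itemize}
The perturbation-uniform clause then gives the deterministic equivalent $\boldsymbol a_r^{\mathsf T}\mathcal L[\boldsymbol H_j]\boldsymbol a_s$, and I define $\Theta_j^{r,s}$ to be this quantity. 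Because $\boldsymbol H_j$ enters only through the $M$-dimensional block, with $\|\boldsymbol\Xi_{j|r,s}\|\le s_{\max}/M$ by Lemma~\ref{lem:gaussian_conditioning}, the test vectors $\boldsymbol a_r$ normalized by $M^{-1/2}$ are admissible on $\mathcal R_t$. The quadratic form is therefore of the bounded-insertion type covered by the corollary.

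Two issues remain. The first is measurability. The operator $\boldsymbol\Xi_{j|r,s}$ depends on the history and on the previously exposed Schur projections of column $j$, so it is not deterministic. I would condition on $\mathcal F_t$ together with $\mathcal G_{r,j}$. Conditioning on the finitely many extra linear projections of column $j$ keeps $\boldsymbol A$ Gaussian, by Lemma~\ref{lem:gaussian_conditioning} once more, and changes the deformation only by a finite-rank bounded term. The MDE input therefore applies conditionally, with $\boldsymbol\Xi_{j|r,s}$ frozen. In the unconditioned case, $\boldsymbol\Xi=M^{-1}\boldsymbol S_j$ is deterministic and the reduction to $\frac1M\boldsymbol u_r^{\mathsf T}\boldsymbol S_j\boldsymbol u_s$ is immediate.

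The second issue is uniformity in $j$, which I expect to be the main obstacle. Corollary~\ref{cor:two_resolvent} gives $o_p(1)$ for each fixed insertion, but the statement claims empirical-average convergence over $N$ different insertions $\boldsymbol H_j$. My plan is to take the error $e_j$ for each $j$ and bound $\frac1N\sum_j\mathbb E[|e_j|^2\wedge1\mid\mathcal F_t]$. Each $e_j$ is bounded in $L^2$ uniformly in $j$, since $\boldsymbol\Xi_{j|r,s}\preceq s_{\max}M^{-1}\boldsymbol I$ and the admissibility bounds hold. Each $e_j$ also tends to zero, and the rate depends on $\boldsymbol H_j$ only through the constants $C,\theta_0$ in Theorem~\ref{thm:external_mde}, which are uniform in $j$. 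Dominated convergence over the empirical measure in $j$ then yields the averaged statement, in the same way that the averaged diagonal clause of Theorem~\ref{thm:external_mde} is used for $d_{t,j}$.

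If uniformity of the rate is not available from the external input, the fallback is to use the diagonal structure of $\boldsymbol S_j$. I would write $\frac1M\boldsymbol u_r^{\mathsf T}\boldsymbol S_j\boldsymbol u_s=\frac1M\sum_i s_{ij}u_{r,i}u_{s,i}$ and apply the averaged diagonal equivalent to the $M$ row-indexed diagonal entries of $\boldsymbol G_r\boldsymbol a_r\boldsymbol a_s^{\mathsf T}\boldsymbol G_s$. The finite-rank conditioning correction would be treated as a perturbation of order $O(1/M)$.
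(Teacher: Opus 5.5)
Your core route matches the paper's own (Appendix~\ref{app:mde_responses}, Sections~\ref{subsec:D_residual_representation}--\ref{subsec:D_two_resolvent_response}). You write the residual as the upper block of $\boldsymbol G_r\boldsymbol b_r$ and turn the quadratic form into $\boldsymbol b_r^{\mathsf T}\boldsymbol G_r\operatorname{diag}(\boldsymbol\Xi_{j|r,s},\boldsymbol 0)\boldsymbol G_s\boldsymbol b_s$. You then \emph{define} $\Theta_j^{r,s}$ as the linearized-MDE quantity $\boldsymbol b_r^{\mathsf T}\mathcal L_{r,s}[\cdot]\boldsymbol b_s$. Correct the source, though. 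Its lower block is $-\boldsymbol\Gamma_r\boldsymbol q_r/\sqrt{\gamma_w}$, and the sign is not a convention, because it enters the $\boldsymbol w$--$\boldsymbol q_r$ cross terms of $\Theta_j^{r,s}$.

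Your $2(M+N)$ joint linearization for $r\neq s$ is a genuine improvement. Corollary~\ref{cor:two_resolvent} covers only $\boldsymbol G\boldsymbol H\boldsymbol G$ with a single resolvent, and the paper simply writes down $\mathcal L_{r,s}$ with a loosely specified $\mathcal S_{r,s}$. Your device actually derives it, with $\mathcal S_{r,s}=\mathcal S$: at $\theta=0$ the doubled MDE solution is $\operatorname{diag}(\boldsymbol M_r,\boldsymbol M_s)$ by uniqueness, and the off-diagonal block of the linearized equation decouples.

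The step that would fail is your measurability patch. The revealed Schur projections $\mathcal Z_{r',j}=\gamma_wT_{r',j}\boldsymbol a_j^{\mathsf T}\bar{\boldsymbol u}_{r'}^{(j)}$ have coefficient vectors that are resolvent, hence nonlinear, functions of $\boldsymbol A_{-j}$. They are therefore not finitely many linear observations of $\operatorname{vec}(\boldsymbol A)$ with $\mathcal F_t$-measurable coefficients, and Lemma~\ref{lem:gaussian_conditioning} does not keep $\boldsymbol A$ Gaussian under them.

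They become linear only after conditioning on $\boldsymbol A_{-j}$, which is what $\mathcal G_{r,j}$ does in Appendix~\ref{app:covariance_kernel}. But then $\boldsymbol G_r,\boldsymbol G_s$ are frozen up to one column, and Theorem~\ref{thm:external_mde} has no random matrix left to act on. Worse, together with $\boldsymbol A\boldsymbol m_s=\boldsymbol w-\boldsymbol u_s$ from $\mathcal F_t$, knowing $\boldsymbol A_{-j}$ determines $\boldsymbol a_j$ as soon as $m_{s,j}\neq0$.

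The paper does not resolve this either; Section~\ref{subsec:D_admissibility} just declares $\boldsymbol C^{\mathrm{cond}}_{j|r,s}$ an admissible deterministic insertion. So drop the conditional-Gaussianity claim and choose one of two options:
\begin{itemize}
\item take the insertion as given, as the paper does; or
\item use an $\mathcal F_t^{\rm lin}$-measurable insertion such as $M^{-1}\boldsymbol S_j$ or the $j$th diagonal block of $\boldsymbol\Sigma_{A|t}$, for which the conditional MDE input applies directly. The projections $\boldsymbol a_j^{\mathsf T}\boldsymbol u_{r'}$, $r'<t$, already lie in $\boldsymbol A^{\mathsf T}\boldsymbol U_t=\boldsymbol B_t$.
\end{itemize}

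On the average over $j$, your main argument (uniformity of clause~4 in the insertion plus truncated dominated convergence) is more explicit than the paper's one-line appeal to ``the averaged form''. Your fallback, however, is unsupported. The averaged clause of Theorem~\ref{thm:external_mde} controls only the signal-block diagonal $\boldsymbol E_{M+j}^{\mathsf T}\boldsymbol G\boldsymbol E_{M+j}$ and says nothing about the measurement-side products $u_{r,i}u_{s,i}$.
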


\begin{IEEEproof}
	See Appendix~\ref{app:mde_responses}.
\end{IEEEproof}

We now state the Schur kernel obtained under the actual conditioned history.  Let
$\boldsymbol a_j$ denote the $j$th column of $\boldsymbol A$, and let
$\bar{\boldsymbol u}_{t}^{(j)}$ be the one-step column-cavity residual formed
from the current linear system after removing only the $j$th coordinate, with
the actual incoming message and diagonal loading at time $t$ kept fixed.

\begin{proposition}[Schur kernel under the conditioned history]
\label{prop:schur_kernel}
Assume $\mathcal R_t$.  Then the linear-module error satisfies
\begin{equation}
    m_{t,j}
    =
    \gamma_{t,j}T_{t,j} q_{t,j}
    +
    \mathcal Z_{t,j}
    +
    \Delta_{t,j},
\end{equation}
where
\begin{equation}
    \mathcal Z_{t,j}
    =
    \gamma_w T_{t,j}
    \boldsymbol a_j^{\mathsf T}
    \bar{\boldsymbol u}_{t}^{(j)}
\end{equation}
is the history-conditioned Schur residual process coordinate, and
\begin{equation}
    \frac1N\sum_{j=1}^N|\Delta_{t,j}|^2
    \overset{p}{\longrightarrow}0.
\end{equation}
\end{proposition}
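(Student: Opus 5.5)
The plan is to obtain the representation from an \emph{exact} coordinate-wise Schur-complement identity for the linear module, and then replace the random diagonal $d_{t,j}$ by its MDE equivalent $T_{t,j}$ at an empirical $\ell_2$ cost. No distributional information about $\boldsymbol a_j$ is needed at this stage. Gaussianity of $\mathcal Z_{t,j}$ and its covariance kernel are deferred to Proposition~\ref{prop:two_resolvent_response} and the subsequent conditioning arguments. The present proposition is purely algebraic plus one averaging estimate.

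\emph{Step 1 (exact identity).} Write $\boldsymbol K=\gamma_w\boldsymbol A^{\mathsf T}\boldsymbol A+\boldsymbol\Gamma_t$ and split off coordinate $j$. Let $\boldsymbol A_{(j)}$ be $\boldsymbol A$ with column $j$ removed, and let $\boldsymbol K^{(j)}=\gamma_w\boldsymbol A_{(j)}^{\mathsf T}\boldsymbol A_{(j)}+\boldsymbol\Gamma_t^{(j)}$. Define the cavity solution $\boldsymbol m_t^{(j)}=(\boldsymbol K^{(j)})^{-1}\bigl(\gamma_w\boldsymbol A_{(j)}^{\mathsf T}\boldsymbol w+\boldsymbol\Gamma_t^{(j)}\boldsymbol q_t^{(j)}\bigr)$ and the one-step column-cavity residual $\bar{\boldsymbol u}_t^{(j)}=\boldsymbol w-\boldsymbol A_{(j)}\boldsymbol m_t^{(j)}$. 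These are exactly the objects in the statement, with the incoming message and loading held fixed. Block inversion of \eqref{eq:m_error_recursion} gives
\begin{equation*}
d_{t,j}^{-1}=\gamma_w\boldsymbol a_j^{\mathsf T}\boldsymbol a_j+\gamma_{t,j}-\gamma_w^2\boldsymbol a_j^{\mathsf T}\boldsymbol A_{(j)}(\boldsymbol K^{(j)})^{-1}\boldsymbol A_{(j)}^{\mathsf T}\boldsymbol a_j ,
\end{equation*}
and, eliminating the other coordinates in the normal equation,
\begin{equation*}
m_{t,j}=d_{t,j}\bigl(\gamma_{t,j}q_{t,j}+\gamma_w\boldsymbol a_j^{\mathsf T}\bar{\boldsymbol u}_t^{(j)}\bigr).
\end{equation*}
Setting $X_{t,j}:=\gamma_{t,j}q_{t,j}+\gamma_w\boldsymbol a_j^{\mathsf T}\bar{\boldsymbol u}_t^{(j)}$, this yields the claimed form with
\begin{equation*}
\Delta_{t,j}=(d_{t,j}-T_{t,j})X_{t,j}.
\end{equation*}

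\emph{Step 2 (bounded factors).} We have $0<d_{t,j}\le\gamma_{\min}^{-1}$, and $T_{t,j}$ is bounded by the stable MDE solution of Theorem~\ref{thm:external_mde}, so $|d_{t,j}-T_{t,j}|\le C$. For the reverse direction we use the lower bound $d_{t,j}\ge(\gamma_w\|\boldsymbol a_j\|^2+\gamma_{\max})^{-1}$. A chi-square concentration bound with a union bound over $j$ gives $\max_j\|\boldsymbol a_j\|^2\le 2s_{\max}$ with probability tending to one. Hence $|X_{t,j}|=|m_{t,j}|/d_{t,j}\le C|m_{t,j}|$ uniformly in $j$ on this event.

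\emph{Step 3 (averaging).} For a truncation level $K$, split the average according to whether $|m_{t,j}|\le K$:
\begin{equation*}
\frac1N\sum_j|\Delta_{t,j}|^2\le C^2K^2\,\frac1N\sum_j|d_{t,j}-T_{t,j}|^2+C^4\,\frac1N\sum_j m_{t,j}^2\boldsymbol 1\{|m_{t,j}|>K\}.
\end{equation*}
The first term vanishes by Proposition~\ref{prop:bounded_deformation_mde}.

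The main obstacle is the second term. It requires uniform integrability of the empirical law of $\boldsymbol m_t$, which is more than the admissibility in Definition~\ref{def:admissible} provides. I would first try to draw this from the regularity event $\mathcal R_t$ in three steps:
\begin{itemize}
\item $\boldsymbol m_t=\boldsymbol C_t(\gamma_w\boldsymbol A^{\mathsf T}\boldsymbol w+\boldsymbol\Gamma_t\boldsymbol q_t)$, where $\|\boldsymbol C_t\|\le\gamma_{\min}^{-1}$ and $\|\boldsymbol A\|=O_p(1)$.
\item The $\boldsymbol q_t$ contribution is controlled by the $(4+\epsilon)$-type empirical moment bounds that $\mathcal R_t$ carries for $\boldsymbol q_t$ through Assumption~\ref{ass:ep_regular} and Assumption~\ref{ass:signal}.
\item The noise contribution $\boldsymbol C_t\boldsymbol A^{\mathsf T}\boldsymbol w$ is Gaussian given $\boldsymbol A$, with covariance of bounded norm, so its empirical fourth moment is $O_p(1)$.
\end{itemize}
Together these would give $\frac1N\sum_j m_{t,j}^4=O_p(1)$ only if $\boldsymbol C_t$ maps vectors with bounded empirical fourth moments to such vectors. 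That is not automatic from an operator-norm bound. If it fails, I would instead bound the fourth moments of $X_{t,j}$ directly using the exposed scalar projection $\boldsymbol a_j^{\mathsf T}\bar{\boldsymbol u}_t^{(j)}$. Conditionally on the linear history, this is Gaussian with variance $O(\|\bar{\boldsymbol u}_t^{(j)}\|_M^2)$ by Proposition~\ref{prop:conditional_gaussian_history}, up to the bounded finite-rank mean part $\boldsymbol A_{\parallel,t}$. That bound gives uniform integrability, and letting $N\to\infty$ and then $K\to\infty$ completes the argument.
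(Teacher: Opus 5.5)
Your Steps~1--2 are the paper's own proof in Appendix~\ref{app:schur_conditioned_history}. It uses the same exact column-cavity identity $m_{t,j}=d_{t,j}\bigl(\gamma_{t,j}q_{t,j}+\gamma_w\boldsymbol a_j^{\mathsf T}\bar{\boldsymbol u}_t^{(j)}\bigr)$ and the same remainder $\Delta_{t,j}=(d_{t,j}-T_{t,j})X_{t,j}$; your $X_{t,j}$ is the paper's Schur drive $r^{\rm sch}_{t,j}$. It also uses the same bound $|X_{t,j}|\le C|m_{t,j}|$ from two-sided control of $d_{t,j}$. You bound $d_{t,j}$ below via $\max_j\|\boldsymbol a_j\|^2$ and the paper via $\|\boldsymbol A\|$, which makes no difference.

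You are right that Step~3 needs more than admissibility. Bounded energy, no spikes and $N^{-1}\sum_j|d_{t,j}-T_{t,j}|^2\to0$ are all compatible with $X_{t,j}^2$ carrying order-one empirical mass on a vanishing fraction of coordinates where $|d_{t,j}-T_{t,j}|$ is of order one. The paper handles this step by combining Lemma~\ref{lem:E_schur_drive_energy}, which gives only $N^{-1}\sum_j X_{t,j}^2=O_p(1)$, with the product-stability Lemma~\ref{lem:J_product_stability}. That lemma's hypothesis is exactly uniform integrability of the $X_{t,j}^2$, and the paper does not verify it there; Appendix~\ref{app:precision_replacement} simply takes such uniform integrability as part of the regularity environment.

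\textbf{The gap.} Your fallback (b) does not provide this uniform integrability for $t\ge1$.
\begin{itemize}
\item The vector $\bar{\boldsymbol u}_t^{(j)}$ is a function of $\boldsymbol A_{(j)}$, so it is not $\mathcal F_t^{\rm lin}$-measurable.
\item Given $\mathcal F_t^{\rm lin}$, the column $\boldsymbol a_j$ is correlated with $\boldsymbol A_{(j)}$, because conditioning creates cross-column covariance.
\item Proposition~\ref{prop:conditional_gaussian_history} controls $\boldsymbol p^{\mathsf T}\boldsymbol A_{\perp,t}\boldsymbol q$ only for fixed $\boldsymbol p,\boldsymbol q$. It therefore gives no Gaussian law for $\boldsymbol a_j^{\mathsf T}\bar{\boldsymbol u}_t^{(j)}$.
\item Conditioning on $\boldsymbol A_{(j)}$ as well does not rescue the argument. $\mathcal F_t^{\rm lin}$ contains $\boldsymbol A\boldsymbol m_s=\boldsymbol w-\boldsymbol u_s$ for $s<t$, i.e.\ $\boldsymbol a_j m_{s,j}=\boldsymbol w-\boldsymbol u_s-\boldsymbol A_{(j)}\boldsymbol m_{s,-j}$, where $\boldsymbol m_{s,-j}$ is $\boldsymbol m_s$ without its $j$th entry. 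This determines $\boldsymbol a_j$ as soon as some $m_{s,j}\ne0$, leaving no fluctuation to exploit.
\end{itemize}
At $t=0$ your argument is fine, since then $\boldsymbol a_j$ is independent of $\boldsymbol A_{(j)}$.

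To close Step~3 you need an additional input. One option is to take uniform integrability of $\{X_{t,j}^2\}_j$ as part of the regularity environment, which is what the paper tacitly does. The other is an entrywise diagonal equivalent $\max_j|d_{t,j}-T_{t,j}|\overset{p}{\longrightarrow}0$. High-probability local laws give this after a union bound over $j$, but it is stronger than the averaged statement in Theorem~\ref{thm:external_mde}. With it, $N^{-1}\sum_j|\Delta_{t,j}|^2\le\max_j|d_{t,j}-T_{t,j}|^2\cdot N^{-1}\sum_jX_{t,j}^2$, and the energy bound already suffices.
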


\begin{IEEEproof}
    See Appendix~\ref{app:schur_conditioned_history}.
\end{IEEEproof}

\begin{proposition}[Covariance kernel and empirical Gaussian law]
	\label{prop:covariance_empirical_law}
	Assume $\mathcal R_t$.  For every fixed $0\le r,s\le t$,
	\begin{equation}
		\operatorname{Cov}
		(\mathcal Z_{r,j}^{\mathrm G},\mathcal Z_{s,j}^{\mathrm G}\mid\mathcal P_t)
		=
		\zeta_j^{r,s},
	\end{equation}
	where
	\begin{equation}
	\zeta_j^{r,s}
	=
	\gamma_w^2T_{r,j}T_{s,j}\Theta_j^{r,s}.
	\end{equation}
	Moreover, for any finite coordinate history $\mathcal X_{j,N}^t$ generated by
	the EP recursion up to time $t$, and for the corresponding Gaussian-kernel
	reference history $\mathcal X_{j,\mathrm G}^t$, we have, for every
	$\psi\in \PL(2)$,
	\begin{equation}
		\frac1N\sum_{j=1}^N
		\psi(\mathcal X_{j,N}^t)
		-
		\frac1N\sum_{j=1}^N
		\mathbb E_{\mathrm G,j}
		\left[
		\psi(\mathcal X_{j,\mathrm G}^t)
		\mid
		\mathcal P_t
		\right]
		\overset{p}{\longrightarrow}0.
	\end{equation}
\end{proposition}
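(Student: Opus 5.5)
The plan has two parts: first compute the covariance of the Schur coordinates $\mathcal Z_{r,j}$ column by column, then upgrade to an empirical law over $j$ using Theorem~\ref{thm:weak_gaussian_pl}.

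\textbf{Covariance kernel.} Fix $j$ and use the representation of Proposition~\ref{prop:schur_kernel}, namely $\mathcal Z_{s,j}=\gamma_wT_{s,j}\boldsymbol a_j^{\mathsf T}\bar{\boldsymbol u}_s^{(j)}$.

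1. The column-cavity residuals $\bar{\boldsymbol u}_s^{(j)}$ are built from the system with coordinate $j$ removed. Hence they are measurable with respect to the column-wise revealed history $\mathcal G_{s,j}$ together with the remaining columns.
2. Conditionally on $\mathcal F_t$, Proposition~\ref{prop:conditional_gaussian_history} lets us write $\boldsymbol a_j=\boldsymbol a_{j,\parallel}+\boldsymbol a_{j,\perp}$.
3. We then condition sequentially on the previously exposed scalar projections $\boldsymbol a_j^{\mathsf T}\bar{\boldsymbol u}_{r}^{(j)}$, $r<s$. Lemma~\ref{lem:gaussian_conditioning} applies again, since these are finitely many linear observations.
4. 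The unexposed part of $\boldsymbol a_j$ is then centered Gaussian with covariance $\boldsymbol\Xi_{j|r,s}$, and this covariance is bounded by $M^{-1}\boldsymbol S_j$ in the Loewner order.
5. Therefore the conditional covariance of the fluctuating parts of $(\mathcal Z_{r,j},\mathcal Z_{s,j})$ is exactly $\gamma_w^2T_{r,j}T_{s,j}(\bar{\boldsymbol u}_r^{(j)})^{\mathsf T}\boldsymbol\Xi_{j|r,s}\bar{\boldsymbol u}_s^{(j)}$.
6. Replace $\bar{\boldsymbol u}^{(j)}_s$ by $\boldsymbol u_s$. The difference is a rank-one column-cavity correction proportional to $\boldsymbol a_j$, whose $\boldsymbol\Xi$-quadratic form is $O_p(N^{-1/2})$ by the flat covariance bound of Corollary~\ref{cor:flat_cov}.
7. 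Proposition~\ref{prop:two_resolvent_response} then identifies the quadratic form with $\Theta_j^{r,s}+o_p(1)$, in empirical average over $j$. This is how the Gaussian reference $\mathcal Z^{\mathrm G}$ is defined.
8. The conditional-mean parts, coming from $\boldsymbol a_{j,\parallel}$ and from the exposed projections, are predictable. They are absorbed either into the Gaussian regression structure or into $\Delta_{t,j}$, using the bounded deformation of Proposition~\ref{prop:bounded_deformation_mde}.

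\textbf{Empirical law.} Here the aim is to apply Theorem~\ref{thm:weak_gaussian_pl} to the vectors $\boldsymbol\xi_{j,N}$ collecting $(\mathcal Z^{\mathrm G}_{s,j})_{s\le t}$ together with the coordinate data $(x_j,w\text{-independent terms},q_{s,j},\gamma_{s,j})$. The non-Gaussian coordinates $x_j$ and the prior-module outputs are handled by conditioning on them; they enter only through PL(2) functions of admissible arrays.

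1. Moment bounds. These follow from ellipticity ($\zeta_j^{s,s}\le C$, since $T_{s,j}\le\gamma_{\min}^{-1}$ and $\Theta$ is bounded by $s_{\max}\|\boldsymbol u\|_M^2$) and from Assumption~\ref{ass:signal}.
2. Off-diagonal covariance. For $j\ne k$, the covariance of $\boldsymbol a_{j,\perp}^{\mathsf T}\bar{\boldsymbol u}^{(j)}$ and $\boldsymbol a_{k,\perp}^{\mathsf T}\bar{\boldsymbol u}^{(k)}$ arises only through the off-diagonal blocks of $\boldsymbol\Sigma_{A|L}$. These equal $-\boldsymbol\Sigma_A\boldsymbol L_t^{\mathsf T}(\cdot)^\dagger\boldsymbol L_t\boldsymbol\Sigma_A$, a finite-rank term of normalized size $O(1/N)$ per pair.
3. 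So the average off-diagonal operator-norm covariance is $O(t/N)\to0$.
4. A Lindeberg-type replacement, with $\psi\in\PL(2)$, the Cauchy--Schwarz inequality, and $\frac1N\sum_j|\Delta_{t,j}|^2\to0$, transfers the law from $\mathcal X^{\mathrm G}_{j}$ to $\mathcal X_{j,N}$.
5. Finally, compare $\mathcal Z$ with its Gaussian reference $\mathcal Z^{\mathrm G}$. This is done by coupling on the same probability space via the exposed Gaussian innovations, with errors controlled in $o_p^{\ell_2}(1)$.

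\textbf{Main obstacle.} The hardest step is the dependence between $\bar{\boldsymbol u}_s^{(j)}$ and $\boldsymbol a_j$ through the history constraints \eqref{eq:proof_history_ATU}. Those constraints involve $\boldsymbol a_j^{\mathsf T}\boldsymbol u_s$ directly, so the Schur projection is partly revealed. My first attempt will be to show that the revealed component of $\boldsymbol a_j^{\mathsf T}\bar{\boldsymbol u}_s^{(j)}$ equals the regression term already encoded in $\boldsymbol\Xi_{j|r,s}$, so that it contributes no extra variance. Only the conditioned covariance then enters $\Theta_j^{r,s}$, exactly as Definition~\ref{def:conditioned_two_resolvent_response} anticipates.
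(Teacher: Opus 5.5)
Your outline follows the paper's own route (Appendices~\ref{app:covariance_kernel}--\ref{app:empirical_gaussian_law}) closely: column-wise sequential Gaussian conditioning, replacement of $\bar{\boldsymbol u}^{(j)}_s$ by $\boldsymbol u_s$, identification of the quadratic form with $\Theta_j^{r,s}$ through the conditioned two-resolvent response, a finite-rank bound on cross-column covariances, and the weakly dependent Gaussian law with truncation.

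The gap is in steps 1--5 of the covariance part, and it is exactly the obstacle you postpone at the end. To get the formula in step 5 you need two things at once: $\bar{\boldsymbol u}^{(j)}_s$ must be fixed, and the column must remain a nondegenerate Gaussian. These requirements are incompatible.
\begin{itemize}
\item Suppose the conditioning contains both $\mathcal F_t$ and $\boldsymbol A_{-j}$, as step 1 requires for measurability. Then for $t\ge1$ and any $s<t$, the history constraint $\boldsymbol A\boldsymbol m_s=\boldsymbol w-\boldsymbol u_s$ gives $\boldsymbol a_j=(\boldsymbol w-\boldsymbol u_s-\boldsymbol A_{-j}\boldsymbol m_{s,-j})/m_{s,j}$, since $m_{s,j}\neq0$ almost surely. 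So the ``unexposed part'' of $\boldsymbol a_j$ vanishes, $\boldsymbol\Xi_{j|r,s}=\boldsymbol 0$, and step 5 yields covariance $0$, not $\zeta_j^{r,s}$.
\item Suppose instead you keep only $\boldsymbol A_{-j}$ and the exposed scalars. Then step 1 is false. The residual $\bar{\boldsymbol u}^{(j)}_s$ is built from the actual $\boldsymbol q_s$, and by the Schur complement $\boldsymbol m_{s-1,-j}=\bar{\boldsymbol m}^{(j)}_{s-1,-j}-\gamma_w(\gamma_w\boldsymbol A_{-j}^{\mathsf T}\boldsymbol A_{-j}+\boldsymbol\Gamma_{s-1,-j})^{-1}\boldsymbol A_{-j}^{\mathsf T}\boldsymbol a_j\,m_{s-1,j}$. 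Hence $\boldsymbol q_{s,-j}$ depends on the whole vector $\boldsymbol A_{-j}^{\mathsf T}\boldsymbol a_j$, not only on $\mathcal Z_{<s,j}$.
\end{itemize}
The paper's Lemma~\ref{lem:F_measurable_gaussian_column} asserts this measurability without supplying the missing argument, so mirroring it does not close the hole.

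Step 8 is where the real content is hidden. By Lemma~\ref{lem:C_regression_representation}, the $j$th column of $\boldsymbol A_{\parallel,t}$ is $\sum_\ell c_{\ell,t}\,g_{\ell,j}\,M^{-1}\boldsymbol S_j\boldsymbol p_{\ell,t}$. Here $g_{\ell,j}$ is the $j$th entry of a signal-side history vector such as $\boldsymbol\Gamma_s(\boldsymbol m_s-\boldsymbol q_s)$. Consequently
$\boldsymbol a_{j,\parallel}^{\mathsf T}\bar{\boldsymbol u}^{(j)}_t=\sum_\ell c_{\ell,t}\,g_{\ell,j}\,M^{-1}\boldsymbol p_{\ell,t}^{\mathsf T}\boldsymbol S_j\bar{\boldsymbol u}^{(j)}_t$
is generically of order one per coordinate, and it carries $q_{s,j}$, a non-Gaussian function of $x_j$. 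It cannot be moved into $\Delta_{t,j}$. It can be ``absorbed into the regression structure'' only after proving that its $q_{s,j}$-dependent part cancels. That is an Onsager-type identity, and it is precisely what would make $\mathcal Z_{t,j}$ centered Gaussian and independent of $x_j$ in the first place.

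Your proposed fix (showing the revealed component adds no extra variance) does not touch this. A revealed component trivially adds no conditional variance; the problem is the mean it contributes. The missing ingredient is a genuine leave-one-out trajectory: the EP recursion rerun with column $j$ deleted, so that its residual is independent of $\boldsymbol a_j$. Alongside it you need a perturbation estimate showing that the actual-minus-leave-one-out correction to $\boldsymbol a_j^{\mathsf T}\bar{\boldsymbol u}^{(j)}_t$ is an explicit combination of past coordinates with no $x_j$-dependent remainder. Your remaining steps (6--7 and the whole empirical-law part) all take this Gaussian representation as input.
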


\begin{IEEEproof}
	See Appendices~\ref{app:covariance_kernel} and
	\ref{app:empirical_gaussian_law}.
\end{IEEEproof}

Propositions~\ref{prop:schur_kernel} and
\ref{prop:covariance_empirical_law} prove that the linear module produces a
Gaussian process with covariance kernel $\zeta_j^{r,s}$.  The important point is
that this process is generally not an innovation process: the covariance
between $\mathcal Z_{t,j}$ and its past $\mathcal Z_{<t,j}$ need not vanish.
The next subsection shows how this temporal correlation appears in the EP
cavity.

\subsection{Cavity Decomposition and Induction Closure}
\label{subsec:cavity_decomposition_closure}

The EP cavity is obtained by subtracting the incoming Gaussian natural
parameters from the diagonal marginal belief:
\begin{equation}
\boldsymbol\Pi_t\boldsymbol h_t
=
\boldsymbol D_t^{-1}\boldsymbol m_t
-
\boldsymbol\Gamma_t\boldsymbol q_t .
\end{equation}
Substituting the Gaussian Schur kernel
\begin{equation}
    m_{t,j}
    =
    \gamma_{t,j}T_{t,j} q_{t,j}
    +
    \mathcal Z_{t,j}
    +
    o_p^{\ell_2}(1)
\end{equation}
and the MDE response
\begin{equation}
    d_{t,j}=T_{t,j}+o_p^{\ell_2}(1)
\end{equation}
shows the instantaneous cancellation:
\begin{equation}
    d_{t,j}^{-1}m_{t,j}-\gamma_{t,j}q_{t,j}
    =
    (T_{t,j})^{-1}\mathcal Z_{t,j}
    +
    o_p^{\ell_2}(1).
\end{equation}
Since the projected cavity precision satisfies
\begin{equation}
    \pi_{t,j}
    =
    \bar\pi_{t,j}
    +
    o_p^{\ell_2}(1),
\end{equation}
the cavity takes the form
\begin{equation}
    h_{t,j}
    =
    \alpha_{t,j}\mathcal Z_{t,j}
    +
    o_p^{\ell_2}(1),
    \qquad
    \alpha_{t,j}
    =
    \frac{1}{\bar\pi_{t,j}T_{t,j}}.
\end{equation}
Thus diagonal EP removes the instantaneous response
$\gamma_{t,j}T_{t,j}q_{t,j}$, but it does not automatically remove the
predictable component of $\mathcal Z_{t,j}$ with respect to its own history.

At this point no additional random-matrix input is needed.  The residual is
already a Gaussian process; the following proposition only applies finite-
dimensional Gaussian regression to separate its predictable part from its fresh
innovation.

\begin{proposition}[Cavity memory decomposition]
	\label{prop:cavity_memory_decomposition}
	Assume $\mathcal R_t$.  Then
	\begin{equation}
		h_{t,j}
		=
		\mu_{t,j}
		+
		\sqrt{\tau_{t,j}}\,W_{t,j}
		+
		\varepsilon_{t,j},
	\end{equation}
	where, conditionally on $\mathcal P_t$, $W_{t,j}\sim\mathcal N(0,1)$ is
	independent of the past Gaussian history, and
	\begin{equation}
		\frac1N\sum_{j=1}^N|\varepsilon_{t,j}|^2
		\overset{p}{\longrightarrow}0.
	\end{equation}
	The memory and innovation variance are exactly the quantities defined in
	\eqref{eq:mu_def} and \eqref{eq:tau_def}.
\end{proposition}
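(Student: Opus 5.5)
The plan is to start from the cavity representation derived just before the statement,
\begin{equation*}
    h_{t,j}=\alpha_{t,j}\mathcal Z_{t,j}+e_{t,j},
    \qquad
    \frac1N\sum_{j=1}^N|e_{t,j}|^2\overset{p}{\longrightarrow}0,
\end{equation*}
and then apply Lemma~\ref{lem:gaussian_regression} coordinatewise to the Gaussian reference history. First we must justify this representation carefully. We substitute Proposition~\ref{prop:schur_kernel} and Proposition~\ref{prop:bounded_deformation_mde} into $\pi_{t,j}h_{t,j}=d_{t,j}^{-1}m_{t,j}-\gamma_{t,j}q_{t,j}$. We use that $d_{t,j},T_{t,j}$ are bounded above and below on $\mathcal R_t$: $d_{t,j}\in[(\gamma_w\|\boldsymbol A\|^2+\gamma_{\max})^{-1},\gamma_{\min}^{-1}]$, and similarly for $T_{t,j}$ by MDE stability. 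Hence $|d_{t,j}^{-1}-T_{t,j}^{-1}|\le C|d_{t,j}-T_{t,j}|$. The products of an $o_p^{\ell_2}(1)$ factor with $q_{t,j}$ or $\mathcal Z_{t,j}$ are then controlled by admissibility, since $\|\boldsymbol q_t\|_\infty/\sqrt N\to0$, together with the $\ell_2$-boundedness of $\boldsymbol{\mathcal Z}_t$. One point needs care: a product of an $o_p^{\ell_2}(1)$ array and an $O_p^{\ell_2}(1)$ array is not automatically $o_p^{\ell_2}(1)$. I would handle this by truncation. Split coordinates according to whether $|q_{t,j}|$ or $|\mathcal Z_{t,j}|$ exceeds $K$. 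On the bulk, use $|d-T|\le C$ and the averaged convergence. On the tail, use uniform integrability, which comes from the $4+\epsilon$ moments inherited through Theorem~\ref{thm:weak_gaussian_pl} applied to the $\PL(2)$ function $x^2\boldsymbol 1\{|x|>K\}$, suitably smoothed. Division by $\pi_{t,j}$ and its replacement by $\bar\pi_{t,j}$ are handled the same way, since both lie in $[\pi_{\min},\pi_{\max}]$ and the projection map is $1$-Lipschitz.

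Second, conditionally on $\mathcal P_t$, the reference vector $(\mathcal Z^{\mathrm G}_{0,j},\dots,\mathcal Z^{\mathrm G}_{t,j})$ is centered Gaussian with kernel $\zeta_j^{r,s}$, by Proposition~\ref{prop:covariance_empirical_law}. Lemma~\ref{lem:gaussian_regression} gives
\begin{equation*}
    \mathcal Z_{t,j}^{\mathrm G}
    =\boldsymbol\zeta_j^{t,<t}(\boldsymbol\zeta_j^{<t,<t})^\dagger\boldsymbol{\mathcal Z}^{\mathrm G}_{<t,j}+G_{t,j},
\end{equation*}
with $G_{t,j}$ independent of the past and of variance $\nu_{t,j}$. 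Set $W_{t,j}=G_{t,j}/\sqrt{\nu_{t,j}}$ when $\nu_{t,j}>0$. When $\nu_{t,j}=0$, take $W_{t,j}$ to be an auxiliary independent standard normal; this is harmless because $\sqrt{\tau_{t,j}}=0$ there. Multiplying by $\alpha_{t,j}$, which is bounded since $\bar\pi_{t,j}$ and $T_{t,j}$ are, yields exactly $\mu_{t,j}+\sqrt{\tau_{t,j}}W_{t,j}$ by \eqref{eq:mu_def}--\eqref{eq:tau_def}.

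The main obstacle is passing from the Gaussian reference variables to the actual Schur variables $\mathcal Z_{s,j}$ that appear in $h_{t,j}$ and in $\mu_{t,j}$. In Theorem~\ref{thm:general_dynamic} the memory $\mu_{t,j}$ is built from the realized history. So $\varepsilon_{t,j}$ must absorb $\alpha_{t,j}(\mathcal Z_{t,j}-\mathcal Z^{\mathrm G}_{t,j})$ and the regression applied to the past differences, and the regression coefficients are not uniformly bounded when $\boldsymbol\zeta_j^{<t,<t}$ is nearly singular. I would first try to realize the actual process and its reference on a common space. The idea is to use the column-wise revealed filtration $\mathcal G_{r,j}$: each $\mathcal Z_{s,j}=\gamma_wT_{s,j}\boldsymbol a_j^{\mathsf T}\bar{\boldsymbol u}^{(j)}_s$ is, conditionally, a Gaussian projection of the unexposed column. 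Its covariance agrees with $\zeta_j^{r,s}$ up to $o_p(1)$ in empirical average, by Proposition~\ref{prop:two_resolvent_response}, and these projections can be coupled to $\mathcal Z^{\mathrm G}$ with an $\ell_2$ error controlled by the covariance discrepancy. To control the coefficient vector $\boldsymbol\zeta_j^{t,<t}(\boldsymbol\zeta_j^{<t,<t})^\dagger$, I would rely on the regularity event $\mathcal R_t$, whose Gram-stability clause removes asymptotically redundant directions so that the pseudoinverse is bounded on the retained range. On the remaining coordinates, note that the predicted component is orthogonal to the null directions, and use the Cauchy--Schwarz bound $|\boldsymbol c^{\mathsf T}\boldsymbol C^\dagger \boldsymbol y|^2\le c_{t,t}\,\boldsymbol y^{\mathsf T}\boldsymbol C^\dagger\boldsymbol y$, together with a truncation in $j$. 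Summing the pieces then gives $\frac1N\sum_j|\varepsilon_{t,j}|^2\overset{p}{\longrightarrow}0$.
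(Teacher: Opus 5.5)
\textbf{Steps 1 and 2 match the paper.} Your first two steps are the paper's proof in Appendix~\ref{app:cavity_memory_decomposition}:
\begin{itemize}
\item the replacement $\pi_{t,j}\to\bar\pi_{t,j}$ (Lemma~\ref{lem:H_pi_replacement});
\item the cancellation $\pi_{t,j}h_{t,j}=T_{t,j}^{-1}\mathcal Z_{t,j}+o_p^{\ell_2}(1)$, with the products handled by truncation as in Lemma~\ref{lem:J_product_stability};
\item Lemma~\ref{lem:gaussian_regression} applied coordinatewise, with the same convention when $\nu_{t,j}=0$.
\end{itemize}
The paper stops there. It applies the regression directly to the \emph{realized} vector $(\mathcal Z_{0,j},\ldots,\mathcal Z_{t,j})$, reading Proposition~\ref{prop:covariance_empirical_law} as saying that this vector is itself Gaussian with kernel $\zeta_j^{r,s}$ conditionally on $\mathcal P_t$. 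With that identification, $\mu_{t,j}$ evaluated on the realized past is the exact regression term, and $\varepsilon_{t,j}$ is only the cancellation remainder. No realized-versus-reference bridge is needed.

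\textbf{Your third step does not close as written.} This is the step you add to build that bridge.
\begin{itemize}
\item The event $\mathcal R_t$ controls admissibility, the precision intervals, and the normalized Gram matrices of history vectors such as $\boldsymbol M_t,\boldsymbol U_t$. It says nothing about the per-coordinate kernels $\boldsymbol\zeta_j^{<t,<t}$.
\item A uniform bound on $(\boldsymbol\zeta_j^{<t,<t})^\dagger$ is exactly Assumption~\ref{ass:stable_memory_regression}. The paper introduces it only in Section~\ref{sec:precision_replacement}, and it is not a hypothesis of this proposition or of Theorem~\ref{thm:general_dynamic}.
\item Your Cauchy--Schwarz fallback does not replace it. For $\boldsymbol y$ the coupling error, $\boldsymbol y^{\mathsf T}\boldsymbol C^\dagger\boldsymbol y$ can be as large as $\|\boldsymbol y\|^2/\lambda_{\min}^+(\boldsymbol C)$. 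Nothing makes the ill-conditioned coordinates a vanishing fraction, or their contributions uniformly integrable, so truncation in $j$ does not help.
\end{itemize}

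The regression map really is discontinuous there. Take $t=1$ with $\zeta_j^{0,0}=\epsilon$, while the realized $\mathcal Z_{0,j}$ has variance $\epsilon+\delta$ and all other entries are exact. Then $\mathcal Z_{1,j}-(\zeta_j^{1,0}/\epsilon)\mathcal Z_{0,j}$ has variance $\nu_{1,j}+(\zeta_j^{1,0})^2\delta/\epsilon^2$. Since $(\zeta_j^{1,0})^2$ may be as large as $\epsilon\,\zeta_j^{1,1}$, knowing only $\delta=o_p(1)$ is useless unless $\delta/\epsilon\to0$.

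\textbf{Possible repairs.}
\begin{itemize}
\item Adopt the paper's identification and drop the third step.
\item Evaluate the memory on the coupled reference past, $\alpha_{t,j}\boldsymbol\zeta_j^{t,<t}(\boldsymbol\zeta_j^{<t,<t})^\dagger\boldsymbol{\mathcal Z}^{\mathrm G}_{<t,j}$, which is consistent with \eqref{eq:mu_def} if $\boldsymbol{\mathcal Z}_{<t,j}$ there is read as the Gaussian history. Then only $\alpha_{t,j}(\mathcal Z_{t,j}-\mathcal Z^{\mathrm G}_{t,j})$ enters $\varepsilon_{t,j}$, and no pseudoinverse appears. Let $\widehat{\boldsymbol\zeta}_j$ be the realized covariance and $\boldsymbol g\sim\mathcal N(\boldsymbol 0,\boldsymbol I_{t+1})$. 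The synchronous coupling bound $\mathbb E\|(\widehat{\boldsymbol\zeta}_j^{1/2}-(\boldsymbol\zeta_j^{0:t,0:t})^{1/2})\boldsymbol g\|^2\le(t+1)\|\widehat{\boldsymbol\zeta}_j-\boldsymbol\zeta_j^{0:t,0:t}\|$ then suffices.
\item If you insist on the realized past, add Assumption~\ref{ass:stable_memory_regression} and reuse the Lipschitz argument from the proof of Proposition~\ref{prop:memory_map_stability}.
\end{itemize}
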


\begin{IEEEproof}
	See Appendix~\ref{app:cavity_memory_decomposition}.
\end{IEEEproof}

It remains to close the induction.  The prior module maps the scalar cavity
variables componentwise through the posterior mean and variance functions
defined in \eqref{eq:scalar_eta_def}--\eqref{eq:scalar_vB_def}.  Since these
maps preserve $\PL(2)$ empirical convergence on compact precision intervals by
Assumption~\ref{ass:ep_regular}, the signal-side histories remain admissible.
The next proposition also controls the measurement residual needed for the next
conditioning step.

\begin{proposition}[Regularity closure]
	\label{prop:regularity_closure}
	Assume $\mathcal R_t$ and the conclusions of
	Propositions~\ref{prop:schur_kernel}--\ref{prop:cavity_memory_decomposition}
	at time $t$.  Then
	\begin{equation}
	\boldsymbol m_t,\quad
	\boldsymbol h_t,\quad
	\boldsymbol p_t,\quad
	\boldsymbol q_{t+1}
	\end{equation}
	are signal-side admissible.  Moreover,
	\begin{equation}
	\boldsymbol u_t
	=
	\boldsymbol w-\boldsymbol A\boldsymbol m_t
	\end{equation}
	is measurement-side admissible.  Hence $\mathcal R_{t+1}$ holds with
	probability tending to one.
\end{proposition}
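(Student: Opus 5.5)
The plan is to verify the two admissibility requirements, the $\ell_2$ bound and the no-spike condition $\|\cdot\|_\infty/\sqrt N\to0$, one vector at a time. The order follows the recursion: $\boldsymbol m_t$, then $\boldsymbol h_t$, then $\boldsymbol p_t$, then $\boldsymbol q_{t+1}$, and finally $\boldsymbol u_t$. Energy bounds will come from the kernel representations already established. The no-spike property will come from a union bound over Gaussian coordinates with uniformly bounded variance.

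\textbf{Signal-side vectors.} For $\boldsymbol m_t$, I would use Proposition~\ref{prop:schur_kernel}: $m_{t,j}=\gamma_{t,j}T_{t,j}q_{t,j}+\mathcal Z_{t,j}+\Delta_{t,j}$.
\begin{itemize}
\item The first term is admissible because $\gamma_{t,j}\le\gamma_{\max}$ and $T_{t,j}\le \gamma_{\min}^{-1}$ (since $\boldsymbol C_t\preceq\boldsymbol\Gamma_t^{-1}$ and the MDE equivalent inherits this bound), and $\boldsymbol q_t$ is admissible under $\mathcal R_t$.
\item For $\mathcal Z_{t,j}=\gamma_wT_{t,j}\boldsymbol a_j^{\mathsf T}\bar{\boldsymbol u}_t^{(j)}$, conditionally on the column-cavity quantities this is Gaussian. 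Its variance is bounded by $C\|\bar{\boldsymbol u}^{(j)}_t\|_M^2$ via Proposition~\ref{prop:conditional_gaussian_history}. Hence $\frac1N\sum_j\mathcal Z_{t,j}^2=O_p(1)$, and $\max_j|\mathcal Z_{t,j}|=O_p(\sqrt{\log N})=o_p(\sqrt N)$.
\item The remainder $\Delta_{t,j}$ is $o_p^{\ell_2}(1)$, which gives both properties directly, since $\|\boldsymbol\Delta\|_\infty\le\|\boldsymbol\Delta\|$.
\end{itemize}

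Next, $\boldsymbol h_t=\boldsymbol\Pi_t^{-1}(\boldsymbol D_t^{-1}\boldsymbol m_t-\boldsymbol\Gamma_t\boldsymbol q_t)$. Since $\pi_{t,j}\ge\pi_{\min}$ and $d_{t,j}\ge(\gamma_{\max}+\gamma_w\|\boldsymbol a_j\|^2)^{-1}$, which is bounded below with high probability because column norms concentrate, $\boldsymbol h_t$ is a bounded diagonal rescaling of admissible vectors and is therefore admissible.

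For $\boldsymbol p_t$, write $p_{t,j}=\eta(x_j+h_{t,j};\pi_{t,j})-x_j$. Assumption~\ref{ass:ep_regular} gives Lipschitz control of $\eta$ on compact precision intervals, so $|p_{t,j}|\le C(1+|x_j|+|h_{t,j}|)$. The $4+\epsilon$ moment in Assumption~\ref{ass:signal} gives $\max_j|x_j|=o_p(N^{1/4})$, so no spikes arise. For $\boldsymbol q_{t+1}$, I would use \eqref{eq:q_next_error_recursion} with $\gamma_{t+1,j}\ge\gamma_{\min}$ and $v_{B,t,j}^{-1}\le \gamma_{\max}+\pi_{\max}$, which is implied by the clipping. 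This makes $\boldsymbol q_{t+1}$ admissible as well.

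\textbf{Measurement residual (main obstacle).} The hard step is $\boldsymbol u_t=\boldsymbol w-\boldsymbol A\boldsymbol m_t$, because $\boldsymbol m_t$ depends on $\boldsymbol A$. The energy bound is the easy part: $\|\boldsymbol u_t\|_M^2\le 2\|\boldsymbol w\|_M^2+2\|\boldsymbol A\|^2\|\boldsymbol m_t\|^2/M$, and $\|\boldsymbol A\|=O_p(1)$ under uniform ellipticity.

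For the no-spike property I would use the normal-equation identity $\boldsymbol u_t=(\boldsymbol I_M+\gamma_w\boldsymbol A\boldsymbol\Gamma_t^{-1}\boldsymbol A^{\mathsf T})^{-1}(\boldsymbol w-\boldsymbol A\boldsymbol q_t)$, which follows from \eqref{eq:m_error_recursion} by the push-through identity. I would then condition on $\mathcal F_t$ and split $\boldsymbol A=\boldsymbol A_{\parallel,t}+\boldsymbol A_{\perp,t}$.
\begin{itemize}
\item The component $\boldsymbol A_{\perp,t}\boldsymbol q_t$ has Gaussian coordinates with variance $\le C\|\boldsymbol q_t\|_N^2$, so its max is $O_p(\sqrt{\log M})$.
\item $\boldsymbol A_{\parallel,t}\boldsymbol q_t$ is a finite-rank combination of admissible history vectors, namely the columns of $\boldsymbol U_t$ and $\boldsymbol w$. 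These are non-spiky by the induction hypothesis.
\end{itemize}
The remaining difficulty is that the resolvent $(\boldsymbol I+\gamma_w\boldsymbol A\boldsymbol\Gamma_t^{-1}\boldsymbol A^{\mathsf T})^{-1}$ could in principle concentrate mass on a coordinate. For this I would apply a row-wise Schur complement: a row-cavity expansion analogous to Proposition~\ref{prop:schur_kernel}, but on the measurement side. It expresses $u_{t,i}$ as a bounded multiple of a Gaussian projection of row $i$ onto row-cavity vectors plus an $\ell_2$-small error. The union bound then gives $\|\boldsymbol u_t\|_\infty=O_p(\sqrt{\log M})+o_p(\sqrt M)$.

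Finally, tightness of the enlarged Gram matrices follows from the $\ell_2$ bounds, and redundant directions are handled by Moore--Penrose inverses. Together these establish $\mathcal R_{t+1}$ with probability tending to one.
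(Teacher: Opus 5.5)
Your signal-side steps go through, up to two repairable slips noted at the end. For $\boldsymbol h_t$, your route through the exact identity and $d_{t,j}\ge([\boldsymbol C_t^{-1}]_{jj})^{-1}=(\gamma_{t,j}+\gamma_w\|\boldsymbol a_j\|^2)^{-1}$ is more direct than the paper's. The paper goes through the memory decomposition and borrows the two-sided bounds on $\tau_{t,j}$ from Assumption~\ref{ass:stable_memory_regression}.

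\textbf{The gap: no-spike property of $\boldsymbol u_t$.} The paper splits $\boldsymbol A\boldsymbol m_t=\boldsymbol A_{\parallel,t}\boldsymbol m_t+\boldsymbol A_{\perp,t}\boldsymbol m_t$ and applies the flat-covariance bound and Lemma~\ref{lem:gaussian_max_appendix} to the second term conditionally on $\mathcal F_t$. Your remark that $\boldsymbol m_t$ depends on $\boldsymbol A$, and so is not $\mathcal F_t$-measurable, is exactly what makes that step delicate. Your push-through identity sensibly moves the Gaussian part onto the history vector $\boldsymbol q_t$. But it moves the whole difficulty into the resolvent, and there you only assert a row-cavity lemma. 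That lemma is not among the conclusions you may assume, and it is not obtainable by the naive route.

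Write $\boldsymbol a^{(i)}\in\mathbb R^N$ for the $i$th row of $\boldsymbol A$, and $\boldsymbol m_t^{(-i)},\boldsymbol C_t^{(-i)}$ for the linear-module quantities computed with that row (and $w_i$) deleted. Sherman--Morrison gives the exact identity
\[
u_{t,i}=\frac{w_i-\boldsymbol a^{(i)\mathsf T}\boldsymbol m_t^{(-i)}}{1+\gamma_w\,\boldsymbol a^{(i)\mathsf T}\boldsymbol C_t^{(-i)}\boldsymbol a^{(i)}} .
\]
So the denominator is harmless, no $\ell_2$ error appears, and everything hinges on a tail bound for $\boldsymbol a^{(i)\mathsf T}\boldsymbol m_t^{(-i)}$ that is uniform in $i$. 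Two obstacles block the naive argument.
\begin{itemize}
\item $\boldsymbol m_t^{(-i)}$ is built from $\boldsymbol q_t$ and $\boldsymbol\Gamma_t$, so it depends on row $i$ through all earlier iterations.
\item Suppose you restore Gaussianity by conditioning on $\mathcal F_t$ together with the other rows $\boldsymbol A^{(-i)}$. The history constraint $\boldsymbol A^{\mathsf T}\boldsymbol U_t=\boldsymbol B_t$ then reads $\boldsymbol a^{(i)}(u_{0,i},\ldots,u_{t-1,i})=\boldsymbol B_t-\boldsymbol A^{(-i)\mathsf T}\boldsymbol U_t^{(-i)}$, where $\boldsymbol U_t^{(-i)}$ collects the other rows of $\boldsymbol U_t$. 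This determines $\boldsymbol a^{(i)}$ completely as soon as some $u_{s,i}\neq0$, which happens almost surely for $t\ge1$. No Gaussian randomness is left for a union bound.
\end{itemize}
Theorem~\ref{thm:external_mde} cannot substitute either: it gives only $o_p(1)$ for each fixed pair of test vectors, which cannot be union-bounded over $i$. Closing this step needs a genuinely new input. Examples are a leave-one-row-out rerun of the whole recursion with a sup-norm perturbation bound, or a high-probability delocalization estimate for the resolvent.

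\textbf{Two smaller points.}
\begin{itemize}
\item In the $\boldsymbol q_{t+1}$ step, $v_{B,t,j}^{-1}\le\gamma_{\max}+\pi_{\max}$ does not follow from the clipping in \eqref{eq:gamma_next_branch_replacement}. An active upper clip on $\gamma_{t+1,j}$ says nothing about $v_{B,t,j}^{-1}$, which is unbounded for, e.g., a binary prior at large $|r|$. You need the posterior-variance projection $v_{B,t,j}\ge v_{\min}$ of Assumption~\ref{ass:regularized_scalar_module}, which is what the paper uses here.
\item In the $\boldsymbol m_t$ step, the conditional law of $\boldsymbol a_j$ given the revealed history is not centered; its mean is $\boldsymbol a_{j|r}$ in \eqref{eq:F_column_seq_cond}. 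So a centered-Gaussian union bound does not apply as stated. It is simpler to get both the energy bound and the no-spike property of $\mathcal Z_{t,j}$ from the assumed empirical $\PL(2)$ law, e.g.\ with the test function $(z^2-K^2)_+$. This is essentially how the paper proceeds.
\end{itemize}
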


\begin{IEEEproof}
	See Appendix~\ref{app:regularity_closure}.
\end{IEEEproof}

We are now ready to prove Theorem~\ref{thm:general_dynamic}.

\begin{IEEEproof}[Proof of Theorem~\ref{thm:general_dynamic}]
	The proof is by induction over $t$.  At $t=0$, the history matrices are empty.
	The initialization of the diagonal EP recursion is deterministic and satisfies
	the compact precision bounds.  Assumptions~\ref{ass:signal} and
	\ref{ass:noise} give bounded empirical energy and no-spike conditions for the
	initial signal and noise variables.  Hence $\mathcal R_0$ holds with
	probability tending to one.
	
	Assume now that $\mathcal R_t$ holds for some fixed $t<T$.  The history
	constraints \eqref{eq:proof_history_AM} and \eqref{eq:proof_history_ATU} are
	finite linear observations of $\boldsymbol A$.  By
	Proposition~\ref{prop:conditional_gaussian_history},
	conditioning on $\mathcal F_t$ gives the decomposition
	\begin{equation}
	\boldsymbol A
	=
	\boldsymbol A_{\parallel,t}
	+
	\boldsymbol A_{\perp,t},
	\end{equation}
	where the residual matrix is centered correlated Gaussian with flat covariance.
	By Proposition~\ref{prop:bounded_deformation_mde}, the deformation is bounded
	and the conditioned block linearization satisfies the assumptions of the
	external MDE input.  Therefore
	\begin{equation}
	\frac1N\sum_{j=1}^N
	|d_{t,j}-T_{t,j}|^2
	\overset{p}{\longrightarrow}0,
	\end{equation}
	which proves the MDE response part of the theorem.
	
	Next, Proposition~\ref{prop:schur_kernel} gives the history-conditioned Schur
	kernel
	\begin{equation}
	m_{t,j}
	=
	\gamma_{t,j}T_{t,j}q_{t,j}
	+
    \mathcal Z_{t,j}
	+
	\Delta_{t,j},
	\end{equation}
	with $\|\boldsymbol\Delta_t\|_N\to0$ in probability.  The covariance of the
	Gaussian history is identified by
	Propositions~\ref{prop:two_resolvent_response} and
	\ref{prop:covariance_empirical_law}:
	\begin{equation}
	\operatorname{Cov}(\mathcal Z_{r,j}^{\mathrm G},\mathcal Z_{s,j}^{\mathrm G}\mid\mathcal P_t)
	=
	\zeta_j^{r,s}
	=
	\gamma_w^2T_{r,j}T_{s,j}\Theta_j^{r,s}.
	\end{equation}
	The empirical-law part of Proposition~\ref{prop:covariance_empirical_law} then
	yields the empirical pseudo-Lipschitz law
	\begin{equation}
	\frac1N\sum_{j=1}^N
	\psi(\mathcal X_{j,N}^t)
	-
	\frac1N\sum_{j=1}^N
	\mathbb E_{\mathrm G,j}
	\left[
	\psi(\mathcal X_{j,\mathrm G}^t)
	\mid
	\mathcal P_t
	\right]
	\overset{p}{\longrightarrow}0
	\end{equation}
	for every $\psi\in \PL(2)$.
	
	The EP cavity identity
	\begin{equation}
	\boldsymbol\Pi_t\boldsymbol h_t
	=
	\boldsymbol D_t^{-1}\boldsymbol m_t
	-
	\boldsymbol\Gamma_t\boldsymbol q_t
	\end{equation}
	combined with the linear Gaussian kernel gives the cancellation of the
	instantaneous response.  Proposition~\ref{prop:cavity_memory_decomposition}
	then applies the Gaussian regression identity to obtain
	\begin{equation}
	h_{t,j}
	=
	\mu_{t,j}
	+
	\sqrt{\tau_{t,j}}\,W_{t,j}
	+
	\varepsilon_{t,j},
	\end{equation}
	where
	\begin{equation}
	\frac1N\sum_{j=1}^N|\varepsilon_{t,j}|^2
	\overset{p}{\longrightarrow}0.
	\end{equation}
	This proves the cavity decomposition part of the theorem.
	
	Finally, Proposition~\ref{prop:regularity_closure} shows that the newly
	generated signal-side histories remain admissible and that
	\begin{equation}
	\boldsymbol u_t
	=
	\boldsymbol w-\boldsymbol A\boldsymbol m_t
	\end{equation}
	is measurement-side admissible.  Therefore $\mathcal R_{t+1}$ holds.  Since
	the horizon $T$ is fixed, induction over $t=0,1,\ldots,T$ completes the proof.
\end{IEEEproof}

\section{Consequences of the Dynamic Theorem}
\label{sec:proof_main_consequences}

The general finite-time dynamic theorem identifies the precise Gaussian object
created by the linear module.  We now translate this dynamic statement into the
main consequences for diagonal EP.  The key point is that the Gaussian process
in Theorem~\ref{thm:general_dynamic} is not necessarily an innovation process.
The standard EP cavity removes the instantaneous self-response, but it does not
in general remove the component of the current Gaussian residual that is
predictable from its past.

The results in this section are first stated for the predictable-precision
dynamic of Theorem~\ref{thm:general_dynamic}.  The adaptive diagonal EP
recursion of Section~\ref{sec:system_model} is connected to this form by the
precision replacement theorem proved in Section~\ref{sec:precision_replacement}.
Thus the purpose of the present section is not to introduce new random-matrix
arguments, but to show how the Gaussian dynamic theorem yields the memory
defect and the corrected coordinate-wise state evolution.

\subsection{Memory Defect of Standard Diagonal EP}
\label{subsec:proof_memory_defect}

We first prove Theorem~\ref{thm:memory_defect}.  The message delivered by
module A to module B is
\begin{equation}
\boldsymbol r_{A\to B}^{t}
=
\boldsymbol x+\boldsymbol h_t .
\end{equation}
By Theorem~\ref{thm:general_dynamic}, the EP cavity error admits the
decomposition
\begin{equation}
	h_{t,j}
	=
	\mu_{t,j}
	+
	\sqrt{\tau_{t,j}}\,W_{t,j}
	+
	\varepsilon_{t,j},
	\label{eq:proof_h_memory_decomposition}
\end{equation}
where, conditionally on the MDE-generated environment $\mathcal P_t$,
$W_{t,j}\sim\mathcal N(0,1)$ is independent of the past Gaussian history, and
\begin{equation}
\frac1N\sum_{j=1}^N|\varepsilon_{t,j}|^2
\overset{p}{\longrightarrow}0.
\end{equation}
Substituting \eqref{eq:proof_h_memory_decomposition} into
$\boldsymbol r_{A\to B}^{t}=\boldsymbol x+\boldsymbol h_t$ gives
\begin{equation}
	r_{A\to B,t,j}
	=
	x_j
	+
	\mu_{t,j}
	+
	\sqrt{\tau_{t,j}}\,W_{t,j}
	+
	\varepsilon_{t,j}.
\end{equation}
Equivalently,
\begin{equation}
r_{A\to B,t,j}
=
x_j
+
\mu_{t,j}
+
\sqrt{\tau_{t,j}}\,W_{t,j}
+
o_p^{\ell_2}(1).
\end{equation}
This is exactly \eqref{eq:std_channel_memory}, and therefore proves
Theorem~\ref{thm:memory_defect}.

The interpretation is immediate.  The standard diagonal EP cavity is Gaussian,
but it is not generally a fresh AWGN observation of $x_j$.  The term
\begin{equation}
\mu_{t,j}
=
\alpha_{t,j}
\boldsymbol\zeta_j^{t,<t}
(\boldsymbol\zeta_j^{<t,<t})^\dagger
\boldsymbol{\mathcal Z}_{<t,j}
\end{equation}
is the conditional mean of the current Gaussian residual given its past
Gaussian history.  Hence $\mu_{t,j}$ is a predictable memory component, not a
vanishing error term.  Unless this component is asymptotically negligible, the
scalar denoiser in the standard EP recursion is driven by the shifted channel
\begin{equation}
X+\mu_{t,j}+\sqrt{\tau_{t,j}}\,W
\end{equation}
rather than by the fresh channel
\begin{equation}
X+\sqrt{\tau_{t,j}}\,W.
\end{equation}
Thus Gaussianity survives under variance-profile measurements, but freshness
does not hold in general.

\subsection{The No-Memory Special Case}
\label{subsec:proof_no_memory}

We next prove Corollary~\ref{cor:no_memory}.  Suppose that, for every fixed
$t$ and all coordinates $j$,
\begin{equation}
\boldsymbol\zeta_j^{t,<t}
=
\boldsymbol 0.
\end{equation}
Then, by the definition of the memory term,
\begin{equation}
\mu_{t,j}
=
\alpha_{t,j}
\boldsymbol\zeta_j^{t,<t}
(\boldsymbol\zeta_j^{<t,<t})^\dagger
\boldsymbol{\mathcal Z}_{<t,j}
=
0.
\end{equation}
Substituting this identity into
\eqref{eq:proof_h_memory_decomposition} yields
\begin{equation}
h_{t,j}
=
\sqrt{\tau_{t,j}}\,W_{t,j}
+
\varepsilon_{t,j},
\end{equation}
with
\begin{equation}
\frac1N\sum_{j=1}^N|\varepsilon_{t,j}|^2
\overset{p}{\longrightarrow}0.
\end{equation}
Therefore
\begin{equation}
h_{t,j}
=
\sqrt{\tau_{t,j}}\,W_{t,j}
+
o_p^{\ell_2}(1),
\end{equation}
which proves \eqref{eq:no_memory_fresh}.

The condition
\begin{equation}
\boldsymbol\zeta_j^{t,<t}=\boldsymbol 0
\end{equation}
means that the current Gaussian residual is orthogonal to its own past.  In
that special case, the Gaussian process produced by the linear module is
already an innovation process, and the standard EP cavity itself is fresh.
This is the profile-dependent counterpart of the innovation mechanism available
in rotationally invariant models.

\subsection{Memory-Corrected Coordinate-Wise State Evolution}
\label{subsec:proof_corrected_se}

We finally prove Theorem~\ref{thm:corrected_se}.  Define the corrected cavity
by subtracting the predictable Gaussian-history component:
\begin{equation}
\widetilde h_{t,j}
=
h_{t,j}-\mu_{t,j},
\qquad
\widetilde r_{A\to B,t,j}
=
x_j+\widetilde h_{t,j}.
\end{equation}
Using \eqref{eq:proof_h_memory_decomposition}, we obtain
\begin{equation}
\widetilde h_{t,j}
=
\sqrt{\tau_{t,j}}\,W_{t,j}
+
\varepsilon_{t,j}.
\end{equation}
Since
\begin{equation}
\frac1N\sum_{j=1}^N|\varepsilon_{t,j}|^2
\overset{p}{\longrightarrow}0,
\end{equation}
it follows that
\begin{equation}
	\widetilde h_{t,j}
	=
	\sqrt{\tau_{t,j}}\,W_{t,j}
	+
	o_p^{\ell_2}(1).
\end{equation}
Equivalently,
\begin{equation}
	\widetilde r_{A\to B,t,j}
	=
	x_j
	+
	\sqrt{\tau_{t,j}}\,W_{t,j}
	+
	o_p^{\ell_2}(1).
	\label{eq:proof_corrected_channel}
\end{equation}
This proves the fresh-channel part of
Theorem~\ref{thm:corrected_se}.

It remains to identify the corresponding scalar state-evolution update.  The
matched scalar reference channel associated with
\eqref{eq:proof_corrected_channel} is
\begin{equation}
	R_{t,j}^{\mathrm{SE}}
	=
	X_j+\sqrt{\tau_{t,j}}\,W_{t,j},
	\qquad
	W_{t,j}\sim\mathcal N(0,1).
	\label{eq:matched_SE_channel}
\end{equation}
Since $\tau_{t,j}$ is the innovation variance, the matched scalar precision is
$\tau_{t,j}^{-1}$.  Thus the corresponding state-evolution posterior error is
\begin{equation}
	P_{t,j}^{\mathrm{SE}}
	=
	\eta(R_{t,j}^{\mathrm{SE}};\tau_{t,j}^{-1})-X_j,
\end{equation}
and the scalar posterior variance is
\begin{equation}
	V_{B,t,j}^{\mathrm{SE}}
	=
	v_B(R_{t,j}^{\mathrm{SE}};\tau_{t,j}^{-1}).
\end{equation}

The empirical convergence statement follows from the empirical Gaussian law in
Theorem~\ref{thm:general_dynamic}.  Indeed, after subtracting the predictable
memory component, the Gaussian reference history is generated by the fresh
scalar channel \eqref{eq:matched_SE_channel}.  The scalar posterior mean and
variance maps preserve $\PL(2)$ empirical convergence on compact precision
intervals by Assumption~\ref{ass:ep_regular}.  Therefore, for any
$\psi\in \PL(2)$,
\begin{equation}
\frac1N\sum_{j=1}^N
\psi(\mathcal X_{j,N}^{t,\mathrm{corr}})
-
\frac1N\sum_{j=1}^N
\mathbb E_{\mathrm{SE},j}
\left[
\psi(\mathcal X_{j,\mathrm{SE}}^{t,\mathrm{corr}})
\mid
\mathcal P_t
\right]
\overset{p}{\longrightarrow}0.
\end{equation}
This is \eqref{eq:corrected_empirical_se}, and hence proves
Theorem~\ref{thm:corrected_se}.

The corrected state evolution should be understood as an innovation
decomposition of the Gaussian process identified in
Theorem~\ref{thm:general_dynamic}.  The correction does not alter the linear
Gaussian kernel; it removes only the conditional mean of the current residual
given its past.  After this removal, the prior module sees the matched scalar
channel \eqref{eq:matched_SE_channel}.  This is the coordinate-wise analogue of
the fresh AWGN channel underlying scalar state evolution.

\section{Precision Replacement for Adaptive Diagonal EP}
\label{sec:precision_replacement}

The dynamic theorem in Section~\ref{sec:dynamic_proof} was proved in a
predictable-precision form.  This form is natural for the conditioning argument:
at the beginning of the $t$th linear step, the diagonal loading
$\boldsymbol\Gamma_t$ is already part of the regular finite-time environment,
so conditioning on the past imposes only finite linear observations on the
Gaussian matrix $\boldsymbol A$.  The actual diagonal EP recursion, however,
updates its precisions from finite-dimensional retained variances and scalar
posterior variances.  This section shows that the two descriptions are
asymptotically equivalent.

The result is a stability statement, not a new state evolution.  Small
empirical perturbations in the incoming Gaussian message and its diagonal
precision lead to small perturbations in the linear posterior, the retained
diagonal variances, the cavity message, the scalar prior update, and finally the
next outgoing precision and mean.  This closes the gap between the
predictable-precision dynamic theorem and the adaptive diagonal EP recursion
formulated in Section~\ref{sec:system_model}.

\subsection{Regularity Conditions for Replacement}
\label{subsec:replacement_regular_conditions}

We use two replacement regularity conditions in this section.  They refine the
standing scalar-module regularity of Assumption~\ref{ass:ep_regular} only for
the deterministic perturbation argument connecting predictable precisions to
finite-sample adaptive precisions.  They are not used as additional Gaussian
conditioning information.

\begin{assumption}[Additional replacement stability of the scalar module]
	\label{ass:regularized_scalar_module}
	Assumption~\ref{ass:ep_regular} gives the compact precision intervals used throughout
	the paper.  For the replacement argument we additionally assume that the scalar
	posterior variances used in the prior module are projected onto a compact
	positive interval,
	\begin{equation}
	0<v_{\min}
	\le
	v_{B,t,j}
	\le
	v_{\max}<\infty,
	\end{equation}
	for all fixed \(t\) and all coordinates \(j\).  Moreover, the scalar maps
	\begin{equation}
	(r,\rho)\mapsto \eta(r;\rho),
	\qquad
	(r,\rho)\mapsto v_B(r;\rho)
	\end{equation}
	are empirically stable on compact precision intervals.  Specifically, whenever
	\begin{equation}
	\|\boldsymbol r_N-\boldsymbol r'_N\|_N\to0,
	\qquad
	\|\boldsymbol\rho_N-\boldsymbol\rho'_N\|_N\to0,
	\end{equation}
	with all entries of \(\boldsymbol\rho_N,\boldsymbol\rho'_N\) lying in a compact
	positive interval, we have
	\begin{equation}
	\left\|
	\boldsymbol\eta(\boldsymbol r_N;\boldsymbol\rho_N)
	-
	\boldsymbol\eta(\boldsymbol r'_N;\boldsymbol\rho'_N)
	\right\|_N
	\to0,
	\end{equation}
	and
	\begin{equation}
	\left\|
	\boldsymbol v_B(\boldsymbol r_N;\boldsymbol\rho_N)
	-
	\boldsymbol v_B(\boldsymbol r'_N;\boldsymbol\rho'_N)
	\right\|_N
	\to0.
	\end{equation}
\end{assumption}

Assumption~\ref{ass:regularized_scalar_module} is a replacement-stability condition only; it is not used to justify Gaussian conditioning.  It is satisfied by many standard priors after the usual variance projection.  In particular, finite-alphabet and bounded-support priors lead to
regular scalar posterior mean and variance maps on compact precision intervals.

The corrected branch also requires that the Gaussian regression used to remove
the predictable memory component be well conditioned.  This condition is not a
numerical artifact; it means that the past Gaussian-history directions retained
by the regression are not asymptotically degenerate.

\begin{assumption}[Stable memory regression]
	\label{ass:stable_memory_regression}
	For the corrected branch and every fixed $T$, the Gaussian-history covariance
	matrices satisfy the following regularity condition.  After removing exactly
	redundant zero directions by the Moore--Penrose inverse, the nonzero spectrum
	of
	\begin{equation}
	\boldsymbol\zeta_j^{<t,<t}
	\end{equation}
	is uniformly bounded away from zero: there exists $c_\zeta>0$ such that
	\begin{equation}
	\lambda_{\min}^+
	\!\left(
	\boldsymbol\zeta_j^{<t,<t}
	\right)
	\ge
	c_\zeta
	\end{equation}
	for all $t\le T$ and all relevant coordinates $j$, with probability tending to
	one.  Equivalently,
	\begin{equation}
	\left\|
	(\boldsymbol\zeta_j^{<t,<t})^\dagger
	\right\|
	\le
	c_\zeta^{-1}
	\end{equation}
	on the retained Gaussian-history subspace.  In addition, the innovation
	variances of the corrected scalar channels satisfy
	\begin{equation}
	0<\tau_{\min}
	\le
	\tau_{t,j}
	\le
	\tau_{\max}<\infty
	\end{equation}
	for all fixed $t\le T$ and all relevant $j$, with probability tending to one.
\end{assumption}

Assumption~\ref{ass:stable_memory_regression} excludes only nearly redundant
Gaussian-history regressors.  Such degeneracies are not the phenomenon studied
here.  Our purpose is to remove the nonzero predictable memory generated by the
variance profile; for this operation to be stable, the corresponding finite
Gaussian regression must be well conditioned.

\subsection{Actual and Predictable Recursions}
\label{subsec:actual_predictable_recursions}

We compare two recursions.  The superscript ``act'' refers to the actual
finite-dimensional adaptive diagonal EP recursion, while the superscript
``orc'' refers to the MDE-predictable recursion analyzed by
Theorem~\ref{thm:general_dynamic}.  The word ``oracle'' here does not mean that
additional observations are given to the algorithm; it only indicates that the
finite-dimensional retained variances are replaced by their MDE-predictable
counterparts.

At iteration $t$, the actual linear module uses
\begin{equation}
	\boldsymbol C_t^{\mathrm{act}}
	=
	\left(
	\gamma_w\boldsymbol A^{\mathsf T}\boldsymbol A
	+
	\boldsymbol\Gamma_t^{\mathrm{act}}
	\right)^{-1},
\end{equation}
and
\begin{equation}
	\boldsymbol m_t^{\mathrm{act}}
	=
	\boldsymbol C_t^{\mathrm{act}}
	\left(
	\gamma_w\boldsymbol A^{\mathsf T}\boldsymbol w
	+
	\boldsymbol\Gamma_t^{\mathrm{act}}
	\boldsymbol q_t^{\mathrm{act}}
	\right).
\end{equation}
Its retained diagonal variance is
\begin{equation}
d_{t,j}^{\mathrm{act}}
=
[\boldsymbol C_t^{\mathrm{act}}]_{jj},
\end{equation}
and the outgoing cavity precision is
\begin{equation}
	\pi_{t,j}^{\mathrm{act}}
	=
	\operatorname{Proj}_{[\pi_{\min},\pi_{\max}]}
	\left(
	(d_{t,j}^{\mathrm{act}})^{-1}
	-
	\gamma_{t,j}^{\mathrm{act}}
	\right).
\end{equation}

The predictable recursion has the same finite-dimensional linear estimate
with $\boldsymbol\Gamma_t^{\mathrm{orc}}$ and
$\boldsymbol q_t^{\mathrm{orc}}$, namely
\begin{equation}
	\boldsymbol C_t^{\mathrm{orc}}
	=
	\left(
	\gamma_w\boldsymbol A^{\mathsf T}\boldsymbol A
	+
	\boldsymbol\Gamma_t^{\mathrm{orc}}
	\right)^{-1},
\end{equation}
\begin{equation}
\boldsymbol m_t^{\mathrm{orc}}
=
\boldsymbol C_t^{\mathrm{orc}}
\left(
\gamma_w\boldsymbol A^{\mathsf T}\boldsymbol w
+
\boldsymbol\Gamma_t^{\mathrm{orc}}
\boldsymbol q_t^{\mathrm{orc}}
\right),
\end{equation}
but its diagonal cavity precision is generated from the MDE response:
\begin{equation}
	\pi_{t,j}^{\mathrm{orc}}
	=
	\operatorname{Proj}_{[\pi_{\min},\pi_{\max}]}
	\left(
	(T_{t,j}^{\mathrm{orc}})^{-1}
	-
	\gamma_{t,j}^{\mathrm{orc}}
	\right).
\end{equation}
The corresponding cavity error is
\begin{equation}
	h_{t,j}^{\mathrm{orc}}
	=
	(\pi_{t,j}^{\mathrm{orc}})^{-1}
	\left(
	(T_{t,j}^{\mathrm{orc}})^{-1}m_{t,j}^{\mathrm{orc}}
	-
	\gamma_{t,j}^{\mathrm{orc}}q_{t,j}^{\mathrm{orc}}
	\right).
\end{equation}
The actual cavity is defined in the same way as in Section~\ref{sec:system_model},
using $d_{t,j}^{\mathrm{act}}$ and $\pi_{t,j}^{\mathrm{act}}$.

The two branches differ only in the scalar input fed to the prior module.  In
the standard branch,
\begin{equation}
s_{t,j}^{\mathrm{std}}
=
h_{t,j},
\qquad
\rho_{t,j}^{\mathrm{std}}
=
\pi_{t,j}.
\end{equation}
In the corrected branch,
\begin{equation}
s_{t,j}^{\mathrm{corr}}
=
h_{t,j}-\mu_{t,j},
\qquad
\rho_{t,j}^{\mathrm{corr}}
=
\tau_{t,j}^{-1}.
\end{equation}
Thus a branch $b\in\{\mathrm{std},\mathrm{corr}\}$ is described by a scalar
input pair
\begin{equation}
(s_{t,j}^{b},\rho_{t,j}^{b})
\end{equation}
and the prior module applies
\begin{equation}
\widehat x_{B,t,j}^{b}
=
\eta(x_j+s_{t,j}^{b};\rho_{t,j}^{b}),
\end{equation}
\begin{equation}
p_{t,j}^{b}
=
\widehat x_{B,t,j}^{b}-x_j,
\end{equation}
\begin{equation}
v_{B,t,j}^{b}
=
v_B(x_j+s_{t,j}^{b};\rho_{t,j}^{b}).
\end{equation}
The outgoing precision and mean error are then
\begin{align}
	\gamma_{t+1,j}^{b}
	&=
	\operatorname{Proj}_{[\gamma_{\min},\gamma_{\max}]}
	\left(
	(v_{B,t,j}^{b})^{-1}
	-
	\rho_{t,j}^{b}
	\right),
	\label{eq:gamma_next_branch_replacement}
	\\
	\gamma_{t+1,j}^{b} q_{t+1,j}^{b}
	&=
	(v_{B,t,j}^{b})^{-1}p_{t,j}^{b}
	-
	\rho_{t,j}^{b}s_{t,j}^{b}.
	\label{eq:q_next_branch_replacement}
\end{align}

The goal is to prove, for each fixed branch $b$ and each fixed $t\le T$,
\begin{equation}
\left\|
\boldsymbol\gamma_t^{b,\mathrm{act}}
-
\boldsymbol\gamma_t^{b,\mathrm{orc}}
\right\|_N
\to0,
\qquad
\left\|
\boldsymbol q_t^{b,\mathrm{act}}
-
\boldsymbol q_t^{b,\mathrm{orc}}
\right\|_N
\to0
\end{equation}
in probability.

\subsection{Stability of the Linear Module}
\label{subsec:linear_module_stability}

The first step is a deterministic perturbation bound for the linear Gaussian
module.  Its proof is based on two identities: one for the posterior mean and
one for the covariance resolvent.

\begin{proposition}[Linear-module stability]
	\label{prop:linear_module_stability}
	Assume that, for some fixed $t$,
	\begin{equation}
		\left\|
		\boldsymbol q_t^{\mathrm{act}}
		-
		\boldsymbol q_t^{\mathrm{orc}}
		\right\|_N
		\overset{p}{\longrightarrow}0,
		\qquad
		\left\|
		\boldsymbol\gamma_t^{\mathrm{act}}
		-
		\boldsymbol\gamma_t^{\mathrm{orc}}
		\right\|_N
		\overset{p}{\longrightarrow}0.
		\label{eq:linear_stability_assumption}
	\end{equation}
	Then
	\begin{align}
		\left\|
		\boldsymbol m_t^{\mathrm{act}}
		-
		\boldsymbol m_t^{\mathrm{orc}}
		\right\|_N
		&\overset{p}{\longrightarrow}0,
		\label{eq:m_linear_stability}
		\\
		\left\|
		\boldsymbol d_t^{\mathrm{act}}
		-
		\boldsymbol T_t^{\mathrm{orc}}
		\right\|_N
		&\overset{p}{\longrightarrow}0,
		\label{eq:d_linear_stability}
		\\
		\left\|
		\boldsymbol\pi_t^{\mathrm{act}}
		-
		\boldsymbol\pi_t^{\mathrm{orc}}
		\right\|_N
		&\overset{p}{\longrightarrow}0,
		\label{eq:pi_linear_stability}
		\\
		\left\|
		\boldsymbol h_t^{\mathrm{act}}
		-
		\boldsymbol h_t^{\mathrm{orc}}
		\right\|_N
		&\overset{p}{\longrightarrow}0.
		\label{eq:h_linear_stability}
	\end{align}
\end{proposition}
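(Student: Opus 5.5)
The argument is deterministic perturbation on a high-probability event where the resolvents are uniformly bounded, combined with the MDE input for the one genuinely random-matrix step. Set $\boldsymbol B:=\gamma_w\boldsymbol A^{\mathsf T}\boldsymbol A$. Since $\boldsymbol B\succeq 0$ and both loadings lie in $[\gamma_{\min},\gamma_{\max}]$, we have $\|\boldsymbol C_t^{\mathrm{act}}\|,\|\boldsymbol C_t^{\mathrm{orc}}\|\le\gamma_{\min}^{-1}$ deterministically. On the event $\|\boldsymbol A\|\le C$, which has probability tending to one for uniformly elliptic profiles, we also have $\|\boldsymbol A^{\mathsf T}\boldsymbol w\|_N=O_p(1)$. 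The tool throughout is the resolvent identity
\begin{equation*}
\boldsymbol C_t^{\mathrm{act}}-\boldsymbol C_t^{\mathrm{orc}}
=
-\boldsymbol C_t^{\mathrm{act}}(\boldsymbol\Gamma_t^{\mathrm{act}}-\boldsymbol\Gamma_t^{\mathrm{orc}})\boldsymbol C_t^{\mathrm{orc}} .
\end{equation*}

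\emph{Step 1 (mean).} Write
\begin{equation*}
\boldsymbol m_t^{\mathrm{act}}-\boldsymbol m_t^{\mathrm{orc}}
=
(\boldsymbol C_t^{\mathrm{act}}-\boldsymbol C_t^{\mathrm{orc}})\bigl(\gamma_w\boldsymbol A^{\mathsf T}\boldsymbol w+\boldsymbol\Gamma_t^{\mathrm{orc}}\boldsymbol q_t^{\mathrm{orc}}\bigr)
+\boldsymbol C_t^{\mathrm{act}}\bigl(\boldsymbol\Gamma_t^{\mathrm{act}}\boldsymbol q_t^{\mathrm{act}}-\boldsymbol\Gamma_t^{\mathrm{orc}}\boldsymbol q_t^{\mathrm{orc}}\bigr).
\end{equation*}
The second term is bounded by $\gamma_{\min}^{-1}\bigl(\gamma_{\max}\|\boldsymbol q^{\mathrm{act}}-\boldsymbol q^{\mathrm{orc}}\|_N+\|(\boldsymbol\Gamma^{\mathrm{act}}-\boldsymbol\Gamma^{\mathrm{orc}})\boldsymbol q^{\mathrm{orc}}\|_N\bigr)$. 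In the first term the diagonal difference $\boldsymbol\Gamma^{\mathrm{act}}-\boldsymbol\Gamma^{\mathrm{orc}}$ multiplies the vector $\boldsymbol C^{\mathrm{orc}}(\cdots)$.

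Only $\ell_2$-smallness of the precision difference is available, not sup-norm smallness, so such products need care. Each entry of the difference is bounded by $\gamma_{\max}-\gamma_{\min}$ and the difference is $o(1)$ in $\|\cdot\|_N$. For an admissible vector $\boldsymbol v$, which is uniformly square-integrable empirically (from the $\PL(2)$/moment control in $\mathcal R_t$), truncate at level $K$:
\begin{equation*}
\|\boldsymbol\delta\odot\boldsymbol v\|_N^2\le K^2\|\boldsymbol\delta\|_N^2+C\,\tfrac1N\textstyle\sum_j v_j^2\mathbf 1\{|v_j|>K\}.
\end{equation*}
Sending $N\to\infty$ and then $K\to\infty$ makes this vanish. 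Applying the bound to $\boldsymbol v=\boldsymbol q^{\mathrm{orc}}$ and to $\boldsymbol v=\boldsymbol C^{\mathrm{orc}}(\gamma_w\boldsymbol A^{\mathsf T}\boldsymbol w+\boldsymbol\Gamma^{\mathrm{orc}}\boldsymbol q^{\mathrm{orc}})=\boldsymbol m^{\mathrm{orc}}$, which is admissible by Theorem~\ref{thm:general_dynamic}(5), gives \eqref{eq:m_linear_stability}.

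\emph{Step 2 (diagonals).} Split
\begin{equation*}
\boldsymbol d_t^{\mathrm{act}}-\boldsymbol T_t^{\mathrm{orc}}=(\boldsymbol d_t^{\mathrm{act}}-\boldsymbol d_t^{\mathrm{orc}})+(\boldsymbol d_t^{\mathrm{orc}}-\boldsymbol T_t^{\mathrm{orc}}).
\end{equation*}
The second piece is Theorem~\ref{thm:general_dynamic}(1). For the first, the resolvent identity gives
\begin{equation*}
d^{\mathrm{act}}_{t,j}-d^{\mathrm{orc}}_{t,j}=-\textstyle\sum_k C^{\mathrm{act}}_{jk}\delta_k C^{\mathrm{orc}}_{kj}.
\end{equation*}
By Cauchy--Schwarz,
\begin{equation*}
\textstyle\sum_j|d^{\mathrm{act}}_{t,j}-d^{\mathrm{orc}}_{t,j}|^2\le \|\boldsymbol C^{\mathrm{act}}\|^2\sum_{j,k}\delta_k^2 (C^{\mathrm{orc}}_{kj})^2\le \gamma_{\min}^{-2}\sum_k\delta_k^2[(\boldsymbol C^{\mathrm{orc}})^2]_{kk}\le\gamma_{\min}^{-4}\|\boldsymbol\delta\|^2,
\end{equation*}
after dividing by $N$. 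This is the key estimate, and it needs no sup-norm control. Together the two pieces give \eqref{eq:d_linear_stability}.

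\emph{Step 3 (cavity precision and cavity mean).} The diagonals satisfy $d_{t,j}\ge(\gamma_w\|\boldsymbol A\|^2+\gamma_{\max})^{-1}$, and the same lower bound holds for $T_{t,j}$ by the MDE construction. The map $(d,\gamma)\mapsto\mathrm{Proj}(d^{-1}-\gamma)$ is therefore Lipschitz on the relevant compact set, which gives \eqref{eq:pi_linear_stability}. Then
\begin{equation*}
h_{t,j}=\pi_{t,j}^{-1}\bigl(d_{t,j}^{-1}m_{t,j}-\gamma_{t,j}q_{t,j}\bigr),
\end{equation*}
and the difference $h^{\mathrm{act}}-h^{\mathrm{orc}}$ expands into terms of the form (bounded coefficient difference)$\times$(admissible vector) plus (bounded coefficient)$\times$(vanishing vector difference). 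The truncation argument of Step 1 handles the first type, using admissibility of $\boldsymbol m^{\mathrm{orc}},\boldsymbol q^{\mathrm{orc}}$, and \eqref{eq:m_linear_stability} together with the hypothesis handles the second, which gives \eqref{eq:h_linear_stability}.

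The step I expect to be the main obstacle is the products of merely $\ell_2$-small bounded coefficient differences with random vectors, since no sup-norm convergence is assumed. The plan is to control these entirely through the truncation/uniform-integrability argument, relying on the admissibility and $4+\epsilon$ moment control guaranteed by the regularity event $\mathcal R_t$.
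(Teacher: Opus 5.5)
Your proposal is correct and follows essentially the same route as the paper: your Step 1 decomposition reduces algebraically to the paper's identity $\boldsymbol C_a[\boldsymbol\Gamma_a(\boldsymbol q_a-\boldsymbol q_o)+(\boldsymbol\Gamma_a-\boldsymbol\Gamma_o)(\boldsymbol q_o-\boldsymbol m_o)]$. Your entrywise Cauchy--Schwarz bound on the diagonals is the paper's Frobenius-norm bound on the resolvent difference combined with diagonal extraction, and the cavity step uses the same Lipschitz and truncation (product-stability) arguments, which, as you note, need uniform integrability of the empirical second moments rather than bare admissibility.
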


\begin{IEEEproof}
	The detailed empirical-norm estimates are given in
	Appendix~\ref{app:precision_replacement}.  We indicate the two identities on
	which the proof rests.  Let
	\begin{equation}
	\boldsymbol C_a=\boldsymbol C_t^{\mathrm{act}},
	\qquad
	\boldsymbol C_o=\boldsymbol C_t^{\mathrm{orc}},
	\end{equation}
	and similarly write
	$\boldsymbol\Gamma_a,\boldsymbol\Gamma_o,\boldsymbol q_a,\boldsymbol q_o$.
	Using the normal equations for the two linear posterior means gives
	\begin{equation}
		\boldsymbol m_a-\boldsymbol m_o
		=
		\boldsymbol C_a
		\left[
		\boldsymbol\Gamma_a(\boldsymbol q_a-\boldsymbol q_o)
		+
		(\boldsymbol\Gamma_a-\boldsymbol\Gamma_o)
		(\boldsymbol q_o-\boldsymbol m_o)
		\right].
	\end{equation}
	Since the diagonal loadings are bounded below by $\gamma_{\min}$,
	$\|\boldsymbol C_a\|$ is uniformly bounded.  Together with admissibility of the
	oracle history and \eqref{eq:linear_stability_assumption}, this yields
	\eqref{eq:m_linear_stability}.
	
	For the covariance matrices,
	\begin{equation}
		\boldsymbol C_a-\boldsymbol C_o
		=
		-\boldsymbol C_a
		(\boldsymbol\Gamma_a-\boldsymbol\Gamma_o)
		\boldsymbol C_o.
	\end{equation}
	The right-hand side is controlled in normalized Frobenius norm because
	$\boldsymbol\Gamma_a-\boldsymbol\Gamma_o$ is diagonal and
	\begin{equation}
	N^{-1/2}
	\|\boldsymbol\Gamma_a-\boldsymbol\Gamma_o\|_F
	=
	\|\boldsymbol\gamma_a-\boldsymbol\gamma_o\|_N.
	\end{equation}
	Combining this bound with the MDE response
	$d_{t,j}^{\mathrm{orc}}=T_{t,j}^{\mathrm{orc}}+o_p^{\ell_2}(1)$ yields
	\eqref{eq:d_linear_stability}.  The projection map and the inverse map are
	Lipschitz on compact positive intervals, which gives
	\eqref{eq:pi_linear_stability}; substituting the stable quantities into the
	cavity formula gives \eqref{eq:h_linear_stability}.
\end{IEEEproof}

\subsection{Stability of the Prior Module}
\label{subsec:prior_module_stability}

The prior module is separable, so its stability follows from the regularity of
the scalar posterior mean and variance maps.  We state the branch-wise result
in a form that covers both the standard and corrected branches.

\begin{proposition}[Prior-module stability]
	\label{prop:prior_module_stability}
	Fix $b\in\{\mathrm{std},\mathrm{corr}\}$.  Suppose that
	\begin{equation}
		\left\|
		\boldsymbol s_t^{b,\mathrm{act}}
		-
		\boldsymbol s_t^{b,\mathrm{orc}}
		\right\|_N
		\overset{p}{\longrightarrow}0,
		\qquad
		\left\|
		\boldsymbol\rho_t^{b,\mathrm{act}}
		-
		\boldsymbol\rho_t^{b,\mathrm{orc}}
		\right\|_N
		\overset{p}{\longrightarrow}0.
		\label{eq:prior_stability_input}
	\end{equation}
	Then
	\begin{align}
		\left\|
		\boldsymbol p_t^{b,\mathrm{act}}
		-
		\boldsymbol p_t^{b,\mathrm{orc}}
		\right\|_N
		&\overset{p}{\longrightarrow}0,
		\\
		\left\|
		\boldsymbol v_{B,t}^{b,\mathrm{act}}
		-
		\boldsymbol v_{B,t}^{b,\mathrm{orc}}
		\right\|_N
		&\overset{p}{\longrightarrow}0,
		\\
		\left\|
		\boldsymbol\gamma_{t+1}^{b,\mathrm{act}}
		-
		\boldsymbol\gamma_{t+1}^{b,\mathrm{orc}}
		\right\|_N
		&\overset{p}{\longrightarrow}0,
		\\
		\left\|
		\boldsymbol q_{t+1}^{b,\mathrm{act}}
		-
		\boldsymbol q_{t+1}^{b,\mathrm{orc}}
		\right\|_N
		&\overset{p}{\longrightarrow}0.
	\end{align}
\end{proposition}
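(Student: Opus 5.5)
The plan is to chain four deterministic perturbation steps. Each step uses only the input convergences \eqref{eq:prior_stability_input}, the empirical stability of the scalar maps in Assumption~\ref{ass:regularized_scalar_module}, and Lipschitz properties of elementary maps on compact positive intervals. No random-matrix or Gaussian-conditioning input is needed: the prior module is separable, so every estimate is coordinatewise and then averaged in $\|\cdot\|_N$.

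\textbf{Posterior mean and variance.} Set $\boldsymbol r^{b,\cdot}_t=\boldsymbol x+\boldsymbol s_t^{b,\cdot}$. Then
\begin{equation*}
\|\boldsymbol r^{b,\mathrm{act}}_t-\boldsymbol r^{b,\mathrm{orc}}_t\|_N=\|\boldsymbol s^{b,\mathrm{act}}_t-\boldsymbol s^{b,\mathrm{orc}}_t\|_N\to0 .
\end{equation*}
The precisions $\rho^b$ lie in a compact positive interval. In the standard branch this holds by the projection onto $[\pi_{\min},\pi_{\max}]$. In the corrected branch it holds because $\rho^{\mathrm{corr}}=\tau^{-1}$ with $\tau\in[\tau_{\min},\tau_{\max}]$ by Assumption~\ref{ass:stable_memory_regression}. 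Assumption~\ref{ass:regularized_scalar_module} then gives directly
\begin{equation*}
\|\boldsymbol\eta^{\mathrm{act}}-\boldsymbol\eta^{\mathrm{orc}}\|_N\to0,
\qquad
\|\boldsymbol v_B^{\mathrm{act}}-\boldsymbol v_B^{\mathrm{orc}}\|_N\to0 .
\end{equation*}
Since $\boldsymbol p^b=\widehat{\boldsymbol x}_B^b-\boldsymbol x$ and $\boldsymbol x$ is common to both recursions, the first two conclusions follow. The bound on $\boldsymbol v_B$ holds after the projection onto $[v_{\min},v_{\max}]$, which is $1$-Lipschitz.

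\textbf{Outgoing precision.} On $[v_{\min},v_{\max}]$ the map $v\mapsto v^{-1}$ is Lipschitz with constant $v_{\min}^{-2}$. The map $\operatorname{Proj}_{[\gamma_{\min},\gamma_{\max}]}$ is $1$-Lipschitz. Coordinatewise, \eqref{eq:gamma_next_branch_replacement} therefore gives
\begin{equation*}
|\gamma_{t+1,j}^{\mathrm{act}}-\gamma_{t+1,j}^{\mathrm{orc}}|\le v_{\min}^{-2}|v^{\mathrm{act}}_{B,t,j}-v^{\mathrm{orc}}_{B,t,j}|+|\rho^{\mathrm{act}}_{t,j}-\rho^{\mathrm{orc}}_{t,j}| .
\end{equation*}
Averaging squares gives the third claim.

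\textbf{Outgoing mean error.} I would write $q_{t+1,j}=\gamma_{t+1,j}^{-1}\bigl(v_{B,t,j}^{-1}p_{t,j}-\rho_{t,j}s_{t,j}\bigr)$ and expand the difference as a telescoping sum of products. The typical terms are:
\begin{itemize}
\item $(v_a^{-1}-v_o^{-1})p_o$;
\item $v_a^{-1}(p_a-p_o)$;
\item $(\rho_a-\rho_o)s_o$;
\item $\rho_a(s_a-s_o)$;
\item $(\gamma_a^{-1}-\gamma_o^{-1})(\cdots)_o$.
\end{itemize}
Terms with a bounded factor times a vanishing vector are handled directly, since $\gamma^{-1}\le\gamma_{\min}^{-1}$, $v^{-1}\le v_{\min}^{-1}$ and $\rho$ is bounded.

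\textbf{Main obstacle.} The difficult terms are products of two vectors, e.g.\ $(\rho_a-\rho_o)s_o$. Here $\|\cdot\|_N$-smallness of one factor does not directly control the product when the other factor is merely $\ell_2$-bounded. I would handle these by truncation. Split coordinates according to $\{|s_{o,j}|\le K\}$.

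On the bounded part, the product is at most $K\,|\rho_{a,j}-\rho_{o,j}|$, whose empirical norm vanishes.

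On the tail, use that $\rho_a-\rho_o$ is uniformly bounded, because both lie in a compact interval. The tail contribution is then at most
\begin{equation*}
C\,\frac1N\sum_j s_{o,j}^2\mathbf 1\{|s_{o,j}|>K\}.
\end{equation*}
Its limsup tends to $0$ as $K\to\infty$ by uniform integrability of the oracle history. That uniform integrability follows from admissibility and the empirical $\PL(2)$ law of Theorem~\ref{thm:general_dynamic}, via the $4+\epsilon$ moments of the Gaussian reference.

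The same argument handles the $p_o$ and $(\cdots)_o$ factors. Each of these is a $\PL$ function of the oracle history with uniformly bounded precision-type multipliers. Sending $N\to\infty$ and then $K\to\infty$ gives the fourth claim.
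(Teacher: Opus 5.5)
Your proposal is correct and follows essentially the same route as the paper: Assumption~\ref{ass:regularized_scalar_module} gives closeness of $\boldsymbol p$ and $\boldsymbol v_B$, Lipschitz continuity of $v\mapsto v^{-1}$ together with non-expansiveness of the projection gives closeness of $\boldsymbol\gamma_{t+1}$, and $v_B^{-1}p-\rho s$ and then $\gamma^{-1}(\cdot)$ are telescoped into bounded-times-vanishing terms and vanishing-bounded-coefficient-times-uniformly-integrable terms. The differences are cosmetic. You inline the truncation argument that the paper packages as Lemma~\ref{lem:J_product_stability}, and you put the unbounded factor on the oracle side ($p_o$, $s_o$, $r_o$) rather than the paper's $p_a$, $s_a$, which is slightly cleaner because uniform integrability is directly available for the predictable history. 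You also treat the final $(\gamma_a^{-1}-\gamma_o^{-1})r_o$ term explicitly, where the paper passes over it.
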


\begin{IEEEproof}
	See Appendix~\ref{app:precision_replacement}.  The proof applies
	Assumption~\ref{ass:regularized_scalar_module} to the scalar maps
	$\eta(\cdot;\cdot)$ and $v_B(\cdot;\cdot)$.  The updates
	\eqref{eq:gamma_next_branch_replacement} and
	\eqref{eq:q_next_branch_replacement} are stable because the maps
	$x\mapsto x^{-1}$ and $\operatorname{Proj}_{[a,b]}(x)$ are Lipschitz on compact
	positive intervals.
\end{IEEEproof}

For the corrected branch, the input stability in
\eqref{eq:prior_stability_input} requires stability of the memory map.  The next
proposition states this consequence of
Assumption~\ref{ass:stable_memory_regression}.

\begin{proposition}[Stability of the memory map]
	\label{prop:memory_map_stability}
	Under Assumption~\ref{ass:stable_memory_regression}, if the histories and
	MDE-generated environments of the actual and oracle corrected branches are
	empirically close up to time $t$, then
	\begin{equation}
		\left\|
		\boldsymbol\mu_t^{\mathrm{act}}
		-
		\boldsymbol\mu_t^{\mathrm{orc}}
		\right\|_N
		\overset{p}{\longrightarrow}0,
		\qquad
		\left\|
		\boldsymbol\tau_t^{\mathrm{act}}
		-
		\boldsymbol\tau_t^{\mathrm{orc}}
		\right\|_N
		\overset{p}{\longrightarrow}0.
	\end{equation}
	Consequently,
	\begin{equation}
		\left\|
		\widetilde{\boldsymbol h}_t^{\mathrm{act}}
		-
		\widetilde{\boldsymbol h}_t^{\mathrm{orc}}
		\right\|_N
		\overset{p}{\longrightarrow}0,
		\qquad
		\left\|
		(\boldsymbol\tau_t^{\mathrm{act}})^{-1}
		-
		(\boldsymbol\tau_t^{\mathrm{orc}})^{-1}
		\right\|_N
		\overset{p}{\longrightarrow}0.
	\end{equation}
\end{proposition}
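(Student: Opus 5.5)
The argument is a deterministic perturbation bound for the finite-dimensional regression map, followed by empirical averaging over $j$. For each coordinate $j$, the memory and innovation variance are explicit functions of finitely many inputs:
\begin{equation*}
\mu_{t,j}=\alpha_{t,j}\,\boldsymbol\zeta_j^{t,<t}(\boldsymbol\zeta_j^{<t,<t})^\dagger\boldsymbol{\mathcal Z}_{<t,j},
\qquad
\tau_{t,j}=\alpha_{t,j}^2\nu_{t,j}.
\end{equation*}
The inputs are the kernel entries $\zeta_j^{r,s}=\gamma_w^2T_{r,j}T_{s,j}\Theta_j^{r,s}$, the scaling $\alpha_{t,j}=1/(\bar\pi_{t,j}T_{t,j})$, and the past history $\boldsymbol{\mathcal Z}_{<t,j}$. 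The hypothesis that the actual and oracle histories and environments are empirically close up to time $t$ will be read as follows: the arrays $\{T_{s,j}\}$, $\{\Theta_j^{r,s}\}$, $\{\gamma_{s,j}\}$ and $\{\mathcal Z_{s,j}\}_{s<t}$ of the two branches differ by $o_p^{\ell_2}(1)$. Every quantity lies in a compact range: $T$ is bounded above and below by the compact loading and the MDE bounds, $\bar\pi\in[\pi_{\min},\pi_{\max}]$, and $\Theta$ is bounded by flat covariance together with admissibility of $\boldsymbol u_s$. It then suffices to show that the map from these inputs to $(\mu_{t,j},\tau_{t,j})$ is locally Lipschitz, with a constant that grows at most linearly in $\|\boldsymbol{\mathcal Z}_{<t,j}\|$.

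\emph{Step 1 (scalars).} The maps $T\mapsto T^{-1}$, $\operatorname{Proj}$, and products are Lipschitz on compact sets. Hence $|\alpha^{\rm act}_{t,j}-\alpha^{\rm orc}_{t,j}|$ and $|\zeta^{r,s,\rm act}_j-\zeta^{r,s,\rm orc}_j|$ are bounded by $C$ times the sum of the input discrepancies at $j$, and therefore are $o_p^{\ell_2}(1)$.

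\emph{Step 2 (pseudo-inverse; the main obstacle).} The Moore--Penrose inverse is not continuous when the rank changes, so this is where Assumption~\ref{ass:stable_memory_regression} is used. That assumption gives $\lambda_{\min}^+\ge c_\zeta$ for both branches with probability tending to one. I would first argue that the retained rank cannot differ on a set of coordinates of nonvanishing density. If the ranks differ at $j$, then Weyl's inequality forces $\|\boldsymbol\zeta_j^{\rm act}-\boldsymbol\zeta_j^{\rm orc}\|\ge c_\zeta$. Chebyshev's inequality applied to the empirical $\ell_2$ discrepancy then shows that the fraction of such $j$ tends to zero. On these exceptional coordinates the contribution to the empirical norm is controlled by uniform boundedness: $\|\boldsymbol\zeta^\dagger\|\le c_\zeta^{-1}$, the entries of $\boldsymbol\zeta$ are bounded, and $\tau\le\tau_{\max}$. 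For $\mu$, boundedness is combined with uniform integrability of $\|\boldsymbol{\mathcal Z}_{<t,j}\|^2$, which comes from admissibility and the empirical Gaussian law of Theorem~\ref{thm:general_dynamic}. On coordinates of equal rank, I would use the identity
\begin{equation*}
\boldsymbol B^\dagger-\boldsymbol A^\dagger=-\boldsymbol B^\dagger(\boldsymbol B-\boldsymbol A)\boldsymbol A^\dagger+(\text{projection terms}).
\end{equation*}
With rank equality and spectral gap $c_\zeta$, this gives $\|\boldsymbol B^\dagger-\boldsymbol A^\dagger\|\le Cc_\zeta^{-2}\|\boldsymbol B-\boldsymbol A\|$ (Wedin's bound). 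Consequently the regression coefficients $\boldsymbol\beta_j=\boldsymbol\zeta_j^{t,<t}(\boldsymbol\zeta_j^{<t,<t})^\dagger$ and $\nu_{t,j}$ are Lipschitz in the kernel entries.

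\emph{Step 3 (assembly).} Write
\begin{equation*}
\mu^{\rm act}-\mu^{\rm orc}=(\alpha^{\rm act}\boldsymbol\beta^{\rm act}-\alpha^{\rm orc}\boldsymbol\beta^{\rm orc})\boldsymbol{\mathcal Z}^{\rm act}_{<t}+\alpha^{\rm orc}\boldsymbol\beta^{\rm orc}(\boldsymbol{\mathcal Z}^{\rm act}_{<t}-\boldsymbol{\mathcal Z}^{\rm orc}_{<t}).
\end{equation*}
The second term is $o_p^{\ell_2}(1)$ because $\boldsymbol\beta^{\rm orc}$ and $\alpha^{\rm orc}$ are bounded. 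The first term is a product of a bounded array tending to zero empirically and an array with uniformly integrable squares, so it is also $o_p^{\ell_2}(1)$; this follows by truncating $\|\boldsymbol{\mathcal Z}\|\le K$ and then letting $K\to\infty$. For $\tau$ the argument is simpler, since it involves only bounded Lipschitz functions of the kernel, and gives $\|\boldsymbol\tau^{\rm act}-\boldsymbol\tau^{\rm orc}\|_N\to0$. Finally, $\widetilde{\boldsymbol h}=\boldsymbol h-\boldsymbol\mu$. Closeness of $\boldsymbol h$ is provided by Proposition~\ref{prop:linear_module_stability}, so closeness of $\widetilde{\boldsymbol h}$ follows from the triangle inequality. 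Since $\tau\in[\tau_{\min},\tau_{\max}]$, the map $x\mapsto x^{-1}$ is Lipschitz there with constant $\tau_{\min}^{-2}$, which gives closeness of $\boldsymbol\tau^{-1}$.
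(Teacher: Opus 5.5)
Your proposal is correct and follows the paper's route. Both arguments expand the finite-dimensional regression map $(\alpha,\boldsymbol\zeta^{t,<t},(\boldsymbol\zeta^{<t,<t})^\dagger,\boldsymbol{\mathcal Z}_{<t})\mapsto(\mu_{t,j},\tau_{t,j})$ coordinatewise as a Lipschitz perturbation under Assumption~\ref{ass:stable_memory_regression}, average over $j$, and then conclude with the triangle inequality for $\widetilde{\boldsymbol h}_t$ and the Lipschitz inverse on $[\tau_{\min},\tau_{\max}]$. Your Weyl/Markov count showing that rank mismatches occur only on a vanishing fraction of coordinates (with Wedin's bound elsewhere), and your truncation step for products with $\boldsymbol{\mathcal Z}_{<t}$, make explicit two points the paper only asserts: that the Moore--Penrose inverse is ``locally Lipschitz on the retained subspace,'' and that the product terms are controlled by Cauchy--Schwarz.
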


\begin{IEEEproof}
	See Appendix~\ref{app:precision_replacement}.  The main point is that, on the
	retained Gaussian-history subspace,
	\begin{equation}
	\left\|
	(\boldsymbol\zeta_j^{<t,<t})^\dagger
	\right\|
	\le
	c_\zeta^{-1}.
	\end{equation}
	Since $t$ is fixed, the Gaussian regression map defining $\mu_{t,j}$ is then
	Lipschitz with respect to the finite covariance blocks and the past Gaussian
	history.  The lower and upper bounds on $\tau_{t,j}$ make the inverse
	innovation precision stable.
\end{IEEEproof}

\subsection{Proof of the Precision Replacement Theorem}
\label{subsec:proof_precision_replacement}

We now prove Theorem~\ref{thm:precision_replacement}.  The proof is by
induction over $t$ and is carried out separately for each branch
$b\in\{\mathrm{std},\mathrm{corr}\}$.

At $t=0$, the actual and oracle recursions are initialized identically.  Hence
\begin{equation}
\left\|
\boldsymbol q_0^{b,\mathrm{act}}
-
\boldsymbol q_0^{b,\mathrm{orc}}
\right\|_N
=
0,
\qquad
\left\|
\boldsymbol\gamma_0^{b,\mathrm{act}}
-
\boldsymbol\gamma_0^{b,\mathrm{orc}}
\right\|_N
=
0.
\end{equation}

Assume that, for some fixed $t<T$,
\begin{equation}
	\left\|
	\boldsymbol q_t^{b,\mathrm{act}}
	-
	\boldsymbol q_t^{b,\mathrm{orc}}
	\right\|_N
	\overset{p}{\longrightarrow}0,
	\qquad
	\left\|
	\boldsymbol\gamma_t^{b,\mathrm{act}}
	-
	\boldsymbol\gamma_t^{b,\mathrm{orc}}
	\right\|_N
	\overset{p}{\longrightarrow}0.
\end{equation}
By Proposition~\ref{prop:linear_module_stability},
\begin{equation}
\left\|
\boldsymbol m_t^{b,\mathrm{act}}
-
\boldsymbol m_t^{b,\mathrm{orc}}
\right\|_N
\to0,
\end{equation}
\begin{equation}
\left\|
\boldsymbol\pi_t^{b,\mathrm{act}}
-
\boldsymbol\pi_t^{b,\mathrm{orc}}
\right\|_N
\to0,
\end{equation}
and
\begin{equation}
\left\|
\boldsymbol h_t^{b,\mathrm{act}}
-
\boldsymbol h_t^{b,\mathrm{orc}}
\right\|_N
\to0
\end{equation}
in probability.

For the standard branch, the prior input is
\begin{equation}
\boldsymbol s_t^{\mathrm{std}}
=
\boldsymbol h_t,
\qquad
\boldsymbol\rho_t^{\mathrm{std}}
=
\boldsymbol\pi_t.
\end{equation}
Thus the input stability condition
\eqref{eq:prior_stability_input} follows directly from the linear-module
stability.  Proposition~\ref{prop:prior_module_stability} then gives
\begin{equation}
\left\|
\boldsymbol\gamma_{t+1}^{\mathrm{std},\mathrm{act}}
-
\boldsymbol\gamma_{t+1}^{\mathrm{std},\mathrm{orc}}
\right\|_N
\to0,
\end{equation}
and
\begin{equation}
\left\|
\boldsymbol q_{t+1}^{\mathrm{std},\mathrm{act}}
-
\boldsymbol q_{t+1}^{\mathrm{std},\mathrm{orc}}
\right\|_N
\to0.
\end{equation}

For the corrected branch, the prior input is
\begin{equation}
\boldsymbol s_t^{\mathrm{corr}}
=
\widetilde{\boldsymbol h}_t
=
\boldsymbol h_t-\boldsymbol\mu_t,
\qquad
\boldsymbol\rho_t^{\mathrm{corr}}
=
\boldsymbol\tau_t^{-1}.
\end{equation}
The stability of $\boldsymbol h_t$ follows from the linear-module stability,
and the stability of $\boldsymbol\mu_t$ and $\boldsymbol\tau_t^{-1}$ follows
from Proposition~\ref{prop:memory_map_stability}.  Therefore
\begin{equation}
\left\|
\boldsymbol s_t^{\mathrm{corr},\mathrm{act}}
-
\boldsymbol s_t^{\mathrm{corr},\mathrm{orc}}
\right\|_N
\to0,
\qquad
\left\|
\boldsymbol\rho_t^{\mathrm{corr},\mathrm{act}}
-
\boldsymbol\rho_t^{\mathrm{corr},\mathrm{orc}}
\right\|_N
\to0.
\end{equation}
Applying Proposition~\ref{prop:prior_module_stability} gives
\begin{equation}
\left\|
\boldsymbol\gamma_{t+1}^{\mathrm{corr},\mathrm{act}}
-
\boldsymbol\gamma_{t+1}^{\mathrm{corr},\mathrm{orc}}
\right\|_N
\to0,
\end{equation}
and
\begin{equation}
\left\|
\boldsymbol q_{t+1}^{\mathrm{corr},\mathrm{act}}
-
\boldsymbol q_{t+1}^{\mathrm{corr},\mathrm{orc}}
\right\|_N
\to0.
\end{equation}

Thus, in either branch, the induction hypothesis at time $t$ implies the same
claim at time $t+1$.  Since the horizon $T$ is fixed, induction proves
\eqref{eq:gamma_replacement} and \eqref{eq:q_replacement} for every
$t\le T$.  This completes the proof of
Theorem~\ref{thm:precision_replacement}.

\section{Conclusion}
\label{sec:conclusion}

This paper studied diagonal expectation propagation under variance-profile Gaussian measurements.  The main finding is that variance profiles do not destroy the Gaussian nature of the linear-module output, but they generally destroy its freshness.  After conditioning on the finite linear history, the linear module produces a coordinate-dependent Gaussian process rather than a fresh scalar Gaussian channel.  The standard diagonal EP cavity removes the instantaneous response of the incoming message, but it may leave a predictable memory component inherited from previous residuals.

We characterized this effect through a conditioned matrix-Dyson-equation response and a Schur-complement representation of the linear module.  A Gaussian-regression decomposition then identifies the innovation part of the residual process and yields an oracle state-evolution-level correction.  Thus, under a general variance profile, the natural limiting object for diagonal EP is a Gaussian-process dynamics with profile-dependent memory rather than a conventional scalar fresh-noise state evolution.

Several questions remain open.  The most important one is algorithmic: the memory correction in this paper is an oracle decoupling device, and practical procedures for estimating and removing the memory term from finite-dimensional iterates remain to be developed.  Other natural directions include extensions to non-Gaussian or sparse variance-profile matrices, and the study of fixed-point stability and variational interpretations of the resulting memory-aware EP dynamics.

\appendices
\section{Auxiliary Probability and Empirical-Convergence Lemmas}
\label{app:aux_probability}

This appendix collects several probability lemmas used throughout the proof.
They are independent of the EP recursion and of the particular random-matrix
linearization.  Their role is to justify empirical pseudo-Lipschitz convergence
from weak Gaussian dependence, to control truncation errors, and to record the
Gaussian regression identity used in the memory decomposition.

\subsection{Pseudo-Lipschitz Growth and Truncation}

We first record a standard polynomial growth consequence of the
pseudo-Lipschitz condition.

\begin{lemma}[Polynomial growth of pseudo-Lipschitz functions]
	\label{lem:pl_growth_appendix}
	Let $\psi:\mathbb R^d\to\mathbb R$ be pseudo-Lipschitz of order $k\ge1$.
	Then there exists a constant $C_\psi<\infty$ such that
	\begin{equation*}
	|\psi(\boldsymbol x)|
	\le
	C_\psi(1+\|\boldsymbol x\|^k),
	\qquad
	\boldsymbol x\in\mathbb R^d .
	\end{equation*}
	Moreover, for every $\boldsymbol x,\boldsymbol y\in\mathbb R^d$,
	\begin{equation*}
	(1+\|\boldsymbol x+\boldsymbol y\|^k)
	\le
	C_k(1+\|\boldsymbol x\|^k+\|\boldsymbol y\|^k)
	\end{equation*}
	for a constant $C_k$ depending only on $k$.
\end{lemma}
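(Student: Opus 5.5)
The first claim follows from the pseudo-Lipschitz inequality in Definition~\ref{def:pl} with reference point $\boldsymbol y=\boldsymbol 0$, exactly as in Proposition~\ref{prop:poly_growth}. Apply the definition to get
\begin{equation*}
|\psi(\boldsymbol x)|\le|\psi(\boldsymbol 0)|+L\|\boldsymbol x\|\bigl(1+\|\boldsymbol x\|^{k-1}\bigr)
=|\psi(\boldsymbol 0)|+L\|\boldsymbol x\|+L\|\boldsymbol x\|^{k}.
\end{equation*}
It remains to absorb the linear term into $1+\|\boldsymbol x\|^k$. Split into the cases $\|\boldsymbol x\|\le1$, where $\|\boldsymbol x\|\le1$, and $\|\boldsymbol x\|>1$, where $\|\boldsymbol x\|\le\|\boldsymbol x\|^k$ because $k\ge1$. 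In both cases $\|\boldsymbol x\|\le1+\|\boldsymbol x\|^k$, so $C_\psi=|\psi(\boldsymbol 0)|+2L$ works.

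For the second claim, start from the triangle inequality $\|\boldsymbol x+\boldsymbol y\|\le\|\boldsymbol x\|+\|\boldsymbol y\|$. Since $a\mapsto a^k$ is nondecreasing on $\mathbb R_+$, this gives $\|\boldsymbol x+\boldsymbol y\|^k\le(\|\boldsymbol x\|+\|\boldsymbol y\|)^k$. Then apply the convexity bound of Proposition~\ref{prop:poly_growth}, $(a+b)^k\le2^{k-1}(a^k+b^k)$, with $a=\|\boldsymbol x\|$ and $b=\|\boldsymbol y\|$. Adding $1$ to both sides yields the statement with $C_k=\max\{1,2^{k-1}\}=2^{k-1}$, which depends only on $k$.

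There is no real obstacle; the only point needing care is the case split that absorbs the linear term $L\|\boldsymbol x\|$. Keeping the constant $C_k$ independent of $d$ and of $\psi$ is automatic, since only norms of vectors enter.
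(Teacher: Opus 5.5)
Your proof is correct and is essentially the paper's own argument. You set $\boldsymbol y=\boldsymbol 0$ in the pseudo-Lipschitz inequality and absorb the linear term via $\|\boldsymbol x\|\le 1+\|\boldsymbol x\|^k$, then combine the triangle inequality with $(a+b)^k\le 2^{k-1}(a^k+b^k)$; your explicit constants merely make the paper's ``absorbing constants'' concrete (strictly, for $k=1$ the term $\|\boldsymbol y\|^{k-1}$ at $\boldsymbol y=\boldsymbol 0$ equals $1$ under the convention $0^0=1$, so $2L$ becomes $3L$, which changes nothing).
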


\begin{IEEEproof}
	Taking $\boldsymbol y=\boldsymbol 0$ in the definition of pseudo-Lipschitz
	continuity gives
	\begin{equation*}
	|\psi(\boldsymbol x)|
	\le
	|\psi(\boldsymbol 0)|
	+
	L\|\boldsymbol x\|
	\left(1+\|\boldsymbol x\|^{k-1}\right).
	\end{equation*}
	Since $\|\boldsymbol x\|\le 1+\|\boldsymbol x\|^k$ for $k\ge1$, the first
	claim follows.
	
	For the second claim, use
	\begin{equation*}
	\|\boldsymbol x+\boldsymbol y\|
	\le
	\|\boldsymbol x\|+\|\boldsymbol y\|
	\end{equation*}
	and the convexity inequality
	\begin{equation*}
	(a+b)^k
	\le
	2^{k-1}(a^k+b^k),
	\qquad a,b\ge0.
	\end{equation*}
	Absorbing constants gives the result.
\end{IEEEproof}

The following truncation lemma is used to pass from bounded Lipschitz test
functions to pseudo-Lipschitz functions of order two.

\begin{lemma}[Uniform truncation for \(\PL(2)\) functions]
	\label{lem:pl_truncation_appendix}
	Let $\{\boldsymbol\xi_{j,N}:1\le j\le N\}$ be a triangular array of random
	vectors in $\mathbb R^d$ satisfying, for some $\epsilon>0$,
	\begin{equation*}
	\sup_{j,N}\mathbb E\|\boldsymbol\xi_{j,N}\|^{4+\epsilon}<\infty.
	\end{equation*}
	Let $\psi\in \PL(2)$.  Then
	\begin{equation*}
	\lim_{K\to\infty}
	\sup_N
	\frac1N\sum_{j=1}^N
	\mathbb E\left[
	|\psi(\boldsymbol\xi_{j,N})|
	\boldsymbol 1\{\|\boldsymbol\xi_{j,N}\|>K\}
	\right]
	=
	0.
	\end{equation*}
	Consequently, if $\psi_K$ is any sequence of bounded Lipschitz functions
	satisfying $\psi_K(\boldsymbol x)=\psi(\boldsymbol x)$ for
	$\|\boldsymbol x\|\le K$ and
	\begin{equation*}
	|\psi_K(\boldsymbol x)|\le C(1+\|\boldsymbol x\|^2)
	\end{equation*}
	uniformly in $K$, then
	\begin{equation*}
	\lim_{K\to\infty}
	\sup_N
	\frac1N\sum_{j=1}^N
	\mathbb E
	|\psi(\boldsymbol\xi_{j,N})-\psi_K(\boldsymbol\xi_{j,N})|
	=
	0.
	\end{equation*}
\end{lemma}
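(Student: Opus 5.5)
The plan is a direct moment and uniform-integrability argument for the tail term; the second claim then follows from it by a pointwise domination on the set \(\{\|\boldsymbol x\|>K\}\).

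\textbf{First claim.} By Lemma~\ref{lem:pl_growth_appendix} with \(k=2\), there is a constant \(C_\psi\) such that \(|\psi(\boldsymbol x)|\le C_\psi(1+\|\boldsymbol x\|^2)\). Hence it suffices to bound
\begin{equation*}
\mathbb E\left[(1+\|\boldsymbol\xi_{j,N}\|^2)\boldsymbol 1\{\|\boldsymbol\xi_{j,N}\|>K\}\right]
\end{equation*}
uniformly in \(j\) and \(N\). On the event \(\{\|\boldsymbol\xi\|>K\}\) with \(K\ge1\), we have \(1+\|\boldsymbol\xi\|^2\le 2\|\boldsymbol\xi\|^2\). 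We then use the elementary Markov-type domination: on \(\{\|\boldsymbol\xi\|>K\}\),
\begin{equation*}
\|\boldsymbol\xi\|^2\le \|\boldsymbol\xi\|^{4+\epsilon}K^{-(2+\epsilon)}.
\end{equation*}
Taking expectations gives the uniform bound
\begin{equation*}
2K^{-(2+\epsilon)}\sup_{j,N}\mathbb E\|\boldsymbol\xi_{j,N}\|^{4+\epsilon}.
\end{equation*}
This bound is independent of \(j\) and \(N\), so averaging over \(j\) and taking \(\sup_N\) keeps it. It tends to \(0\) as \(K\to\infty\). The bound could equally be obtained with Hölder and Markov, but the direct domination avoids fractional exponents.

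\textbf{Second claim.} Since \(\psi_K=\psi\) on \(\{\|\boldsymbol x\|\le K\}\), the difference \(\psi-\psi_K\) vanishes there. On \(\{\|\boldsymbol x\|>K\}\) we have \(|\psi-\psi_K|\le |\psi|+|\psi_K|\le (C_\psi+C)(1+\|\boldsymbol x\|^2)\). Thus
\begin{equation*}
|\psi(\boldsymbol\xi)-\psi_K(\boldsymbol\xi)|\le (C_\psi+C)(1+\|\boldsymbol\xi\|^2)\boldsymbol 1\{\|\boldsymbol\xi\|>K\},
\end{equation*}
and the same uniform tail estimate as in the first claim applies.

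The argument has no real obstacle. The only point requiring care is that \(C\) must be uniform in \(K\), which is assumed in the statement. The bound is a pure uniform-integrability statement, so no independence or Gaussianity of the array is needed.
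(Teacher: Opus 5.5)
Your proof is correct and follows the same route as the paper. The first claim combines the polynomial growth of $\psi\in\PL(2)$ with a uniform higher-moment tail bound, and the second uses the pointwise domination $|\psi-\psi_K|\le C'(1+\|\boldsymbol x\|^2)\boldsymbol 1\{\|\boldsymbol x\|>K\}$. The only difference is that you replace the paper's Hölder--Markov step by the direct inequality $\|\boldsymbol\xi\|^2\boldsymbol 1\{\|\boldsymbol\xi\|>K\}\le K^{-(2+\epsilon)}\|\boldsymbol\xi\|^{4+\epsilon}$, which is slightly more elementary and gives the explicit rate $O(K^{-(2+\epsilon)})$.
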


\begin{IEEEproof}
	By Lemma~\ref{lem:pl_growth_appendix}, there is $C_\psi<\infty$ such that
	\begin{equation*}
	|\psi(\boldsymbol x)|
	\le
	C_\psi(1+\|\boldsymbol x\|^2).
	\end{equation*}
	Fix $K>0$.  By Hölder's inequality, with
	\begin{equation*}
	r=\frac{4+\epsilon}{2}>1,
	\qquad
	r'=\frac{r}{r-1},
	\end{equation*}
	we have
	\begin{equation*}
	\begin{aligned}
		\mathbb E\left[
		(1+\|\boldsymbol\xi_{j,N}\|^2)
		\boldsymbol 1\{\|\boldsymbol\xi_{j,N}\|>K\}
		\right]
		&\le
		\left(
		\mathbb E(1+\|\boldsymbol\xi_{j,N}\|^2)^r
		\right)^{1/r}
		\mathbb P(\|\boldsymbol\xi_{j,N}\|>K)^{1/r'}.
	\end{aligned}
	\end{equation*}
	The first factor is uniformly bounded by the assumed
	$(4+\epsilon)$-moment condition.  The second factor converges to zero uniformly
	in $j,N$ by Markov's inequality:
	\begin{equation*}
	\mathbb P(\|\boldsymbol\xi_{j,N}\|>K)
	\le
	K^{-(4+\epsilon)}
	\sup_{j,N}\mathbb E\|\boldsymbol\xi_{j,N}\|^{4+\epsilon}.
	\end{equation*}
	This proves the first claim.  The second claim follows from
	\begin{equation*}
	|\psi(\boldsymbol x)-\psi_K(\boldsymbol x)|
	\le
	C(1+\|\boldsymbol x\|^2)\boldsymbol 1\{\|\boldsymbol x\|>K\}.
	\end{equation*}
\end{IEEEproof}

\subsection{Gaussian Covariance Comparison}

The next lemma is a convenient covariance comparison for smooth functions of
jointly Gaussian vectors.  It is the basic tool behind the weakly dependent
Gaussian empirical law.

\begin{lemma}[Gaussian covariance interpolation]
	\label{lem:gaussian_cov_interpolation}
	Let $(\boldsymbol\xi,\boldsymbol\eta)$ be a centered jointly Gaussian vector in
	$\mathbb R^d\times\mathbb R^d$.  Let
	\begin{equation*}
	\boldsymbol\Sigma_{\xi\eta}
	=
	\operatorname{Cov}(\boldsymbol\xi,\boldsymbol\eta).
	\end{equation*}
	If $f,g:\mathbb R^d\to\mathbb R$ are continuously differentiable with bounded
	gradients, then
	\begin{equation*}
	\left|
	\operatorname{Cov}(f(\boldsymbol\xi),g(\boldsymbol\eta))
	\right|
	\le
	\|\nabla f\|_\infty
	\|\nabla g\|_\infty
	\|\boldsymbol\Sigma_{\xi\eta}\|_{\mathrm{op}} .
	\end{equation*}
	More generally, if $\boldsymbol\xi,\boldsymbol\eta$ have dimensions
	$d_1,d_2$, then the right-hand side becomes
	\begin{equation*}
	\|\nabla f\|_\infty
	\|\nabla g\|_\infty
	\|\operatorname{Cov}(\boldsymbol\xi,\boldsymbol\eta)\|_{\mathrm{op}} .
	\end{equation*}
\end{lemma}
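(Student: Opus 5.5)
We prove the bound with an exact covariance formula for Gaussian functionals, obtained by interpolating between an independent coupling and the given coupling. Write $d_1,d_2$ for the dimensions and let $\boldsymbol\eta'$ be an independent copy of $\boldsymbol\eta$, independent of $(\boldsymbol\xi,\boldsymbol\eta)$. For $s\in[0,1]$ set
\begin{equation*}
\boldsymbol\eta_s=\sqrt{s}\,\boldsymbol\eta+\sqrt{1-s}\,\boldsymbol\eta',
\qquad
\phi(s)=\mathbb E[f(\boldsymbol\xi)g(\boldsymbol\eta_s)].
\end{equation*}
Then $\boldsymbol\eta_s$ has the same law as $\boldsymbol\eta$ for every $s$, and $\operatorname{Cov}(\boldsymbol\xi,\boldsymbol\eta_s)=\sqrt{s}\,\boldsymbol\Sigma_{\xi\eta}$. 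At the endpoints, $\phi(0)=\mathbb E f(\boldsymbol\xi)\,\mathbb E g(\boldsymbol\eta)$ and $\phi(1)=\mathbb E[f(\boldsymbol\xi)g(\boldsymbol\eta)]$. Hence $\operatorname{Cov}(f(\boldsymbol\xi),g(\boldsymbol\eta))=\int_0^1\phi'(s)\,ds$, and everything reduces to computing $\phi'$.

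First we regularize so that the differentiation is legitimate. We mollify $f,g$ by convolution with a Gaussian kernel of width $\varepsilon$, which gives $C^\infty$ functions with all derivatives bounded and with gradient sup-norms no larger than those of $f,g$. Since $f,g$ have linear growth and Gaussian vectors have all moments, dominated convergence recovers the original covariance as $\varepsilon\to0$. So it suffices to treat smooth $f,g$ with bounded derivatives.

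Differentiating under the expectation gives
\begin{equation*}
\phi'(s)=\mathbb E\Big[f(\boldsymbol\xi)\,\nabla g(\boldsymbol\eta_s)^{\mathsf T}\Big(\tfrac{1}{2\sqrt s}\boldsymbol\eta-\tfrac{1}{2\sqrt{1-s}}\boldsymbol\eta'\Big)\Big].
\end{equation*}
We apply Gaussian integration by parts (Stein's lemma) to the jointly Gaussian family $(\boldsymbol\xi,\boldsymbol\eta,\boldsymbol\eta')$ against the smooth function $F=f(\boldsymbol\xi)\,\partial_k g(\boldsymbol\eta_s)$.
\begin{itemize}
\item The terms involving $\nabla^2 g$ come from $\operatorname{Cov}(\boldsymbol\eta_s,\boldsymbol\eta)/(2\sqrt s)-\operatorname{Cov}(\boldsymbol\eta_s,\boldsymbol\eta')/(2\sqrt{1-s})=\tfrac12\boldsymbol\Sigma_{\eta\eta}-\tfrac12\boldsymbol\Sigma_{\eta\eta}=\boldsymbol 0$, so they cancel.
\item Only the $\nabla f$ term survives. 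Since $\operatorname{Cov}(\boldsymbol\xi,\boldsymbol\eta')=\boldsymbol 0$, it carries the factor $\operatorname{Cov}(\boldsymbol\xi,\boldsymbol\eta)/(2\sqrt s)$.
\end{itemize}
This yields
\begin{equation*}
\operatorname{Cov}(f(\boldsymbol\xi),g(\boldsymbol\eta))=\int_0^1\frac{1}{2\sqrt s}\,\mathbb E\big[\nabla f(\boldsymbol\xi)^{\mathsf T}\boldsymbol\Sigma_{\xi\eta}\nabla g(\boldsymbol\eta_s)\big]\,ds .
\end{equation*}

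The bound then follows from $|\boldsymbol a^{\mathsf T}\boldsymbol\Sigma\boldsymbol b|\le\|\boldsymbol a\|\,\|\boldsymbol\Sigma\|_{\mathrm{op}}\|\boldsymbol b\|$ together with $\int_0^1 \tfrac{ds}{2\sqrt s}=1$. This gives exactly $\|\nabla f\|_\infty\|\nabla g\|_\infty\|\boldsymbol\Sigma_{\xi\eta}\|_{\mathrm{op}}$, where $\|\nabla f\|_\infty=\sup_{\boldsymbol x}\|\nabla f(\boldsymbol x)\|$ is the Euclidean sup-norm. Nothing in the argument used $d_1=d_2$, so the rectangular case is covered as well. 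Since $\boldsymbol\xi$ and $\boldsymbol\eta$ are centered, no mean terms arise.

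The only delicate point is the Stein step at the endpoints. The weight $1/(2\sqrt s)$ is integrable at $0$. Near $s=1$ the $1/\sqrt{1-s}$ factor cancels exactly inside the expectation before any bound is taken. To be safe, we integrate over $[\delta,1-\delta]$, use continuity of $\phi$ on $[0,1]$, and let $\delta\to0$.
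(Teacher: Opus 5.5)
Your proof is correct and follows the same Gaussian-interpolation route as the paper: your coupling $(\boldsymbol\xi,\sqrt{s}\,\boldsymbol\eta+\sqrt{1-s}\,\boldsymbol\eta')$ is the paper's path with cross-covariance $u\boldsymbol\Sigma_{\xi\eta}$ after the change of variable $u=\sqrt{s}$, which is why your weight $1/(2\sqrt{s})$ integrates to one. The only difference is that you derive the interpolation identity explicitly, via Stein's lemma with the Hessian terms cancelling and a mollification step to justify differentiation, whereas the paper simply cites the identity.
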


\begin{IEEEproof}
	We give the proof for equal dimensions; the rectangular case is identical.
	Let $(\boldsymbol\xi_0,\boldsymbol\eta_0)$ be a Gaussian pair with the same
	marginal distributions as $(\boldsymbol\xi,\boldsymbol\eta)$ but with
	$\boldsymbol\xi_0$ independent of $\boldsymbol\eta_0$.  For $s\in[0,1]$, let
	$(\boldsymbol\xi_s,\boldsymbol\eta_s)$ be a centered Gaussian pair with the
	same marginal covariances and cross-covariance
	\begin{equation*}
	\operatorname{Cov}(\boldsymbol\xi_s,\boldsymbol\eta_s)
	=
	s\boldsymbol\Sigma_{\xi\eta}.
	\end{equation*}
	Define
	\begin{equation*}
	\Phi(s)=\mathbb E[f(\boldsymbol\xi_s)g(\boldsymbol\eta_s)].
	\end{equation*}
	The Gaussian interpolation identity gives
	\begin{equation*}
	\Phi'(s)
	=
	\mathbb E\left[
	\nabla f(\boldsymbol\xi_s)^{\mathsf T}
	\boldsymbol\Sigma_{\xi\eta}
	\nabla g(\boldsymbol\eta_s)
	\right].
	\end{equation*}
	Since
	\begin{equation*}
	\Phi(1)-\Phi(0)
	=
	\operatorname{Cov}(f(\boldsymbol\xi),g(\boldsymbol\eta)),
	\end{equation*}
	we obtain
	\begin{equation*}
	\begin{aligned}
		\left|
		\operatorname{Cov}(f(\boldsymbol\xi),g(\boldsymbol\eta))
		\right|
		&\le
		\int_0^1
		\mathbb E
		\left|
		\nabla f(\boldsymbol\xi_s)^{\mathsf T}
		\boldsymbol\Sigma_{\xi\eta}
		\nabla g(\boldsymbol\eta_s)
		\right|ds                                      \\
		&\le
		\|\nabla f\|_\infty
		\|\nabla g\|_\infty
		\|\boldsymbol\Sigma_{\xi\eta}\|_{\mathrm{op}}.
	\end{aligned}
	\end{equation*}
\end{IEEEproof}

\subsection{Weakly Dependent Gaussian Empirical Law}

We now prove the empirical convergence result used after the empirical Gaussian
decoupling argument.

\begin{lemma}[Weakly dependent Gaussian empirical law]
	\label{lem:weak_gaussian_empirical_appendix}
	Let
	\begin{equation*}
	\{\boldsymbol\xi_{j,N}\in\mathbb R^d:1\le j\le N\}
	\end{equation*}
	be a triangular array of jointly Gaussian random vectors.  Suppose that, for
	some $\epsilon>0$,
	\begin{equation*}
	\sup_{j,N}\mathbb E\|\boldsymbol\xi_{j,N}\|^{4+\epsilon}<\infty,
	\end{equation*}
	and
	\begin{equation}
	\frac1{N^2}
	\sum_{j\ne k}
	\left\|
	\operatorname{Cov}
	(\boldsymbol\xi_{j,N},\boldsymbol\xi_{k,N})
	\right\|_{\mathrm{op}}
	\longrightarrow 0.
	\label{eq:appendix_offdiag_cov}
	\end{equation}
	Then, for every $\psi\in \PL(2)$,
	\begin{equation*}
	\frac1N\sum_{j=1}^N
	\psi(\boldsymbol\xi_{j,N})
	-
	\frac1N\sum_{j=1}^N
	\mathbb E\psi(\boldsymbol\xi_{j,N})
	\overset{p}{\longrightarrow}0.
	\end{equation*}
\end{lemma}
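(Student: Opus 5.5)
The plan is to follow the same three-stage route as the proof of Theorem~\ref{thm:weak_gaussian_pl}, now using the appendix tools explicitly. Throughout, set $Y_{j,N}=\psi(\boldsymbol\xi_{j,N})-\mathbb E\psi(\boldsymbol\xi_{j,N})$. Joint Gaussianity with finite second moments forces the array to have means and covariances that are uniformly bounded in $j,N$. After subtracting the (deterministic) means, we may therefore assume the vectors are centered. This is legitimate because $\psi(\cdot+\boldsymbol c_{j,N})$ is again $\PL(2)$, with a constant uniform over bounded shifts, and the smooth-case argument below only needs the derivative bound.

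\textbf{Step 1 (smooth bounded test functions).} Take $\psi$ that is $C^1$, bounded, and has $\|\nabla\psi\|_\infty\le L$. We show that
\begin{equation*}
\operatorname{Var}\Bigl(\frac1N\sum_j\psi(\boldsymbol\xi_{j,N})\Bigr)\to0 .
\end{equation*}
Split this variance into diagonal and off-diagonal parts. The diagonal part is at most $\|\psi\|_\infty^2/N$. For the off-diagonal part, apply Lemma~\ref{lem:gaussian_cov_interpolation} to each pair $(\boldsymbol\xi_{j,N},\boldsymbol\xi_{k,N})$ with $f=g=\psi$. This gives $|\operatorname{Cov}(\psi(\boldsymbol\xi_{j,N}),\psi(\boldsymbol\xi_{k,N}))|\le L^2\|\operatorname{Cov}(\boldsymbol\xi_{j,N},\boldsymbol\xi_{k,N})\|_{\mathrm{op}}$. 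Hypothesis \eqref{eq:appendix_offdiag_cov} then makes the off-diagonal sum, normalized by $N^{-2}$, vanish, and Chebyshev's inequality gives convergence in probability.

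\textbf{Step 2 (bounded Lipschitz functions).} Mollify: set $\psi_\varepsilon=\psi*\phi_\varepsilon$. This function is smooth, bounded, has gradient bounded by the Lipschitz constant, and satisfies $\|\psi-\psi_\varepsilon\|_\infty\le L\varepsilon\,\mathbb E\|\boldsymbol Z\|$. The uniform approximation error therefore enters both the empirical average and its expectation by at most $2L\varepsilon C$, deterministically. Letting $N\to\infty$ and then $\varepsilon\to0$ completes this step.

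\textbf{Step 3 (truncation to $\PL(2)$).} Let $\psi_K=\psi\chi_K$ with a smooth cutoff $\chi_K$. Each $\psi_K$ is bounded Lipschitz, so Step~2 applies to it. In addition, $\psi_K=\psi$ on $\{\|\boldsymbol x\|\le K\}$ and $|\psi_K|\le C_\psi(1+\|\boldsymbol x\|^2)$ uniformly in $K$, using Lemma~\ref{lem:pl_growth_appendix}. Lemma~\ref{lem:pl_truncation_appendix} then yields
\begin{equation*}
\sup_N\frac1N\sum_j\mathbb E|\psi(\boldsymbol\xi_{j,N})-\psi_K(\boldsymbol\xi_{j,N})|\to0
\qquad\text{as } K\to\infty .
\end{equation*}
This bound controls the expectation-side error directly. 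It also controls the empirical-side error in probability via Markov's inequality applied to the nonnegative average. Fix $\eta>0$, choose $K$ so that both truncation errors are below $\eta$ with high probability, and apply Step~2 to $\psi_K$. A standard $3\eta$ argument concludes.

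The only delicate point is the centering reduction. The alternative is to verify that Lemma~\ref{lem:gaussian_cov_interpolation} holds verbatim for non-centered Gaussian pairs. I expect this to be the main check: the interpolation identity concerns covariances only, so the means simply ride along in the interpolating path $(\boldsymbol\xi_s,\boldsymbol\eta_s)$. I would therefore state that remark once and avoid shifting $\psi$ at all. Everything else is routine given the appendix lemmas.
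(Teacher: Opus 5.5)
Your proposal is correct and follows the paper's proof essentially step for step: a variance bound for smooth bounded tests via Lemma~\ref{lem:gaussian_cov_interpolation} and Chebyshev, uniform smooth approximation for bounded Lipschitz tests, and truncation with $\psi_K=\psi\chi_K$ combined with Lemma~\ref{lem:pl_truncation_appendix} and Markov's inequality. Your extra remark that nonzero means only ride along in the interpolation path is a worthwhile addition, because Lemma~\ref{lem:gaussian_cov_interpolation} is stated only for centered pairs while the present lemma does not assume centering.
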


\begin{IEEEproof}
	We first prove the result for bounded continuously differentiable functions
	with bounded gradient.  Set
	\begin{equation*}
	Y_{j,N}
	=
	\psi(\boldsymbol\xi_{j,N})
	-
	\mathbb E\psi(\boldsymbol\xi_{j,N}).
	\end{equation*}
	Then
	\begin{equation*}
	\begin{aligned}
		\operatorname{Var}\left(
		\frac1N\sum_{j=1}^N\psi(\boldsymbol\xi_{j,N})
		\right)
		&=
		\frac1{N^2}\sum_{j=1}^N
		\operatorname{Var}(\psi(\boldsymbol\xi_{j,N}))        \\
		&\quad+
		\frac1{N^2}\sum_{j\ne k}
		\operatorname{Cov}
		\big(
		\psi(\boldsymbol\xi_{j,N}),
		\psi(\boldsymbol\xi_{k,N})
		\big).
	\end{aligned}
	\end{equation*}
	The diagonal term is $O(N^{-1})$ because $\psi$ is bounded.  For the off-diagonal
	term, Lemma~\ref{lem:gaussian_cov_interpolation} yields
	\begin{equation*}
	\left|
	\operatorname{Cov}
	\big(
	\psi(\boldsymbol\xi_{j,N}),
	\psi(\boldsymbol\xi_{k,N})
	\big)
	\right|
	\le
	\|\nabla\psi\|_\infty^2
	\left\|
	\operatorname{Cov}
	(\boldsymbol\xi_{j,N},\boldsymbol\xi_{k,N})
	\right\|_{\mathrm{op}}.
	\end{equation*}
	Therefore \eqref{eq:appendix_offdiag_cov} implies that the off-diagonal
	contribution vanishes.  Hence the variance of the empirical average converges
	to zero, and Chebyshev's inequality gives convergence in probability.
	
	The same conclusion holds for bounded Lipschitz functions.  Indeed, every
	bounded Lipschitz function on $\mathbb R^d$ can be approximated uniformly by
	bounded smooth functions with gradients bounded by a constant depending only on
	the Lipschitz constant.  Applying the previous paragraph to the smooth
	approximants and letting the approximation error vanish proves the bounded
	Lipschitz case.
	
	Let now $\psi\in \PL(2)$.  Choose a smooth cutoff $\chi_K:\mathbb R^d\to[0,1]$
	such that $\chi_K(\boldsymbol x)=1$ when $\|\boldsymbol x\|\le K$ and
	$\chi_K(\boldsymbol x)=0$ when $\|\boldsymbol x\|\ge 2K$, and define
	\begin{equation*}
	\psi_K(\boldsymbol x)=\psi(\boldsymbol x)\chi_K(\boldsymbol x).
	\end{equation*}
	For fixed $K$, $\psi_K$ is bounded Lipschitz, so
	\begin{equation*}
	\frac1N\sum_{j=1}^N
	\psi_K(\boldsymbol\xi_{j,N})
	-
	\frac1N\sum_{j=1}^N
	\mathbb E\psi_K(\boldsymbol\xi_{j,N})
	\overset{p}{\longrightarrow}0.
	\end{equation*}
	By Lemma~\ref{lem:pl_truncation_appendix},
	\begin{equation*}
	\lim_{K\to\infty}
	\sup_N
	\frac1N\sum_{j=1}^N
	\mathbb E
	|\psi(\boldsymbol\xi_{j,N})-\psi_K(\boldsymbol\xi_{j,N})|
	=
	0.
	\end{equation*}
	Markov's inequality gives the corresponding convergence of the empirical
	truncation error in probability.  Therefore, letting first $N\to\infty$ and
	then $K\to\infty$, the desired convergence follows.
\end{IEEEproof}

\subsection{Gaussian Regression with a Singular Covariance}

The memory decomposition uses Gaussian regression onto a finite past history.
Since some past directions may be exactly redundant, the covariance matrix need
not be invertible.  The Moore--Penrose inverse removes those redundant
directions.

\begin{lemma}[Gaussian regression with Moore--Penrose inverse]
	\label{lem:gaussian_regression_appendix}
	Let
	\begin{equation*}
	\begin{pmatrix}
		\boldsymbol Z\\
		Z
	\end{pmatrix}
	\end{equation*}
	be a centered jointly Gaussian vector with
	\begin{equation*}
	\operatorname{Cov}(\boldsymbol Z)=\boldsymbol\Sigma,
	\qquad
	\operatorname{Cov}(Z,\boldsymbol Z)=\boldsymbol c.
	\end{equation*}
	Then
	\begin{equation*}
	Z
	=
	\boldsymbol c\boldsymbol\Sigma^\dagger\boldsymbol Z
	+
	G,
	\end{equation*}
	where $G$ is centered Gaussian and independent of $\boldsymbol Z$.  Moreover,
	\begin{equation*}
	\operatorname{Var}(G)
	=
	\operatorname{Var}(Z)
	-
	\boldsymbol c\boldsymbol\Sigma^\dagger\boldsymbol c^{\mathsf T}.
	\end{equation*}
\end{lemma}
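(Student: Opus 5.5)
The plan is to treat the statement as a finite-dimensional $L^2$ projection and avoid any invertibility assumption on $\boldsymbol\Sigma$. I would work with the centered jointly Gaussian vector $(\boldsymbol Z,Z)$ and define
\begin{equation*}
G := Z-\boldsymbol c\boldsymbol\Sigma^\dagger\boldsymbol Z .
\end{equation*}
Since $G$ is a linear map of the jointly Gaussian vector $(\boldsymbol Z,Z)$, the pair $(G,\boldsymbol Z)$ is jointly Gaussian and centered. Independence of $G$ and $\boldsymbol Z$ is therefore equivalent to $\operatorname{Cov}(G,\boldsymbol Z)=\boldsymbol 0$. The proof reduces to two covariance computations.

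For the cross-covariance, bilinearity gives
\begin{equation*}
\operatorname{Cov}(G,\boldsymbol Z)=\boldsymbol c-\boldsymbol c\boldsymbol\Sigma^\dagger\boldsymbol\Sigma .
\end{equation*}
The key fact is that $\boldsymbol c$ lies in the row space of $\boldsymbol\Sigma$, so that $\boldsymbol c\boldsymbol\Sigma^\dagger\boldsymbol\Sigma=\boldsymbol c$, since $\boldsymbol\Sigma^\dagger\boldsymbol\Sigma$ is the orthogonal projector onto $\operatorname{range}(\boldsymbol\Sigma)$. I would prove this rather than assert it. Take any $\boldsymbol v\in\ker\boldsymbol\Sigma$. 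Then $\operatorname{Var}(\boldsymbol v^{\mathsf T}\boldsymbol Z)=\boldsymbol v^{\mathsf T}\boldsymbol\Sigma\boldsymbol v=0$, so $\boldsymbol v^{\mathsf T}\boldsymbol Z=0$ almost surely. Hence $\boldsymbol c\boldsymbol v=\operatorname{Cov}(Z,\boldsymbol v^{\mathsf T}\boldsymbol Z)=0$. By Cauchy--Schwarz one could instead bound $|\boldsymbol c\boldsymbol v|^2\le \operatorname{Var}(Z)\,\boldsymbol v^{\mathsf T}\boldsymbol\Sigma\boldsymbol v=0$. Thus $\boldsymbol c^{\mathsf T}\perp\ker\boldsymbol\Sigma=\operatorname{range}(\boldsymbol\Sigma)^{\perp}$, using the symmetry of $\boldsymbol\Sigma$, and the claim follows.

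For the variance, I would expand
\begin{equation*}
\operatorname{Var}(G)=\operatorname{Var}(Z)-2\boldsymbol c\boldsymbol\Sigma^\dagger\boldsymbol c^{\mathsf T}+\boldsymbol c\boldsymbol\Sigma^\dagger\boldsymbol\Sigma\boldsymbol\Sigma^\dagger\boldsymbol c^{\mathsf T},
\end{equation*}
using that $\boldsymbol\Sigma^\dagger$ is symmetric. The Penrose identity $\boldsymbol\Sigma^\dagger\boldsymbol\Sigma\boldsymbol\Sigma^\dagger=\boldsymbol\Sigma^\dagger$ then collapses the last term, which gives the stated formula.

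The only step needing care is the range argument in the second paragraph; everything else is routine bilinear algebra. This range argument is what makes the Moore--Penrose inverse legitimate without a full-rank assumption. It also matches the role the lemma plays in Lemma~\ref{lem:gaussian_regression}.
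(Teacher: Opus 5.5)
Your proposal is correct and follows essentially the same route as the paper's proof. Both define \(G=Z-\boldsymbol c\boldsymbol\Sigma^\dagger\boldsymbol Z\), show \(\boldsymbol c\boldsymbol v=0\) for \(\boldsymbol v\in\ker\boldsymbol\Sigma\) because \(\boldsymbol v^{\mathsf T}\boldsymbol Z=0\) almost surely, conclude independence from the zero cross-covariance, and collapse the variance using \(\boldsymbol\Sigma^\dagger\boldsymbol\Sigma\boldsymbol\Sigma^\dagger=\boldsymbol\Sigma^\dagger\). Your Cauchy--Schwarz variant of the kernel step is a harmless alternative.
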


\begin{IEEEproof}
	Define
	\begin{equation*}
	G
	=
	Z-\boldsymbol c\boldsymbol\Sigma^\dagger\boldsymbol Z.
	\end{equation*}
	Since $(Z,\boldsymbol Z)$ is jointly Gaussian, $(G,\boldsymbol Z)$ is also
	jointly Gaussian.  It is therefore enough to show that
	\begin{equation*}
	\operatorname{Cov}(G,\boldsymbol Z)=\boldsymbol 0.
	\end{equation*}
	We compute
	\begin{equation*}
	\begin{aligned}
		\operatorname{Cov}(G,\boldsymbol Z)
		&=
		\boldsymbol c
		-
		\boldsymbol c\boldsymbol\Sigma^\dagger
		\operatorname{Cov}(\boldsymbol Z,\boldsymbol Z)       \\
		&=
		\boldsymbol c
		-
		\boldsymbol c\boldsymbol\Sigma^\dagger\boldsymbol\Sigma.
	\end{aligned}
	\end{equation*}
	For a valid covariance matrix of $(Z,\boldsymbol Z)$, the row vector
	$\boldsymbol c$ lies in the row space of $\boldsymbol\Sigma$.  To see this, if
	$\boldsymbol v\in\ker(\boldsymbol\Sigma)$, then
	\begin{equation*}
	\operatorname{Var}(\boldsymbol v^{\mathsf T}\boldsymbol Z)
	=
	\boldsymbol v^{\mathsf T}\boldsymbol\Sigma\boldsymbol v
	=
	0,
	\end{equation*}
	so $\boldsymbol v^{\mathsf T}\boldsymbol Z=0$ almost surely.  Hence
	\begin{equation*}
	\operatorname{Cov}(Z,\boldsymbol v^{\mathsf T}\boldsymbol Z)=0,
	\end{equation*}
	which gives $\boldsymbol c\boldsymbol v=0$.  Thus
	$\boldsymbol c$ is orthogonal to $\ker(\boldsymbol\Sigma)$, i.e., it lies in
	the row space of $\boldsymbol\Sigma$.  Therefore
	\begin{equation*}
	\boldsymbol c\boldsymbol\Sigma^\dagger\boldsymbol\Sigma
	=
	\boldsymbol c,
	\end{equation*}
	and $\operatorname{Cov}(G,\boldsymbol Z)=0$.  Since $(G,\boldsymbol Z)$ is
	jointly Gaussian, zero covariance implies independence.
	
	Finally,
	\begin{equation*}
	\begin{aligned}
		\operatorname{Var}(G)
		&=
		\operatorname{Var}(Z)
		-
		2\boldsymbol c\boldsymbol\Sigma^\dagger\boldsymbol c^{\mathsf T}
		+
		\boldsymbol c\boldsymbol\Sigma^\dagger
		\boldsymbol\Sigma
		\boldsymbol\Sigma^\dagger
		\boldsymbol c^{\mathsf T}                         \\
		&=
		\operatorname{Var}(Z)
		-
		\boldsymbol c\boldsymbol\Sigma^\dagger
		\boldsymbol c^{\mathsf T},
	\end{aligned}
	\end{equation*}
	because $\boldsymbol\Sigma^\dagger\boldsymbol\Sigma\boldsymbol\Sigma^\dagger
	=\boldsymbol\Sigma^\dagger$.
\end{IEEEproof}

\subsection{Gaussian Maximum and No-Spike Bounds}

We finally record a simple Gaussian maximum bound used to verify the no-spike
part of admissibility.

\begin{lemma}[Gaussian maximum bound]
	\label{lem:gaussian_max_appendix}
	Let $\boldsymbol g_N=(g_{1,N},\ldots,g_{N,N})^{\mathsf T}$ be a centered
	Gaussian vector, not necessarily with independent entries.  Suppose that
	\begin{equation*}
	\max_{1\le j\le N}\operatorname{Var}(g_{j,N})\le C
	\end{equation*}
	uniformly in $N$.  Then
	\begin{equation*}
	\frac{\|\boldsymbol g_N\|_\infty}{\sqrt N}
	\overset{p}{\longrightarrow}0.
	\end{equation*}
	If, in addition,
	\begin{equation*}
	\frac1N\sum_{j=1}^N\operatorname{Var}(g_{j,N})=O(1),
	\end{equation*}
	then
	\begin{equation*}
	\frac1N\|\boldsymbol g_N\|^2=O_p(1).
	\end{equation*}
\end{lemma}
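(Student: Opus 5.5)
The plan is to handle the two claims separately. The first, the no-spike bound, is a union bound with a Gaussian tail estimate. The second is a first-moment computation followed by Markov's inequality. Correlations between the entries of \(\boldsymbol g_N\) play no role: both arguments use only the marginal laws of the coordinates \(g_{j,N}\), together with linearity of expectation and subadditivity of probability.

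For the maximum bound, fix \(\epsilon>0\). Each \(g_{j,N}\) is centered Gaussian with variance at most \(C\), so the standard tail bound gives
\begin{equation*}
\mathbb P\bigl(|g_{j,N}|>\epsilon\sqrt N\bigr)\le 2\exp\!\left(-\frac{\epsilon^2N}{2C}\right).
\end{equation*}
If \(\operatorname{Var}(g_{j,N})=0\), the probability is zero and the bound holds trivially. A union bound over the \(N\) coordinates then gives
\begin{equation*}
\mathbb P\bigl(\|\boldsymbol g_N\|_\infty>\epsilon\sqrt N\bigr)\le 2N\exp\!\left(-\frac{\epsilon^2N}{2C}\right)\longrightarrow0.
\end{equation*}
Since \(\epsilon\) is arbitrary, this proves \(\|\boldsymbol g_N\|_\infty/\sqrt N\overset{p}{\longrightarrow}0\). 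The bound is in fact far stronger than needed, since \(\|\boldsymbol g_N\|_\infty=O_p(\sqrt{\log N})\). An alternative route is a second-moment bound on the maximum, but the union bound is the most direct.

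For the energy bound, compute the expectation directly:
\begin{equation*}
\mathbb E\Bigl[\tfrac1N\|\boldsymbol g_N\|^2\Bigr]=\tfrac1N\sum_{j=1}^N\operatorname{Var}(g_{j,N})\le K
\end{equation*}
for some constant \(K\) independent of \(N\). Markov's inequality then gives \(\mathbb P\bigl(N^{-1}\|\boldsymbol g_N\|^2>K/\eta\bigr)\le\eta\) for every \(\eta>0\), which is tightness, i.e. \(O_p(1)\). The averaged-variance hypothesis is in fact implied by the uniform bound \(\max_j\operatorname{Var}(g_{j,N})\le C\), so this part is immediate.

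There is no real obstacle here. The only point to watch is that the union bound must not rely on independence, and it does not. The lemma will later be applied conditionally on \(\mathcal F_t\) or \(\mathcal P_t\). In that setting the same argument goes through with conditional variances, provided the variance bound holds on the regularity event. Convergence in probability then follows by integrating the conditional bounds and adding the vanishing probability of the complement of the event.
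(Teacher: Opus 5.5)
Your proposal is correct and follows the paper's proof essentially verbatim: a union bound with the Gaussian tail estimate $2N\exp(-\epsilon^2N/(2C))$ for the no-spike claim, and the identity $\mathbb E[N^{-1}\|\boldsymbol g_N\|^2]=N^{-1}\sum_j\operatorname{Var}(g_{j,N})$ followed by Markov's inequality for the energy claim. Your additional observations are also correct. The averaged-variance hypothesis is already implied by the uniform variance bound. The conditional version of the argument goes through on the regularity event.
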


\begin{IEEEproof}
	For any $\epsilon>0$, the union bound and the Gaussian tail inequality give
	\begin{equation*}
	\begin{aligned}
		\mathbb P(\|\boldsymbol g_N\|_\infty>\epsilon\sqrt N)
		&\le
		\sum_{j=1}^N
		\mathbb P(|g_{j,N}|>\epsilon\sqrt N)                \\
		&\le
		2N\exp\left(-\frac{\epsilon^2N}{2C}\right),
	\end{aligned}
	\end{equation*}
	which converges to zero.  This proves the no-spike claim.
	
	For the energy bound,
	\begin{equation*}
	\mathbb E\left[\frac1N\|\boldsymbol g_N\|^2\right]
	=
	\frac1N\sum_{j=1}^N\operatorname{Var}(g_{j,N})
	=
	O(1).
	\end{equation*}
	Markov's inequality then gives
	\begin{equation*}
	\frac1N\|\boldsymbol g_N\|^2=O_p(1).
	\end{equation*}
\end{IEEEproof}

\begin{lemma}[Stability of empirical \(\PL(2)\) averages under
	\(o_p^{\ell_2}\) perturbations]
	\label{lem:pl_stability_l2_appendix}
	Let $\{\boldsymbol\xi_{j,N}\}_{j=1}^N$ and
	$\{\boldsymbol\eta_{j,N}\}_{j=1}^N$ be arrays in $\mathbb R^d$ such that
	\begin{equation*}
	\frac1N\sum_{j=1}^N
	\|\boldsymbol\xi_{j,N}-\boldsymbol\eta_{j,N}\|^2
	\overset{p}{\longrightarrow}0.
	\end{equation*}
	Assume further that, for some $\epsilon>0$,
	\begin{equation*}
	\sup_N
	\frac1N\sum_{j=1}^N
	\mathbb E\|\boldsymbol\xi_{j,N}\|^{4+\epsilon}<\infty,
	\qquad
	\sup_N
	\frac1N\sum_{j=1}^N
	\mathbb E\|\boldsymbol\eta_{j,N}\|^{4+\epsilon}<\infty.
	\end{equation*}
	Then, for every $\psi\in \PL(2)$,
	\begin{equation*}
	\frac1N\sum_{j=1}^N
	\left|
	\psi(\boldsymbol\xi_{j,N})
	-
	\psi(\boldsymbol\eta_{j,N})
	\right|
	\overset{p}{\longrightarrow}0.
	\end{equation*}
\end{lemma}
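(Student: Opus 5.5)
We prove the statement with a truncation argument that reduces it to the Lipschitz case, where the empirical $\ell_2$ smallness of $\boldsymbol\xi-\boldsymbol\eta$ can be used directly.

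\textbf{Pointwise bound and the main term.} Fix $\psi\in\PL(2)$ with constant $L$. Definition~\ref{def:pl} with $k=2$ gives, for each $j$,
\begin{equation*}
|\psi(\boldsymbol\xi_{j,N})-\psi(\boldsymbol\eta_{j,N})|
\le
L\|\boldsymbol\xi_{j,N}-\boldsymbol\eta_{j,N}\|
\bigl(1+\|\boldsymbol\xi_{j,N}\|+\|\boldsymbol\eta_{j,N}\|\bigr).
\end{equation*}
Averaging over $j$ and applying Cauchy--Schwarz to the empirical sum yields
\begin{equation*}
\frac1N\sum_{j=1}^N|\psi(\boldsymbol\xi_{j,N})-\psi(\boldsymbol\eta_{j,N})|
\le
L\Bigl(\frac1N\sum_{j}\|\boldsymbol\xi_{j,N}-\boldsymbol\eta_{j,N}\|^2\Bigr)^{1/2}
\Bigl(\frac3N\sum_{j}\bigl(1+\|\boldsymbol\xi_{j,N}\|^2+\|\boldsymbol\eta_{j,N}\|^2\bigr)\Bigr)^{1/2}.
\end{equation*}
By hypothesis, the first factor tends to zero in probability.

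\textbf{The energy factor.} It remains to show that the second factor is $O_p(1)$. The expectation of $\frac1N\sum_j\|\boldsymbol\xi_{j,N}\|^2$ is bounded by $\frac1N\sum_j(1+\mathbb E\|\boldsymbol\xi_{j,N}\|^{4+\epsilon})$, which is uniformly bounded by the averaged moment assumption; the same holds for $\boldsymbol\eta$. Markov's inequality then gives that both empirical energies are $O_p(1)$, so the product is $o_p(1)\cdot O_p(1)=o_p(1)$. This proves the claim.

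\textbf{Role of the truncation and main point to check.} A truncation in the style of Lemma~\ref{lem:pl_truncation_appendix} is not actually needed here: the $(4+\epsilon)$ moments are used only to obtain tightness of the second moments, and second moments alone would already suffice for this step. The one point that needs care is that the moment hypothesis concerns expectations while the closeness hypothesis concerns random empirical averages. The Markov step above bridges these two, since it converts a bound on the averaged expectations into tightness of the random empirical energies.
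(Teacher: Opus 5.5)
Your proof is correct and follows the paper's argument: the $\PL(2)$ increment bound, Cauchy--Schwarz over $j$, the closeness hypothesis for the difference factor, and tightness of the empirical energy factor, which you make explicit via Markov where the paper only says ``tight by the uniform moment bounds.'' The one blemish is your opening sentence announcing a truncation argument; as you note yourself, truncation is never used, so that sentence should be dropped.
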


\begin{IEEEproof}
	By the pseudo-Lipschitz property,
	\begin{equation*}
	\begin{aligned}
		&
		\frac1N\sum_{j=1}^N
		\left|
		\psi(\boldsymbol\xi_{j,N})
		-
		\psi(\boldsymbol\eta_{j,N})
		\right|                                          \\
		&\le
		L
		\left(
		\frac1N\sum_{j=1}^N
		\|\boldsymbol\xi_{j,N}-\boldsymbol\eta_{j,N}\|^2
		\right)^{1/2}
		\left(
		\frac1N\sum_{j=1}^N
		(1+\|\boldsymbol\xi_{j,N}\|+\|\boldsymbol\eta_{j,N}\|)^2
		\right)^{1/2},
	\end{aligned}
	\end{equation*}
	where we used Cauchy--Schwarz.  The first factor converges to zero in
	probability by assumption.  The second factor is tight by the uniform moment
	bounds.  The product therefore converges to zero in probability.
\end{IEEEproof}

\section{Gaussian Conditioning under EP Histories}
\label{app:gaussian_conditioning_histories}

This appendix proves the conditional Gaussian representation used in
Proposition~\ref{prop:conditional_gaussian_history}.  The proof is purely
Gaussian.  The only algorithm-specific input is that the finite EP history
imposes the two linear constraints
\begin{equation*}
\boldsymbol A\boldsymbol M_t
=
\boldsymbol w\boldsymbol 1_t^{\mathsf T}
-
\boldsymbol U_t
\end{equation*}
and
\begin{equation*}
\boldsymbol A^{\mathsf T}\boldsymbol U_t
=
\gamma_w^{-1}
\left(
\boldsymbol\Gamma_0(\boldsymbol m_0-\boldsymbol q_0),
\ldots,
\boldsymbol\Gamma_{t-1}(\boldsymbol m_{t-1}-\boldsymbol q_{t-1})
\right).
\end{equation*}
Once the finite history is fixed, these are linear observations of the
variance-profile Gaussian matrix.  Conditioning a Gaussian vector on finitely
many linear observations gives another Gaussian vector.  The covariance
contraction of this conditional Gaussian law is the source of the flat
covariance bound used later in the MDE input.

\subsection{Linear Form of the EP History}
\label{subsec:linear_history_operator}

Let
\begin{equation*}
\boldsymbol A\in\mathbb R^{M\times N}
\end{equation*}
be a variance-profile Gaussian matrix,
\begin{equation*}
A_{ij}
=
\sqrt{\frac{s_{ij}}{M}}Z_{ij},
\qquad
Z_{ij}\overset{\mathrm{i.i.d.}}{\sim}\mathcal N(0,1).
\end{equation*}
Define the vectorized matrix
\begin{equation*}
\boldsymbol a
=
\operatorname{vec}(\boldsymbol A)
\in\mathbb R^{MN}.
\end{equation*}
Then
\begin{equation*}
\boldsymbol a
\sim
\mathcal N(\boldsymbol 0,\boldsymbol\Sigma_A),
\end{equation*}
where
\begin{equation*}
\boldsymbol\Sigma_A
=
\operatorname{diag}
\left(
\frac{s_{ij}}{M}
\right)_{1\le i\le M,\ 1\le j\le N}.
\end{equation*}
The uniformly elliptic profile assumption implies
\begin{equation*}
\frac{s_{\min}}{M}\boldsymbol I_{MN}
\preceq
\boldsymbol\Sigma_A
\preceq
\frac{s_{\max}}{M}\boldsymbol I_{MN}.
\end{equation*}

Fix an iteration index \(t\).  The past signal-side and measurement-side
histories are
\begin{equation*}
\boldsymbol M_t
=
(\boldsymbol m_0,\ldots,\boldsymbol m_{t-1})
\in\mathbb R^{N\times t},
\end{equation*}
and
\begin{equation*}
\boldsymbol U_t
=
(\boldsymbol u_0,\ldots,\boldsymbol u_{t-1})
\in\mathbb R^{M\times t}.
\end{equation*}
The first history constraint is
\begin{equation}
\boldsymbol A\boldsymbol M_t
=
\boldsymbol R_t,
\qquad
\boldsymbol R_t
:=
\boldsymbol w\boldsymbol 1_t^{\mathsf T}
-
\boldsymbol U_t
\in\mathbb R^{M\times t}.
\label{eq:appendix_history_R}
\end{equation}
The second history constraint is
\begin{equation}
\boldsymbol A^{\mathsf T}\boldsymbol U_t
=
\boldsymbol B_t,
\label{eq:appendix_history_B}
\end{equation}
where
\begin{equation*}
\boldsymbol B_t
:=
\gamma_w^{-1}
\left(
\boldsymbol\Gamma_0(\boldsymbol m_0-\boldsymbol q_0),
\ldots,
\boldsymbol\Gamma_{t-1}(\boldsymbol m_{t-1}-\boldsymbol q_{t-1})
\right)
\in\mathbb R^{N\times t}.
\end{equation*}
Both mappings
\begin{equation*}
\boldsymbol A\mapsto \boldsymbol A\boldsymbol M_t,
\qquad
\boldsymbol A\mapsto \boldsymbol A^{\mathsf T}\boldsymbol U_t
\end{equation*}
are linear in \(\boldsymbol A\).  Therefore, for fixed
\(\boldsymbol M_t,\boldsymbol U_t,\boldsymbol R_t,\boldsymbol B_t\), there is
a deterministic matrix \(\boldsymbol L_t\) and a vector \(\boldsymbol b_t\) such
that the two constraints can be written compactly as
\begin{equation}
	\boldsymbol L_t\boldsymbol a
	=
	\boldsymbol b_t.
	\label{eq:linear_history_vectorized}
\end{equation}
For example, the first constraint contributes
\begin{equation*}
\operatorname{vec}(\boldsymbol A\boldsymbol M_t)
=
(\boldsymbol M_t^{\mathsf T}\otimes \boldsymbol I_M)
\operatorname{vec}(\boldsymbol A),
\end{equation*}
and the second constraint is represented similarly after applying the
commutation matrix relating \(\operatorname{vec}(\boldsymbol A)\) and
\(\operatorname{vec}(\boldsymbol A^{\mathsf T})\).  The explicit Kronecker form
is not important; what matters is the linearity of
\eqref{eq:linear_history_vectorized}.

We shall condition on a regular finite history satisfying
\eqref{eq:appendix_history_R}--\eqref{eq:appendix_history_B}.  Redundant
constraints are allowed.  They are handled by Moore--Penrose inverses below.

\subsection{Conditional Gaussian Decomposition}
\label{subsec:conditional_gaussian_decomposition}

We now compute the conditional law of \(\boldsymbol a\) given
\(\boldsymbol L_t\boldsymbol a=\boldsymbol b_t\).

\begin{lemma}[Gaussian conditioning under the EP history]
	\label{lem:appendix_gaussian_conditioning}
	Conditionally on the finite linear history
	\begin{equation*}
	\boldsymbol L_t\boldsymbol a=\boldsymbol b_t,
	\end{equation*}
	the vector \(\boldsymbol a\) admits the decomposition
	\begin{equation*}
	\boldsymbol a
	=
	\boldsymbol a_{\parallel,t}
	+
	\boldsymbol a_{\perp,t},
	\end{equation*}
	where
	\begin{equation*}
		\boldsymbol a_{\parallel,t}
		=
		\boldsymbol\Sigma_A\boldsymbol L_t^{\mathsf T}
		(\boldsymbol L_t\boldsymbol\Sigma_A\boldsymbol L_t^{\mathsf T})^\dagger
		\boldsymbol b_t
	\end{equation*}
	is the conditional mean, and
	\begin{equation*}
	\boldsymbol a_{\perp,t}
	\sim
	\mathcal N(\boldsymbol 0,\boldsymbol\Sigma_{A|t})
	\end{equation*}
	with conditional covariance
	\begin{equation}
		\boldsymbol\Sigma_{A|t}
		=
		\boldsymbol\Sigma_A
		-
		\boldsymbol\Sigma_A\boldsymbol L_t^{\mathsf T}
		(\boldsymbol L_t\boldsymbol\Sigma_A\boldsymbol L_t^{\mathsf T})^\dagger
		\boldsymbol L_t\boldsymbol\Sigma_A.
		\label{eq:conditional_covariance_formula}
	\end{equation}
	Moreover,
	\begin{equation}
		\boldsymbol 0
		\preceq
		\boldsymbol\Sigma_{A|t}
		\preceq
		\boldsymbol\Sigma_A.
		\label{eq:conditional_covariance_contraction}
	\end{equation}
	Finally,
	\begin{equation*}
	\boldsymbol L_t\boldsymbol a_{\perp,t}
	=
	\boldsymbol 0
	\quad
	\text{almost surely}.
	\end{equation*}
\end{lemma}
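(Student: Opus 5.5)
The plan is to reduce the statement to Lemma~\ref{lem:gaussian_conditioning}, applied with $\boldsymbol L=\boldsymbol L_t$ and $\boldsymbol B=\boldsymbol b_t$. First I will check that the hypotheses of that lemma are met. The vector $\boldsymbol a=\operatorname{vec}(\boldsymbol A)$ is centered Gaussian with the diagonal covariance $\boldsymbol\Sigma_A$. For a fixed regular finite history, $\boldsymbol L_t$ is a matrix with finitely many rows, namely $Mt+Nt$ rows built from $\boldsymbol M_t^{\mathsf T}\otimes\boldsymbol I_M$ and its commutation-matrix counterpart for $\boldsymbol A^{\mathsf T}\boldsymbol U_t$. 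The history is a function of $(\boldsymbol x,\boldsymbol w)$ and of $\boldsymbol A$ only through the constraints themselves; in the predictable-precision form the loadings $\boldsymbol\Gamma_s$ are $\mathcal F_t^{\rm lin}$-measurable. Hence conditioning on $\mathcal F_t^{\rm lin}$ is equivalent to conditioning the Gaussian vector $\boldsymbol a$, which is independent of $(\boldsymbol x,\boldsymbol w)$, on the event $\boldsymbol L_t\boldsymbol a=\boldsymbol b_t$ with $\boldsymbol L_t$ frozen. This freezing step is the one delicate point.

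To handle it, I will follow the standard Bayati--Montanari device. I will argue that the event $\{\boldsymbol L_t\boldsymbol a=\boldsymbol b_t\}$ determines the history recursively. Equivalently, the conditional law of $\boldsymbol a$ given $\mathcal F_t^{\rm lin}$ equals the law of $\boldsymbol a$ given the linear observation for the realized $(\boldsymbol L_t,\boldsymbol b_t)$. This is what licenses treating $\boldsymbol L_t$ as deterministic.

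Once $\boldsymbol L_t$ is frozen, the formulas follow directly from Lemma~\ref{lem:gaussian_conditioning}. The conditional mean $\boldsymbol a_{\parallel,t}$ and the covariance \eqref{eq:conditional_covariance_formula} are given by the Gaussian regression formula with the Moore--Penrose inverse. Consistency, namely that $\boldsymbol b_t$ lies in the range of $\boldsymbol L_t\boldsymbol\Sigma_A\boldsymbol L_t^{\mathsf T}$ almost surely, holds because $\boldsymbol b_t=\boldsymbol L_t\boldsymbol a$ is realized from the Gaussian vector itself. The contraction \eqref{eq:conditional_covariance_contraction} follows from the projection argument already given: writing $\boldsymbol C=\boldsymbol L_t\boldsymbol\Sigma_A^{1/2}$, we have $\boldsymbol\Sigma_{A|t}=\boldsymbol\Sigma_A^{1/2}(\boldsymbol I-\boldsymbol P_{\boldsymbol C})\boldsymbol\Sigma_A^{1/2}$.

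For the final claim, I will compute $\operatorname{Cov}(\boldsymbol L_t\boldsymbol a_{\perp,t})=\boldsymbol L_t\boldsymbol\Sigma_{A|t}\boldsymbol L_t^{\mathsf T}$. Using the projection form above, this equals $\boldsymbol C(\boldsymbol I-\boldsymbol P_{\boldsymbol C})\boldsymbol C^{\mathsf T}$. Since $\boldsymbol P_{\boldsymbol C}$ projects onto the row space of $\boldsymbol C$, we have $\boldsymbol C\boldsymbol P_{\boldsymbol C}=\boldsymbol C$, so this covariance is $\boldsymbol 0$. A centered Gaussian vector with zero covariance vanishes almost surely, so $\boldsymbol L_t\boldsymbol a_{\perp,t}=\boldsymbol 0$ almost surely.

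As a consistency check, $\boldsymbol L_t\boldsymbol a_{\parallel,t}=\boldsymbol L_t\boldsymbol\Sigma_A\boldsymbol L_t^{\mathsf T}(\boldsymbol L_t\boldsymbol\Sigma_A\boldsymbol L_t^{\mathsf T})^\dagger\boldsymbol b_t=\boldsymbol b_t$, using the range property of $\boldsymbol b_t$. So the decomposition reproduces the constraints exactly.

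The main obstacle is the freezing step. Everything else is finite-dimensional linear algebra.
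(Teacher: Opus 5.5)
Your Gaussian-conditioning core is the same as the paper's proof, and it is correct for a fixed matrix $\boldsymbol L_t$. You use regression with the Moore--Penrose inverse, the contraction $\boldsymbol\Sigma_{A|t}=\boldsymbol\Sigma_A^{1/2}(\boldsymbol I-\boldsymbol P_{\boldsymbol C})\boldsymbol\Sigma_A^{1/2}$, and $\boldsymbol L_t\boldsymbol\Sigma_{A|t}\boldsymbol L_t^{\mathsf T}=\boldsymbol 0$ for the last claim; your check $\boldsymbol L_t\boldsymbol a_{\parallel,t}=\boldsymbol b_t$ is a harmless extra. The paper proves the lemma in exactly this form, with $\boldsymbol L_t$ regarded as deterministic, and makes no freezing argument. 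The problem is the freezing step you add and call the main obstacle: the Bayati--Montanari device does not transfer. In AMP, each new observation $\boldsymbol A\boldsymbol q_s$ or $\boldsymbol A^{\mathsf T}\boldsymbol m_s$ has a coefficient vector measurable with respect to the past, so one can condition one linear functional at a time. Here $\boldsymbol m_s=(\gamma_w\boldsymbol A^{\mathsf T}\boldsymbol A+\boldsymbol\Gamma_s)^{-1}(\gamma_w\boldsymbol A^{\mathsf T}\boldsymbol w+\boldsymbol\Gamma_s\boldsymbol q_s)$ is a resolvent functional of $\boldsymbol A$ produced at the same step. So the history does not depend on $\boldsymbol A$ ``only through the constraints''. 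What is true is weaker: by uniqueness of the normal equations, the level set of the history map is exactly the affine set $\{\boldsymbol L_t\boldsymbol a=\boldsymbol b_t\}$. Equal level sets do not give equal conditional laws (the Borel--Kolmogorov phenomenon), so your ``equivalently'' does not follow.

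In fact it fails already for $t=1$. Write $\boldsymbol z=(\boldsymbol u_0;\boldsymbol m_0/\sqrt{\gamma_w})=\boldsymbol K_0(\boldsymbol A)^{-1}\boldsymbol\beta_0$, where $\boldsymbol K_0$ is the block linearization of Proposition~\ref{prop:bounded_deformation_mde} and $\boldsymbol\beta_0=(\boldsymbol w;-\boldsymbol\Gamma_0\boldsymbol q_0/\sqrt{\gamma_w})$. Use the change of variables $\delta(\boldsymbol z-\boldsymbol K_0^{-1}\boldsymbol\beta_0)=|\det\boldsymbol K_0|\,\delta(\boldsymbol K_0\boldsymbol z-\boldsymbol\beta_0)$, whose last factor is affine in $\boldsymbol A$ for fixed $\boldsymbol z$. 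It shows that the conditional density of $\boldsymbol a$ on the fiber is proportional to $\phi(\boldsymbol a)\det(\gamma_w\boldsymbol A^{\mathsf T}\boldsymbol A+\boldsymbol\Gamma_0)$, with $\phi$ the $\mathcal N(\boldsymbol 0,\boldsymbol\Sigma_A)$ density. A coarea computation gives the same answer; the one redundancy $\boldsymbol u_0^{\mathsf T}\boldsymbol A\boldsymbol m_0$ shared by the two constraints contributes only a $\boldsymbol z$-dependent constant. The determinant is not constant on the fiber: scaling the free component $\boldsymbol A_{\perp}$ keeps you on the fiber and makes it grow polynomially. So the law of $\boldsymbol a$ given $\mathcal F_t^{\rm lin}$ is a non-Gaussian tilt of the Gaussian in the lemma. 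The paper's main text makes the same identification, without proof, when it applies the lemma conditionally on $\mathcal F_t$. For the lemma as stated, drop the freezing step and keep $\boldsymbol L_t$ deterministic, as the paper does. A version conditional on the EP history would need a separate argument controlling this determinant tilt, or a different conditioning scheme.
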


\begin{IEEEproof}
	Since \(\boldsymbol a\) is Gaussian and
	\(\boldsymbol L_t\boldsymbol a\) is a linear transform of
	\(\boldsymbol a\), the pair
	\begin{equation*}
	(\boldsymbol a,\boldsymbol L_t\boldsymbol a)
	\end{equation*}
	is jointly Gaussian.  The covariance matrices are
	\begin{equation*}
	\operatorname{Cov}(\boldsymbol a,\boldsymbol L_t\boldsymbol a)
	=
	\boldsymbol\Sigma_A\boldsymbol L_t^{\mathsf T},
	\end{equation*}
	and
	\begin{equation*}
	\operatorname{Cov}(\boldsymbol L_t\boldsymbol a)
	=
	\boldsymbol L_t\boldsymbol\Sigma_A\boldsymbol L_t^{\mathsf T}.
	\end{equation*}
	The standard conditional Gaussian formula, with the Moore--Penrose inverse used
	to allow redundant constraints, gives the conditional mean
	\begin{equation*}
	\mathbb E[
	\boldsymbol a
	\mid
	\boldsymbol L_t\boldsymbol a=\boldsymbol b_t
	]
	=
	\boldsymbol\Sigma_A\boldsymbol L_t^{\mathsf T}
	(\boldsymbol L_t\boldsymbol\Sigma_A\boldsymbol L_t^{\mathsf T})^\dagger
	\boldsymbol b_t
	\end{equation*}
	and the conditional covariance \eqref{eq:conditional_covariance_formula}.
	This proves the Gaussian decomposition.
	
	It remains to prove the covariance contraction.  Define
	\begin{equation*}
	\boldsymbol C_t
	=
	\boldsymbol L_t\boldsymbol\Sigma_A^{1/2}.
	\end{equation*}
	Then
	\begin{equation*}
	\begin{aligned}
		\boldsymbol\Sigma_{A|t}
		&=
		\boldsymbol\Sigma_A^{1/2}
		\left[
		\boldsymbol I
		-
		\boldsymbol C_t^{\mathsf T}
		(\boldsymbol C_t\boldsymbol C_t^{\mathsf T})^\dagger
		\boldsymbol C_t
		\right]
		\boldsymbol\Sigma_A^{1/2}.
	\end{aligned}
	\end{equation*}
	The matrix
	\begin{equation*}
	\boldsymbol P_t
	=
	\boldsymbol C_t^{\mathsf T}
	(\boldsymbol C_t\boldsymbol C_t^{\mathsf T})^\dagger
	\boldsymbol C_t
	\end{equation*}
	is the orthogonal projector onto the row space of \(\boldsymbol C_t\).
	Therefore
	\begin{equation*}
	\boldsymbol 0\preceq \boldsymbol P_t\preceq \boldsymbol I,
	\qquad
	\boldsymbol 0\preceq \boldsymbol I-\boldsymbol P_t\preceq \boldsymbol I,
	\end{equation*}
	which implies
	\begin{equation*}
	\boldsymbol 0
	\preceq
	\boldsymbol\Sigma_{A|t}
	\preceq
	\boldsymbol\Sigma_A.
	\end{equation*}
	
	Finally, since the conditional law is supported on the affine subspace
	\begin{equation*}
	\{\boldsymbol a:\boldsymbol L_t\boldsymbol a=\boldsymbol b_t\},
	\end{equation*}
	the residual
	\begin{equation*}
	\boldsymbol a_{\perp,t}
	=
	\boldsymbol a-\boldsymbol a_{\parallel,t}
	\end{equation*}
	lies in the corresponding homogeneous subspace:
	\begin{equation*}
	\boldsymbol L_t\boldsymbol a_{\perp,t}=\boldsymbol 0
	\end{equation*}
	almost surely.  This can also be checked algebraically from
	\eqref{eq:conditional_covariance_formula}, since
	\begin{equation*}
	\boldsymbol L_t\boldsymbol\Sigma_{A|t}\boldsymbol L_t^{\mathsf T}
	=
	\boldsymbol 0.
	\end{equation*}
\end{IEEEproof}

Returning to matrix notation, let
\begin{equation*}
\boldsymbol A_{\parallel,t}
=
\operatorname{mat}(\boldsymbol a_{\parallel,t}),
\qquad
\boldsymbol A_{\perp,t}
=
\operatorname{mat}(\boldsymbol a_{\perp,t}),
\end{equation*}
where \(\operatorname{mat}\) is the inverse of the chosen vectorization.  Then
Lemma~\ref{lem:appendix_gaussian_conditioning} gives
\begin{equation}
	\boldsymbol A
	=
	\boldsymbol A_{\parallel,t}
	+
	\boldsymbol A_{\perp,t}.
	\label{eq:A_decomp_appendix}
\end{equation}
Furthermore, the residual matrix satisfies the homogeneous history constraints
\begin{equation*}
\boldsymbol A_{\perp,t}\boldsymbol M_t
=
\boldsymbol 0,
\qquad
\boldsymbol A_{\perp,t}^{\mathsf T}\boldsymbol U_t
=
\boldsymbol 0
\end{equation*}
almost surely under the conditional law.

\subsection{Flat Covariance of the Conditioned Residual}
\label{subsec:flat_cov_conditioned_residual}

We next prove the flat covariance estimate used in the MDE input.  The bound
is a direct consequence of the covariance contraction
\eqref{eq:conditional_covariance_contraction}.

\begin{lemma}[Flat covariance of the conditioned residual]
	\label{lem:appendix_flat_covariance}
	Let \(\boldsymbol A_{\perp,t}\) be the centered Gaussian residual in
	\eqref{eq:A_decomp_appendix}.  For any conditioning-measurable vectors
	\(\boldsymbol p\in\mathbb R^M\) and
	\(\boldsymbol q\in\mathbb R^N\),
	\begin{equation}
		\operatorname{Var}
		\left(
		\boldsymbol p^{\mathsf T}
		\boldsymbol A_{\perp,t}
		\boldsymbol q
		\,\middle|\,
		\mathcal F_t
		\right)
		\le
		\frac{s_{\max}}{M}
		\|\boldsymbol p\|^2\|\boldsymbol q\|^2.
		\label{eq:flat_covariance_Aperp}
	\end{equation}
	Consequently, if
	\begin{equation*}
	\boldsymbol X_t
	=
	\sqrt{\gamma_w}\boldsymbol A_{\perp,t},
	\end{equation*}
	then
	\begin{equation}
		\operatorname{Var}
		\left(
		\boldsymbol p^{\mathsf T}
		\boldsymbol X_t
		\boldsymbol q
		\,\middle|\,
		\mathcal F_t
		\right)
		\le
		\frac{\gamma_w s_{\max}}{M}
		\|\boldsymbol p\|^2\|\boldsymbol q\|^2.
		\label{eq:flat_covariance_X}
	\end{equation}
\end{lemma}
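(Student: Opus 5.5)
The bound reduces to the covariance contraction \eqref{eq:conditional_covariance_contraction} of Lemma~\ref{lem:appendix_gaussian_conditioning}, exactly as in Corollary~\ref{cor:flat_cov}. First, since \(\boldsymbol p\) and \(\boldsymbol q\) are \(\mathcal F_t\)-measurable, they are frozen under the conditional law. We write the bilinear form as a linear functional of the vectorized residual:
\begin{equation*}
\boldsymbol p^{\mathsf T}\boldsymbol A_{\perp,t}\boldsymbol q
=
\boldsymbol H^{\mathsf T}\boldsymbol a_{\perp,t},
\qquad
\boldsymbol H:=\operatorname{vec}(\boldsymbol p\boldsymbol q^{\mathsf T})
=\boldsymbol q\otimes\boldsymbol p,
\end{equation*}
using the same column-stacking convention as \(\boldsymbol a=\operatorname{vec}(\boldsymbol A)\). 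The conditional variance is then \(\boldsymbol H^{\mathsf T}\boldsymbol\Sigma_{A|t}\boldsymbol H\).

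Second, the Loewner bound \(\boldsymbol\Sigma_{A|t}\preceq\boldsymbol\Sigma_A\) gives \(\boldsymbol H^{\mathsf T}\boldsymbol\Sigma_{A|t}\boldsymbol H\le \boldsymbol H^{\mathsf T}\boldsymbol\Sigma_A\boldsymbol H\). Because \(\boldsymbol\Sigma_A\) is diagonal with entries \(s_{ij}/M\), the right-hand side is
\begin{equation*}
\sum_{i,j}\frac{s_{ij}}{M}p_i^2q_j^2
\le
\frac{s_{\max}}{M}\|\boldsymbol p\|^2\|\boldsymbol q\|^2 ,
\end{equation*}
which is \eqref{eq:flat_covariance_Aperp}. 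The statement for \(\boldsymbol X_t\) then follows from \(\operatorname{Var}(cY)=c^2\operatorname{Var}(Y)\) with \(c=\sqrt{\gamma_w}\), which gives the factor \(\gamma_w\) in \eqref{eq:flat_covariance_X}.

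The only point that needs care is that the conditioning field \(\mathcal F_t\) coincides, as far as the law of \(\boldsymbol A\) is concerned, with conditioning on \(\boldsymbol L_t\boldsymbol a=\boldsymbol b_t\) with \(\boldsymbol L_t,\boldsymbol b_t\) treated as fixed. This is how \(\mathcal F_t^{\rm lin}\) was defined in \eqref{eq:proof_filtration}: it excludes nonlinear resolvent functionals. Under that convention \(\boldsymbol\Sigma_{A|t}\) is the conditional covariance of \(\boldsymbol a_{\perp,t}\) given \(\mathcal F_t\), and the measurable vectors \(\boldsymbol p,\boldsymbol q\) act as deterministic. I would state this identification explicitly and then run the two-line computation above. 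No other obstacle is expected, and the constant \(s_{\max}\) is uniform in \(t\), \(M\), \(N\), as Proposition~\ref{prop:conditional_gaussian_history} requires.
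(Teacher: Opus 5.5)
Your proposal is correct and is essentially the paper's own proof. You write the bilinear form as $\boldsymbol H^{\mathsf T}\operatorname{vec}(\boldsymbol A_{\perp,t})$ with $\boldsymbol H=\operatorname{vec}(\boldsymbol p\boldsymbol q^{\mathsf T})$, bound $\boldsymbol H^{\mathsf T}\boldsymbol\Sigma_{A|t}\boldsymbol H$ by $\boldsymbol H^{\mathsf T}\boldsymbol\Sigma_A\boldsymbol H$ via \eqref{eq:conditional_covariance_contraction}, sum the diagonal entries $s_{ij}/M$, and rescale by $\gamma_w$; the paper also takes the identification of $\mathcal F_t$-conditioning with Gaussian conditioning on $\boldsymbol L_t\boldsymbol a=\boldsymbol b_t$ as given, through Lemma~\ref{lem:appendix_gaussian_conditioning} and its assertion following \eqref{eq:proof_filtration}.
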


\begin{IEEEproof}
	Let
	\begin{equation*}
	\boldsymbol h
	=
	\operatorname{vec}(\boldsymbol p\boldsymbol q^{\mathsf T}).
	\end{equation*}
	With the convention
	\begin{equation*}
	\boldsymbol p^{\mathsf T}\boldsymbol A_{\perp,t}\boldsymbol q
	=
	\boldsymbol h^{\mathsf T}\operatorname{vec}(\boldsymbol A_{\perp,t}),
	\end{equation*}
	we have
	\begin{equation*}
	\operatorname{Var}
	\left(
	\boldsymbol p^{\mathsf T}
	\boldsymbol A_{\perp,t}
	\boldsymbol q
	\,\middle|\,
	\mathcal F_t
	\right)
	=
	\boldsymbol h^{\mathsf T}
	\boldsymbol\Sigma_{A|t}
	\boldsymbol h.
	\end{equation*}
	Using
	\begin{equation*}
	\boldsymbol\Sigma_{A|t}\preceq\boldsymbol\Sigma_A,
	\end{equation*}
	we get
	\begin{equation*}
	\begin{aligned}
		\boldsymbol h^{\mathsf T}
		\boldsymbol\Sigma_{A|t}
		\boldsymbol h
		&\le
		\boldsymbol h^{\mathsf T}
		\boldsymbol\Sigma_A
		\boldsymbol h                                      \\
		&=
		\sum_{i=1}^M\sum_{j=1}^N
		\frac{s_{ij}}{M}p_i^2q_j^2                         \\
		&\le
		\frac{s_{\max}}{M}
		\left(\sum_{i=1}^M p_i^2\right)
		\left(\sum_{j=1}^N q_j^2\right)                    \\
		&=
		\frac{s_{\max}}{M}\|\boldsymbol p\|^2\|\boldsymbol q\|^2.
	\end{aligned}
	\end{equation*}
	This proves \eqref{eq:flat_covariance_Aperp}.  Multiplying the residual matrix
	by \(\sqrt{\gamma_w}\) gives \eqref{eq:flat_covariance_X}.
\end{IEEEproof}

\subsection{Conditional Entrywise Covariance and Profile Bounds}
\label{subsec:conditional_entrywise_covariance}

For later use in the covariance-kernel computation, we also record the
entrywise form of the conditional covariance.  Define
\begin{equation*}
\kappa_t(ia,kb)
=
\operatorname{Cov}
\left(
\sqrt{\gamma_w} A_{\perp,t,ia},
\sqrt{\gamma_w} A_{\perp,t,kb}
\,\middle|\,
\mathcal F_t
\right),
\qquad
1\le i,k\le M,\quad 1\le a,b\le N.
\end{equation*}
Then \(\kappa_t\) is the covariance kernel used in the conditioned Gaussian
linearization.  The covariance contraction implies, in particular,
\begin{equation*}
\operatorname{Var}
\left(
A_{\perp,t,ia}
\mid
\mathcal F_t
\right)
\le
\frac{s_{ia}}{M}
\le
\frac{s_{\max}}{M}.
\end{equation*}
Therefore
\begin{equation*}
|\kappa_t(ia,kb)|
\le
\frac{\gamma_w s_{\max}}{M}
\end{equation*}
whenever the covariance is evaluated through Cauchy--Schwarz.  More generally,
for deterministic or conditioning-measurable arrays
\(\boldsymbol R=(R_{ia})\) and \(\boldsymbol T=(T_{ia})\),
\begin{equation*}
\operatorname{Cov}
\left(
\sum_{i,a}R_{ia}\sqrt{\gamma_w}A_{\perp,t,ia},
\sum_{k,b}T_{kb}\sqrt{\gamma_w}A_{\perp,t,kb}
\,\middle|\,
\mathcal F_t
\right)
\end{equation*}
is controlled by the flat covariance estimate
\eqref{eq:flat_covariance_X} whenever the arrays are rank-one,
\(R_{ia}=p_iq_a\), \(T_{ia}=p'_iq'_a\).  This is precisely the class of
bilinear forms needed in the MDE input and in the conditioned covariance
calculation.

\subsection{Conclusion of Proposition~\ref{prop:conditional_gaussian_history}}
\label{subsec:conclude_conditional_gaussian_proposition}

Combining Lemmas~\ref{lem:appendix_gaussian_conditioning} and
\ref{lem:appendix_flat_covariance} proves
Proposition~\ref{prop:conditional_gaussian_history}.  Indeed, the conditional
decomposition
\begin{equation*}
\boldsymbol A
=
\boldsymbol A_{\parallel,t}
+
\boldsymbol A_{\perp,t}
\end{equation*}
follows from Lemma~\ref{lem:appendix_gaussian_conditioning}, with
\(\boldsymbol A_{\parallel,t}\) measurable with respect to the finite EP
history and \(\boldsymbol A_{\perp,t}\) centered Gaussian conditionally on that
history.  The flat covariance bound
\begin{equation*}
\operatorname{Var}
\left(
\boldsymbol p^{\mathsf T}
\boldsymbol A_{\perp,t}
\boldsymbol q
\,\middle|\,
\mathcal F_t
\right)
\le
\frac{C}{M}\|\boldsymbol p\|^2\|\boldsymbol q\|^2
\end{equation*}
holds with \(C=s_{\max}\).  For the scaled residual
\(\boldsymbol X_t=\sqrt{\gamma_w}\boldsymbol A_{\perp,t}\), the same estimate
holds with \(C=\gamma_w s_{\max}\).  This is the form used by the
regularized correlated-Gaussian MDE input in the main proof.

\section{Bounded Deformation and MDE Input Verification}
\label{app:bounded_deformation_mde}

This appendix proves Proposition~\ref{prop:bounded_deformation_mde}.  The
regularized correlated-Gaussian MDE input itself is stated in
Lemma~\ref{lem:regularized_mde_input}.  The purpose of the present appendix is
only to verify that the random matrix obtained after conditioning on the finite
EP history satisfies the hypotheses of that input.

The proof has two parts.  First, we show that the conditional mean
\(\boldsymbol A_{\parallel,t}\) is a bounded deformation:
\begin{equation*}
\|\boldsymbol A_{\parallel,t}\|=O_p(1).
\end{equation*}
Second, we check that the residual block matrix is centered Gaussian with flat
covariance and that the block loading is uniformly regularized.  These facts
allow Lemma~\ref{lem:regularized_mde_input} to be applied conditionally on the
EP history.

\subsection{Conditioned Block Linearization}
\label{subsec:C_conditioned_linearization}

By Appendix~\ref{app:gaussian_conditioning_histories}, conditioning on the
finite EP history gives
\begin{equation}
	\boldsymbol A
	=
	\boldsymbol A_{\parallel,t}
	+
	\boldsymbol A_{\perp,t},
	\label{eq:C_A_decomp}
\end{equation}
where
\begin{equation*}
\boldsymbol A_{\parallel,t}
=
\mathbb E[\boldsymbol A\mid\mathcal F_t]
\end{equation*}
and \(\boldsymbol A_{\perp,t}\) is centered Gaussian conditionally on
\(\mathcal F_t\).  Moreover, for all conditioning-measurable
\(\boldsymbol p\in\mathbb R^M\) and
\(\boldsymbol q\in\mathbb R^N\),
\begin{equation*}
	\operatorname{Var}
	\left(
	\boldsymbol p^{\mathsf T}
	\boldsymbol A_{\perp,t}
	\boldsymbol q
	\,\middle|\,
	\mathcal F_t
	\right)
	\le
	\frac{s_{\max}}{M}
	\|\boldsymbol p\|^2\|\boldsymbol q\|^2.
\end{equation*}

Define
\begin{equation*}
\boldsymbol X_t
=
\sqrt{\gamma_w}\boldsymbol A_{\perp,t}.
\end{equation*}
Then
\begin{equation}
	\operatorname{Var}
	\left(
	\boldsymbol p^{\mathsf T}
	\boldsymbol X_t
	\boldsymbol q
	\,\middle|\,
	\mathcal F_t
	\right)
	\le
	\frac{\gamma_w s_{\max}}{M}
	\|\boldsymbol p\|^2\|\boldsymbol q\|^2.
	\label{eq:C_X_flat}
\end{equation}
Thus the conditioned residual has the flat covariance required by the MDE
input.

For the \(t\)th linear module, the block linearization is
\begin{equation*}
	\boldsymbol K_t
	=
	\begin{pmatrix}
		\boldsymbol I_M
		&
		\sqrt{\gamma_w}\boldsymbol A
		\\
		\sqrt{\gamma_w}\boldsymbol A^{\mathsf T}
		&
		-\boldsymbol\Gamma_t
	\end{pmatrix}.
\end{equation*}
Using \eqref{eq:C_A_decomp}, write
\begin{equation*}
	\boldsymbol K_t
	=
	\boldsymbol D_t+\boldsymbol W_t,
\end{equation*}
where
\begin{equation*}
	\boldsymbol D_t
	=
	\begin{pmatrix}
		\boldsymbol I_M
		&
		\sqrt{\gamma_w}\boldsymbol A_{\parallel,t}
		\\
		\sqrt{\gamma_w}\boldsymbol A_{\parallel,t}^{\mathsf T}
		&
		-\boldsymbol\Gamma_t
	\end{pmatrix},
\end{equation*}
and
\begin{equation}
	\boldsymbol W_t
	=
	\begin{pmatrix}
		\boldsymbol 0
		&
		\boldsymbol X_t
		\\
		\boldsymbol X_t^{\mathsf T}
		&
		\boldsymbol 0
	\end{pmatrix}.
	\label{eq:C_Wt}
\end{equation}
The rest of the appendix verifies that \(\boldsymbol D_t\) is bounded and that
\(\boldsymbol W_t\) satisfies the Gaussian flatness assumptions.

\subsection{Boundedness of the Conditional Mean Deformation}
\label{subsec:C_bounded_deformation}

The only nontrivial point is to prove
\begin{equation*}
\|\boldsymbol A_{\parallel,t}\|=O_p(1).
\end{equation*}
The proof uses the fact that \(t\) is fixed.  Although the history constraints
contain \(M t+N t\) scalar equations, they are generated by finitely many
signal-side and measurement-side history directions.  After removing
asymptotically redundant directions, their normalized Gram matrices are
well-conditioned on the regularity event \(\mathcal R_t\).

We make this precise through the following representation lemma.

\begin{lemma}[Finite-history regression representation]
	\label{lem:C_regression_representation}
	On the regularity event \(\mathcal R_t\), the conditional mean
	\(\boldsymbol A_{\parallel,t}\) admits a representation of the form
	\begin{equation}
		\boldsymbol A_{\parallel,t}
		=
		\sum_{\ell=1}^{L_t}
		c_{\ell,t}
		\frac1M
		\operatorname{diag}(\boldsymbol p_{\ell,t})
		\boldsymbol S
		\operatorname{diag}(\boldsymbol q_{\ell,t}),
		\label{eq:C_profile_weighted_representation}
	\end{equation}
	where
	\begin{equation*}
	\boldsymbol S=(s_{ij})_{1\le i\le M,\,1\le j\le N}
	\end{equation*}
	is the variance-profile matrix, \(L_t=O(1)\) for fixed \(t\), and
	\begin{equation}
	c_{\ell,t}=O_p(1),
	\qquad
	\|\boldsymbol p_{\ell,t}\|_M=O_p(1),
	\qquad
	\|\boldsymbol q_{\ell,t}\|_N=O_p(1).
	\label{eq:C_representation_bounds}
	\end{equation}
	Here
	\begin{equation*}
	\|\boldsymbol p\|_M^2=\frac1M\|\boldsymbol p\|^2,
	\qquad
	\|\boldsymbol q\|_N^2=\frac1N\|\boldsymbol q\|^2.
	\end{equation*}
	The vectors \(\boldsymbol p_{\ell,t}\) are finite linear combinations of
	measurement-side history vectors such as
	\(\boldsymbol w-\boldsymbol u^s\) and \(\boldsymbol u^s\), while the vectors
	\(\boldsymbol q_{\ell,t}\) are finite linear combinations of signal-side
	history vectors such as \(\boldsymbol e_2^s\),
	\(\widehat{\boldsymbol e}_2^s\), and
	\(\boldsymbol\Gamma_s(\widehat{\boldsymbol e}_2^s-\boldsymbol e_2^s)\), for
	\(s<t\).
\end{lemma}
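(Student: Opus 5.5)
We will start from the explicit conditional-mean formula of Lemma~\ref{lem:appendix_gaussian_conditioning},
\begin{equation*}
\boldsymbol a_{\parallel,t}
=
\boldsymbol\Sigma_A\boldsymbol L_t^{\mathsf T}
(\boldsymbol L_t\boldsymbol\Sigma_A\boldsymbol L_t^{\mathsf T})^\dagger
\boldsymbol b_t,
\end{equation*}
and read off its matrix form.  The key observation is that the range of \(\boldsymbol L_t^{\mathsf T}\) consists of vectorizations of matrices of the form \(\boldsymbol Y\boldsymbol M_t^{\mathsf T}+\boldsymbol U_t\boldsymbol Z^{\mathsf T}\) with \(\boldsymbol Y\in\mathbb R^{M\times t}\), \(\boldsymbol Z\in\mathbb R^{N\times t}\).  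Because \(\boldsymbol\Sigma_A\) acts entrywise as multiplication by \(s_{ij}/M\), we get
\begin{equation*}
\boldsymbol\Sigma_A\operatorname{vec}(\boldsymbol y\boldsymbol m^{\mathsf T})
=
\operatorname{vec}\Bigl(\tfrac1M\operatorname{diag}(\boldsymbol y)\boldsymbol S\operatorname{diag}(\boldsymbol m)\Bigr).
\end{equation*}
Hence
\begin{equation*}
\boldsymbol A_{\parallel,t}
=
\frac1M\sum_{s<t}\Bigl[\operatorname{diag}(\boldsymbol y_s)\boldsymbol S\operatorname{diag}(\boldsymbol m_s)+\operatorname{diag}(\boldsymbol u_s)\boldsymbol S\operatorname{diag}(\boldsymbol z_s)\Bigr]
\end{equation*}
for some Lagrange-multiplier columns \(\boldsymbol y_s,\boldsymbol z_s\).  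This already has the form \eqref{eq:C_profile_weighted_representation} with \(L_t\le 2t\).  What remains is to show that the multipliers are expressible through the history vectors with \(O_p(1)\) coefficients and norms.

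To determine \(\boldsymbol y_s,\boldsymbol z_s\), we impose the constraints \eqref{eq:appendix_history_R}--\eqref{eq:appendix_history_B} on the ansatz.  This gives a linear system in \((\boldsymbol Y,\boldsymbol Z)\) whose operator is the restriction of \(\boldsymbol L_t\boldsymbol\Sigma_A\boldsymbol L_t^{\mathsf T}\).  Concretely,
\begin{equation*}
\frac1M\sum_s \operatorname{diag}(\boldsymbol y_s)\boldsymbol S(\boldsymbol m_s\odot\boldsymbol m_r)
+
\frac1M\sum_s(\boldsymbol u_s\odot\boldsymbol S(\boldsymbol z_s\odot\boldsymbol m_r))
=
\boldsymbol R_{t,r},
\end{equation*}
together with the symmetric equations for \(\boldsymbol B_t\).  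The diagonal blocks act coordinatewise and are of the form \(\operatorname{diag}(\boldsymbol\omega)\otimes\) a \(t\times t\) weighted Gram matrix.  For instance, the \(i\)th row block is \(\bigl(\frac1M\sum_j s_{ij}m_{r,j}m_{s,j}\bigr)_{r,s}\), which by uniform ellipticity is sandwiched between \(s_{\min}\) and \(s_{\max}\) times \(\frac NM\) times the normalized Gram matrix of \(\boldsymbol M_t\).  On \(\mathcal R_t\), after the Moore--Penrose removal of redundant directions, these Gram matrices have eigenvalues in a compact subset of \((0,\infty)\), so each diagonal block has an inverse bounded uniformly in the coordinate.  The off-diagonal coupling blocks are rank-\(O(t)\)-type bilinear forms with bounded normalized norms.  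We will therefore eliminate one family of multipliers by a Schur complement, leaving a block operator that is a bounded perturbation of a well-conditioned one.  Solving it expresses \(\boldsymbol y_s,\boldsymbol z_s\) as coordinatewise bounded reweightings of the right-hand sides \(\boldsymbol w-\boldsymbol u_s\) and \(\boldsymbol\Gamma_s(\boldsymbol m_s-\boldsymbol q_s)\), plus finitely many corrections along history directions.  This yields \eqref{eq:C_representation_bounds}, using admissibility of the history vectors.

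Strictly, the coordinatewise reweighting means that \(\boldsymbol p_{\ell,t}\) is a profile-reweighted combination rather than a literal linear combination.  We absorb this by allowing the bounded diagonal weights into \(\boldsymbol p_{\ell,t}\) and \(\boldsymbol q_{\ell,t}\); their normalized norms stay \(O_p(1)\).  The main obstacle is exactly the uniform invertibility of this coupled \(M t+N t\) system.  The coupling term between the \(\boldsymbol Y\)- and \(\boldsymbol Z\)-constraints arises because \(\boldsymbol U_t^{\mathsf T}\boldsymbol A\boldsymbol M_t\) is constrained twice, which is a consistency redundancy.  We would first try to handle it by quotienting out this \(t\times t\) redundancy using the Moore--Penrose convention, then bound the remaining Schur complement from below via the Gram tightness in \(\mathcal R_t\) and \(s_{\min}>0\).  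As a consequence, each term in \eqref{eq:C_profile_weighted_representation} has operator norm at most \(\frac{s_{\max}}{M}\|\boldsymbol p\|\|\boldsymbol q\|\,\|\boldsymbol S/s_{\max}\|\le C\|\boldsymbol p\|_M\|\boldsymbol q\|_N\), since \(\|\boldsymbol S\|\le s_{\max}\sqrt{MN}\).  This is what will be used next to conclude \(\|\boldsymbol A_{\parallel,t}\|=O_p(1)\).
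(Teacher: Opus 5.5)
\textbf{Verdict: genuine gap.} The algebraic reduction is right, but the one nontrivial estimate is not proved.

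\textbf{What you get right.} Your algebraic reduction is correct, and in one respect sharper than the paper's. The range of $\boldsymbol L_t^{\mathsf T}$ is $\{\operatorname{vec}(\boldsymbol Y\boldsymbol M_t^{\mathsf T}+\boldsymbol U_t\boldsymbol Z^{\mathsf T})\}$, so $\boldsymbol A_{\parallel,t}=\frac1M\sum_{s<t}[\operatorname{diag}(\boldsymbol y_s)\boldsymbol S\operatorname{diag}(\boldsymbol m_s)+\operatorname{diag}(\boldsymbol u_s)\boldsymbol S\operatorname{diag}(\boldsymbol z_s)]$. You rightly note that for a non-separable profile the multipliers are \emph{not} literal linear combinations of history vectors. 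Already the $t\times t$ block $(\frac1M\sum_j s_{ij}m_{r,j}m_{s,j})_{r,s}$ that you invert at row $i$ depends on $i$. The paper's proof glosses over this.

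\textbf{The gap.} The only nontrivial content of the lemma is the bound $\|\boldsymbol y_s\|_M,\|\boldsymbol z_s\|_N=O_p(1)$. You do not prove it; you leave it as a plan, and the heuristics behind the plan fail for a general $\boldsymbol S$.
\begin{itemize}
\item The coupling $\boldsymbol Z\mapsto\frac1M\sum_s\boldsymbol u_s\odot\boldsymbol S(\boldsymbol z_s\odot\boldsymbol m_r)$ is not of rank $O(t)$. It is finite-rank when $\boldsymbol S$ is, e.g.\ $\boldsymbol S=\boldsymbol 1\boldsymbol 1^{\mathsf T}$, but not for a general profile. So the solution is not ``coordinatewise reweighting plus finitely many corrections along history directions.''
\item Calling the system ``a bounded perturbation of a well-conditioned operator'' yields no lower bound. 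The diagonal blocks alone are uniformly invertible, but the coupling is of the same order. It creates the exact kernel $\{(\boldsymbol U_t\boldsymbol W,-\boldsymbol M_t\boldsymbol W^{\mathsf T}):\boldsymbol W\in\mathbb R^{t\times t}\}$.
\end{itemize}
A perturbative Schur-complement argument therefore cannot close the step as you describe it.

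\textbf{The missing idea.} Apply the ellipticity comparison you used block by block to the whole $(Mt+Nt)$-dimensional operator. This is what the paper's one-line justification (``profile-weighted Gram equivalent to the ordinary one'') points at. For every multiplier $\boldsymbol\lambda$,
\begin{equation*}
\boldsymbol\lambda^{\mathsf T}\boldsymbol L_t\boldsymbol\Sigma_A\boldsymbol L_t^{\mathsf T}\boldsymbol\lambda
=(\boldsymbol L_t^{\mathsf T}\boldsymbol\lambda)^{\mathsf T}\boldsymbol\Sigma_A(\boldsymbol L_t^{\mathsf T}\boldsymbol\lambda)
\ge\frac{s_{\min}}{M}\|\boldsymbol L_t^{\mathsf T}\boldsymbol\lambda\|^2 .
\end{equation*}
Since $\boldsymbol\Sigma_A\succ\boldsymbol 0$, the operators $\boldsymbol L_t\boldsymbol\Sigma_A\boldsymbol L_t^{\mathsf T}$ and $\boldsymbol L_t\boldsymbol L_t^{\mathsf T}$ have the same kernel $\ker\boldsymbol L_t^{\mathsf T}$. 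Hence $\lambda_{\min}^+(\boldsymbol L_t\boldsymbol\Sigma_A\boldsymbol L_t^{\mathsf T})\ge\frac{s_{\min}}{M}\lambda_{\min}^+(\boldsymbol L_t^{\mathsf T}\boldsymbol L_t)$.

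The unweighted operator is explicit: $\boldsymbol L_t^{\mathsf T}\boldsymbol L_t:\boldsymbol X\mapsto\boldsymbol U_t\boldsymbol U_t^{\mathsf T}\boldsymbol X+\boldsymbol X\boldsymbol M_t\boldsymbol M_t^{\mathsf T}$. This is a Kronecker sum, so its nonzero eigenvalues are at least $\min\{\lambda_{\min}^+(\boldsymbol U_t^{\mathsf T}\boldsymbol U_t),\lambda_{\min}^+(\boldsymbol M_t^{\mathsf T}\boldsymbol M_t)\}$. On $\mathcal R_t$, after removing asymptotically redundant directions, this is at least $c\min(M,N)$.

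Hence $\|(\boldsymbol L_t\boldsymbol\Sigma_A\boldsymbol L_t^{\mathsf T})^\dagger\|=O_p(1)$ and $\|(\boldsymbol Y,\boldsymbol Z)\|_F\le C\|\boldsymbol b_t\|=O_p(\sqrt{M+N})$. This is exactly the required normalized bound. The redundancy from the doubly constrained block $\boldsymbol U_t^{\mathsf T}\boldsymbol A\boldsymbol M_t$ is precisely this kernel and is discarded automatically by the pseudoinverse, so no Schur complement is needed. If you do eliminate $\boldsymbol Y$ first, the same inequality is what bounds your Schur complement from below away from $\{\boldsymbol Z=\boldsymbol M_t\boldsymbol W^{\mathsf T}\}$.

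\textbf{A slip in your last sentence.} $\|\boldsymbol S\|$ can be of order $\sqrt{MN}$, so the factor $\|\boldsymbol S/s_{\max}\|$ destroys the bound. Use instead the entrywise estimate $|\boldsymbol x^{\mathsf T}\operatorname{diag}(\boldsymbol p)\boldsymbol S\operatorname{diag}(\boldsymbol q)\boldsymbol y|\le s_{\max}\|\boldsymbol p\|\|\boldsymbol q\|$ for unit $\boldsymbol x,\boldsymbol y$, as the paper does.
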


\begin{IEEEproof}
	By Appendix~\ref{app:gaussian_conditioning_histories}, the conditional mean is
	\begin{equation}
		\operatorname{vec}(\boldsymbol A_{\parallel,t})
		=
		\boldsymbol\Sigma_A\boldsymbol L_t^{\mathsf T}
		(\boldsymbol L_t\boldsymbol\Sigma_A\boldsymbol L_t^{\mathsf T})^\dagger
		\boldsymbol b_t,
		\label{eq:C_conditional_mean_vectorized}
	\end{equation}
	where
	\begin{equation*}
	\boldsymbol L_t\operatorname{vec}(\boldsymbol A)=\boldsymbol b_t
	\end{equation*}
	is the vectorized form of the two history constraints
	\begin{equation}
	\boldsymbol A\boldsymbol M_t
	=
	\boldsymbol w\boldsymbol 1_t^{\mathsf T}
	-
	\boldsymbol U_t
	\label{eq:C_AM_constraint}
	\end{equation}
	and
	\begin{equation}
	\boldsymbol A^{\mathsf T}\boldsymbol U_t
	=
	\gamma_w^{-1}
	\left(
	\boldsymbol\Gamma_0(\widehat{\boldsymbol e}_2^0-\boldsymbol e_2^0),
	\ldots,
	\boldsymbol\Gamma_{t-1}
	(\widehat{\boldsymbol e}_2^{t-1}-\boldsymbol e_2^{t-1})
	\right).
	\label{eq:C_ATU_constraint}
	\end{equation}
	The exact notation of the error variables is immaterial for the present
	argument; what matters is that the right-hand sides are finite collections of
	regular measurement-side and signal-side vectors.
	
	The covariance \(\boldsymbol\Sigma_A\) is diagonal with entries \(s_{ij}/M\).
	Therefore, the covariance between an entry \(A_{ij}\) and any history linear
	form of the type
	\begin{equation*}
	\boldsymbol p^{\mathsf T}\boldsymbol A\boldsymbol q
	\end{equation*}
	equals
	\begin{equation*}
	\operatorname{Cov}
	(A_{ij},\boldsymbol p^{\mathsf T}\boldsymbol A\boldsymbol q)
	=
	\frac{s_{ij}}{M}p_iq_j.
	\end{equation*}
	Similarly, the covariance between \(A_{ij}\) and a row-image constraint
	\((\boldsymbol A\boldsymbol q)_i\) or a column-image constraint
	\((\boldsymbol A^{\mathsf T}\boldsymbol p)_j\) has the same separable
	profile-weighted form.  Expanding
	\(\boldsymbol\Sigma_A\boldsymbol L_t^{\mathsf T}\) in
	\eqref{eq:C_conditional_mean_vectorized} on a basis of the finite history
	directions therefore produces matrices whose \((i,j)\) entry is of the form
	\begin{equation*}
	\frac{s_{ij}}{M}p_iq_j,
	\end{equation*}
	where \(\boldsymbol p\) is a measurement-side history vector and
	\(\boldsymbol q\) is a signal-side history vector.  Hence each basis component
	is of the form
	\begin{equation*}
	\frac1M
	\operatorname{diag}(\boldsymbol p)
	\boldsymbol S
	\operatorname{diag}(\boldsymbol q).
	\end{equation*}
	
	It remains to control the coefficients multiplying these components.  On the
	regularity event \(\mathcal R_t\), asymptotically redundant history directions
	are removed, and the nonzero eigenvalues of the normalized profile-weighted
	history Gram matrix
	\begin{equation*}
	\boldsymbol L_t\boldsymbol\Sigma_A\boldsymbol L_t^{\mathsf T}
	\end{equation*}
	restricted to the retained subspace are bounded away from zero and infinity.
	This follows from the regular-history convention and the uniform ellipticity
	\begin{equation*}
	\frac{s_{\min}}{M}\boldsymbol I
	\preceq
	\boldsymbol\Sigma_A
	\preceq
	\frac{s_{\max}}{M}\boldsymbol I,
	\end{equation*}
	which makes the profile-weighted Gram matrices equivalent to the ordinary
	normalized history Gram matrices on the retained finite-dimensional history
	subspace.
	
	The right-hand side \(\boldsymbol b_t\) in
	\eqref{eq:C_conditional_mean_vectorized} consists of the matrices in
	\eqref{eq:C_AM_constraint}--\eqref{eq:C_ATU_constraint}.  By the regularity
	event and the clipping of the diagonal precisions,
	\begin{equation*}
	\|\boldsymbol w-\boldsymbol u^s\|_M=O_p(1),
	\qquad
	\|\boldsymbol u^s\|_M=O_p(1),
	\end{equation*}
	and
	\begin{equation*}
	\|\boldsymbol\Gamma_s
	(\widehat{\boldsymbol e}_2^s-\boldsymbol e_2^s)
	\|_N
	=
	O_p(1)
	\end{equation*}
	for all \(s<t\).  Since \(t\) is fixed, the number of retained history
	directions is finite.  Thus the coefficients generated by
	\begin{equation*}
	(\boldsymbol L_t\boldsymbol\Sigma_A\boldsymbol L_t^{\mathsf T})^\dagger
	\boldsymbol b_t
	\end{equation*}
	are \(O_p(1)\) on the retained subspace.  This proves the representation
	\eqref{eq:C_profile_weighted_representation} with
	\eqref{eq:C_representation_bounds}.
\end{IEEEproof}

We now use this representation to bound the operator norm.

\begin{lemma}[Bounded conditional mean deformation]
	\label{lem:C_A_parallel_bounded}
	On the regularity event \(\mathcal R_t\),
	\begin{equation*}
	\|\boldsymbol A_{\parallel,t}\|=O_p(1).
	\end{equation*}
\end{lemma}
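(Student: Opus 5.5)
The plan is to bound $\|\boldsymbol A_{\parallel,t}\|$ term by term using the finite representation \eqref{eq:C_profile_weighted_representation} of Lemma~\ref{lem:C_regression_representation}. Since $L_t=O(1)$ for fixed $t$ and every $c_{\ell,t}=O_p(1)$, the triangle inequality gives
\begin{equation*}
\|\boldsymbol A_{\parallel,t}\|
\le
\sum_{\ell=1}^{L_t}|c_{\ell,t}|\,
\Bigl\|\frac1M\operatorname{diag}(\boldsymbol p_{\ell,t})\boldsymbol S\operatorname{diag}(\boldsymbol q_{\ell,t})\Bigr\|.
\end{equation*}
It therefore suffices to show that each profile-weighted rank-structured block has operator norm $O_p(1)$.

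For a single block $\boldsymbol B=M^{-1}\operatorname{diag}(\boldsymbol p)\boldsymbol S\operatorname{diag}(\boldsymbol q)$, I would not use the operator norm of $\boldsymbol S$ directly. The factor $\operatorname{diag}(\boldsymbol p)$ has norm $\|\boldsymbol p\|_\infty$, and the no-spike condition gives only $\|\boldsymbol p\|_\infty=o_p(\sqrt M)$, which is too weak. Instead I would bound by the Frobenius norm:
\begin{equation*}
\|\boldsymbol B\|^2\le\|\boldsymbol B\|_F^2
=\frac1{M^2}\sum_{i,j}s_{ij}^2p_i^2q_j^2
\le\frac{s_{\max}^2}{M^2}\|\boldsymbol p\|^2\|\boldsymbol q\|^2
=s_{\max}^2\frac{N}{M}\|\boldsymbol p\|_M^2\|\boldsymbol q\|_N^2 .
\end{equation*}
Because $N/M\to\delta^{-1}$ and the normalized norms are $O_p(1)$ by \eqref{eq:C_representation_bounds}, this is $O_p(1)$. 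A product of finitely many $O_p(1)$ quantities, summed over finitely many $\ell$, remains $O_p(1)$, which proves the lemma. An equivalent route is the Schur-product bound $|\boldsymbol a^{\mathsf T}\boldsymbol B\boldsymbol b|\le s_{\max}M^{-1}\|\boldsymbol p\odot\boldsymbol a\|_1\|\boldsymbol q\odot\boldsymbol b\|_1$ followed by Cauchy--Schwarz, but the Frobenius bound is simpler.

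The real difficulty sits inside Lemma~\ref{lem:C_regression_representation}, which I take as given here. It is the uniform $O_p(1)$ control of the coefficients $c_{\ell,t}$, which requires the retained profile-weighted Gram matrix to be well conditioned on $\mathcal R_t$. The only step in the present lemma needing care is to use the normalized $\ell_2$ norms rather than sup-norms, so that the no-spike condition is never needed. I would also remark that the bound holds on $\mathcal R_t$ with probability tending to one, which is the sense in which $O_p(1)$ is asserted.
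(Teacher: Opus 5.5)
Your proposal is correct and follows the paper's proof. The paper likewise applies the triangle inequality to the representation of Lemma~\ref{lem:C_regression_representation} and bounds each block by $\|\boldsymbol B_{\ell,t}\|\le s_{\max}M^{-1}\|\boldsymbol p_{\ell,t}\|\,\|\boldsymbol q_{\ell,t}\|$; it gets this through the bilinear-form and Cauchy--Schwarz route you list as an alternative, which gives the same estimate as your Frobenius bound.
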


\begin{IEEEproof}
	It suffices to bound each term in
	\eqref{eq:C_profile_weighted_representation}.  Let
	\begin{equation*}
	\boldsymbol B_{\ell,t}
	=
	\frac1M
	\operatorname{diag}(\boldsymbol p_{\ell,t})
	\boldsymbol S
	\operatorname{diag}(\boldsymbol q_{\ell,t}).
	\end{equation*}
	For arbitrary deterministic unit vectors
	\(\boldsymbol x\in\mathbb R^M\) and
	\(\boldsymbol y\in\mathbb R^N\),
	\begin{equation*}
	\begin{aligned}
		|\boldsymbol x^{\mathsf T}\boldsymbol B_{\ell,t}\boldsymbol y|
		&=
		\left|
		\frac1M
		\sum_{i=1}^M\sum_{j=1}^N
		x_i p_{\ell,t,i}s_{ij}q_{\ell,t,j}y_j
		\right|                                                     \\
		&\le
		\frac{s_{\max}}{M}
		\sum_{i=1}^M |x_i p_{\ell,t,i}|
		\sum_{j=1}^N |q_{\ell,t,j}y_j|                              \\
		&\le
		\frac{s_{\max}}{M}
		\|\boldsymbol p_{\ell,t}\|
		\|\boldsymbol q_{\ell,t}\|.
	\end{aligned}
	\end{equation*}
	Taking the supremum over unit vectors gives
	\begin{equation*}
	\|\boldsymbol B_{\ell,t}\|
	\le
	\frac{s_{\max}}{M}
	\|\boldsymbol p_{\ell,t}\|
	\|\boldsymbol q_{\ell,t}\|.
	\end{equation*}
	Using
	\begin{equation*}
	\|\boldsymbol p_{\ell,t}\|=O_p(\sqrt M),
	\qquad
	\|\boldsymbol q_{\ell,t}\|=O_p(\sqrt N),
	\end{equation*}
	and \(M/N\to\delta\in(0,\infty)\), we obtain
	\begin{equation*}
	\|\boldsymbol B_{\ell,t}\|=O_p(1).
	\end{equation*}
	Since \(L_t=O(1)\) and \(c_{\ell,t}=O_p(1)\),
	\begin{equation*}
	\|\boldsymbol A_{\parallel,t}\|
	\le
	\sum_{\ell=1}^{L_t}|c_{\ell,t}|\|\boldsymbol B_{\ell,t}\|
	=
	O_p(1).
	\end{equation*}
\end{IEEEproof}

\subsection{Verification of the Regularized MDE Input}
\label{subsec:C_mde_verification}

We now verify the hypotheses of Lemma~\ref{lem:regularized_mde_input} for the
conditioned block matrix \(\boldsymbol K_t=\boldsymbol D_t+\boldsymbol W_t\).

First, by Appendix~\ref{app:gaussian_conditioning_histories},
\(\boldsymbol A_{\perp,t}\) is centered Gaussian conditionally on
\(\mathcal F_t\).  Hence \(\boldsymbol W_t\) in \eqref{eq:C_Wt} is also
centered Gaussian conditionally on \(\mathcal F_t\).

Second, the flat covariance condition follows from
\eqref{eq:C_X_flat}.  Indeed, for any conditioning-measurable vectors
\(\boldsymbol p\in\mathbb R^M\) and
\(\boldsymbol q\in\mathbb R^N\),
\begin{equation*}
\operatorname{Var}
\left(
\boldsymbol p^{\mathsf T}
\boldsymbol X_t
\boldsymbol q
\,\middle|\,
\mathcal F_t
\right)
\le
\frac{C}{M}
\|\boldsymbol p\|^2\|\boldsymbol q\|^2
\end{equation*}
with \(C=\gamma_w s_{\max}\).  This is precisely the flat bilinear-form
condition used in the regularized correlated-Gaussian MDE input.

Third, the deterministic deformation is bounded.  By
Lemma~\ref{lem:C_A_parallel_bounded} and the precision clipping
\begin{equation*}
\|\boldsymbol\Gamma_t\|\le\gamma_{\max},
\end{equation*}
we have
\begin{equation*}
\|\boldsymbol D_t\|
\le
1+\gamma_{\max}
+
2\sqrt{\gamma_w}\|\boldsymbol A_{\parallel,t}\|
=
O_p(1).
\end{equation*}
Fourth, the loading is uniformly regularized:
\begin{equation*}
\boldsymbol\Gamma_t
\succeq
\gamma_{\min}\boldsymbol I_N
\end{equation*}
by Assumption~\ref{ass:regularized_scalar_module}.  Therefore the lower-right
block of the linearization remains separated from singularity in the sense
required by Lemma~\ref{lem:regularized_mde_input}.

Finally, the deterministic insertions and source vectors used later are
admissible.  Vectors built from the regular histories, such as
\begin{equation*}
\begin{pmatrix}
	\boldsymbol w\\
	-\boldsymbol\Gamma_t\boldsymbol e_2^t/\sqrt{\gamma_w}
\end{pmatrix},
\end{equation*}
have bounded normalized energy and no spikes, because
\(\boldsymbol w\) is Gaussian, \(\boldsymbol e_2^t\) is signal-side
admissible, and \(\boldsymbol\Gamma_t\) is uniformly bounded.  Insertions such
as
\begin{equation*}
\begin{pmatrix}
	\boldsymbol S_j & \boldsymbol 0\\
	\boldsymbol 0 & \boldsymbol 0
\end{pmatrix}
\end{equation*}
are uniformly bounded because \(\|\boldsymbol S_j\|\le s_{\max}\).

All hypotheses of Lemma~\ref{lem:regularized_mde_input} are therefore
satisfied conditionally on \(\mathcal F_t\).  Consequently, the one-resolvent
and two-resolvent MDE deterministic equivalents invoked in
Appendix~\ref{app:mde_responses} apply to the EP-conditioned linearization
\(\boldsymbol K_t\).

\subsection{Conclusion}
\label{subsec:C_conclusion}

The preceding arguments prove Proposition~\ref{prop:bounded_deformation_mde}.
Indeed, the conditional mean deformation satisfies
\begin{equation*}
\|\boldsymbol A_{\parallel,t}\|=O_p(1),
\end{equation*}
the residual matrix is centered correlated Gaussian with flat covariance, and
the block loading is uniformly regularized.  Therefore the conditioned
linearization
\begin{equation*}
\boldsymbol K_t
=
\begin{pmatrix}
	\boldsymbol I_M
	&
	\sqrt{\gamma_w}\boldsymbol A
	\\
	\sqrt{\gamma_w}\boldsymbol A^{\mathsf T}
	&
	-\boldsymbol\Gamma_t
\end{pmatrix}
\end{equation*}
falls within the regularized correlated-Gaussian MDE class of
Lemma~\ref{lem:regularized_mde_input}.  No additional random-matrix theorem is
used beyond that input.

\section{MDE Responses for the Linear Variance and Residual Covariance}
\label{app:mde_responses}

This appendix derives the deterministic equivalents \(T_{t,j}\) and
\(\Theta_j^{r,s}\) used in the state variables of the main theorem.  The
regularized correlated-Gaussian MDE input has already been stated in
Lemma~\ref{lem:regularized_mde_input}.  Therefore, we do not prove an MDE local
law here.  Instead, we verify how the MDE input is applied to the block
linearization generated by the EP linear module.

There are two objects to identify.  The first is the retained diagonal
variance of the linear Gaussian belief,
\begin{equation*}
[\boldsymbol Q_t]_{jj},
\qquad
\boldsymbol Q_t
=
\left(
\gamma_w\boldsymbol A^{\mathsf T}\boldsymbol A
+
\boldsymbol\Gamma_t
\right)^{-1}.
\end{equation*}
The second is the conditioned column-covariance response associated with the
Schur residual at coordinate \(j\).  In the absence of history conditioning this
response reduces to the familiar profile-weighted residual covariance
\begin{equation*}
\frac1M
(\boldsymbol u^r)^{\mathsf T}
\boldsymbol S_j
\boldsymbol u^s,
\qquad
\boldsymbol S_j
=
\operatorname{diag}(s_{1j},\ldots,s_{Mj}).
\end{equation*}
Under the conditioned history, the same quantity is represented by the
conditional column covariance operator and is denoted by
\(\Theta_j^{r,s}\).  The retained variance follows from the lower-right block
of the MDE solution, whereas \(\Theta_j^{r,s}\) follows from the two-resolvent
response of the same conditioned MDE.

Throughout this appendix, \(r,s,t\) and \(j\) are fixed finite indices.  The
corresponding empirical averaged versions follow whenever
Lemma~\ref{lem:regularized_mde_input} is invoked in its averaged admissible
quadratic-form form.  This distinction is important: the fixed-coordinate
bounds below follow directly from the pointwise MDE input, whereas an
\(\ell_2\)-diagonal statement requires the averaged diagonal version of the
same input.

\subsection{Block Linearization and the Linear-Module Covariance}
\label{subsec:D_block_linearization}

For a fixed iteration \(t\), define the block linearization
\begin{equation*}
	\boldsymbol K_t
	=
	\begin{pmatrix}
		\boldsymbol I_M
		&
		\sqrt{\gamma_w}\boldsymbol A
		\\
		\sqrt{\gamma_w}\boldsymbol A^{\mathsf T}
		&
		-\boldsymbol\Gamma_t
	\end{pmatrix},
	\qquad
	\boldsymbol G_t=\boldsymbol K_t^{-1}.
\end{equation*}
The lower-right Schur complement of \(\boldsymbol K_t\) is
\begin{equation*}
-\boldsymbol\Gamma_t
-
\gamma_w\boldsymbol A^{\mathsf T}\boldsymbol A
=
-
\left(
\gamma_w\boldsymbol A^{\mathsf T}\boldsymbol A
+
\boldsymbol\Gamma_t
\right).
\end{equation*}
Consequently, the lower-right block of \(\boldsymbol G_t\) is
\begin{equation}
	[\boldsymbol G_t]_{22}
	=
	-
	\left(
	\gamma_w\boldsymbol A^{\mathsf T}\boldsymbol A
	+
	\boldsymbol\Gamma_t
	\right)^{-1}
	=
	-\boldsymbol Q_t.
	\label{eq:D_lower_right_block}
\end{equation}

Let
\begin{equation*}
\boldsymbol e_{M+j}
=
\begin{pmatrix}
	\boldsymbol 0_M\\
	\boldsymbol e_j
\end{pmatrix}
\in\mathbb R^{M+N},
\end{equation*}
where \(\boldsymbol e_j\in\mathbb R^N\) is the \(j\)th canonical vector.  From
\eqref{eq:D_lower_right_block},
\begin{equation}
	[\boldsymbol Q_t]_{jj}
	=
	-
	\boldsymbol e_{M+j}^{\mathsf T}
	\boldsymbol G_t
	\boldsymbol e_{M+j}.
	\label{eq:D_Q_diag_from_G}
\end{equation}

Let \(\boldsymbol M_t\) denote the solution of the MDE associated with the
conditioned linearization \(\boldsymbol K_t\).  Define
\begin{equation}
	T_{t,j}
	=
	-
	\boldsymbol e_{M+j}^{\mathsf T}
	\boldsymbol M_t
	\boldsymbol e_{M+j}.
	\label{eq:D_T_def}
\end{equation}
By Lemma~\ref{lem:regularized_mde_input}, applied to the deterministic vectors
\(\boldsymbol e_{M+j}\), we have
\begin{equation*}
\boldsymbol e_{M+j}^{\mathsf T}
(\boldsymbol G_t-\boldsymbol M_t)
\boldsymbol e_{M+j}
=
O_p(N^{-1/2}).
\end{equation*}
Combining this with \eqref{eq:D_Q_diag_from_G} and
\eqref{eq:D_T_def} gives
\begin{equation}
	[\boldsymbol Q_t]_{jj}
	=
	T_{t,j}
	+
	O_p(N^{-1/2}).
	\label{eq:D_Q_diag_DE}
\end{equation}

If the averaged diagonal version of Lemma~\ref{lem:regularized_mde_input} is
used, then the same argument yields
\begin{equation}
	\frac1N\sum_{j=1}^N
	\left|
	[\boldsymbol Q_t]_{jj}
	-
	T_{t,j}
	\right|^2
	\overset{p}{\longrightarrow}0.
	\label{eq:D_Q_diag_l2_DE}
\end{equation}
In the main proof, \eqref{eq:D_Q_diag_DE} is the pointwise statement, while
\eqref{eq:D_Q_diag_l2_DE} is used whenever an empirical diagonal replacement is
required.

\subsection{Resolvent Representation of the Measurement Residual}
\label{subsec:D_residual_representation}

We next express the measurement residual as the upper block of the same
resolvent.  The linear-module error is
\begin{equation}
	\widehat{\boldsymbol e}_2^t
	=
	\boldsymbol Q_t
	\left(
	\gamma_w\boldsymbol A^{\mathsf T}\boldsymbol w
	+
	\boldsymbol\Gamma_t\boldsymbol e_2^t
	\right),
	\label{eq:D_ehat_def}
\end{equation}
and the measurement residual is
\begin{equation*}
	\boldsymbol u^t
	=
	\boldsymbol w-\boldsymbol A\widehat{\boldsymbol e}_2^t .
\end{equation*}
Define the lifted source vector
\begin{equation}
	\boldsymbol b_t
	=
	\begin{pmatrix}
		\boldsymbol w\\
		-\boldsymbol\Gamma_t\boldsymbol e_2^t/\sqrt{\gamma_w}
	\end{pmatrix}
	\in\mathbb R^{M+N}.
	\label{eq:D_bt_def}
\end{equation}
We claim that
\begin{equation}
	\begin{pmatrix}
		\boldsymbol u^t\\
		\widehat{\boldsymbol e}_2^t/\sqrt{\gamma_w}
	\end{pmatrix}
	=
	\boldsymbol G_t\boldsymbol b_t.
	\label{eq:D_residual_resolvent}
\end{equation}

To verify this identity, multiply the left-hand side by
\(\boldsymbol K_t\).  The upper block is
\begin{equation*}
\boldsymbol u^t
+
\sqrt{\gamma_w}\boldsymbol A
\left(
\widehat{\boldsymbol e}_2^t/\sqrt{\gamma_w}
\right)
=
\boldsymbol u^t+\boldsymbol A\widehat{\boldsymbol e}_2^t
=
\boldsymbol w.
\end{equation*}
For the lower block, the normal equation associated with
\eqref{eq:D_ehat_def} gives
\begin{equation*}
\left(
\gamma_w\boldsymbol A^{\mathsf T}\boldsymbol A
+
\boldsymbol\Gamma_t
\right)
\widehat{\boldsymbol e}_2^t
=
\gamma_w\boldsymbol A^{\mathsf T}\boldsymbol w
+
\boldsymbol\Gamma_t\boldsymbol e_2^t.
\end{equation*}
Equivalently,
\begin{equation*}
\gamma_w\boldsymbol A^{\mathsf T}
(\boldsymbol w-\boldsymbol A\widehat{\boldsymbol e}_2^t)
=
\boldsymbol\Gamma_t
(\widehat{\boldsymbol e}_2^t-\boldsymbol e_2^t).
\end{equation*}
Using \(\boldsymbol u^t=\boldsymbol w-\boldsymbol A\widehat{\boldsymbol e}_2^t\),
we obtain
\begin{equation*}
\sqrt{\gamma_w}\boldsymbol A^{\mathsf T}\boldsymbol u^t
-
\boldsymbol\Gamma_t
\left(
\widehat{\boldsymbol e}_2^t/\sqrt{\gamma_w}
\right)
=
-
\boldsymbol\Gamma_t\boldsymbol e_2^t/\sqrt{\gamma_w}.
\end{equation*}
Thus
\begin{equation*}
\boldsymbol K_t
\begin{pmatrix}
	\boldsymbol u^t\\
	\widehat{\boldsymbol e}_2^t/\sqrt{\gamma_w}
\end{pmatrix}
=
\begin{pmatrix}
	\boldsymbol w\\
	-\boldsymbol\Gamma_t\boldsymbol e_2^t/\sqrt{\gamma_w}
\end{pmatrix}
=
\boldsymbol b_t,
\end{equation*}
which proves \eqref{eq:D_residual_resolvent}.

\subsection{Two-Resolvent Response for the Conditioned Column Covariance}
\label{subsec:D_two_resolvent_response}

The covariance kernel in Appendix~\ref{app:covariance_kernel} is computed
under the actual conditioned history.  Consequently, the relevant insertion is
not necessarily the raw profile matrix \(M^{-1}\boldsymbol S_j\), but the
conditional covariance operator of the unexposed part of the \(j\)th column.
We denote this operator by
\begin{equation*}
    \boldsymbol\Xi_{j|r,s},
\end{equation*}
where the indices \((r,s)\) indicate the two Schur residual times for which the
covariance is evaluated.  It is an \(M\times M\) positive semidefinite matrix
satisfying the contraction bound
\begin{equation}
    \boldsymbol 0
    \preceq
    \boldsymbol\Xi_{j|r,s}
    \preceq
    \frac1M\boldsymbol S_j,
    \qquad
    \boldsymbol S_j=\operatorname{diag}(s_{1j},\ldots,s_{Mj}).
    \label{eq:D_Xi_contraction}
\end{equation}
When there is no conditioning beyond the trivial one,
\(\boldsymbol\Xi_{j|r,s}=M^{-1}\boldsymbol S_j\).

Define the lifted conditioned covariance insertion
\begin{equation*}
    \boldsymbol C_{j|r,s}^{\mathrm{cond}}
    =
    \begin{pmatrix}
        \boldsymbol\Xi_{j|r,s} & \boldsymbol 0\\
        \boldsymbol 0 & \boldsymbol 0
    \end{pmatrix}.
\end{equation*}
Since this insertion acts only on the measurement block,
\eqref{eq:D_residual_resolvent} gives
\begin{equation}
    (\boldsymbol u^r)^{\mathsf T}
    \boldsymbol\Xi_{j|r,s}
    \boldsymbol u^s
    =
    \boldsymbol b_r^{\mathsf T}
    \boldsymbol G_r
    \boldsymbol C_{j|r,s}^{\mathrm{cond}}
    \boldsymbol G_s
    \boldsymbol b_s.
    \label{eq:D_conditioned_cov_resolvent}
\end{equation}

Let
\begin{equation*}
    \mathcal L_{r,s}[\boldsymbol C_{j|r,s}^{\mathrm{cond}}]
\end{equation*}
be the deterministic two-resolvent response associated with the conditioned
insertion \(\boldsymbol C_{j|r,s}^{\mathrm{cond}}\), as defined by
Lemma~\ref{lem:regularized_mde_input}.  Namely,
\begin{equation*}
    \mathcal L_{r,s}[\boldsymbol C_{j|r,s}^{\mathrm{cond}}]
    =
    \boldsymbol M_r
    \boldsymbol C_{j|r,s}^{\mathrm{cond}}
    \boldsymbol M_s
    +
    \boldsymbol M_r
    \mathcal S_{r,s}
    \!\left[
        \mathcal L_{r,s}[\boldsymbol C_{j|r,s}^{\mathrm{cond}}]
    \right]
    \boldsymbol M_s,
\end{equation*}
where \(\boldsymbol M_r,\boldsymbol M_s\) are the MDE solutions associated with
\(\boldsymbol K_r,\boldsymbol K_s\), and \(\mathcal S_{r,s}\) is the covariance
operator of the corresponding conditioned Gaussian pair.

We define
\begin{equation}
    \Theta_j^{r,s}
    =
    \boldsymbol b_r^{\mathsf T}
    \mathcal L_{r,s}[\boldsymbol C_{j|r,s}^{\mathrm{cond}}]
    \boldsymbol b_s.
    \label{eq:D_Theta_def}
\end{equation}
The insertion \(\boldsymbol C_{j|r,s}^{\mathrm{cond}}\) has operator norm
\(O(M^{-1})\) by \eqref{eq:D_Xi_contraction}.  Hence the two-resolvent part of
Lemma~\ref{lem:regularized_mde_input}, applied to this scaled insertion, yields
\begin{equation}
    \boldsymbol b_r^{\mathsf T}
    \left\{
        \boldsymbol G_r
        \boldsymbol C_{j|r,s}^{\mathrm{cond}}
        \boldsymbol G_s
        -
        \mathcal L_{r,s}[\boldsymbol C_{j|r,s}^{\mathrm{cond}}]
    \right\}
    \boldsymbol b_s
    =
    O_p(N^{-1/2}).
    \label{eq:D_two_resolvent_conditioned_MDE}
\end{equation}
Combining \eqref{eq:D_conditioned_cov_resolvent},
\eqref{eq:D_Theta_def}, and \eqref{eq:D_two_resolvent_conditioned_MDE}, we obtain
\begin{equation}
    (\boldsymbol u^r)^{\mathsf T}
    \boldsymbol\Xi_{j|r,s}
    \boldsymbol u^s
    =
    \Theta_j^{r,s}
    +
    O_p(N^{-1/2}).
    \label{eq:D_Theta_DE}
\end{equation}

In the unconditioned special case
\(\boldsymbol\Xi_{j|r,s}=M^{-1}\boldsymbol S_j\),
\eqref{eq:D_Theta_DE} becomes
\begin{equation*}
    \frac1M
    (\boldsymbol u^r)^{\mathsf T}
    \boldsymbol S_j
    \boldsymbol u^s
    =
    \Theta_j^{r,s}
    +
    O_p(N^{-1/2}).
\end{equation*}
Thus the notation \(\Theta_j^{r,s}\) always refers to the deterministic
conditioned MDE response.  This convention is used in the covariance kernel
\begin{equation*}
    \zeta_j^{r,s}
    =
    \gamma_w^2 T_{r,j}T_{s,j}\Theta_j^{r,s}.
\end{equation*}

\subsection{Admissibility of Sources and Insertions}
\label{subsec:D_admissibility}

It remains to justify that the vectors and insertions used above are legitimate
inputs for Lemma~\ref{lem:regularized_mde_input}.  By Appendix~\ref{app:bounded_deformation_mde},
the conditioned linearizations \(\boldsymbol K_t\) satisfy the assumptions of
the regularized MDE input: the centered Gaussian residual has flat covariance,
the finite-rank deformation is bounded, and the diagonal loading is uniformly
regularized.

The source vector \(\boldsymbol b_t\) in \eqref{eq:D_bt_def} is admissible.
Indeed, the measurement block \(\boldsymbol w\) satisfies
\begin{equation*}
\frac1M\|\boldsymbol w\|^2=O_p(1),
\qquad
\frac{\|\boldsymbol w\|_\infty}{\sqrt M}\to0
\end{equation*}
by the Gaussian noise assumption.  The signal block satisfies
\begin{equation*}
\frac1N
\|\boldsymbol\Gamma_t\boldsymbol e_2^t\|^2
\le
\gamma_{\max}^2
\frac1N\|\boldsymbol e_2^t\|^2
=
O_p(1),
\end{equation*}
and
\begin{equation*}
\frac{
	\|\boldsymbol\Gamma_t\boldsymbol e_2^t\|_\infty
}{\sqrt N}
\le
\gamma_{\max}
\frac{\|\boldsymbol e_2^t\|_\infty}{\sqrt N}
\to0,
\end{equation*}
because \(\boldsymbol e_2^t\) is signal-side admissible on the regularity
event.  Thus \(\boldsymbol b_t\) is an admissible lifted vector.

The conditioned insertion \(\boldsymbol C_{j|r,s}^{\mathrm{cond}}\) is uniformly
admissible after its natural single-column scaling.  Indeed,
\begin{equation*}
    \|\boldsymbol C_{j|r,s}^{\mathrm{cond}}\|
    =
    \|\boldsymbol\Xi_{j|r,s}\|
    \le
    \frac{s_{\max}}{M}.
\end{equation*}
It preserves measurement-side admissibility and has precisely the scaling of a
single-column covariance insertion.  In the unconditioned case this insertion
is \(M^{-1}\operatorname{diag}(\boldsymbol S_j,\boldsymbol 0)\).  Therefore it
is a legitimate insertion for the two-resolvent MDE input.

Combining the admissibility verification with the derivations in
Sections~\ref{subsec:D_block_linearization}--\ref{subsec:D_two_resolvent_response}
proves the MDE response statements
\begin{equation*}
[\boldsymbol Q_t]_{jj}
=
T_{t,j}+O_p(N^{-1/2})
\end{equation*}
and
\begin{equation*}
(\boldsymbol u^r)^{\mathsf T}
\boldsymbol\Xi_{j|r,s}
\boldsymbol u^s
=
\Theta_j^{r,s}
+
O_p(N^{-1/2}),
\end{equation*}
for every fixed \(j,r,s,t\).  If the conditioning is trivial, the latter
statement reduces to
\begin{equation*}
\frac1M
(\boldsymbol u^r)^{\mathsf T}
\boldsymbol S_j
\boldsymbol u^s
=
\Theta_j^{r,s}
+
O_p(N^{-1/2}).
\end{equation*}
The corresponding empirical averaged versions follow from the averaged form of
Lemma~\ref{lem:regularized_mde_input}, when that form is invoked in the main
proof.

\section{Schur Complement under the Conditioned History}
\label{app:schur_conditioned_history}

This appendix proves the Schur-kernel representation used in
Proposition~\ref{prop:schur_kernel}.  The proof is based on the actual
conditioned EP history.  No auxiliary recursion obtained by deleting a
coordinate and rerunning the algorithm is introduced.  We freeze the incoming
message \(\boldsymbol q_t\) and the diagonal loading
\(\boldsymbol\Gamma_t\) generated by the actual history, remove only the
\(j\)th coordinate in the current linear system, and apply the Schur
complement.  This produces an exact one-step column-cavity identity.  The MDE
response then replaces the finite-dimensional diagonal gain.

Throughout the appendix, \(t\) is fixed.  We write
\begin{equation*}
    \|\boldsymbol a\|_N^2=\frac1N\|\boldsymbol a\|^2,
    \qquad
    \|\boldsymbol b\|_M^2=\frac1M\|\boldsymbol b\|^2 .
\end{equation*}
The diagonal covariance of the linear module is
\begin{equation*}
    \boldsymbol C_t
    =
    \left(
        \gamma_w\boldsymbol A^{\mathsf T}\boldsymbol A
        +
        \boldsymbol\Gamma_t
    \right)^{-1},
\end{equation*}
and the linear-module output error is
\begin{equation}
    \boldsymbol m_t
    =
    \boldsymbol C_t
    \left(
        \gamma_w\boldsymbol A^{\mathsf T}\boldsymbol w
        +
        \boldsymbol\Gamma_t\boldsymbol q_t
    \right).
    \label{eq:E_linear_module_output}
\end{equation}
The full measurement residual is
\begin{equation*}
    \boldsymbol u_t
    =
    \boldsymbol w-
    \boldsymbol A\boldsymbol m_t.
\end{equation*}

\subsection{One-Step Column Cavity}
\label{subsec:E_column_cavity}

Fix a coordinate \(j\).  Partition
\begin{equation*}
    \boldsymbol A=(\boldsymbol a_j,\boldsymbol A_{-j}),
\end{equation*}
where \(\boldsymbol a_j\in\mathbb R^M\) is the \(j\)th column and
\(\boldsymbol A_{-j}\in\mathbb R^{M\times(N-1)}\) is the matrix with that
column removed.  Similarly, write
\begin{equation*}
    \boldsymbol q_t=(q_{t,j},\boldsymbol q_{t,-j}),
    \qquad
    \boldsymbol m_t=(m_{t,j},\boldsymbol m_{t,-j}),
\end{equation*}
and
\begin{equation*}
    \boldsymbol\Gamma_t
    =
    \begin{pmatrix}
        \gamma_{t,j} & 0\\
        0 & \boldsymbol\Gamma_{t,-j}
    \end{pmatrix}.
\end{equation*}
Define the one-step column-cavity covariance
\begin{equation*}
    \overline{\boldsymbol C}_{t,-j}^{(j)}
    =
    \left(
        \gamma_w\boldsymbol A_{-j}^{\mathsf T}\boldsymbol A_{-j}
        +
        \boldsymbol\Gamma_{t,-j}
    \right)^{-1}.
\end{equation*}
The corresponding column-cavity estimate is
\begin{equation*}
    \overline{\boldsymbol m}_{t,-j}^{(j)}
    =
    \overline{\boldsymbol C}_{t,-j}^{(j)}
    \left(
        \gamma_w\boldsymbol A_{-j}^{\mathsf T}\boldsymbol w
        +
        \boldsymbol\Gamma_{t,-j}\boldsymbol q_{t,-j}
    \right),
\end{equation*}
and the column-cavity measurement residual is
\begin{equation*}
    \overline{\boldsymbol u}_{t}^{(j)}
    =
    \boldsymbol w
    -
    \boldsymbol A_{-j}\overline{\boldsymbol m}_{t,-j}^{(j)}.
\end{equation*}
The construction above is not a modified EP trajectory.  The quantities
\(\boldsymbol q_t\) and \(\boldsymbol\Gamma_t\) are the actual incoming message
and precision at iteration \(t\); only the current linear solve is expressed
through a coordinate Schur complement.

\subsection{Exact Schur Complement Identity}
\label{subsec:E_exact_schur}

Let
\begin{equation*}
    \boldsymbol B_{t,-j}
    =
    \gamma_w\boldsymbol A_{-j}^{\mathsf T}\boldsymbol A_{-j}
    +
    \boldsymbol\Gamma_{t,-j},
\end{equation*}
\begin{equation*}
    \boldsymbol b_{t,j}
    =
    \gamma_w\boldsymbol A_{-j}^{\mathsf T}\boldsymbol a_j,
    \qquad
    \alpha_{t,j}^{\rm sch}
    =
    \gamma_w\boldsymbol a_j^{\mathsf T}\boldsymbol a_j
    +
    \gamma_{t,j}.
\end{equation*}
The normal matrix
\(\gamma_w\boldsymbol A^{\mathsf T}\boldsymbol A+\boldsymbol\Gamma_t\) has
the block form
\begin{equation*}
    \begin{pmatrix}
        \alpha_{t,j}^{\rm sch}
        &
        \boldsymbol b_{t,j}^{\mathsf T}
        \\
        \boldsymbol b_{t,j}
        &
        \boldsymbol B_{t,-j}
    \end{pmatrix}.
\end{equation*}
The right-hand side of \eqref{eq:E_linear_module_output} is partitioned as
\begin{equation*}
    y_{t,j}
    =
    \gamma_w\boldsymbol a_j^{\mathsf T}\boldsymbol w
    +
    \gamma_{t,j}q_{t,j},
\end{equation*}
and
\begin{equation*}
    \boldsymbol y_{t,-j}
    =
    \gamma_w\boldsymbol A_{-j}^{\mathsf T}\boldsymbol w
    +
    \boldsymbol\Gamma_{t,-j}\boldsymbol q_{t,-j}.
\end{equation*}
By definition,
\begin{equation*}
    \overline{\boldsymbol m}_{t,-j}^{(j)}
    =
    \boldsymbol B_{t,-j}^{-1}\boldsymbol y_{t,-j}.
\end{equation*}
The Schur complement formula gives
\begin{equation}
    m_{t,j}
    =
    d_{t,j}
    \left(
        y_{t,j}
        -
        \boldsymbol b_{t,j}^{\mathsf T}
        \boldsymbol B_{t,-j}^{-1}
        \boldsymbol y_{t,-j}
    \right),
    \label{eq:E_schur_first}
\end{equation}
where
\begin{equation*}
    d_{t,j}
    =
    [\boldsymbol C_t]_{jj}
    =
    \left(
        \alpha_{t,j}^{\rm sch}
        -
        \boldsymbol b_{t,j}^{\mathsf T}
        \boldsymbol B_{t,-j}^{-1}
        \boldsymbol b_{t,j}
    \right)^{-1}.
\end{equation*}
Substituting the definitions into \eqref{eq:E_schur_first},
\begin{equation*}
\begin{aligned}
    &y_{t,j}
    -
    \boldsymbol b_{t,j}^{\mathsf T}
    \boldsymbol B_{t,-j}^{-1}
    \boldsymbol y_{t,-j}                                      \\
    &\quad=
    \gamma_w\boldsymbol a_j^{\mathsf T}\boldsymbol w
    +
    \gamma_{t,j}q_{t,j}
    -
    \gamma_w\boldsymbol a_j^{\mathsf T}
    \boldsymbol A_{-j}\overline{\boldsymbol m}_{t,-j}^{(j)}    \\
    &\quad=
    \gamma_{t,j}q_{t,j}
    +
    \gamma_w\boldsymbol a_j^{\mathsf T}
    \overline{\boldsymbol u}_{t}^{(j)} .
\end{aligned}
\end{equation*}
Therefore,
\begin{equation}
    m_{t,j}
    =
    d_{t,j}
    \left(
        \gamma_{t,j}q_{t,j}
        +
        \gamma_w
        \boldsymbol a_j^{\mathsf T}
        \overline{\boldsymbol u}_{t}^{(j)}
    \right).
    \label{eq:E_exact_schur_identity}
\end{equation}
This identity is exact and uses only the current linear system.

\subsection{Energy Bounds and MDE Gain Replacement}
\label{subsec:E_mde_gain_replacement}

Define the Schur drive
\begin{equation*}
    r_{t,j}^{\rm sch}
    =
    \gamma_{t,j}q_{t,j}
    +
    \gamma_w
    \boldsymbol a_j^{\mathsf T}
    \overline{\boldsymbol u}_{t}^{(j)}.
\end{equation*}
Equation~\eqref{eq:E_exact_schur_identity} states that
\(m_{t,j}=d_{t,j}r_{t,j}^{\rm sch}\).

\begin{lemma}[Bounded Schur-drive energy]
\label{lem:E_schur_drive_energy}
On the regularity event,
\begin{equation*}
    \frac1N\sum_{j=1}^N |r_{t,j}^{\rm sch}|^2=O_p(1).
\end{equation*}
\end{lemma}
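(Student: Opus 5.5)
The plan is to avoid analysing the drive $\gamma_w\boldsymbol a_j^{\mathsf T}\overline{\boldsymbol u}_t^{(j)}$ directly. Because $\boldsymbol q_t$ and $\boldsymbol\Gamma_t$ depend on $\boldsymbol A$ through the history, $\boldsymbol a_j$ is not independent of $\overline{\boldsymbol u}_t^{(j)}$, so a direct variance computation is not available. Instead I would use the exact identity \eqref{eq:E_exact_schur_identity}, $m_{t,j}=d_{t,j}r_{t,j}^{\rm sch}$, and invert it: $r_{t,j}^{\rm sch}=m_{t,j}/d_{t,j}$. The claim then reduces to two estimates: a uniform lower bound on the diagonal gains $d_{t,j}$, and a bound on the normalized energy of $\boldsymbol m_t$. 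Both are unconditional $O_p$ statements on the full probability space, so no conditioning subtlety enters.

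\emph{Step 1 (lower bound on $d_{t,j}$).} From the Schur formula, $d_{t,j}^{-1}=\alpha_{t,j}^{\rm sch}-\boldsymbol b_{t,j}^{\mathsf T}\boldsymbol B_{t,-j}^{-1}\boldsymbol b_{t,j}$. Since $\boldsymbol B_{t,-j}\succ 0$, this gives $d_{t,j}^{-1}\le \alpha_{t,j}^{\rm sch}=\gamma_w\|\boldsymbol a_j\|^2+\gamma_{t,j}\le \gamma_w\|\boldsymbol a_j\|^2+\gamma_{\max}$. It remains to control $\max_j\|\boldsymbol a_j\|^2$. Here $\|\boldsymbol a_j\|^2=M^{-1}\sum_i s_{ij}Z_{ij}^2$ is a weighted chi-square with weights at most $s_{\max}/M$. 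A Laurent--Massart-type tail bound gives $\mathbb P(\|\boldsymbol a_j\|^2>2s_{\max})\le e^{-cM}$, and a union bound over $j\le N$ with $M/N\to\delta$ yields $\max_j\|\boldsymbol a_j\|^2\le 2s_{\max}$ with probability tending to one. Hence $|r_{t,j}^{\rm sch}|\le(2\gamma_w s_{\max}+\gamma_{\max})|m_{t,j}|$ simultaneously for all $j$ on that event, and therefore $\frac1N\sum_j|r_{t,j}^{\rm sch}|^2\le C\|\boldsymbol m_t\|_N^2$.

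\emph{Step 2 (energy of $\boldsymbol m_t$).} From \eqref{eq:E_linear_module_output}, and since $\boldsymbol C_t\preceq\boldsymbol\Gamma_t^{-1}\preceq\gamma_{\min}^{-1}\boldsymbol I_N$,
\[
\|\boldsymbol m_t\|\le\gamma_{\min}^{-1}\bigl(\gamma_w\|\boldsymbol A\|\,\|\boldsymbol w\|+\gamma_{\max}\|\boldsymbol q_t\|\bigr).
\]
Dividing by $\sqrt N$, the three ingredients are controlled as follows. First, $\|\boldsymbol q_t\|_N=O_p(1)$ by admissibility on the regularity event $\mathcal R_t$. Second, $\|\boldsymbol w\|/\sqrt N=O_p(1)$ by Assumption~\ref{ass:noise} and $M/N\to\delta$. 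Third, $\|\boldsymbol A\|=O_p(1)$. For this last bound I would compare entrywise with $\sqrt{s_{\max}/M}$ times an i.i.d. Gaussian matrix. Concretely, an $\varepsilon$-net argument applies to the Gaussian process $\boldsymbol x^{\mathsf T}\boldsymbol A\boldsymbol y$, whose variance is at most $s_{\max}/M$ by the same computation as Corollary~\ref{cor:flat_cov}. Gaussian concentration then gives $\|\boldsymbol A\|\le C(1+\sqrt{N/M})\sqrt{s_{\max}}$ with high probability. Combining the three bounds gives $\|\boldsymbol m_t\|_N=O_p(1)$, and together with Step 1 this proves the lemma.

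The main obstacle is the uniform-in-$j$ control of the gains in Step 1. Bounding each $d_{t,j}^{-1}$ only in probability would not survive averaging over $N$ coordinates, which is why I take the route through $\max_j\|\boldsymbol a_j\|^2$ and exponential chi-square tails. Everything else uses only crude operator-norm bounds that are insensitive to the history dependence.
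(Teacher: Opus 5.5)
Your proposal is correct and follows the paper's argument: you invert $m_{t,j}=d_{t,j}r_{t,j}^{\rm sch}$, bound $\|\boldsymbol m_t\|_N$ through $\|\boldsymbol C_t\|\le\gamma_{\min}^{-1}$ and $\|\boldsymbol A\|=O_p(1)$, and bound $d_{t,j}^{-1}$ uniformly in $j$. The one difference is the uniform gain bound, which the paper gets directly from $d_{t,j}\ge\lambda_{\min}(\boldsymbol C_t)\ge(\gamma_w\|\boldsymbol A\|^2+\gamma_{\max})^{-1}$; since $\max_j\|\boldsymbol a_j\|\le\|\boldsymbol A\|$ and your Step 2 already gives $\|\boldsymbol A\|=O_p(1)$, your chi-square union bound is redundant and the ``main obstacle'' you identify is not a real one.
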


\begin{IEEEproof}
The regularized diagonal loading gives
\(\|\boldsymbol C_t\|\le \gamma_{\min}^{-1}\).  Since
\(\boldsymbol A\) has bounded operator norm on the regularity event,
\(\boldsymbol w\) has bounded normalized energy, and
\(\boldsymbol q_t\) is signal-side admissible, \eqref{eq:E_linear_module_output}
implies
\begin{equation*}
    \|\boldsymbol m_t\|_N=O_p(1).
\end{equation*}
Moreover, on the same event, the diagonal entries \(d_{t,j}\) are bounded away
from zero and infinity.  Indeed,
\(\lambda_{\max}(\boldsymbol C_t)\le\gamma_{\min}^{-1}\), while
\begin{equation*}
    \lambda_{\min}(\boldsymbol C_t)
    \ge
    \left(
        \gamma_w\|\boldsymbol A\|^2+
        \gamma_{\max}
    \right)^{-1}
\end{equation*}
on a probability-one limiting event.  Hence
\(|r_{t,j}^{\rm sch}|=|d_{t,j}^{-1}m_{t,j}|\le C|m_{t,j}|\), which proves the
claim.
\end{IEEEproof}

Appendix~\ref{app:mde_responses} gives the averaged diagonal MDE response
\begin{equation*}
    \frac1N\sum_{j=1}^N |d_{t,j}-T_{t,j}|^2
    \overset{p}{\longrightarrow}0.
\end{equation*}
Combining \eqref{eq:E_exact_schur_identity} with
\(d_{t,j}=T_{t,j}+(d_{t,j}-T_{t,j})\), we get
\begin{equation*}
    m_{t,j}
    =
    T_{t,j} r_{t,j}^{\rm sch}
    +
    \Delta_{t,j},
\end{equation*}
where
\begin{equation*}
    \Delta_{t,j}
    =
    (d_{t,j}-T_{t,j})r_{t,j}^{\rm sch}.
\end{equation*}
By Lemma~\ref{lem:E_schur_drive_energy} and the product-stability lemma in
Appendix~\ref{app:aux_probability},
\begin{equation*}
    \frac1N\sum_{j=1}^N |\Delta_{t,j}|^2
    \overset{p}{\longrightarrow}0.
\end{equation*}

Define the history-conditioned Schur residual
\begin{equation*}
    \mathcal Z_{t,j}
    =
    \gamma_w T_{t,j}
    \boldsymbol a_j^{\mathsf T}
    \overline{\boldsymbol u}_{t}^{(j)}.
\end{equation*}
Then
\begin{equation*}
    m_{t,j}
    =
    \gamma_{t,j}T_{t,j} q_{t,j}
    +
    \mathcal Z_{t,j}
    +
    \Delta_{t,j},
\end{equation*}
with \(\|\boldsymbol\Delta_t\|_N\to0\) in probability.

\subsection{Interpretation}
\label{subsec:E_interpretation}

The term \(\mathcal Z_{t,j}\) is not a fresh innovation at this stage.  The
analysis is conditioned on the predictable linear history, and the column
\(\boldsymbol a_j\) has generally been constrained by that history.  Thus
\(\mathcal Z_{t,j}\) is a coordinate of a history-conditioned Gaussian process.
Appendix~\ref{app:covariance_kernel} identifies its covariance kernel, and
Appendix~\ref{app:cavity_memory_decomposition} applies Gaussian regression to
separate its predictable memory component from its fresh innovation.  This
proves Proposition~\ref{prop:schur_kernel}.

\section{Column-Wise Gaussian Kernel under the Conditioned History}
\label{app:covariance_kernel}

This appendix proves the coordinate-wise Gaussian-kernel statement used in
Proposition~\ref{prop:covariance_empirical_law}.  The proof has two logically
separate parts.  First, after the predictable linear history has been fixed, we
reveal only finitely many scalar projections of one column of the Gaussian
matrix.  This gives a genuine finite-dimensional Gaussian vector; no deleted
EP recursion is introduced.  Second, the covariance of this vector is identified
through the conditioned two-resolvent MDE response from
Appendix~\ref{app:mde_responses}.  Keeping these two steps separate is useful:
the Gaussianity is a consequence of finite-dimensional Gaussian conditioning,
whereas the numerical value of the covariance is a random-matrix deterministic
equivalent.

Appendix~\ref{app:schur_conditioned_history} established the Schur expansion
\begin{equation*}
    m_{t,j}
    =
    \gamma_{t,j}T_{t,j} q_{t,j}
    +
    \mathcal Z_{t,j}
    +
    \Delta_{t,j},
    \qquad
    \|\boldsymbol\Delta_t\|_N
    \overset{p}{\longrightarrow}0,
\end{equation*}
where
\begin{equation}
    \mathcal Z_{t,j}
    =
    \gamma_w T_{t,j}
    \boldsymbol a_j^{\mathsf T}
    \overline{\boldsymbol u}_{t}^{(j)}.
    \label{eq:F_Z_schur_def}
\end{equation}
Here \(\overline{\boldsymbol u}_{t}^{(j)}\) is the one-step column-cavity
residual of the current linear system.  It is a Schur-complement object, not a
new trajectory of the algorithm.  The variable \(\mathcal Z_{t,j}\) will be
shown to be a coordinate of a Gaussian history.  It should not be confused with
the fresh innovation obtained later by Gaussian regression.

\subsection{Column-Wise Revealed History}
\label{subsec:F_column_revealed_history}

Fix a coordinate \(j\) and a finite time \(t\).  We use
\(\mathcal F_t^{\rm lin}\) for the predictable linear-history filtration in
\eqref{eq:proof_filtration}.  Conditionally on this history,
Appendix~\ref{app:gaussian_conditioning_histories} gives
\begin{equation*}
    \boldsymbol A
    =
    \boldsymbol A_{\parallel,t}
    +
    \boldsymbol A_{\perp,t},
\end{equation*}
where \(\boldsymbol A_{\parallel,t}\) is history-measurable and
\(\boldsymbol A_{\perp,t}\) is centered Gaussian.  Let
\(\boldsymbol a_j\) denote the \(j\)th column of \(\boldsymbol A\).  Its
conditioned residual covariance is denoted by
\begin{equation*}
    \boldsymbol\Xi_{j|t}
    =
    \operatorname{Cov}
    \bigl(
        \boldsymbol a_j-\mathbb E[\boldsymbol a_j\mid\mathcal F_t^{\rm lin}]
        \mid
        \mathcal F_t^{\rm lin}
    \bigr).
\end{equation*}
By covariance contraction,
\begin{equation*}
    \boldsymbol 0
    \preceq
    \boldsymbol\Xi_{j|t}
    \preceq
    M^{-1}\boldsymbol S_j,
    \qquad
    \boldsymbol S_j=\operatorname{diag}(s_{1j},\ldots,s_{Mj}).
\end{equation*}

The following revealed history is used only inside the proof.  It records the
information that has already been exposed from the \(j\)th column before the
next scalar Schur projection is evaluated:
\begin{equation*}
    \mathcal G_{0,j}
    =
    \sigma(\mathcal F_0^{\rm lin},\boldsymbol A_{-j}),
    \qquad
    \mathcal G_{r+1,j}
    =
    \sigma(\mathcal G_{r,j},\mathcal Z_{r,j}),
    \quad 0\le r<t .
\end{equation*}
Equivalently, \(\mathcal G_{r,j}\) contains the predictable linear history up to
time \(r\), the matrix with the \(j\)th column removed, and the previously
revealed scalar projections
\(\mathcal Z_{0,j},\ldots,\mathcal Z_{r-1,j}\).  It does not contain a deleted
EP trajectory.

\begin{lemma}[Measurability and column-wise Gaussianity]
\label{lem:F_measurable_gaussian_column}
For every fixed \(r\le t\), the one-step column-cavity vector
\(\overline{\boldsymbol u}_{r}^{(j)}\) is \(\mathcal G_{r,j}\)-measurable.
Moreover, conditionally on \(\mathcal G_{r,j}\), the unexposed part of
\(\boldsymbol a_j\) is Gaussian.  Thus there exist a
\(\mathcal G_{r,j}\)-measurable vector \(\boldsymbol a_{j|r}\) and a positive
semidefinite matrix \(\boldsymbol\Xi_{j|r}\) such that
\begin{equation}
    \boldsymbol a_j
    =
    \boldsymbol a_{j|r}
    +
    \boldsymbol\xi_{j|r},
    \qquad
    \boldsymbol\xi_{j|r}\mid\mathcal G_{r,j}
    \sim
    \mathcal N(\boldsymbol 0,\boldsymbol\Xi_{j|r}).
    \label{eq:F_column_seq_cond}
\end{equation}
Consequently, for fixed \(t\), the vector
\begin{equation}
    \boldsymbol{\mathcal Z}_{0:t,j}
    =
    (\mathcal Z_{0,j},\ldots,\mathcal Z_{t,j})^{\mathsf T}
    \label{eq:F_Z_history_vector}
\end{equation}
is finite-dimensional Gaussian under the column-wise revealed conditioning.
\end{lemma}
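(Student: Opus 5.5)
The plan is to treat the three claims in order: measurability, the column-wise Gaussian decomposition, and joint Gaussianity of the finite Schur history. All three rest on the structure of the one-step column cavity in Appendix~\ref{app:schur_conditioned_history}. By construction,
\[
\overline{\boldsymbol u}_{r}^{(j)}=\boldsymbol w-\boldsymbol A_{-j}\overline{\boldsymbol m}_{r,-j}^{(j)},\qquad
\overline{\boldsymbol m}_{r,-j}^{(j)}=(\gamma_w\boldsymbol A_{-j}^{\mathsf T}\boldsymbol A_{-j}+\boldsymbol\Gamma_{r,-j})^{-1}(\gamma_w\boldsymbol A_{-j}^{\mathsf T}\boldsymbol w+\boldsymbol\Gamma_{r,-j}\boldsymbol q_{r,-j}).
\]
This is a deterministic function of $\boldsymbol w$, $\boldsymbol A_{-j}$, $\boldsymbol\Gamma_r$ and $\boldsymbol q_{r,-j}$. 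The noise $\boldsymbol w$ and $\boldsymbol A_{-j}$ lie in $\mathcal G_{0,j}\subseteq\mathcal G_{r,j}$. The loading $\boldsymbol\Gamma_r$ is predictable in the sense of \eqref{eq:proof_filtration}. For measurability it therefore remains to show that $\boldsymbol q_{r,-j}$ is $\mathcal G_{r,j}$-measurable. Here I would take the revealed field to include the predictable linear history up to time $r$ (as the informal description following the definition of $\mathcal G_{r,j}$ says). I would then argue by induction on $r$ that the past iterates are functions of $\mathcal G_{r,j}$. At each earlier step $s<r$, the exact Schur identity \eqref{eq:E_exact_schur_identity} expresses $m_{s,j}$ through $d_{s,j}$, $q_{s,j}$ and $\boldsymbol a_j^{\mathsf T}\overline{\boldsymbol u}_s^{(j)}$. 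In the predictable recursion, $d_{s,j}$ is replaced by $T_{s,j}$, so $\boldsymbol a_j^{\mathsf T}\overline{\boldsymbol u}_s^{(j)}$ is recovered from the revealed $\mathcal Z_{s,j}$. The remaining coordinates $m_{s,-j}$ are likewise functions of $\boldsymbol A_{-j}$, $\boldsymbol w$ and the scalar $m_{s,j}$ through the block normal equations. Finally, the prior module and the update \eqref{eq:q_next_error_recursion} act componentwise and use only predictable precisions, which closes the induction.

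For the Gaussian decomposition \eqref{eq:F_column_seq_cond}, I would use that under the unconditional law $\boldsymbol a_j\sim\mathcal N(\boldsymbol 0,M^{-1}\boldsymbol S_j)$ is independent of $(\boldsymbol A_{-j},\boldsymbol x,\boldsymbol w)$. The predictable linear constraints \eqref{eq:proof_history_AM}--\eqref{eq:proof_history_ATU}, restricted to column $j$, are linear in $\boldsymbol a_j$ once the other quantities are fixed. The revealed scalars $\mathcal Z_{s,j}=\gamma_wT_{s,j}\boldsymbol a_j^{\mathsf T}\overline{\boldsymbol u}_s^{(j)}$ are also linear in $\boldsymbol a_j$, with coefficient vectors that are $\mathcal G_{s,j}$-measurable by the first step. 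Hence $\mathcal G_{r,j}$ reveals $\boldsymbol a_j$ only through finitely many linear functionals whose coefficients are themselves measurable with respect to earlier stages. Applying Lemma~\ref{lem:gaussian_conditioning} sequentially then gives \eqref{eq:F_column_seq_cond}, with $\boldsymbol a_{j|r}$ the regression mean and $\boldsymbol\Xi_{j|r}$ the contracted covariance. The Loewner bound \eqref{eq:D_Xi_contraction} follows in the same way as Lemma~\ref{lem:appendix_gaussian_conditioning}.

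Joint Gaussianity of \eqref{eq:F_Z_history_vector} then follows by a chain argument. Conditionally on $\mathcal G_{r,j}$, the next coordinate
\[
\mathcal Z_{r,j}=\gamma_wT_{r,j}\bigl(\boldsymbol a_{j|r}^{\mathsf T}\overline{\boldsymbol u}_r^{(j)}+\boldsymbol\xi_{j|r}^{\mathsf T}\overline{\boldsymbol u}_r^{(j)}\bigr)
\]
is Gaussian. Its mean is a linear function of the earlier $\mathcal Z_{<r,j}$, because $\boldsymbol a_{j|r}$ is a linear regression on the revealed linear functionals. Its variance $\gamma_w^2T_{r,j}^2(\overline{\boldsymbol u}_r^{(j)})^{\mathsf T}\boldsymbol\Xi_{j|r}\overline{\boldsymbol u}_r^{(j)}$ does not depend on the revealed values. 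Iterating over $r$ gives a Gaussian vector conditionally on $\mathcal G_{0,j}$ together with the predictable environment.

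The main obstacle is the measurability step. One has to check that the nonlinear prior maps and the coupling through $\boldsymbol q_t$ reintroduce $\boldsymbol a_j$ only through the revealed scalars. One also has to check that the coefficient vectors $\overline{\boldsymbol u}_s^{(j)}$ do not themselves depend on the unexposed part of $\boldsymbol a_j$, since otherwise the conditional mean would fail to be linear. I would first verify both points carefully at $r=1$ and then propagate them by the induction described above.
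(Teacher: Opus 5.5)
Your outline follows the paper's route: first measurability of the coefficient vector \(\overline{\boldsymbol u}_r^{(j)}\), then sequential Gaussian conditioning on scalar projections. You are right that measurability of \(\boldsymbol q_{r,-j}\) is the crux; the paper's proof only asserts that \(\boldsymbol q_r\) lies in \(\mathcal G_{r,j}\). The induction you propose to establish it, however, fails already at \(r=1\).

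First, the predictable recursion of Section~\ref{subsec:actual_predictable_recursions} keeps the exact estimate \(\boldsymbol m_s=\boldsymbol C_s(\gamma_w\boldsymbol A^{\mathsf T}\boldsymbol w+\boldsymbol\Gamma_s\boldsymbol q_s)\). \(T_{s,j}\) enters only through \(\bar\pi_{s,j}\) and the cavity formula. So the gain in \eqref{eq:E_exact_schur_identity} is the actual
\[
d_{s,j}^{-1}=\gamma_{s,j}+\gamma_w\boldsymbol a_j^{\mathsf T}\bigl(\boldsymbol I_M-\gamma_w\boldsymbol A_{-j}\boldsymbol B_{s,-j}^{-1}\boldsymbol A_{-j}^{\mathsf T}\bigr)\boldsymbol a_j ,
\]
a quadratic form in \(\boldsymbol a_j\) that \(\mathcal Z_{s,j}\) does not reveal. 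The replacement \(d_{s,j}\to T_{s,j}\) holds only up to \(\Delta_{s,j}\), which is small in empirical \(\ell_2\) but not zero, so it cannot give exact measurability.

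Second, and decisively, the second block row of the normal equations gives
\[
\boldsymbol m_{s,-j}=\overline{\boldsymbol m}_{s,-j}^{(j)}-\gamma_w m_{s,j}\boldsymbol B_{s,-j}^{-1}\boldsymbol A_{-j}^{\mathsf T}\boldsymbol a_j .
\]
So \(\boldsymbol m_{s,-j}\) depends on the whole \((N-1)\)-vector \(\boldsymbol A_{-j}^{\mathsf T}\boldsymbol a_j\), not just on the scalar \(m_{s,j}\). Through the componentwise prior maps, \(\boldsymbol q_{s+1,-j}\) and hence \(\overline{\boldsymbol u}_{s+1}^{(j)}=\boldsymbol w-\boldsymbol A_{-j}\boldsymbol B_{s+1,-j}^{-1}(\gamma_w\boldsymbol A_{-j}^{\mathsf T}\boldsymbol w+\boldsymbol\Gamma_{s+1,-j}\boldsymbol q_{s+1,-j})\) depend nonlinearly on directions of \(\boldsymbol a_j\) that \(\mathcal G_{s+1,j}\) never exposes. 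Then \(\mathcal Z_{s+1,j}\) is not a linear functional of the unexposed Gaussian part, and its conditional law need not be Gaussian. The effect is not negligible either. The induced change of \(\boldsymbol q_{1,-j}\) is \(O(N^{-1/2})\) per coordinate but aligned with \(\boldsymbol B_{0,-j}^{-1}\boldsymbol A_{-j}^{\mathsf T}\boldsymbol a_j\). It therefore re-enters \(\boldsymbol a_j^{\mathsf T}\overline{\boldsymbol u}_1^{(j)}\) at order one as a quadratic form in \(\boldsymbol a_j\), i.e.\ an Onsager-type term.

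Your alternative, taking \(\mathcal F_r^{\rm lin}\subseteq\mathcal G_{r,j}\), makes measurability trivial but destroys the Gaussian structure. The paper's proof implicitly does the same when it conditions on \(\mathcal F_r^{\rm lin}\) and \(\boldsymbol A_{-j}\). The column-\(j\) part of \(\boldsymbol A\boldsymbol M_r=\boldsymbol w\boldsymbol 1_r^{\mathsf T}-\boldsymbol U_r\) reads \(m_{s,j}\boldsymbol a_j=\boldsymbol w-\boldsymbol u_s-\boldsymbol A_{-j}\boldsymbol m_{s,-j}\). That is \(M\) linear functionals per past step, not a bounded number, and they determine \(\boldsymbol a_j\) as soon as some \(m_{s,j}\ne0\). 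Then \(\boldsymbol\Xi_{j|r}=\boldsymbol 0\) for \(r\ge1\), and the lemma holds only in this degenerate sense. The kernel \(\zeta_j^{r,s}\) becomes trivial for \(\max(r,s)\ge1\), which is incompatible with the positive innovation variances of Assumption~\ref{ass:stable_memory_regression}.

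Your chain step has a further gap even if measurability is granted. Joint Gaussianity needs the coefficient vectors to be measurable with respect to \(\mathcal G_{0,j}\) (plus the deterministic environment), not merely \(\mathcal G_{r,j}\). In your own induction, \(\overline{\boldsymbol u}_{s+1}^{(j)}\) depends on \(\mathcal Z_{s,j}\) through \(m_{s,j}\) and \(\eta\). So the conditional variance \(\gamma_w^2T_{s+1,j}^2(\overline{\boldsymbol u}_{s+1}^{(j)})^{\mathsf T}\boldsymbol\Xi_{j|s+1}\overline{\boldsymbol u}_{s+1}^{(j)}\) does depend on revealed values. Gaussian conditionals with data-dependent variances do not produce a jointly Gaussian vector.

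The standard repair is a genuine leave-one-column-out trajectory. Rerun the predictable recursion without column \(j\), so that every coefficient vector is a function of \((\boldsymbol x,\boldsymbol w,\boldsymbol A_{-j})\) and hence independent of \(\boldsymbol a_j\sim\mathcal N(\boldsymbol 0,M^{-1}\boldsymbol S_j)\). The scalars \(\boldsymbol a_j^{\mathsf T}\overline{\boldsymbol u}_r^{(j)}\), \(r\le t\), are then exactly jointly Gaussian with covariance \(M^{-1}(\overline{\boldsymbol u}_r^{(j)})^{\mathsf T}\boldsymbol S_j\overline{\boldsymbol u}_s^{(j)}\). The real work then moves to comparing the actual and deleted trajectories, which is where the Onsager-type term above must be accounted for. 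The paper explicitly declines to introduce such a trajectory, so its proof does not supply this step either.
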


\begin{IEEEproof}
The vector \(\overline{\boldsymbol u}_{r}^{(j)}\) is computed from
\(\boldsymbol w\), \(\boldsymbol A_{-j}\), \(\boldsymbol q_r\), and
\(\boldsymbol\Gamma_r\) through the regularized current linear system with the
\(j\)th coordinate removed.  These quantities are contained in
\(\mathcal G_{r,j}\); hence the coefficient vector in the scalar projection
\eqref{eq:F_Z_schur_def} is fixed once \(\mathcal G_{r,j}\) is given.

The conditional law of \(\boldsymbol a_j\) given
\(\mathcal F_r^{\rm lin}\) and \(\boldsymbol A_{-j}\) is Gaussian, because the
original matrix is Gaussian and the predictable history gives only finitely many
linear observations of \(\operatorname{vec}(\boldsymbol A)\).  Passing from
\(\mathcal G_{r,j}\) to \(\mathcal G_{r+1,j}\) conditions further on the scalar
linear projection \(\mathcal Z_{r,j}\).  A Gaussian vector conditioned on a
finite number of linear projections remains Gaussian, with mean and covariance
updated by the standard Gaussian regression formula.  Iterating this argument
for \(r=0,\ldots,t\) proves \eqref{eq:F_column_seq_cond} and the joint
Gaussianity of \eqref{eq:F_Z_history_vector}.
\end{IEEEproof}

\subsection{Covariance Identification}
\label{subsec:F_covariance_identification}

The preceding lemma gives Gaussianity.  It remains to compute the covariance
seen by the Schur projections.  This computation is where the conditioned MDE
enters.

\begin{lemma}[Covariance of the column-wise Schur kernel]
\label{lem:F_covariance_identification}
For every fixed \(0\le r,s\le t\), the finite-dimensional covariance of the
Schur residual satisfies
\begin{equation}
    \widehat\zeta_{j,N}^{r,s}
    :=
    \operatorname{Cov}
    (\mathcal Z_{r,j},\mathcal Z_{s,j}\mid\mathcal G_{\max(r,s),j})
    =
    \gamma_w^2T_{r,j}T_{s,j}\Theta_j^{r,s}
    +o_p(1),
    \label{eq:F_covariance_identified}
\end{equation}
where \(\Theta_j^{r,s}\) is the conditioned two-resolvent response of
Definition~\ref{def:conditioned_two_resolvent_response}.
\end{lemma}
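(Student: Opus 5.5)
The plan is to compute the conditional covariance exactly at finite $N$ using the column-wise Gaussian representation of Lemma~\ref{lem:F_measurable_gaussian_column}, and then to replace the resulting random quadratic form by its MDE deterministic equivalent from Appendix~\ref{app:mde_responses}. Take $r\le s$, so that $\max(r,s)=s$. By \eqref{eq:F_Z_schur_def}, $\mathcal Z_{r,j}=\gamma_w T_{r,j}\boldsymbol a_j^{\mathsf T}\overline{\boldsymbol u}_r^{(j)}$. Lemma~\ref{lem:F_measurable_gaussian_column} gives that $\overline{\boldsymbol u}_r^{(j)}$ and $\overline{\boldsymbol u}_s^{(j)}$ are $\mathcal G_{s,j}$-measurable, since $\mathcal G_{r,j}\subseteq\mathcal G_{s,j}$. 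Moreover, $T_{r,j},T_{s,j}$ belong to the predictable environment. Substituting \eqref{eq:F_column_seq_cond} at level $s$, namely $\boldsymbol a_j=\boldsymbol a_{j|s}+\boldsymbol\xi_{j|s}$, the mean parts are measurable and drop out of the covariance. This yields the exact identity
\begin{equation*}
\widehat\zeta_{j,N}^{r,s}=\gamma_w^2T_{r,j}T_{s,j}\,(\overline{\boldsymbol u}_r^{(j)})^{\mathsf T}\boldsymbol\Xi_{j|s}\,\overline{\boldsymbol u}_s^{(j)} .
\end{equation*}
Here $\boldsymbol\Xi_{j|s}$ plays the role of the operator $\boldsymbol\Xi_{j|r,s}$ in Definition~\ref{def:conditioned_two_resolvent_response}. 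By \eqref{eq:D_Xi_contraction} it satisfies $\boldsymbol 0\preceq\boldsymbol\Xi_{j|s}\preceq M^{-1}\boldsymbol S_j$.

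The second step is to replace the column-cavity residuals $\overline{\boldsymbol u}^{(j)}$ by the full residuals $\boldsymbol u_r,\boldsymbol u_s$. From the Schur identity \eqref{eq:E_exact_schur_identity} and the definitions, one obtains $\boldsymbol u_t=\overline{\boldsymbol u}_t^{(j)}-\boldsymbol a_j m_{t,j}+\boldsymbol A_{-j}(\overline{\boldsymbol m}^{(j)}_{t,-j}-\boldsymbol m_{t,-j})$. The block-inverse formula shows that the last two terms combine into the rank-one correction $-\gamma_w^{-1}\,(\boldsymbol I-\gamma_w\boldsymbol A_{-j}\boldsymbol B_{t,-j}^{-1}\boldsymbol A_{-j}^{\mathsf T})\boldsymbol a_j\, m_{t,j}$ (up to normalization), whose measurement-side norm is $O_p(|m_{t,j}|)$. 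Because $\|\boldsymbol\Xi_{j|s}\|\le s_{\max}/M$, the cross terms are of order $M^{-1}\|\overline{\boldsymbol u}\|\,\|\boldsymbol a_j\|\,|m_{t,j}|=O_p(M^{-1/2}|m_{t,j}|)$. In empirical average over $j$ this is $o_p(1)$, using $\|\boldsymbol m_t\|_N=O_p(1)$ and the bounds on $d_{t,j}$ from Lemma~\ref{lem:E_schur_drive_energy}. Then Proposition~\ref{prop:two_resolvent_response}, proved via \eqref{eq:D_Theta_DE}, gives $(\boldsymbol u_r)^{\mathsf T}\boldsymbol\Xi_{j|r,s}\boldsymbol u_s=\Theta_j^{r,s}+o_p(1)$. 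Finally, the factor $\gamma_w^2T_{r,j}T_{s,j}$ is bounded because $T_{t,j}\in[(\gamma_w\|\boldsymbol A\|^2+\gamma_{\max})^{-1},\gamma_{\min}^{-1}]$, so the error remains $o_p(1)$ and \eqref{eq:F_covariance_identified} follows.

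The main obstacle is a consistency issue in the first step. $\boldsymbol\Xi_{j|s}$ is conditioned on $\boldsymbol A_{-j}$ and on the earlier revealed projections $\mathcal Z_{<s,j}$, while the MDE insertion of Appendix~\ref{app:mde_responses} was set up with $\boldsymbol G_r,\boldsymbol G_s$ built from the full matrix. The insertion must therefore be a matrix that is fixed (measurable) with respect to the conditioning used in Theorem~\ref{thm:external_mde}, and the approach is to handle this by a further decomposition of $\boldsymbol\Xi_{j|s}$. Conditioning on $\boldsymbol A_{-j}$ does not change the column covariance beyond the finite-rank history correction, since the entries are independent apart from $\boldsymbol L_t$. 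Each revealed projection $\mathcal Z_{q,j}$, $q<s$, subtracts a rank-one term $\boldsymbol\Xi\overline{\boldsymbol u}_q(\overline{\boldsymbol u}_q^{\mathsf T}\boldsymbol\Xi\overline{\boldsymbol u}_q)^{\dagger}\overline{\boldsymbol u}_q^{\mathsf T}\boldsymbol\Xi$. Thus $\boldsymbol\Xi_{j|s}=\boldsymbol\Xi_{j|t}-(\text{rank}\le s\text{ correction})$, and every quadratic form against these corrections reduces again to scalar forms $\overline{\boldsymbol u}_a^{\mathsf T}\boldsymbol\Xi_{j|t}\overline{\boldsymbol u}_b$. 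The denominators in these corrections must be controlled, and the Moore--Penrose convention of Lemma~\ref{lem:gaussian_regression_appendix} is used to absorb degenerate directions. This reduces everything to finitely many two-resolvent forms with the single bounded insertion $\operatorname{diag}(\boldsymbol\Xi_{j|t},\boldsymbol 0)$, to which Corollary~\ref{cor:two_resolvent} applies. The identification is then consistent with the definition of $\Theta_j^{r,s}$ as the conditioned response.
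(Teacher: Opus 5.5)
The core of your argument is the paper's proof: the exact finite-$N$ formula $\widehat\zeta_{j,N}^{r,s}=\gamma_w^2T_{r,j}T_{s,j}(\overline{\boldsymbol u}_r^{(j)})^{\mathsf T}\boldsymbol\Xi_{j|r,s}\overline{\boldsymbol u}_s^{(j)}$ from Lemma~\ref{lem:F_measurable_gaussian_column}, then the replacement of the column-cavity residuals by $\boldsymbol u_r,\boldsymbol u_s$ using $\|\boldsymbol\Xi\|\le s_{\max}/M$, then the identification \eqref{eq:D_Theta_DE}.

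Your explicit rank-one identity usefully fills in what the paper only calls a rank-one resolvent perturbation. Its correct form is $\boldsymbol u_t=\overline{\boldsymbol u}_t^{(j)}-(\boldsymbol I_M-\gamma_w\boldsymbol A_{-j}\boldsymbol B_{t,-j}^{-1}\boldsymbol A_{-j}^{\mathsf T})\boldsymbol a_j m_{t,j}$: there is no factor $\gamma_w^{-1}$, and the operator in parentheses has norm at most one. Combined with the no-spike bound $\|\boldsymbol m_t\|_\infty=o_p(\sqrt N)$, it gives the replacement for each fixed $j$, which is what the lemma asserts, and not only on average over $j$. The boundedness of $T_{t,j}$ should be taken from the MDE, as in \eqref{eq:H_d_T_bounds}. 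Your interval involving $\|\boldsymbol A\|$ bounds $d_{t,j}$, and $d_{t,j}$ is close to $T_{t,j}$ only in empirical $\ell_2$.

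Your last paragraph has no counterpart in the paper, which simply feeds $\boldsymbol\Xi_{j|r,s}$ into Appendix~\ref{app:mde_responses} using only the contraction bound. As written, it does not work, for three reasons.

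\begin{enumerate}
\item Suppose $\mathcal G_{s,j}$ contains the linear history together with $\boldsymbol A_{-j}$, as the paper states after defining it. Then revealing $\boldsymbol A_{-j}$ is not a finite-rank change. For $\ell<s$, row $i$ of $\boldsymbol A\boldsymbol m_\ell=\boldsymbol w-\boldsymbol u_\ell$ reads $A_{ij}m_{\ell,j}=(\boldsymbol w-\boldsymbol u_\ell)_i-\sum_{k\ne j}A_{ik}m_{\ell,k}$. So $\boldsymbol a_j$ is fully determined as soon as some $m_{\ell,j}\neq0$, and its conditional covariance collapses to $\boldsymbol 0$, not to $\boldsymbol\Xi_{j|t}$ minus a rank-$\le s$ term.
\item For $r<s$, $\mathcal Z_{r,j}$ is itself $\mathcal G_{s,j}$-measurable. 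Hence $\boldsymbol\Xi_{j|s}\overline{\boldsymbol u}_r^{(j)}=\boldsymbol 0$, and your exact identity gives $\widehat\zeta_{j,N}^{r,s}=0$. Your rank-one correction with index $r$ removes exactly this direction, so the reduction can only yield $\Theta_j^{r,s}\approx0$ off the diagonal. This degeneracy is already built into the statement's conditioning on $\mathcal G_{\max(r,s),j}$, and the paper hides it by leaving $\boldsymbol\Xi_{j|r,s}$ unspecified. Still, your claimed consistency with Definition~\ref{def:conditioned_two_resolvent_response} holds only in this degenerate sense, in which $\mu_{t,j}$ vanishes.
\item The Moore--Penrose inverse is discontinuous where rank drops. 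Passing the random regression denominators to deterministic equivalents inside $(\cdot)^\dagger$ therefore needs a uniform lower bound on the retained eigenvalues, as in Assumption~\ref{ass:stable_memory_regression}. Separately, Corollary~\ref{cor:two_resolvent} covers $\boldsymbol G\boldsymbol H\boldsymbol G$ with a single resolvent. Mixed forms in $\boldsymbol G_r$ and $\boldsymbol G_s$ need the response $\mathcal L_{r,s}$ of Appendix~\ref{app:mde_responses}, or, for $r<s$, a one-resolvent form, since $\boldsymbol u_r$ is history-measurable at time $s$.
\end{enumerate}

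Without that paragraph, your proof stands exactly where the paper's does.
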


\begin{IEEEproof}
By Lemma~\ref{lem:F_measurable_gaussian_column}, the covariance before the MDE
replacement is
\begin{equation}
    \widehat\zeta_{j,N}^{r,s}
    =
    \gamma_w^2T_{r,j}T_{s,j}
    (\overline{\boldsymbol u}_{r}^{(j)})^{\mathsf T}
    \boldsymbol\Xi_{j|r,s}
    \overline{\boldsymbol u}_{s}^{(j)},
    \label{eq:F_finite_cov_kernel}
\end{equation}
where \(\boldsymbol\Xi_{j|r,s}\) is the conditional covariance of the
unexposed part of the \(j\)th column after the scalar projections already
revealed before the pair \((r,s)\) have been removed.  This covariance is a
projection of \(\boldsymbol\Xi_{j|t}\), and therefore
\begin{equation*}
    \boldsymbol 0
    \preceq
    \boldsymbol\Xi_{j|r,s}
    \preceq
    M^{-1}\boldsymbol S_j.
\end{equation*}

The cavity residual in \eqref{eq:F_finite_cov_kernel} is a one-coordinate
Schur complement of the current regularized linear system.  The corresponding
rank-one resolvent perturbation gives, for fixed \(r\le t\),
\begin{equation}
    \|\overline{\boldsymbol u}_{r}^{(j)}-\boldsymbol u^r\|_M=o_p(1),
    \qquad
    \|\overline{\boldsymbol u}_{r}^{(j)}\|_M=O_p(1),
    \qquad
    \|\boldsymbol u^r\|_M=O_p(1).
    \label{eq:F_cavity_full_residual_stability}
\end{equation}
Since \(\|\boldsymbol\Xi_{j|r,s}\|\le C/M\), equation
\eqref{eq:F_cavity_full_residual_stability} implies
\begin{equation}
    \widehat\zeta_{j,N}^{r,s}
    =
    \gamma_w^2T_{r,j}T_{s,j}
    (\boldsymbol u^r)^{\mathsf T}
    \boldsymbol\Xi_{j|r,s}
    \boldsymbol u^s
    +o_p(1).
    \label{eq:F_cov_full_residual_form}
\end{equation}
The remaining quadratic form is exactly the conditioned column-covariance
insertion handled by the two-resolvent response of
Appendix~\ref{app:mde_responses}.  Thus
\begin{equation}
    (\boldsymbol u^r)^{\mathsf T}
    \boldsymbol\Xi_{j|r,s}
    \boldsymbol u^s
    =
    \Theta_j^{r,s}+o_p(1).
    \label{eq:F_Theta_conditioned_DE}
\end{equation}
Combining \eqref{eq:F_cov_full_residual_form} and
\eqref{eq:F_Theta_conditioned_DE} proves
\eqref{eq:F_covariance_identified}.
\end{IEEEproof}

The covariance kernel used in the main theorem is therefore
\begin{equation}
    \zeta_j^{r,s}
    =
    \gamma_w^2T_{r,j}T_{s,j}\Theta_j^{r,s}.
    \label{eq:F_zeta_def_correct}
\end{equation}
If no history conditioning is present, the conditional covariance insertion
reduces to \(\boldsymbol\Xi_{j|r,s}=M^{-1}\boldsymbol S_j\), and
\eqref{eq:F_Theta_conditioned_DE} becomes the intuitive profile-weighted
identity
\begin{equation*}
    \frac1M(\boldsymbol u^r)^{\mathsf T}\boldsymbol S_j\boldsymbol u^s
    =
    \Theta_j^{r,s}+o_p(1).
\end{equation*}
In the theorem and in the proof, however, \(\Theta_j^{r,s}\) always denotes the
conditioned MDE response.

\subsection{Gaussian Reference Process}
\label{subsec:F_gaussian_reference}

Let
\begin{equation*}
    \boldsymbol\zeta_j^{0:t,0:t}
    =
    (\zeta_j^{r,s})_{0\le r,s\le t}.
\end{equation*}
This matrix is positive semidefinite because it is the finite-dimensional limit
of covariance matrices of \(\boldsymbol{\mathcal Z}_{0:t,j}\).  Hence there is
a centered Gaussian reference vector
\begin{equation*}
    \boldsymbol{\mathcal Z}_{0:t,j}^{\rm G}
    \sim
    \mathcal N(\boldsymbol 0,\boldsymbol\zeta_j^{0:t,0:t})
\end{equation*}
conditionally on the MDE environment \(\mathcal P_t\).

For every bounded Lipschitz
\(\varphi:\mathbb R^{t+1}\to\mathbb R\), the Gaussian covariance interpolation
lemma from Appendix~\ref{app:aux_probability} gives
\begin{equation*}
    \mathbb E
    \left[
        \varphi(\boldsymbol{\mathcal Z}_{0:t,j})
        \mid
        \mathcal G_{t,j}
    \right]
    -
    \mathbb E_{\rm G,j}
    \left[
        \varphi(\boldsymbol{\mathcal Z}_{0:t,j}^{\rm G})
        \mid
        \mathcal P_t
    \right]
    =
    o_p(1).
\end{equation*}
Thus the coordinate-wise Schur residual history is asymptotically equivalent,
for fixed-dimensional bounded-Lipschitz tests, to the Gaussian reference
history with covariance kernel \eqref{eq:F_zeta_def_correct}.  This proves the
coordinate-wise Gaussian-kernel statement of
Proposition~\ref{prop:covariance_empirical_law}.  The empirical
pseudo-Lipschitz upgrade over \(j=1,\ldots,N\) is proved in
Appendix~\ref{app:empirical_gaussian_law}.

\section{Empirical Gaussian Law under the Conditioned History}
\label{app:empirical_gaussian_law}

This appendix upgrades the coordinate-wise Gaussian kernel of
Appendix~\ref{app:covariance_kernel} to the empirical pseudo-Lipschitz law in
Proposition~\ref{prop:covariance_empirical_law}.  The argument is independent
of the Schur-complement algebra.  Once the coordinate kernels have been
represented by a joint Gaussian array, it remains only to show that the
cross-covariances between different coordinates are negligible on average.
The weakly dependent Gaussian empirical law in
Theorem~\ref{thm:weak_gaussian_pl} then applies.

Throughout the appendix, the horizon \(t\le T\) is fixed.  We write
\begin{equation*}
    \|\boldsymbol a\|_N^2
    =
    N^{-1}\|\boldsymbol a\|^2.
\end{equation*}

\subsection{Gaussian Reference Array}
\label{subsec:G_reference_array}

For each coordinate \(j\), let
\begin{equation*}
    \boldsymbol{\mathcal Z}_{j}^{\rm G}
    =
    (\mathcal Z_{0,j}^{\rm G},\ldots,\mathcal Z_{t,j}^{\rm G})^{\mathsf T}
\end{equation*}
be the Gaussian reference history constructed in
Appendix~\ref{app:covariance_kernel}.  Its diagonal covariance block is
\begin{equation*}
    \operatorname{Cov}_{\rm G}
    (\boldsymbol{\mathcal Z}_{j}^{\rm G}\mid\mathcal P_t)
    =
    \boldsymbol\zeta_j^{0:t,0:t}.
\end{equation*}
For two distinct coordinates, write
\begin{equation*}
    \boldsymbol\zeta_{jk}^{0:t,0:t}
    =
    \operatorname{Cov}_{\rm G}
    (\boldsymbol{\mathcal Z}_{j}^{\rm G},
     \boldsymbol{\mathcal Z}_{k}^{\rm G}\mid\mathcal P_t),
    \qquad j\ne k.
\end{equation*}
The next lemma is the covariance estimate that makes the empirical law possible.
It is the variance-profile analogue of the asymptotic orthogonality of distinct
coordinates in rotationally invariant EP proofs.

\begin{lemma}[Average off-diagonal covariance bound]
\label{lem:G_offdiag_covariance}
For every fixed \(t\le T\),
\begin{equation}
    \frac1{N^2}
    \sum_{j\ne k}
    \left\|
        \boldsymbol\zeta_{jk}^{0:t,0:t}
    \right\|_{\rm op}
    \overset{p}{\longrightarrow}0.
    \label{eq:G_offdiag_covariance_bound}
\end{equation}
\end{lemma}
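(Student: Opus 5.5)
The plan is to show that the cross-covariance between the Schur projections of two different columns is $O_p(N^{-1/2})$ per pair, with the finite-rank history contributing only a correction of average size $o_p(1)$. Since $t$ is fixed, $\|\cdot\|_{\rm op}$ on $(t+1)\times(t+1)$ blocks is dominated by the sum of absolute entries. It therefore suffices to prove, for fixed $r,s\le t$,
\begin{equation*}
\frac1{N^2}\sum_{j\ne k}\bigl|\operatorname{Cov}(\mathcal Z_{r,j},\mathcal Z_{s,k}\mid\mathcal F_t^{\rm lin})\bigr|\overset{p}{\longrightarrow}0 .
\end{equation*}
The reference cross-covariance $\boldsymbol\zeta_{jk}$ is defined as the limit of these finite-$N$ covariances, so the bound transfers to it through the same quadratic-form MDE replacement used in Lemma~\ref{lem:F_covariance_identification}.

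\emph{Step 1: exact covariance formula.} Write $\mathcal Z_{r,j}=\gamma_wT_{r,j}\boldsymbol a_j^{\mathsf T}\overline{\boldsymbol u}_r^{(j)}$ and replace the cavity residual by the full residual $\boldsymbol u^r$. The error is controlled in average by \eqref{eq:F_cavity_full_residual_stability} together with Cauchy--Schwarz, using the flat covariance bound of Lemma~\ref{lem:appendix_flat_covariance}. Conditionally on $\mathcal F_t^{\rm lin}$, the pair $(\boldsymbol a_j,\boldsymbol a_k)$ has cross-covariance block
\begin{equation*}
\boldsymbol\Xi_{jk|t}=-\bigl[\boldsymbol\Sigma_A\boldsymbol L_t^{\mathsf T}(\boldsymbol L_t\boldsymbol\Sigma_A\boldsymbol L_t^{\mathsf T})^\dagger\boldsymbol L_t\boldsymbol\Sigma_A\bigr]_{jk}.
\end{equation*}
This holds because $\boldsymbol\Sigma_A$ itself is block diagonal across columns, so all cross-column covariance comes from the conditioning. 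The leading cross term is therefore $\gamma_w^2T_{r,j}T_{s,k}(\boldsymbol u^r)^{\mathsf T}\boldsymbol\Xi_{jk|t}\boldsymbol u^s$. A second contribution comes from the dependence of $\boldsymbol u^r$ on $\boldsymbol a_k$; I will treat it by Gaussian integration by parts (Stein's lemma) on $\boldsymbol a_k$, which produces resolvent derivatives of size $O(M^{-1/2})$ per entry, as is standard for one-column perturbations.

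\emph{Step 2: finite-rank structure.} By the representation in Lemma~\ref{lem:C_regression_representation}, the projection term is a sum of $L_t=O(1)$ terms of the form $M^{-2}c_\ell\,\mathrm{diag}(\boldsymbol p_\ell)\boldsymbol S_j\boldsymbol S_k\mathrm{diag}(\boldsymbol p'_\ell)\,q_{\ell,j}q'_{\ell,k}$, together with analogous row-constraint terms. Consequently
\begin{equation*}
|(\boldsymbol u^r)^{\mathsf T}\boldsymbol\Xi_{jk|t}\boldsymbol u^s|\le \frac{C}{N}\sum_\ell |q_{\ell,j}||q'_{\ell,k}|+\frac{C}{N}\,\mathbf 1\{\text{row terms}\},
\end{equation*}
with $C=O_p(1)$ by admissibility of $\boldsymbol u^r,\boldsymbol p_\ell$ and $s_{\max}<\infty$. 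Summing over $j\ne k$ and using $\sum_j|q_{\ell,j}|\le\sqrt N\|\boldsymbol q_\ell\|=O_p(N)$ gives a contribution of $N^{-2}\cdot N^{-1}\cdot O_p(N^2)=O_p(N^{-1})$. The factors $T_{r,j}$ are uniformly bounded because $d_{t,j}\in[(\gamma_w\|\boldsymbol A\|^2+\gamma_{\max})^{-1},\gamma_{\min}^{-1}]$.

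\emph{Step 3: the main obstacle.} The hard part is the Stein/response term from the dependence of $\overline{\boldsymbol u}_r^{(j)}$ on $\boldsymbol a_k$. Per pair, this term is $O_p(N^{-1/2})$ only on average, not uniformly. I would write it as $N^{-1}\boldsymbol x_j^{\mathsf T}\boldsymbol R\boldsymbol y_k$ for an $N\times N$ matrix $\boldsymbol R$ built from resolvent blocks of $\boldsymbol G_r,\boldsymbol G_s$ with bounded operator norm, where $\boldsymbol x,\boldsymbol y$ are admissible vectors. The sum $N^{-2}\sum_{j\ne k}|R_{jk}x_jy_k|$ is then at most $N^{-2}\|\boldsymbol R\|_F\|\boldsymbol x\|\|\boldsymbol y\|\le N^{-2}\sqrt N\|\boldsymbol R\|\cdot O_p(N)=O_p(N^{-1/2})$, which vanishes. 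If the Frobenius route turns out too lossy, I would fall back on the averaged two-resolvent equivalent of Corollary~\ref{cor:two_resolvent} with rank-one insertions. Combining Steps 2 and 3 yields \eqref{eq:G_offdiag_covariance_bound}.
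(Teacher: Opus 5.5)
Your Step 2 is essentially the paper's entire proof. The paper also uses that $\boldsymbol\Sigma_A$ is block diagonal across columns, so all cross-column covariance comes from the regression correction $-\boldsymbol P_j\boldsymbol\Sigma_A\boldsymbol L_t^{\mathsf T}(\boldsymbol L_t\boldsymbol\Sigma_A\boldsymbol L_t^{\mathsf T})^\dagger\boldsymbol L_t\boldsymbol\Sigma_A\boldsymbol P_k^{\mathsf T}$. It writes the two column projections onto a retained history basis as $N^{-1/2}\rho_{a,j}$ and $N^{-1/2}\sigma_{b,k}$ with bounded empirical energy. It then bounds $|(\boldsymbol u^r)^{\mathsf T}\boldsymbol\Sigma_{jk|t}\boldsymbol u^s|$ by $CN^{-1}\sum_\ell|\rho_{\ell,j}||\sigma_{\ell,k}|$, plus a remainder from discarded redundant directions that is negligible on average, and sums to $O_p(N^{-1})+o_p(1)$. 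You omit that remainder. You also borrow the separable form from Lemma~\ref{lem:C_regression_representation}, which is stated for $\boldsymbol A_{\parallel,t}$ rather than for the covariance correction; the entrywise identity $\operatorname{Cov}(A_{ij},\boldsymbol p^{\mathsf T}\boldsymbol A\boldsymbol q)=s_{ij}p_iq_j/M$ gives what you need directly.

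The rest of your proposal comes from reading $\boldsymbol\zeta_{jk}$ differently, and that part is incomplete. In the paper, $\boldsymbol\zeta_{jk}$ is the cross-covariance of the Gaussian reference array, and the proof identifies it as $\gamma_w^2T_{r,j}T_{s,k}(\boldsymbol u^r)^{\mathsf T}\boldsymbol\Sigma_{jk|t}\boldsymbol u^s$. The residuals enter only as weights in a bilinear form of the conditional covariance, so there is no response of $\boldsymbol u^r$ to $\boldsymbol a_k$ to control, and Step 2 closes the argument.

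You instead take $\boldsymbol\zeta_{jk}$ to be a limit of $\operatorname{Cov}(\mathcal Z_{r,j},\mathcal Z_{s,k}\mid\mathcal F_t^{\rm lin})$. That obliges you to (i) bound the Stein/response term and (ii) transfer the bound to the reference kernel, and neither is done.
For (i), the representation $N^{-1}x_jR_{jk}y_k$ with $\|\boldsymbol R\|=O(1)$ is asserted, not derived. The integration by parts must be run against the non-diagonal $\boldsymbol\Sigma_{A|t}$, which itself has cross-column entries. The conditional mean $\boldsymbol a_{\parallel,j}$ contributes through the fluctuations of $\overline{\boldsymbol u}_r^{(j)}$. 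The second Stein step also produces second-derivative terms.
For (ii), Lemma~\ref{lem:F_covariance_identification} is a fixed-$j$, $o_p(1)$ statement about within-coordinate blocks. It says nothing about an $N^{-2}$-average of replacement errors over off-diagonal pairs, which would need uniform or uniformly integrable control.

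The simplest fix is to adopt the paper's definition and drop the Stein term, Step 3, and the transfer. Your concern about coupling through the residuals is legitimate for the actual Schur variables. In the paper's architecture, though, it belongs to the actual-to-Gaussian replacement of Section~\ref{subsec:G_actual_to_gaussian}, not to this lemma.
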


\begin{IEEEproof}
It is enough to work with one scalar covariance entry indexed by
\((r,s)\), because \(t\) is fixed.  Let \(\boldsymbol P_j\) be the selector that
extracts the \(j\)th column from \(\operatorname{vec}(\boldsymbol A)\).  Before
conditioning, different columns are independent:
\begin{equation*}
    \boldsymbol P_j\boldsymbol\Sigma_A\boldsymbol P_k^{\mathsf T}
    =
    \boldsymbol 0,
    \qquad j\ne k.
\end{equation*}
After conditioning on the finite linear history
\begin{equation*}
    \boldsymbol L_t\operatorname{vec}(\boldsymbol A)=\boldsymbol b_t,
\end{equation*}
the off-diagonal column covariance is created entirely by the Gaussian
regression correction.  From Lemma~\ref{lem:gaussian_conditioning},
\begin{equation}
    \boldsymbol\Sigma_{jk|t}
    =
    -\boldsymbol P_j\boldsymbol\Sigma_A
    \boldsymbol L_t^{\mathsf T}
    (\boldsymbol L_t\boldsymbol\Sigma_A\boldsymbol L_t^{\mathsf T})^{\dagger}
    \boldsymbol L_t\boldsymbol\Sigma_A
    \boldsymbol P_k^{\mathsf T},
    \qquad j\ne k.
    \label{eq:G_cond_cross_block}
\end{equation}
This formula is the source of the small cross-covariances.  The history has
fixed rank after redundant directions have been removed, whereas each column
contributes only a normalized \(N^{-1/2}\)-scale projection to the retained
history coordinates.

We now make this normalization explicit.  On the regularity event, choose a
retained history basis
\(\boldsymbol r_{1,t},\ldots,\boldsymbol r_{R_t,t}\), where
\(R_t=O(1)\), for the range of
\(\boldsymbol L_t\boldsymbol\Sigma_A^{1/2}\).  In this basis the nonzero part of
the profile-weighted history Gram has eigenvalues bounded away from zero and
infinity.  Therefore
\begin{equation*}
    (\boldsymbol L_t\boldsymbol\Sigma_A\boldsymbol L_t^{\mathsf T})^{\dagger}
    =
    \sum_{a,b=1}^{R_t}
    c_{ab,t}\,
    \boldsymbol r_{a,t}\boldsymbol r_{b,t}^{\mathsf T}
    +
    \boldsymbol R_{t,N},
    \qquad
    \max_{a,b}|c_{ab,t}|=O_p(1),
\end{equation*}
where the remainder \(\boldsymbol R_{t,N}\) acts only on discarded redundant
directions and contributes an averaged negligible term below.  More explicitly,
the discarded subspace is chosen so that its profile-weighted Gram eigenvalues
vanish on the regularity event; multiplication by the two admissible
column-history factors in \eqref{eq:G_two_column_factors} therefore gives a
contribution whose \(N^{-2}\)-average over \(j\ne k\) is \(o_p(1)\).  Thus
only the retained finite-dimensional subspace can produce a non-negligible
cross-column covariance.

For any measurement-side admissible vectors \(\boldsymbol v\) and
\(\boldsymbol w\), the entrywise covariance identity
\begin{equation*}
    \operatorname{Cov}
    (A_{ij},\boldsymbol p^{\mathsf T}\boldsymbol A\boldsymbol q)
    =
    \frac{s_{ij}}{M}p_iq_j
\end{equation*}
shows that the two factors
\begin{equation}
    \boldsymbol r_{a,t}^{\mathsf T}
    \boldsymbol L_t\boldsymbol\Sigma_A
    \boldsymbol P_j^{\mathsf T}\boldsymbol v,
    \qquad
    \boldsymbol r_{b,t}^{\mathsf T}
    \boldsymbol L_t\boldsymbol\Sigma_A
    \boldsymbol P_k^{\mathsf T}\boldsymbol w
    \label{eq:G_two_column_factors}
\end{equation}
are separable in the column indices \(j\) and \(k\).  With the normalization
\(M/N\to\delta\), they can be written in the form
\begin{equation}
    \boldsymbol r_{a,t}^{\mathsf T}
    \boldsymbol L_t\boldsymbol\Sigma_A
    \boldsymbol P_j^{\mathsf T}\boldsymbol v
    =
    N^{-1/2}\rho_{a,j}^{(t)}(\boldsymbol v),
    \qquad
    \boldsymbol r_{b,t}^{\mathsf T}
    \boldsymbol L_t\boldsymbol\Sigma_A
    \boldsymbol P_k^{\mathsf T}\boldsymbol w
    =
    N^{-1/2}\sigma_{b,k}^{(t)}(\boldsymbol w),
    \label{eq:G_normalized_column_weights}
\end{equation}
where admissibility of the retained history vectors and boundedness of the
profile imply
\begin{equation}
    \frac1N\sum_{j=1}^N
    |\rho_{a,j}^{(t)}(\boldsymbol v)|^2=O_p(1),
    \qquad
    \frac1N\sum_{k=1}^N
    |\sigma_{b,k}^{(t)}(\boldsymbol w)|^2=O_p(1).
    \label{eq:G_history_weight_energy}
\end{equation}

Apply \eqref{eq:G_cond_cross_block} with
\(\boldsymbol v=\boldsymbol u^r\) and \(\boldsymbol w=\boldsymbol u^s\).  The
scalar pairing
\begin{equation*}
    \mathcal C_{jk}^{r,s}
    =
    (\boldsymbol u^r)^{\mathsf T}
    \boldsymbol\Sigma_{jk|t}
    \boldsymbol u^s
\end{equation*}
therefore satisfies the finite-rank separable bound
\begin{equation}
    |\mathcal C_{jk}^{r,s}|
    \le
    \frac{C}{N}
    \sum_{\ell=1}^{L_t}
    |\rho_{\ell,j}^{(t)}|
    |\sigma_{\ell,k}^{(t)}|
    +
    \varepsilon_{jk,N}^{r,s},
    \label{eq:G_cross_cov_separable_bound}
\end{equation}
where \(L_t=O(1)\) and
\begin{equation*}
    \frac1{N^2}
    \sum_{j\ne k}|\varepsilon_{jk,N}^{r,s}|
    \overset{p}{\longrightarrow}0.
\end{equation*}
The factor \(N^{-1}\) in \eqref{eq:G_cross_cov_separable_bound} is the product
of the two \(N^{-1/2}\)-scale column-history projections in
\eqref{eq:G_normalized_column_weights}; the bounded retained Gram inverse only
changes the constant.

The off-diagonal covariance of the Gaussian Schur variables has the same form
up to bounded MDE response factors:
\begin{equation*}
    \operatorname{Cov}_{\rm G}
    (\mathcal Z_{r,j}^{\rm G},\mathcal Z_{s,k}^{\rm G}\mid\mathcal P_t)
    =
    \gamma_w^2T_{r,j}T_{s,k}\mathcal C_{jk}^{r,s}.
\end{equation*}
Since \(T_{r,j}\) and \(T_{s,k}\) are uniformly bounded, Cauchy--Schwarz and
\eqref{eq:G_history_weight_energy} give
\begin{align}
    &\frac1{N^2}\sum_{j\ne k}
    \left|
    \operatorname{Cov}_{\rm G}
    (\mathcal Z_{r,j}^{\rm G},\mathcal Z_{s,k}^{\rm G}\mid\mathcal P_t)
    \right|                                                     \notag\\
    &\quad\le
    \frac{C}{N}
    \sum_{\ell=1}^{L_t}
    \left(\frac1N\sum_{j=1}^N |\rho_{\ell,j}^{(t)}|\right)
    \left(\frac1N\sum_{k=1}^N |\sigma_{\ell,k}^{(t)}|\right)
    +o_p(1)                                                     \notag\\
    &\quad=
    O_p(N^{-1})+o_p(1)
    \longrightarrow 0.
    \label{eq:G_average_cross_cov_scalar}
\end{align}
There are finitely many time indices, so summing
\eqref{eq:G_average_cross_cov_scalar} over \(0\le r,s\le t\) proves
\eqref{eq:G_offdiag_covariance_bound}.
\end{IEEEproof}

\subsection{Bounded-Lipschitz Empirical Law}
\label{subsec:G_BL_empirical_law}

Let \(\boldsymbol X_{j,N}^{t,\rm G}\) denote the coordinate-wise Gaussian
reference state used in the state evolution.  It consists of deterministic or
\(\mathcal P_t\)-measurable coordinates, such as \(x_j\), predicted precisions,
MDE responses, and the Gaussian residual history
\(\boldsymbol{\mathcal Z}_{j}^{\rm G}\).  Adding deterministic coordinates does
not change the covariance estimates above.

\begin{lemma}[Empirical law for bounded Lipschitz tests]
\label{lem:G_BL_empirical}
Let \(\psi\) be bounded and Lipschitz on the coordinate state space.  Then
\begin{equation}
    \frac1N\sum_{j=1}^N
    \psi(\boldsymbol X_{j,N}^{t,\rm G})
    -
    \frac1N\sum_{j=1}^N
    \mathbb E_{\rm G,j}
    [\psi(\boldsymbol X_{j,N}^{t,\rm G})\mid\mathcal P_t]
    \overset{p}{\longrightarrow}0.
    \label{eq:G_BL_empirical}
\end{equation}
\end{lemma}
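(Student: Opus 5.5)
The plan is to reduce the claim to the variance bound already used in Theorem~\ref{thm:weak_gaussian_pl} (equivalently Lemma~\ref{lem:weak_gaussian_empirical_appendix}), applied conditionally on \(\mathcal P_t\). The state \(\boldsymbol X_{j,N}^{t,\rm G}\) has two kinds of coordinates: \(\mathcal P_t\)-measurable ones (\(x_j\), \(T_{s,j}\), \(\bar\pi_{s,j}\), \(\tau_{s,j}\), and similar quantities) and the Gaussian block \(\boldsymbol{\mathcal Z}_j^{\rm G}\). Conditionally on \(\mathcal P_t\), we write \(\psi(\boldsymbol X_{j,N}^{t,\rm G})=\psi_j(\boldsymbol{\mathcal Z}_j^{\rm G})\). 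Here \(\psi_j(\boldsymbol z):=\psi(\boldsymbol c_j,\boldsymbol z)\), with \(\boldsymbol c_j\) collecting the frozen coordinates. Each \(\psi_j\) is bounded by \(\|\psi\|_\infty\) and Lipschitz with the same constant as \(\psi\), uniformly in \(j\). The quantity in \eqref{eq:G_BL_empirical} is therefore \(N^{-1}\sum_j(\psi_j(\boldsymbol{\mathcal Z}_j^{\rm G})-\mathbb E[\psi_j(\boldsymbol{\mathcal Z}_j^{\rm G})\mid\mathcal P_t])\). By conditional Chebyshev, it suffices to show that its conditional variance tends to zero in probability.

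We expand the conditional variance into diagonal and off-diagonal parts. The diagonal part is at most \(4\|\psi\|_\infty^2/N\). For the off-diagonal part, we first mollify. Since the \(\psi_j\) are uniformly bounded Lipschitz, we convolve with a fixed Gaussian kernel of width \(\epsilon\). This gives \(C^1\) functions \(\psi_j^\epsilon\) with \(\|\nabla\psi_j^\epsilon\|_\infty\le L\) and \(\|\psi_j-\psi_j^\epsilon\|_\infty\le L\epsilon\sqrt{t+1}\), uniformly in \(j\). Replacing \(\psi_j\) by \(\psi_j^\epsilon\) changes the centered empirical average by at most \(2L\epsilon\sqrt{t+1}\) deterministically, so we may take \(\epsilon\to0\) at the end. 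For the smooth functions, Lemma~\ref{lem:gaussian_cov_interpolation} applies to the jointly Gaussian pair \((\boldsymbol{\mathcal Z}_j^{\rm G},\boldsymbol{\mathcal Z}_k^{\rm G})\) given \(\mathcal P_t\). Its proof goes through verbatim with two different functions \(f=\psi_j^\epsilon\), \(g=\psi_k^\epsilon\), and gives \(|\operatorname{Cov}(\psi_j^\epsilon(\boldsymbol{\mathcal Z}_j^{\rm G}),\psi_k^\epsilon(\boldsymbol{\mathcal Z}_k^{\rm G})\mid\mathcal P_t)|\le L^2\|\boldsymbol\zeta_{jk}^{0:t,0:t}\|_{\rm op}\). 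Summing over \(j\ne k\), dividing by \(N^2\), and invoking Lemma~\ref{lem:G_offdiag_covariance} shows that the off-diagonal contribution is \(o_p(1)\).

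Combining the two parts, the conditional variance is \(O(N^{-1})+o_p(1)\). Conditional Chebyshev then gives convergence to zero in probability, first conditionally and then unconditionally by dominated convergence, since the probabilities are bounded by one. Letting \(\epsilon\to0\) finishes the argument.

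The main obstacle is justifying that the joint Gaussian reference array across coordinates exists with cross-covariances \(\boldsymbol\zeta_{jk}\). It must also be verified that freezing the \(\mathcal P_t\)-measurable coordinates does not interact with the randomness. Because \(\mathcal P_t\) is treated as a fixed environment in the construction of Appendix~\ref{app:covariance_kernel}, I would simply define the array as a centered Gaussian field with block covariance \((\boldsymbol\zeta_{jk})_{j,k}\), using \(\boldsymbol\zeta_{jj}=\boldsymbol\zeta_j^{0:t,0:t}\). Its positive semidefiniteness is inherited as a limit of the genuine conditional covariances. With that in hand, the rest of the argument is the variance computation above.
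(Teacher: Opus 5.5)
Your proof is correct and follows the paper's route: conditionally on $\mathcal P_t$, you bound the variance of the centered empirical average by an $O(N^{-1})$ diagonal term plus an off-diagonal term, which you control via Gaussian interpolation and Lemma~\ref{lem:G_offdiag_covariance}; this is exactly the content of Theorem~\ref{thm:weak_gaussian_pl}, which the paper simply cites. The only differences are minor. You absorb the $\mathcal P_t$-measurable coordinates into $j$-dependent test functions $\psi_j$ and reprove the bounded case inline, whereas the paper treats those coordinates as degenerate Gaussian components. Like the paper, you take the existence of the joint Gaussian array with cross-covariances $\boldsymbol\zeta_{jk}^{0:t,0:t}$ as given rather than establishing it.
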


\begin{IEEEproof}
Conditionally on \(\mathcal P_t\), the random part of the array
\(\{\boldsymbol X_{j,N}^{t,\rm G}\}_{j=1}^N\) is jointly Gaussian.  The diagonal
covariance blocks are uniformly bounded because the MDE responses
\(T_{r,j}\), \(\Theta_j^{r,s}\), and \(\zeta_j^{r,s}\) are bounded on the
regularity event.  Lemma~\ref{lem:G_offdiag_covariance} gives the vanishing
average off-diagonal covariance required by
Theorem~\ref{thm:weak_gaussian_pl}.  Applying that theorem conditionally on
\(\mathcal P_t\) proves \eqref{eq:G_BL_empirical}.
\end{IEEEproof}

\subsection{Extension to Pseudo-Lipschitz Tests}
\label{subsec:G_PL_extension}

\begin{lemma}[Empirical law for \(\PL(2)\) tests]
\label{lem:G_PL_empirical}
Let \(\psi\in \PL(2)\).  Suppose that the coordinate states satisfy the uniform
moment condition
\begin{equation}
    \sup_N
    \frac1N\sum_{j=1}^N
    \mathbb E_{\rm G,j}
    [\|\boldsymbol X_{j,N}^{t,\rm G}\|^{4+\epsilon}\mid\mathcal P_t]
    =
    O_p(1)
    \label{eq:G_uniform_moment}
\end{equation}
for some \(\epsilon>0\).  Then
\begin{equation}
    \frac1N\sum_{j=1}^N
    \psi(\boldsymbol X_{j,N}^{t,\rm G})
    -
    \frac1N\sum_{j=1}^N
    \mathbb E_{\rm G,j}
    [\psi(\boldsymbol X_{j,N}^{t,\rm G})\mid\mathcal P_t]
    \overset{p}{\longrightarrow}0.
    \label{eq:G_PL_empirical}
\end{equation}
\end{lemma}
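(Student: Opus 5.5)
The plan is a truncation argument that reduces Lemma~\ref{lem:G_PL_empirical} to the bounded-Lipschitz case in Lemma~\ref{lem:G_BL_empirical}. This parallels the last step of the proof of Theorem~\ref{thm:weak_gaussian_pl}. The only new feature is that the moment condition \eqref{eq:G_uniform_moment} is an averaged, conditional one rather than a uniform $\sup_{j,N}$ bound, so the truncation lemma has to be run in averaged form.

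\textbf{Step 1: truncation.} Fix $K>0$ and a smooth cutoff $\chi_K$ with $0\le\chi_K\le1$, $\chi_K=1$ on $\{\|\boldsymbol x\|\le K\}$ and $\chi_K=0$ on $\{\|\boldsymbol x\|\ge 2K\}$. Set $\psi_K=\psi\chi_K$. On the compact ball of radius $2K$, the function $\psi$ is Lipschitz with constant $O(K)$, since it is $\PL(2)$. Hence $\psi_K$ is bounded and Lipschitz for each fixed $K$. Lemma~\ref{lem:G_BL_empirical}, applied conditionally on $\mathcal P_t$, then gives \eqref{eq:G_PL_empirical} with $\psi$ replaced by $\psi_K$.

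\textbf{Step 2: the two truncation errors.} By Proposition~\ref{prop:poly_growth},
\[
|\psi-\psi_K|(\boldsymbol x)\le C_\psi(1+\|\boldsymbol x\|^2)\boldsymbol 1\{\|\boldsymbol x\|>K\}.
\]
Combining Hölder (exponent $r=(4+\epsilon)/2$) and Markov as in Lemma~\ref{lem:pl_truncation_appendix}, but without taking a supremum over $j$, gives
\[
(1+\|\boldsymbol x\|^2)\boldsymbol 1\{\|\boldsymbol x\|>K\}\le C K^{-(2+\epsilon)}\bigl(1+\|\boldsymbol x\|^{4+\epsilon}\bigr).
\]
This pointwise bound can simply be averaged over $j$. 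Consequently both error terms are controlled by the same quantity:
\[
\frac1N\sum_j\mathbb E_{\rm G,j}\bigl[|\psi-\psi_K|(\boldsymbol X_{j,N}^{t,\rm G})\mid\mathcal P_t\bigr]
\quad\text{and}\quad
\mathbb E\Bigl[\frac1N\sum_j|\psi-\psi_K|(\boldsymbol X_{j,N}^{t,\rm G})\Bigm|\mathcal P_t\Bigr]
\]
are both at most $CK^{-(2+\epsilon)}$ times the $O_p(1)$ average in \eqref{eq:G_uniform_moment}. The second of these, combined with conditional Markov, makes the empirical truncation error small in probability.

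\textbf{Step 3: limits.} Split the target difference into three pieces: the empirical truncation error, the $\psi_K$-difference, and the expectation truncation error. For $\eta>0$, first choose $K$ large enough that the first and third pieces exceed $\eta$ with probability at most $\eta$, uniformly in $N$; this uses tightness of the $O_p(1)$ moment average. Then let $N\to\infty$ with $K$ fixed, which kills the middle piece by Step 1.

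\textbf{Expected obstacle.} I expect the main difficulty to be the mismatch between the conditional expectation over the Gaussian reference and the unconditional probability used for convergence. The moment bound holds only in $O_p(1)$ form, so Markov's inequality must be applied conditionally on $\mathcal P_t$ and then integrated over $\mathcal P_t$. I would handle this by working on the event where the moment average is at most a constant $B$, whose probability is at least $1-\eta$ for suitable $B$.
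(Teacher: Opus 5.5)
Your proposal is correct and follows the same route as the paper. You truncate with $\psi_K=\psi\chi_K$, apply Lemma~\ref{lem:G_BL_empirical} for fixed $K$, control both truncation errors through the $\PL(2)$ growth bound and the $(4+\epsilon)$-moment average, and let $N\to\infty$ before $K\to\infty$. You also make explicit two steps the paper leaves implicit: the conditional Markov step on the event where the $O_p(1)$ moment average is bounded, and the pointwise bound $(1+\|\boldsymbol x\|^2)\boldsymbol 1\{\|\boldsymbol x\|>K\}\le 2K^{-(2+\epsilon)}\|\boldsymbol x\|^{4+\epsilon}$ for $K\ge1$, which is elementary, so H\"older is not actually needed there.
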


\begin{IEEEproof}
Let \(\chi_K\) be a smooth cutoff that equals one on
\(\{\|x\|\le K\}\) and zero on \(\{\|x\|\ge 2K\}\), and define
\(\psi_K(x)=\psi(x)\chi_K(x)\).  For fixed \(K\), the function \(\psi_K\) is
bounded and Lipschitz, so Lemma~\ref{lem:G_BL_empirical} applies.  The growth
bound for \(\PL(2)\) functions in Proposition~\ref{prop:poly_growth}, together
with \eqref{eq:G_uniform_moment}, makes the truncation error vanish uniformly
as \(K\to\infty\).  Letting first \(N\to\infty\) and then \(K\to\infty\) gives
\eqref{eq:G_PL_empirical}.
\end{IEEEproof}

\subsection{Replacement of the Actual Schur States}
\label{subsec:G_actual_to_gaussian}

Let \(\boldsymbol X_{j,N}^{t}\) be the actual coordinate state obtained from
the EP recursion and the Schur residuals \(\mathcal Z_{0:t,j}\).  The
coordinate-wise Gaussian replacement in Appendix~\ref{app:covariance_kernel},
together with Lemma~\ref{lem:G_offdiag_covariance}, yields for bounded
Lipschitz \(\psi\)
\begin{equation}
    \frac1N\sum_{j=1}^N
    \psi(\boldsymbol X_{j,N}^{t})
    -
    \frac1N\sum_{j=1}^N
    \psi(\boldsymbol X_{j,N}^{t,\rm G})
    \overset{p}{\longrightarrow}0.
    \label{eq:G_actual_gaussian_BL_replacement}
\end{equation}
The same truncation argument used in Lemma~\ref{lem:G_PL_empirical} extends
\eqref{eq:G_actual_gaussian_BL_replacement} to \(\PL(2)\) functions.  Combining
this replacement with \eqref{eq:G_PL_empirical} gives
\begin{equation*}
    \frac1N\sum_{j=1}^N
    \psi(\boldsymbol X_{j,N}^{t})
    -
    \frac1N\sum_{j=1}^N
    \mathbb E_{\rm G,j}
    [\psi(\boldsymbol X_{j,N}^{t,\rm G})\mid\mathcal P_t]
    \overset{p}{\longrightarrow}0.
\end{equation*}
This is the empirical Gaussian law claimed in
Proposition~\ref{prop:covariance_empirical_law}.

\section{Cavity Cancellation and Memory Decomposition}
\label{app:cavity_memory_decomposition}

This appendix proves Proposition~\ref{prop:cavity_memory_decomposition}.  The
previous appendices have already identified the output of the linear module:
it is the sum of an instantaneous response and a coordinate of a Gaussian
history.  The role of the present appendix is to show exactly what the EP
extrinsic subtraction removes.  It cancels the instantaneous response, but it
does not in general remove the part of the Gaussian history that is predictable
from its past.

The starting point is the Schur-kernel representation
\begin{equation}
m_{t,j}
=
\gamma_{t,j}T_{t,j}q_{t,j}
+
\mathcal Z_{t,j}
+
\Delta_{t,j}.
\label{eq:H_schur_kernel_input}
\end{equation}
The first term in \eqref{eq:H_schur_kernel_input} is the linear response to the
incoming EP error \(q_{t,j}\).  The remaining term \(\mathcal Z_{t,j}\) is a
Gaussian-process coordinate.  It becomes a fresh Gaussian only after its
Gaussian regression on the past history has been subtracted.

Throughout this appendix, \(t\) is fixed and all statements are made on the
regularity event \(\mathcal R_t\).  We use the shorthand
\begin{equation*}
\|\boldsymbol a\|_N^2
=
\frac1N\|\boldsymbol a\|^2 .
\end{equation*}
For a coordinate array \(r_{t,j}\), the notation
\begin{equation*}
r_{t,j}=o_p^{\ell_2}(1)
\end{equation*}
means
\begin{equation*}
\frac1N\sum_{j=1}^N |r_{t,j}|^2
\overset{p}{\longrightarrow}0.
\end{equation*}

\subsection{Preliminary Bounds}
\label{subsec:appendix_H_prelim_bounds}

We first record the deterministic bounds used to pass from the finite-\(N\)
diagonal quantities to their MDE-level counterparts.

By the regularized diagonal loading,
\begin{equation*}
\boldsymbol\Gamma_t
\succeq
\gamma_{\min}\boldsymbol I_N .
\end{equation*}
Moreover, on the regularity event and by the boundedness of the
variance-profile ensemble,
\begin{equation*}
\|\boldsymbol A\|=O_p(1).
\end{equation*}
Consequently, the covariance
\begin{equation*}
\boldsymbol C_t
=
\left(
\gamma_w\boldsymbol A^{\mathsf T}\boldsymbol A
+
\boldsymbol\Gamma_t
\right)^{-1}
\end{equation*}
has eigenvalues bounded away from zero and infinity with probability tending to
one.  Hence there exist deterministic constants \(0<c_D<C_D<\infty\) such that,
with probability tending to one,
\begin{equation}
	c_D
	\le
	d_{t,j}
	\le
	C_D,
	\qquad
	c_D
	\le
	T_{t,j}
	\le
	C_D
	\label{eq:H_d_T_bounds}
\end{equation}
for all \(j\).  The corresponding cavity precisions are projected into a
compact positive interval:
\begin{equation*}
0<\pi_{\min}
\le
\pi_{t,j}
\le
\pi_{\max}<\infty.
\end{equation*}

Recall the MDE-level cavity precision
\begin{equation*}
\bar\pi_{t,j}
=
\operatorname{Proj}_{[\pi_{\min},\pi_{\max}]}
\left(
(T_{t,j})^{-1}-\gamma_{t,j}
\right).
\end{equation*}
The finite-\(N\) EP cavity precision is
\begin{equation*}
\pi_{t,j}
=
\operatorname{Proj}_{[\pi_{\min},\pi_{\max}]}
\left(
d_{t,j}^{-1}-\gamma_{t,j}
\right).
\end{equation*}

\begin{lemma}[Replacement of diagonal cavity precisions]
	\label{lem:H_pi_replacement}
	If
	\begin{equation*}
	\frac1N\sum_{j=1}^N |d_{t,j}-T_{t,j}|^2
	\overset{p}{\longrightarrow}0,
	\end{equation*}
	then
	\begin{equation*}
	\frac1N\sum_{j=1}^N
	|\pi_{t,j}-\bar\pi_{t,j}|^2
	\overset{p}{\longrightarrow}0.
	\end{equation*}
	Furthermore,
	\begin{equation*}
	\frac1N\sum_{j=1}^N
	|\pi_{t,j}^{-1}-\bar\pi_{t,j}^{-1}|^2
	\overset{p}{\longrightarrow}0.
	\end{equation*}
\end{lemma}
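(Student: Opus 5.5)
The plan is a pointwise Lipschitz argument. Write \(\phi_j(x):=x^{-1}-\gamma_{t,j}\) and \(\operatorname{Proj}:=\operatorname{Proj}_{[\pi_{\min},\pi_{\max}]}\). Then
\begin{equation*}
\pi_{t,j}=\operatorname{Proj}\bigl(\phi_j(d_{t,j})\bigr),
\qquad
\bar\pi_{t,j}=\operatorname{Proj}\bigl(\phi_j(T_{t,j})\bigr).
\end{equation*}
The same \(\gamma_{t,j}\) appears in both expressions, because we work with the predictable-precision recursion in which \(\boldsymbol\Gamma_t\) is common. So the loading cancels in the difference, and only the map \(x\mapsto x^{-1}\) and the projection need to be controlled.

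First, I will use that \(\operatorname{Proj}\) is \(1\)-Lipschitz on \(\mathbb R\). This gives
\begin{equation*}
|\pi_{t,j}-\bar\pi_{t,j}|\le |d_{t,j}^{-1}-T_{t,j}^{-1}|
=\frac{|d_{t,j}-T_{t,j}|}{d_{t,j}T_{t,j}}.
\end{equation*}
Second, I will invoke the two-sided bounds \eqref{eq:H_d_T_bounds}. They say that, with probability tending to one, \(c_D\le d_{t,j},T_{t,j}\le C_D\) uniformly in \(j\). On that event \(|\pi_{t,j}-\bar\pi_{t,j}|\le c_D^{-2}|d_{t,j}-T_{t,j}|\). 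Squaring and averaging over \(j\), the hypothesis yields the first claim. The complement event has vanishing probability, so it does not affect convergence in probability.

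For the second claim, both \(\pi_{t,j}\) and \(\bar\pi_{t,j}\) lie in \([\pi_{\min},\pi_{\max}]\) by construction of the projection. Hence
\begin{equation*}
|\pi_{t,j}^{-1}-\bar\pi_{t,j}^{-1}|\le\pi_{\min}^{-2}|\pi_{t,j}-\bar\pi_{t,j}|
\end{equation*}
deterministically, and the first claim transfers directly.

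The only delicate point is the uniform lower bound on \(T_{t,j}\). The bound on \(d_{t,j}\) follows from \(\lambda_{\min}(\boldsymbol C_t)\ge(\gamma_w\|\boldsymbol A\|^2+\gamma_{\max})^{-1}\). The bound on the MDE equivalent \(T_{t,j}\) has to come from the boundedness and stability of the MDE solution in Theorem~\ref{thm:external_mde}, as recorded in \eqref{eq:H_d_T_bounds}. If one prefers not to rely on a uniform lower bound for \(T_{t,j}\), there is a fallback. Restrict to the coordinates with \(|d_{t,j}-T_{t,j}|\le c_D/2\), on which \(T_{t,j}\ge c_D/2\). The remaining coordinates form a set of vanishing density by Chebyshev applied to the hypothesis. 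On that set the differences \(|\pi_{t,j}-\bar\pi_{t,j}|\le\pi_{\max}-\pi_{\min}\) are bounded, so it contributes \(o_p(1)\) to the average.
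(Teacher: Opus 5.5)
Your proposal is correct and follows essentially the same route as the paper. The first claim comes from the non-expansiveness of \(\operatorname{Proj}_{[\pi_{\min},\pi_{\max}]}\) combined with the Lipschitz bound for \(x\mapsto x^{-1}\) on the event \eqref{eq:H_d_T_bounds}, and the second from the Lipschitz bound for the inverse on \([\pi_{\min},\pi_{\max}]\). Your fallback splits off the vanishing-density set where \(|d_{t,j}-T_{t,j}|>c_D/2\); it is a valid small refinement the paper does not need, since it uses only the lower bound on \(d_{t,j}\) and not a uniform lower bound on \(T_{t,j}\).
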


\begin{IEEEproof}
	On the event \eqref{eq:H_d_T_bounds}, the map \(x\mapsto x^{-1}\) is Lipschitz
	on the interval \([c_D,C_D]\).  Therefore
	\begin{equation*}
	\frac1N\sum_{j=1}^N
	|d_{t,j}^{-1}-(T_{t,j})^{-1}|^2
	\overset{p}{\longrightarrow}0.
	\end{equation*}
	The projection map onto an interval is \(1\)-Lipschitz, hence
	\begin{equation*}
	\begin{aligned}
		|\pi_{t,j}-\bar\pi_{t,j}|
		&\le
		|d_{t,j}^{-1}-(T_{t,j})^{-1}|.
	\end{aligned}
	\end{equation*}
	This proves the first claim.  The second follows because
	\(x\mapsto x^{-1}\) is Lipschitz on
	\([\pi_{\min},\pi_{\max}]\).
\end{IEEEproof}

We shall also use the following consequence of the Gaussian kernel.  From
Proposition~\ref{prop:schur_kernel},
\begin{equation}
m_{t,j}
=
\gamma_{t,j}T_{t,j}q_{t,j}
+
\mathcal Z_{t,j}
+
\Delta_{t,j},
\qquad
\|\boldsymbol\Delta_t\|_N\overset{p}{\longrightarrow}0.
\label{eq:H_linear_kernel}
\end{equation}
The covariance kernel in Proposition~\ref{prop:covariance_empirical_law} gives
bounded empirical second moments:
\begin{equation*}
	\frac1N\sum_{j=1}^N \mathcal Z_{t,j}^2
	=
	O_p(1).
\end{equation*}
Indeed, the conditional expectation of the left-hand side equals
\(N^{-1}\sum_j \zeta_j^{t,t}\), which is bounded by the MDE response and the
regularity assumptions.

\subsection{Cancellation of the Instantaneous Response}
\label{subsec:appendix_H_cancellation}

The EP cavity identity is
\begin{equation*}
	\boldsymbol\Pi_t\boldsymbol h_t
	=
	\boldsymbol D_t^{-1}\boldsymbol m_t
	-
	\boldsymbol\Gamma_t\boldsymbol q_t.
\end{equation*}
Coordinate-wise,
\begin{equation}
	\pi_{t,j}h_{t,j}
	=
	d_{t,j}^{-1}m_{t,j}
	-
	\gamma_{t,j}q_{t,j}.
	\label{eq:H_cavity_coordinate}
\end{equation}

Substituting \eqref{eq:H_linear_kernel} into
\eqref{eq:H_cavity_coordinate} gives
\begin{equation*}
\begin{aligned}
	\pi_{t,j}h_{t,j}
	&=
	d_{t,j}^{-1}
	\left(
	\gamma_{t,j}T_{t,j}q_{t,j}
	+
	\mathcal Z_{t,j}
	+
	\Delta_{t,j}
	\right)
	-
	\gamma_{t,j}q_{t,j}                                      \\
	&=
	\gamma_{t,j}
	\left(
	\frac{T_{t,j}}{d_{t,j}}-1
	\right)q_{t,j}
	+
	d_{t,j}^{-1}\mathcal Z_{t,j}
	+
	d_{t,j}^{-1}\Delta_{t,j}.
\end{aligned}
\end{equation*}
We now show that the first and third terms are negligible in empirical
\(\ell_2\) norm, and that \(d_{t,j}^{-1}\mathcal Z_{t,j}\) can be replaced by
\((T_{t,j})^{-1}\mathcal Z_{t,j}\).

First, since \(\boldsymbol q_t\) is admissible and
\(\gamma_{t,j}\) is uniformly bounded, Cauchy--Schwarz and
\eqref{eq:H_d_T_bounds} imply
\begin{equation*}
\begin{aligned}
	\frac1N\sum_{j=1}^N
	\left|
	\gamma_{t,j}
	\left(
	\frac{T_{t,j}}{d_{t,j}}-1
	\right)q_{t,j}
	\right|^2
	&\le
	C
	\frac1N\sum_{j=1}^N
	|T_{t,j}-d_{t,j}|^2 |q_{t,j}|^2.
\end{aligned}
\end{equation*}
Since
\begin{equation*}
\frac1N\sum_j |T_{t,j}-d_{t,j}|^2\to0
\end{equation*}
in probability and
\begin{equation*}
\frac1N\sum_j q_{t,j}^2=O_p(1),
\end{equation*}
the right-hand side converges to zero in probability after the usual
truncation argument on \(|q_{t,j}|\).  Thus
\begin{equation*}
	\gamma_{t,j}
	\left(
	\frac{T_{t,j}}{d_{t,j}}-1
	\right)q_{t,j}
	=
	o_p^{\ell_2}(1).
\end{equation*}

Second,
\begin{equation*}
d_{t,j}^{-1}\Delta_{t,j}
=
o_p^{\ell_2}(1)
\end{equation*}
because \(d_{t,j}^{-1}\) is uniformly bounded and
\(\|\boldsymbol\Delta_t\|_N\to0\).

Third,
\begin{equation*}
(d_{t,j}^{-1}-(T_{t,j})^{-1})\mathcal Z_{t,j}
=
o_p^{\ell_2}(1).
\end{equation*}
Indeed, the empirical \(\ell_2\) norm of
\(d_{t,j}^{-1}-(T_{t,j})^{-1}\) converges to zero, while
\(\frac1N\sum_j \mathcal Z_{t,j}^2=O_p(1)\); the same truncation/Cauchy--Schwarz
argument applies.

Combining these three estimates yields
\begin{equation*}
	\pi_{t,j}h_{t,j}
	=
	(T_{t,j})^{-1}\mathcal Z_{t,j}
	+
	o_p^{\ell_2}(1).
\end{equation*}
Multiplying by \(\pi_{t,j}^{-1}\), and then replacing
\(\pi_{t,j}^{-1}\) by \(\bar\pi_{t,j}^{-1}\) using
Lemma~\ref{lem:H_pi_replacement}, gives
\begin{equation}
	h_{t,j}
	=
	\alpha_{t,j}\mathcal Z_{t,j}
	+
	r_{t,j},
	\qquad
	\frac1N\sum_{j=1}^N|r_{t,j}|^2
	\overset{p}{\longrightarrow}0,
	\label{eq:H_h_alphaZ}
\end{equation}
where
\begin{equation*}
	\alpha_{t,j}
	=
	\frac{1}{\bar\pi_{t,j}T_{t,j}}.
\end{equation*}

Equation \eqref{eq:H_h_alphaZ} is the EP cancellation identity.  The term
\(\gamma_{t,j}T_{t,j}q_{t,j}\), which is the instantaneous response of the
linear module to the incoming error, is removed by the extrinsic subtraction.
The remaining object is the Schur residual \(\mathcal Z_{t,j}\), a Gaussian-process coordinate identified in Appendix~\ref{app:covariance_kernel}.

\subsection{Gaussian Regression of the Residual History}
\label{subsec:appendix_H_gaussian_regression}

The cancellation identity has reduced the EP cavity to a scaled version of
\(\mathcal Z_{t,j}\).  The last step is purely Gaussian: decompose this current
coordinate into its regression on the past Gaussian history and an independent
innovation.

For each coordinate \(j\), define
\begin{equation*}
\boldsymbol{\mathcal Z}_{<t,j}
=
(\mathcal Z_{0,j},\mathcal Z_{1,j},\ldots,\mathcal Z_{t-1,j})^{\mathsf T}.
\end{equation*}
The covariance blocks are
\begin{equation*}
\boldsymbol\zeta_j^{<t,<t}
=
(\zeta_j^{r,s})_{0\le r,s<t},
\end{equation*}
\begin{equation*}
\boldsymbol\zeta_j^{t,<t}
=
(\zeta_j^{t,0},\ldots,\zeta_j^{t,t-1}),
\qquad
\boldsymbol\zeta_j^{<t,t}
=
(\boldsymbol\zeta_j^{t,<t})^{\mathsf T}.
\end{equation*}
By Proposition~\ref{prop:covariance_empirical_law}, conditionally on the
MDE-generated environment \(\mathcal P_t\), the vector
\begin{equation*}
(\mathcal Z_{0,j},\ldots,\mathcal Z_{t,j})
\end{equation*}
is Gaussian with covariance kernel \(\zeta_j^{r,s}\).  Applying the Gaussian
regression lemma with a possibly singular covariance matrix gives
\begin{equation}
	\mathcal Z_{t,j}
	=
	\boldsymbol\zeta_j^{t,<t}
	(\boldsymbol\zeta_j^{<t,<t})^\dagger
	\boldsymbol{\mathcal Z}_{<t,j}
	+
	G_{t,j},
	\label{eq:H_calZ_regression}
\end{equation}
where \(G_{t,j}\) is Gaussian and independent of \(\boldsymbol{\mathcal Z}_{<t,j}\)
conditionally on \(\mathcal P_t\).  Its conditional variance is
\begin{equation*}
	\nu_{t,j}
	=
	\zeta_j^{t,t}
	-
	\boldsymbol\zeta_j^{t,<t}
	(\boldsymbol\zeta_j^{<t,<t})^\dagger
	\boldsymbol\zeta_j^{<t,t}.
\end{equation*}
For \(t=0\), the past is empty and we use the convention
\begin{equation*}
\nu_{0,j}=\zeta_j^{0,0},
\qquad
\boldsymbol\zeta_j^{0,<0}
(\boldsymbol\zeta_j^{<0,<0})^\dagger
\boldsymbol{\mathcal Z}_{<0,j}=0.
\end{equation*}

Combining \eqref{eq:H_h_alphaZ} and \eqref{eq:H_calZ_regression}, we get
\begin{equation*}
\begin{aligned}
	h_{t,j}
	&=
	\alpha_{t,j}
	\boldsymbol\zeta_j^{t,<t}
	(\boldsymbol\zeta_j^{<t,<t})^\dagger
	\boldsymbol{\mathcal Z}_{<t,j}
	+
	\alpha_{t,j}G_{t,j}
	+
	r_{t,j}.
\end{aligned}
\end{equation*}
Define the predictable memory term
\begin{equation*}
	\mu_{t,j}
	=
	\alpha_{t,j}
	\boldsymbol\zeta_j^{t,<t}
	(\boldsymbol\zeta_j^{<t,<t})^\dagger
	\boldsymbol{\mathcal Z}_{<t,j},
\end{equation*}
and the innovation variance
\begin{equation*}
	\tau_{t,j}
	=
	\alpha_{t,j}^2\nu_{t,j}.
\end{equation*}
If \(\nu_{t,j}>0\), write
\begin{equation*}
W_{t,j}
=
\frac{G_{t,j}}{\sqrt{\nu_{t,j}}}.
\end{equation*}
Then \(W_{t,j}\sim\mathcal N(0,1)\) conditionally on \(\mathcal P_t\), and it
is independent of the past Gaussian history.  If \(\nu_{t,j}=0\), then
\(G_{t,j}=0\) almost surely and the term
\(\sqrt{\tau_{t,j}}W_{t,j}\) is interpreted as zero, with
\(W_{t,j}\) chosen as an arbitrary standard Gaussian independent of the past.

Thus,
\begin{equation}
	h_{t,j}
	=
	\mu_{t,j}
	+
	\sqrt{\tau_{t,j}}\,W_{t,j}
	+
	r_{t,j},
	\label{eq:H_memory_decomp_final}
\end{equation}
with
\begin{equation*}
\frac1N\sum_{j=1}^N|r_{t,j}|^2
\overset{p}{\longrightarrow}0.
\end{equation*}
This proves the desired cavity memory decomposition.

\subsection{Interpretation and Completion of the Proposition}
\label{subsec:appendix_H_interpretation}

Equation \eqref{eq:H_memory_decomp_final} has two distinct parts.  The term
\begin{equation*}
\sqrt{	au_{t,j}}\,W_{t,j}
\end{equation*}
is the fresh innovation obtained after regressing the Schur residual process on
its past.
The term
\begin{equation*}
\mu_{t,j}
=
\mathbb E[
h_{t,j}
\mid
\boldsymbol{\mathcal Z}_{<t,j},\mathcal P_t
]
+
o_p^{\ell_2}(1)
\end{equation*}
is the predictable component inherited from the temporal covariance of the
Gaussian kernel.  Therefore the standard EP cavity is fresh if and only if the
regression coefficient
\begin{equation*}
\boldsymbol\zeta_j^{t,<t}
(\boldsymbol\zeta_j^{<t,<t})^\dagger
\end{equation*}
vanishes, up to negligible empirical error.  In general variance-profile
ensembles, this coefficient need not vanish.

The preceding derivation proves Proposition~\ref{prop:cavity_memory_decomposition}:
under the conclusions of the history-conditioned Schur kernel and the covariance
kernel theorem, the EP cavity satisfies
\begin{equation*}
h_{t,j}
=
\mu_{t,j}
+
\sqrt{\tau_{t,j}}\,W_{t,j}
+
\varepsilon_{t,j},
\end{equation*}
where
\begin{equation*}
\frac1N\sum_{j=1}^N|\varepsilon_{t,j}|^2
\overset{p}{\longrightarrow}0.
\end{equation*}
The variables \(W_{t,j}\) are standard Gaussian innovations conditionally on
\(\mathcal P_t\) and independent of the past Gaussian history for each fixed
coordinate \(j\).

\section{Regularity Closure}
\label{app:regularity_closure}

This appendix proves Proposition~\ref{prop:regularity_closure}.  The goal is to
show that the induction hypotheses defining the regularity event
\(\mathcal R_t\) are propagated by one iteration of the predictable-precision
EP dynamics.  The proof uses only the compact precision bounds, the empirical
Gaussian law, and the regularity of the scalar prior module.

Throughout, the notation
\begin{equation*}
    \|\boldsymbol a\|_N^2=N^{-1}\|\boldsymbol a\|^2,
    \qquad
    \|\boldsymbol b\|_M^2=M^{-1}\|\boldsymbol b\|^2
\end{equation*}
is used.  A signal-side vector is called admissible if it has bounded empirical
second moment and satisfies the no-spike condition in
Definition~\ref{def:admissible}; the measurement-side definition is analogous.

\subsection{Signal-Side Quantities}

By Proposition~\ref{prop:schur_kernel},
\begin{equation*}
    m_{t,j}
    =
    \gamma_{t,j}T_{t,j}q_{t,j}
    +
    \mathcal Z_{t,j}
    +
    \Delta_{t,j},
    \qquad
    \|\boldsymbol\Delta_t\|_N\plim 0 .
\end{equation*}
The first term is admissible because \(\gamma_{t,j}\) and \(T_{t,j}\) are
uniformly bounded and \(\boldsymbol q_t\) is admissible on \(\mathcal R_t\).
The Schur residual has bounded empirical moments by the empirical Gaussian law
of Proposition~\ref{prop:covariance_empirical_law}.  Hence
\begin{equation*}
    \|\boldsymbol m_t\|_N=O_p(1),
    \qquad
    \|\boldsymbol m_t\|_\infty/\sqrt N\plim0.
\end{equation*}
Thus \(\boldsymbol m_t\) is signal-side admissible.

The cavity decomposition of Proposition~\ref{prop:cavity_memory_decomposition}
gives
\begin{equation*}
    h_{t,j}
    =
    \mu_{t,j}
    +
    \sqrt{\tau_{t,j}}W_{t,j}
    +
    \varepsilon_{t,j},
    \qquad
    \|\boldsymbol\varepsilon_t\|_N\plim0.
\end{equation*}
The regression coefficients defining \(\mu_{t,j}\) are finite-dimensional and
bounded on the regularity event, while \(\tau_{t,j}\) is bounded above and
below by the stable-regression assumption.  Therefore \(\boldsymbol h_t\) has
bounded empirical moments and no spikes.  Hence \(\boldsymbol h_t\) is
signal-side admissible.

The prior-module output error is
\begin{equation*}
    p_{t,j}
    =
    \eta(x_j+s_{t,j};\rho_{t,j})-x_j,
\end{equation*}
where \((s_{t,j},\rho_{t,j})\) is either the standard or corrected scalar input.
The scalar maps are empirically stable on compact precision intervals by
Assumption~\ref{ass:ep_regular}.  Since the inputs have bounded empirical
moments, the output \(\boldsymbol p_t\) is signal-side admissible.  The
extrinsic update has the form
\begin{equation*}
    \gamma_{t+1,j}q_{t+1,j}
    =
    v_{B,t,j}^{-1}p_{t,j}
    -
    \rho_{t,j}s_{t,j},
\end{equation*}
with all precisions and scalar posterior variances clipped to compact positive
intervals.  Consequently \(\boldsymbol q_{t+1}\) is also signal-side admissible.

\subsection{Measurement-Side Residual}

It remains to control
\begin{equation*}
    \boldsymbol u_t
    =
    \boldsymbol w-
    \boldsymbol A\boldsymbol m_t .
\end{equation*}
The empirical energy is bounded because \(\boldsymbol w\) is measurement-side
admissible, \(\|\boldsymbol A\|=O_p(1)\) under the variance-profile Gaussian
model, and \(\boldsymbol m_t\) is signal-side admissible:
\begin{equation*}
    \|\boldsymbol u_t\|_M
    \le
    \|\boldsymbol w\|_M+
    \|\boldsymbol A\|\,
    \sqrt{N/M}\,
    \|\boldsymbol m_t\|_N
    =O_p(1).
\end{equation*}
For the no-spike condition, decompose under the conditioned history as
\begin{equation*}
    \boldsymbol A\boldsymbol m_t
    =
    \boldsymbol A_{\parallel,t}\boldsymbol m_t
    +
    \boldsymbol A_{\perp,t}\boldsymbol m_t .
\end{equation*}
The deformation term is controlled by the finite-history representation of
Appendix~\ref{app:bounded_deformation_mde} and the no-spike property of the
history vectors.  The centered Gaussian term is controlled by the flat
covariance bound and the Gaussian maximum lemma in
Appendix~\ref{app:aux_probability}; conditionally on \(\mathcal F_t\), each
coordinate has variance uniformly bounded by a constant times
\(M^{-1}\|\boldsymbol m_t\|^2=O_p(1)\), hence
\begin{equation*}
    \frac{\|\boldsymbol A_{\perp,t}\boldsymbol m_t\|_\infty}{\sqrt M}
    \plim0.
\end{equation*}
Together with the no-spike property of \(\boldsymbol w\), this gives
\begin{equation*}
    \|\boldsymbol u_t\|_\infty/\sqrt M\plim0.
\end{equation*}
Therefore \(\boldsymbol u_t\) is measurement-side admissible.

All objects required in the next conditioning step are admissible, and the
finite nonredundant history Gram convention is preserved for the fixed horizon
\(T\).  Hence \(\mathcal R_{t+1}\) holds with probability tending to one.  This
proves Proposition~\ref{prop:regularity_closure}.

\section{Precision Replacement Estimates}
\label{app:precision_replacement}

This appendix proves the perturbation estimates used in
Section~\ref{sec:precision_replacement}.  The conditioning proof in
Section~\ref{sec:dynamic_proof} is deliberately carried out with predictable
precisions; it never conditions on the actual finite-sample diagonal variances,
which are nonlinear resolvent functionals of the measurement matrix.  The role
of this appendix is to bridge that predictable recursion with the adaptive
finite-dimensional diagonal EP updates.  Within each branch, the actual messages
and precisions are shown to be asymptotically equivalent, in empirical norm, to
their predictable counterparts.  No new random-matrix limit is proved here; the
arguments are deterministic stability estimates combined with the regularity
conditions in Assumptions~\ref{ass:regularized_scalar_module} and
\ref{ass:stable_memory_regression}.

Throughout this appendix,
\begin{equation*}
\|\boldsymbol a\|_N^2
=
\frac1N\|\boldsymbol a\|^2.
\end{equation*}
The superscript ``act'' denotes the actual adaptive recursion, and ``orc''
denotes the MDE-predictable recursion.  When no ambiguity is possible, we use
the shorter notation
\begin{equation*}
a \equiv \mathrm{act},
\qquad
o \equiv \mathrm{orc}.
\end{equation*}

\subsection{Elementary Empirical-Norm Perturbation Rules}
\label{subsec:J_elementary_rules}

We first collect elementary stability facts repeatedly used below.

\begin{lemma}[Lipschitz stability on compact intervals]
	\label{lem:J_lipschitz_compact}
	Let \(0<c<C<\infty\).  The maps
	\begin{equation*}
	x\mapsto x^{-1},
	\qquad
	x\mapsto \operatorname{Proj}_{[c,C]}(x)
	\end{equation*}
	are Lipschitz on \([c,C]\) and on \(\mathbb R\), respectively.  In particular,
	if \(x_j,y_j\in[c,C]\) and
	\begin{equation*}
	\frac1N\sum_{j=1}^N |x_j-y_j|^2\to0,
	\end{equation*}
	then
	\begin{equation*}
	\frac1N\sum_{j=1}^N |x_j^{-1}-y_j^{-1}|^2\to0.
	\end{equation*}
	Moreover, if \(u_j,v_j\in\mathbb R\), then
	\begin{equation*}
	\left|
	\operatorname{Proj}_{[c,C]}(u_j)
	-
	\operatorname{Proj}_{[c,C]}(v_j)
	\right|
	\le
	|u_j-v_j|.
	\end{equation*}
\end{lemma}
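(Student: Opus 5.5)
The plan is to verify the two Lipschitz claims directly from elementary inequalities, and then obtain the empirical-norm statement by averaging the pointwise bound. No probabilistic input is needed; everything is deterministic and holds for each $N$ separately.

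\emph{Inverse map.} For $x,y\in[c,C]$ we have the identity $x^{-1}-y^{-1}=(y-x)/(xy)$. Since $xy\ge c^2$, this gives the pointwise bound
\begin{equation*}
|x^{-1}-y^{-1}|\le c^{-2}|x-y|.
\end{equation*}
So the inverse map is Lipschitz on $[c,C]$ with constant $c^{-2}$.

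\emph{Projection onto $[c,C]$.} Write $\operatorname{Proj}_{[c,C]}(u)=\min\{C,\max\{c,u\}\}$. Both $u\mapsto\max\{c,u\}$ and $v\mapsto\min\{C,v\}$ are $1$-Lipschitz on $\mathbb R$, and a composition of $1$-Lipschitz maps is again $1$-Lipschitz. If one prefers a hands-on check, a three-case analysis on the positions of $u_j$ and $v_j$ relative to $c$ and $C$ gives the same conclusion. This yields
\begin{equation*}
\bigl|\operatorname{Proj}_{[c,C]}(u_j)-\operatorname{Proj}_{[c,C]}(v_j)\bigr|\le|u_j-v_j|.
\end{equation*}
The same fact follows from the general statement that projection onto a closed convex set is nonexpansive.

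\emph{Empirical-norm statement.} Square the pointwise inverse bound and average over $j$:
\begin{equation*}
\frac1N\sum_{j=1}^N|x_j^{-1}-y_j^{-1}|^2\le c^{-4}\,\frac1N\sum_{j=1}^N|x_j-y_j|^2\to0.
\end{equation*}
Because the constant $c^{-4}$ does not depend on $N$ or $j$, the convergence transfers immediately. It holds deterministically, and equally in probability when the hypothesis is stated in probability. The only point needing care is that $x_j,y_j\in[c,C]$ is assumed for all $j$; this is exactly where the compact clipping of precisions in Assumption~\ref{ass:ep_regular} enters later. There is no genuine obstacle in this lemma.
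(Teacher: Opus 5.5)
Your proposal is correct and follows essentially the same route as the paper: the identity $x^{-1}-y^{-1}=(y-x)/(xy)$ gives the constant $c^{-2}$, the interval projection is non-expansive, and the empirical statement follows by squaring the pointwise bound and averaging over $j$. Your decomposition of $\operatorname{Proj}_{[c,C]}$ as a composition of $1$-Lipschitz $\min$ and $\max$ maps is simply a more explicit justification of the non-expansiveness that the paper cites directly.
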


\begin{IEEEproof}
	For \(x,y\in[c,C]\),
	\begin{equation*}
	|x^{-1}-y^{-1}|
	=
	\frac{|x-y|}{|xy|}
	\le
	c^{-2}|x-y|.
	\end{equation*}
	The Euclidean projection onto a closed interval is non-expansive, hence
	\begin{equation*}
	|\operatorname{Proj}_{[c,C]}(u)
	-
	\operatorname{Proj}_{[c,C]}(v)|
	\le |u-v|.
	\end{equation*}
	The empirical-norm statements follow by summing over \(j\).
\end{IEEEproof}

\begin{lemma}[Empirical product stability]
	\label{lem:J_product_stability}
	Let \(\{a_{j,N}\}\) and \(\{b_{j,N}\}\) be two arrays.  Suppose that
	\begin{equation*}
	\frac1N\sum_{j=1}^N |a_{j,N}|^2
	\overset{p}{\longrightarrow}0,
	\end{equation*}
	and that \(\{a_{j,N}\}\) is uniformly bounded with probability tending to one:
	there exists \(C_a<\infty\) such that
	\begin{equation*}
	\max_{j\le N}|a_{j,N}|\le C_a
	\end{equation*}
	with probability tending to one.  Suppose further that the empirical second
	moments of \(b_{j,N}\) are uniformly integrable, i.e.,
	\begin{equation*}
	\lim_{K\to\infty}
	\limsup_{N\to\infty}
	\mathbb E\left[
	\frac1N\sum_{j=1}^N
	|b_{j,N}|^2
	\boldsymbol 1\{|b_{j,N}|>K\}
	\right]
	=
	0.
	\end{equation*}
	Then
	\begin{equation*}
	\frac1N\sum_{j=1}^N
	|a_{j,N}b_{j,N}|^2
	\overset{p}{\longrightarrow}0.
	\end{equation*}
\end{lemma}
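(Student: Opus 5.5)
Fix $\epsilon>0$ and a truncation level $K>0$. The plan is to split each product according to whether $|b_{j,N}|$ exceeds $K$:
\begin{equation*}
\frac1N\sum_{j=1}^N |a_{j,N}b_{j,N}|^2
=
\frac1N\sum_{j=1}^N |a_{j,N}|^2|b_{j,N}|^2\boldsymbol 1\{|b_{j,N}|\le K\}
+
\frac1N\sum_{j=1}^N |a_{j,N}|^2|b_{j,N}|^2\boldsymbol 1\{|b_{j,N}|>K\}.
\end{equation*}
On the bounded part I will use $|b_{j,N}|\le K$ directly. This bounds the first sum by $K^2\,N^{-1}\sum_j|a_{j,N}|^2$, which tends to zero in probability for each fixed $K$ by the first hypothesis.

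For the tail part I will work on the event $E_N=\{\max_j|a_{j,N}|\le C_a\}$, whose probability tends to one. On $E_N$ the second sum is at most $C_a^2\,N^{-1}\sum_j|b_{j,N}|^2\boldsymbol 1\{|b_{j,N}|>K\}$. Markov's inequality then gives
\begin{equation*}
\mathbb P\Bigl(\frac1N\sum_{j}|a_{j,N}b_{j,N}|^2\boldsymbol 1\{|b_{j,N}|>K\}>\epsilon\Bigr)
\le
\mathbb P(E_N^c)
+
\frac{C_a^2}{\epsilon}\,
\mathbb E\Bigl[\frac1N\sum_j|b_{j,N}|^2\boldsymbol 1\{|b_{j,N}|>K\}\Bigr].
\end{equation*}

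Next I will take $\limsup_{N\to\infty}$ and then let $K\to\infty$. The term $\mathbb P(E_N^c)$ vanishes, and the uniform integrability hypothesis makes the expectation term as small as desired once $K$ is large. So for any $\eta>0$ I can first choose $K$ with $\limsup_N$ of the tail probability below $\eta/2$. With that $K$ fixed, the bounded part is eventually below $\epsilon/2$ with probability at least $1-\eta/2$. Combining the two parts gives the claimed convergence in probability.

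No step here is a real obstacle. The only care needed is the order of limits, $N\to\infty$ before $K\to\infty$, and the fact that the uniform bound on $a_{j,N}$ holds only with probability tending to one, which is why the argument passes through the event $E_N$ instead of using a deterministic bound.
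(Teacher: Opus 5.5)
Your proof is correct and follows the paper's argument. You truncate at $|b_{j,N}|=K$, bound the bounded part by $K^2N^{-1}\sum_j|a_{j,N}|^2$, and control the tail on the event $\max_j|a_{j,N}|\le C_a$ by Markov's inequality and uniform integrability, sending $N\to\infty$ before $K\to\infty$ (when combining the two pieces, apply the Markov bound at threshold $\epsilon/2$ rather than $\epsilon$, which is a trivial adjustment).
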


\begin{IEEEproof}
	Fix \(K>0\).  On the event \(\max_j|a_{j,N}|\le C_a\),
	\begin{equation*}
	\begin{aligned}
		\frac1N\sum_{j=1}^N |a_{j,N}b_{j,N}|^2
		&\le
		K^2
		\frac1N\sum_{j=1}^N |a_{j,N}|^2        \\
		&\quad+
		C_a^2
		\frac1N\sum_{j=1}^N
		|b_{j,N}|^2\boldsymbol 1\{|b_{j,N}|>K\}.
	\end{aligned}
	\end{equation*}
	The first term converges to zero in probability for fixed \(K\).  The second
	term can be made arbitrarily small in probability by choosing \(K\) large,
	using Markov's inequality and uniform integrability.  This proves the claim.
\end{IEEEproof}

\begin{lemma}[Diagonal extraction from Frobenius convergence]
	\label{lem:J_diag_frobenius}
	For any square matrix \(\boldsymbol R\in\mathbb R^{N\times N}\),
	\begin{equation*}
	\|\operatorname{diag}(\boldsymbol R)\|_N
	\le
	\frac1{\sqrt N}\|\boldsymbol R\|_F.
	\end{equation*}
	Consequently, normalized Frobenius convergence of matrices implies empirical
	convergence of their diagonals.
\end{lemma}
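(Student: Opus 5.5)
The inequality is pure algebra. The consequence then follows by applying it to a difference of matrices.

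For the inequality, I will write $\operatorname{diag}(\boldsymbol R)$ for the vector $(R_{11},\ldots,R_{NN})^{\mathsf T}$. By the definition of the normalized norm,
\begin{equation*}
\|\operatorname{diag}(\boldsymbol R)\|_N^2
=
\frac1N\sum_{j=1}^N R_{jj}^2 .
\end{equation*}
The diagonal sum is one part of the full double sum $\sum_{i,j}R_{ij}^2=\|\boldsymbol R\|_F^2$, and every term dropped from that sum is nonnegative. Hence $\frac1N\sum_j R_{jj}^2\le \frac1N\|\boldsymbol R\|_F^2$. Taking square roots gives the claim. No random-matrix input is needed.

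For the consequence, I apply the inequality to $\boldsymbol R=\boldsymbol C_a-\boldsymbol C_o$. In the paper's usage this is the difference of two linear-module covariances, or more generally of any two $N\times N$ matrix sequences. If $N^{-1/2}\|\boldsymbol C_a-\boldsymbol C_o\|_F\overset{p}{\longrightarrow}0$, then the inequality bounds $\|\operatorname{diag}(\boldsymbol C_a)-\operatorname{diag}(\boldsymbol C_o)\|_N$ pointwise by this quantity, so it also tends to $0$ in probability. The step uses only monotonicity of convergence in probability under a deterministic pointwise bound.

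This is how the lemma will be used in Proposition~\ref{prop:linear_module_stability}. There the identity $\boldsymbol C_a-\boldsymbol C_o=-\boldsymbol C_a(\boldsymbol\Gamma_a-\boldsymbol\Gamma_o)\boldsymbol C_o$ will be combined with $\|\boldsymbol X\boldsymbol Y\boldsymbol Z\|_F\le\|\boldsymbol X\|\,\|\boldsymbol Y\|_F\,\|\boldsymbol Z\|$. Together with $\|\boldsymbol C_a\|,\|\boldsymbol C_o\|\le\gamma_{\min}^{-1}$, this gives normalized Frobenius control by $\|\boldsymbol\gamma_a-\boldsymbol\gamma_o\|_N$. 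The present lemma then transfers that control to $\boldsymbol d_t^{\mathrm{act}}-\boldsymbol d_t^{\mathrm{orc}}$.

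I expect no genuine obstacle. The only point needing care is notation: here $\operatorname{diag}(\boldsymbol R)$ must be read as the vector of diagonal entries, not as the matrix-forming operator used elsewhere in the paper.
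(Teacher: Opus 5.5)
Your proof is correct and is the same as the paper's: bound $\sum_j R_{jj}^2$ by the full double sum $\|\boldsymbol R\|_F^2$, divide by $N$, and take square roots. The consequence likewise follows by applying this to a difference of matrices, which the paper does in Lemma~\ref{lem:J_d_stability} using the resolvent identity for $\boldsymbol C_a-\boldsymbol C_o$, exactly as you anticipate.
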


\begin{IEEEproof}
	Since
	\begin{equation*}
	\|\operatorname{diag}(\boldsymbol R)\|^2
	=
	\sum_{j=1}^N |R_{jj}|^2
	\le
	\sum_{i=1}^N\sum_{j=1}^N |R_{ij}|^2
	=
	\|\boldsymbol R\|_F^2,
	\end{equation*}
	dividing by \(N\) gives the result.
\end{IEEEproof}

\subsection{Linear-Module Stability}
\label{subsec:J_linear_stability}

We now prove Proposition~\ref{prop:linear_module_stability}.  At a fixed
iteration \(t\), define
\begin{equation*}
\boldsymbol B
=
\gamma_w\boldsymbol A^{\mathsf T}\boldsymbol A,
\end{equation*}
\begin{equation*}
\boldsymbol C_a
=
(\boldsymbol B+\boldsymbol\Gamma_a)^{-1},
\qquad
\boldsymbol C_o
=
(\boldsymbol B+\boldsymbol\Gamma_o)^{-1},
\end{equation*}
and
\begin{equation*}
\boldsymbol m_a
=
\boldsymbol C_a
(
\gamma_w\boldsymbol A^{\mathsf T}\boldsymbol w
+
\boldsymbol\Gamma_a\boldsymbol q_a
),
\end{equation*}
\begin{equation*}
\boldsymbol m_o
=
\boldsymbol C_o
(
\gamma_w\boldsymbol A^{\mathsf T}\boldsymbol w
+
\boldsymbol\Gamma_o\boldsymbol q_o
).
\end{equation*}
The diagonal matrices are
\begin{equation*}
\boldsymbol\Gamma_a=\operatorname{diag}(\boldsymbol\gamma_a),
\qquad
\boldsymbol\Gamma_o=\operatorname{diag}(\boldsymbol\gamma_o).
\end{equation*}

Assume
\begin{equation}
	\|\boldsymbol q_a-\boldsymbol q_o\|_N
	\overset{p}{\longrightarrow}0,
	\qquad
	\|\boldsymbol\gamma_a-\boldsymbol\gamma_o\|_N
	\overset{p}{\longrightarrow}0.
	\label{eq:J_induction_input_linear}
\end{equation}
Also assume that the oracle history is admissible:
\begin{equation*}
\|\boldsymbol q_o\|_N=O_p(1),
\qquad
\|\boldsymbol m_o\|_N=O_p(1),
\end{equation*}
with uniformly integrable empirical second moments.  This is available from
the regularity closure of the predictable recursion.

\begin{lemma}[Stability of the linear posterior mean]
	\label{lem:J_m_stability}
	Under \eqref{eq:J_induction_input_linear},
	\begin{equation*}
	\|\boldsymbol m_a-\boldsymbol m_o\|_N
	\overset{p}{\longrightarrow}0.
	\end{equation*}
\end{lemma}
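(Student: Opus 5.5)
The plan is to start from the exact difference identity already displayed in the proof of Proposition~\ref{prop:linear_module_stability}. Subtracting the two normal equations
\begin{equation*}
(\boldsymbol B+\boldsymbol\Gamma_a)\boldsymbol m_a=\gamma_w\boldsymbol A^{\mathsf T}\boldsymbol w+\boldsymbol\Gamma_a\boldsymbol q_a,
\qquad
(\boldsymbol B+\boldsymbol\Gamma_o)\boldsymbol m_o=\gamma_w\boldsymbol A^{\mathsf T}\boldsymbol w+\boldsymbol\Gamma_o\boldsymbol q_o
\end{equation*}
gives
\begin{equation*}
(\boldsymbol B+\boldsymbol\Gamma_a)(\boldsymbol m_a-\boldsymbol m_o)
=
\boldsymbol\Gamma_a(\boldsymbol q_a-\boldsymbol q_o)
+
(\boldsymbol\Gamma_a-\boldsymbol\Gamma_o)(\boldsymbol q_o-\boldsymbol m_o).
\end{equation*}
This is an exact algebraic identity with no randomness involved. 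Hence
$\boldsymbol m_a-\boldsymbol m_o=\boldsymbol C_a[\cdots]$, and since $\boldsymbol B\succeq\boldsymbol 0$ and $\boldsymbol\Gamma_a\succeq\gamma_{\min}\boldsymbol I_N$ by the clipping in Assumption~\ref{ass:ep_regular}, we have the deterministic bound $\|\boldsymbol C_a\|\le\gamma_{\min}^{-1}$.

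Taking normalized Euclidean norms bounds $\|\boldsymbol m_a-\boldsymbol m_o\|_N$ by $\gamma_{\min}^{-1}$ times the sum of two terms. The first is $\gamma_{\max}\|\boldsymbol q_a-\boldsymbol q_o\|_N$, which tends to zero in probability by \eqref{eq:J_induction_input_linear}. The second is
\begin{equation*}
\Bigl(\tfrac1N\sum_{j}|\gamma_{a,j}-\gamma_{o,j}|^2\,|q_{o,j}-m_{o,j}|^2\Bigr)^{1/2}.
\end{equation*}

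This second, product term is the only nontrivial step. For it I will invoke Lemma~\ref{lem:J_product_stability} with $a_{j,N}=\gamma_{a,j}-\gamma_{o,j}$ and $b_{j,N}=q_{o,j}-m_{o,j}$. The hypotheses check as follows.
\begin{itemize}
\item The array $a_{j,N}$ is uniformly bounded by $\gamma_{\max}-\gamma_{\min}$, since both precision vectors are clipped to $[\gamma_{\min},\gamma_{\max}]$.
\item Its empirical second moment vanishes by \eqref{eq:J_induction_input_linear}.
\item Uniform integrability of the empirical second moments of $b_{j,N}$ follows from the assumed uniform integrability for $\boldsymbol q_o$ and $\boldsymbol m_o$, using $|b|^2\mathbf 1\{|b|>K\}\le 4(|q|^2\mathbf 1\{|q|>K/2\}+|m|^2\mathbf 1\{|m|>K/2\})$.
\end{itemize}
The lemma then gives that this term tends to zero in probability.

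The main obstacle is making sure that this uniform integrability of the oracle history is genuinely available, not merely $\|\cdot\|_N=O_p(1)$. If only $O_p(1)$ energy bounds were at hand, the product term could fail to vanish when the precision errors concentrate on spiky coordinates. I would obtain uniform integrability from the regularity closure: the empirical Gaussian law of Theorem~\ref{thm:general_dynamic}, together with the $(4+\epsilon)$-moment bounds used in Lemma~\ref{lem:G_PL_empirical}, gives convergence of truncated second moments $\frac1N\sum_j |b_j|^2\mathbf 1\{|b_j|>K\}$ to their Gaussian-reference expectations, and those expectations vanish as $K\to\infty$. Combining the two terms by the triangle inequality completes the argument.
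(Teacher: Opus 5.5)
Your proposal is correct and follows the paper's proof essentially line for line: the same exact identity $\boldsymbol m_a-\boldsymbol m_o=\boldsymbol C_a[\boldsymbol\Gamma_a(\boldsymbol q_a-\boldsymbol q_o)+(\boldsymbol\Gamma_a-\boldsymbol\Gamma_o)(\boldsymbol q_o-\boldsymbol m_o)]$, the bound $\|\boldsymbol C_a\|\le\gamma_{\min}^{-1}$, and Lemma~\ref{lem:J_product_stability} applied to the Hadamard-product term. The paper simply assumes uniformly integrable empirical second moments for the oracle history, attributing them to regularity closure. Your explicit reduction for $\boldsymbol q_o-\boldsymbol m_o$, and your observation that $O_p(1)$ energy bounds alone would not suffice, sharpen that step without changing the argument.
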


\begin{IEEEproof}
	The normal equations are
	\begin{equation*}
	(\boldsymbol B+\boldsymbol\Gamma_a)\boldsymbol m_a
	=
	\gamma_w\boldsymbol A^{\mathsf T}\boldsymbol w
	+
	\boldsymbol\Gamma_a\boldsymbol q_a,
	\end{equation*}
	and
	\begin{equation*}
	(\boldsymbol B+\boldsymbol\Gamma_o)\boldsymbol m_o
	=
	\gamma_w\boldsymbol A^{\mathsf T}\boldsymbol w
	+
	\boldsymbol\Gamma_o\boldsymbol q_o.
	\end{equation*}
	Writing the second equation with the operator
	\(\boldsymbol B+\boldsymbol\Gamma_a\) gives
	\begin{equation*}
	(\boldsymbol B+\boldsymbol\Gamma_a)\boldsymbol m_o
	=
	\gamma_w\boldsymbol A^{\mathsf T}\boldsymbol w
	+
	\boldsymbol\Gamma_o\boldsymbol q_o
	+
	(\boldsymbol\Gamma_a-\boldsymbol\Gamma_o)\boldsymbol m_o.
	\end{equation*}
	Subtracting from the first equation yields
	\begin{equation*}
	(\boldsymbol B+\boldsymbol\Gamma_a)
	(\boldsymbol m_a-\boldsymbol m_o)
	=
	\boldsymbol\Gamma_a(\boldsymbol q_a-\boldsymbol q_o)
	+
	(\boldsymbol\Gamma_a-\boldsymbol\Gamma_o)
	(\boldsymbol q_o-\boldsymbol m_o).
	\end{equation*}
	Thus
	\begin{equation}
		\boldsymbol m_a-\boldsymbol m_o
		=
		\boldsymbol C_a
		\left[
		\boldsymbol\Gamma_a(\boldsymbol q_a-\boldsymbol q_o)
		+
		(\boldsymbol\Gamma_a-\boldsymbol\Gamma_o)
		(\boldsymbol q_o-\boldsymbol m_o)
		\right].
		\label{eq:J_m_difference_identity_full}
	\end{equation}
	Since
	\begin{equation*}
	\boldsymbol B+\boldsymbol\Gamma_a
	\succeq
	\gamma_{\min}\boldsymbol I,
	\end{equation*}
	we have
	\begin{equation*}
	\|\boldsymbol C_a\|\le \gamma_{\min}^{-1}.
	\end{equation*}
	The first term in \eqref{eq:J_m_difference_identity_full} is bounded by
	\begin{equation*}
	\|\boldsymbol\Gamma_a(\boldsymbol q_a-\boldsymbol q_o)\|_N
	\le
	\gamma_{\max}
	\|\boldsymbol q_a-\boldsymbol q_o\|_N
	\to0.
	\end{equation*}
	For the second term, write
	\begin{equation*}
	(\boldsymbol\Gamma_a-\boldsymbol\Gamma_o)
	(\boldsymbol q_o-\boldsymbol m_o)
	=
	(\boldsymbol\gamma_a-\boldsymbol\gamma_o)
	\odot
	(\boldsymbol q_o-\boldsymbol m_o).
	\end{equation*}
	The vector \(\boldsymbol\gamma_a-\boldsymbol\gamma_o\) is uniformly bounded
	because both precisions lie in
	\([\gamma_{\min},\gamma_{\max}]\), and its empirical \(\ell_2\) norm tends to
	zero.  The vector \(\boldsymbol q_o-\boldsymbol m_o\) has uniformly integrable
	empirical second moments.  Lemma~\ref{lem:J_product_stability} therefore gives
	\begin{equation*}
	\|(\boldsymbol\Gamma_a-\boldsymbol\Gamma_o)
	(\boldsymbol q_o-\boldsymbol m_o)\|_N
	\to0.
	\end{equation*}
	Combining the two terms proves the claim.
\end{IEEEproof}

\begin{lemma}[Stability of the linear covariance diagonal]
	\label{lem:J_d_stability}
	Let
	\begin{equation*}
	\boldsymbol d_a=\operatorname{diag}(\boldsymbol C_a),
	\qquad
	\boldsymbol d_o=\operatorname{diag}(\boldsymbol C_o).
	\end{equation*}
	Then
	\begin{equation*}
	\|\boldsymbol d_a-\boldsymbol d_o\|_N
	\overset{p}{\longrightarrow}0.
	\end{equation*}
	Moreover, if the MDE response for the oracle recursion satisfies
	\begin{equation*}
	\|\boldsymbol d_o-\boldsymbol T_o\|_N
	\overset{p}{\longrightarrow}0,
	\end{equation*}
	then
	\begin{equation*}
	\|\boldsymbol d_a-\boldsymbol T_o\|_N
	\overset{p}{\longrightarrow}0.
	\end{equation*}
\end{lemma}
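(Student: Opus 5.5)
The plan is to start from the resolvent identity $\boldsymbol C_a-\boldsymbol C_o=-\boldsymbol C_a(\boldsymbol\Gamma_a-\boldsymbol\Gamma_o)\boldsymbol C_o$, which holds because $\boldsymbol C_a^{-1}-\boldsymbol C_o^{-1}=\boldsymbol\Gamma_a-\boldsymbol\Gamma_o$. The diagonal of the left-hand side is $\boldsymbol d_a-\boldsymbol d_o$. Lemma~\ref{lem:J_diag_frobenius} then reduces the first claim to showing that the normalized Frobenius norm $N^{-1/2}\|\boldsymbol C_a(\boldsymbol\Gamma_a-\boldsymbol\Gamma_o)\boldsymbol C_o\|_F$ vanishes in probability.

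To bound this norm, I will use the submultiplicative inequality $\|\boldsymbol X\boldsymbol Y\boldsymbol Z\|_F\le\|\boldsymbol X\|\,\|\boldsymbol Y\|_F\,\|\boldsymbol Z\|$. Both $\boldsymbol B+\boldsymbol\Gamma_a$ and $\boldsymbol B+\boldsymbol\Gamma_o$ dominate $\gamma_{\min}\boldsymbol I$, since $\boldsymbol B=\gamma_w\boldsymbol A^{\mathsf T}\boldsymbol A\succeq\boldsymbol 0$ and the precisions are clipped. Hence $\|\boldsymbol C_a\|,\|\boldsymbol C_o\|\le\gamma_{\min}^{-1}$ deterministically, and
\begin{equation*}
\|\boldsymbol d_a-\boldsymbol d_o\|_N\le\gamma_{\min}^{-2}N^{-1/2}\|\boldsymbol\Gamma_a-\boldsymbol\Gamma_o\|_F=\gamma_{\min}^{-2}\|\boldsymbol\gamma_a-\boldsymbol\gamma_o\|_N .
\end{equation*}
The right-hand side tends to zero in probability by the hypothesis \eqref{eq:J_induction_input_linear}. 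No random-matrix input is needed for this step, and I do not expect any obstacle in it.

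The second claim then follows from the triangle inequality for $\|\cdot\|_N$:
\begin{equation*}
\|\boldsymbol d_a-\boldsymbol T_o\|_N\le\|\boldsymbol d_a-\boldsymbol d_o\|_N+\|\boldsymbol d_o-\boldsymbol T_o\|_N .
\end{equation*}
The first term vanishes by the first claim. The second vanishes by the assumed oracle MDE response, which is the averaged diagonal statement of Proposition~\ref{prop:bounded_deformation_mde} applied to the predictable recursion.

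The only point needing care is to make sure this assumed oracle statement is the averaged $\ell_2$ version \eqref{eq:D_Q_diag_l2_DE} and not the pointwise $O_p(N^{-1/2})$ bound. The averaged version is the one required, and it is what the hypothesis provides, so this is a bookkeeping check rather than a real difficulty.
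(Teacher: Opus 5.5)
Your proposal is correct and follows the paper's proof essentially step for step. Both use the resolvent identity, the deterministic bound $\|\boldsymbol C_a\|,\|\boldsymbol C_o\|\le\gamma_{\min}^{-1}$ giving the normalized Frobenius bound $\gamma_{\min}^{-2}\|\boldsymbol\gamma_a-\boldsymbol\gamma_o\|_N$, diagonal extraction via Lemma~\ref{lem:J_diag_frobenius}, and the triangle inequality for the second claim.
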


\begin{IEEEproof}
	The resolvent identity gives
	\begin{equation*}
		\boldsymbol C_a-\boldsymbol C_o
		=
		-\boldsymbol C_a
		(\boldsymbol\Gamma_a-\boldsymbol\Gamma_o)
		\boldsymbol C_o.
	\end{equation*}
	Since
	\begin{equation*}
	\|\boldsymbol C_a\|,\|\boldsymbol C_o\|
	\le
	\gamma_{\min}^{-1},
	\end{equation*}
	we have
	\begin{equation*}
	\begin{aligned}
		\frac1{\sqrt N}
		\|\boldsymbol C_a-\boldsymbol C_o\|_F
		&\le
		\|\boldsymbol C_a\|
		\|\boldsymbol C_o\|
		\frac1{\sqrt N}
		\|\boldsymbol\Gamma_a-\boldsymbol\Gamma_o\|_F        \\
		&\le
		\gamma_{\min}^{-2}
		\|\boldsymbol\gamma_a-\boldsymbol\gamma_o\|_N
		\overset{p}{\longrightarrow}0.
	\end{aligned}
	\end{equation*}
	Lemma~\ref{lem:J_diag_frobenius} gives
	\begin{equation*}
	\|\boldsymbol d_a-\boldsymbol d_o\|_N\to0.
	\end{equation*}
	The second claim follows from the triangle inequality.
\end{IEEEproof}

\begin{lemma}[Stability of cavity precision and cavity mean]
	\label{lem:J_pi_h_stability}
	Let
	\begin{equation*}
	\pi_{a,j}
	=
	\operatorname{Proj}_{[\pi_{\min},\pi_{\max}]}
	(d_{a,j}^{-1}-\gamma_{a,j}),
	\end{equation*}
	and
	\begin{equation*}
	\pi_{o,j}
	=
	\operatorname{Proj}_{[\pi_{\min},\pi_{\max}]}
	(T_{o,j}^{-1}-\gamma_{o,j}).
	\end{equation*}
	Assume
	\begin{equation*}
	\|\boldsymbol d_a-\boldsymbol T_o\|_N\to0,
	\qquad
	\|\boldsymbol\gamma_a-\boldsymbol\gamma_o\|_N\to0,
	\end{equation*}
	and that \(d_{a,j}\) and \(T_{o,j}\) are bounded away from zero and infinity
	with probability tending to one.  Then
	\begin{equation*}
	\|\boldsymbol\pi_a-\boldsymbol\pi_o\|_N
	\overset{p}{\longrightarrow}0.
	\end{equation*}
	Furthermore, if
	\begin{equation*}
	\|\boldsymbol m_a-\boldsymbol m_o\|_N\to0,
	\qquad
	\|\boldsymbol q_a-\boldsymbol q_o\|_N\to0,
	\end{equation*}
	then the cavity errors
	\begin{equation*}
	h_{a,j}
	=
	\pi_{a,j}^{-1}
	(d_{a,j}^{-1}m_{a,j}-\gamma_{a,j}q_{a,j}),
	\end{equation*}
	and
	\begin{equation*}
	h_{o,j}
	=
	\pi_{o,j}^{-1}
	(T_{o,j}^{-1}m_{o,j}-\gamma_{o,j}q_{o,j})
	\end{equation*}
	satisfy
	\begin{equation*}
	\|\boldsymbol h_a-\boldsymbol h_o\|_N
	\overset{p}{\longrightarrow}0.
	\end{equation*}
\end{lemma}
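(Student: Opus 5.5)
The plan is to treat the statement as a deterministic perturbation estimate, in two stages: first the cavity precisions, then the cavity means. For the precisions, work on the event of probability tending to one where $d_{a,j},T_{o,j}\in[c_D,C_D]$ for all $j$. By Lemma~\ref{lem:J_lipschitz_compact}, on this event
\begin{equation*}
|d_{a,j}^{-1}-T_{o,j}^{-1}|\le c_D^{-2}|d_{a,j}-T_{o,j}|.
\end{equation*}
Since projection onto $[\pi_{\min},\pi_{\max}]$ is non-expansive,
\begin{equation*}
|\pi_{a,j}-\pi_{o,j}|\le c_D^{-2}|d_{a,j}-T_{o,j}|+|\gamma_{a,j}-\gamma_{o,j}|.
\end{equation*}
Squaring, averaging over $j$ and using $(a+b)^2\le 2a^2+2b^2$ gives $\|\boldsymbol\pi_a-\boldsymbol\pi_o\|_N\to0$ in probability. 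Because both precisions lie in $[\pi_{\min},\pi_{\max}]$, a second application of Lemma~\ref{lem:J_lipschitz_compact} also gives $\|\boldsymbol\pi_a^{-1}-\boldsymbol\pi_o^{-1}\|_N\to0$.

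For the cavity means, write $g_{a,j}=d_{a,j}^{-1}m_{a,j}-\gamma_{a,j}q_{a,j}$ and define $g_{o,j}$ analogously with $T_{o,j}$. Split the difference as
\begin{equation*}
h_{a,j}-h_{o,j}=\pi_{a,j}^{-1}(g_{a,j}-g_{o,j})+(\pi_{a,j}^{-1}-\pi_{o,j}^{-1})g_{o,j},
\end{equation*}
and expand further:
\begin{equation*}
g_{a,j}-g_{o,j}=d_{a,j}^{-1}(m_{a,j}-m_{o,j})+(d_{a,j}^{-1}-T_{o,j}^{-1})m_{o,j}-\gamma_{a,j}(q_{a,j}-q_{o,j})-(\gamma_{a,j}-\gamma_{o,j})q_{o,j}.
\end{equation*}
The terms fall into two groups.
\begin{itemize}
\item Terms with a uniformly bounded coefficient multiplying a vanishing difference: $\pi_{a,j}^{-1}\le\pi_{\min}^{-1}$, $d_{a,j}^{-1}\le c_D^{-1}$, $\gamma_{a,j}\le\gamma_{\max}$. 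These are $o_p^{\ell_2}(1)$ directly from the hypotheses on $\boldsymbol m$ and $\boldsymbol q$.
\item Products of a vanishing, uniformly bounded factor with an oracle vector. The vanishing factors are $d_{a,j}^{-1}-T_{o,j}^{-1}$, $\gamma_{a,j}-\gamma_{o,j}$ and $\pi_{a,j}^{-1}-\pi_{o,j}^{-1}$. Each is bounded by a constant (every quantity lies in a compact interval) and has vanishing empirical $\ell_2$ norm by the first step.
\end{itemize}

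The second group is the main obstacle, since a vanishing empirical $\ell_2$ coefficient times an $O_p(1)$ vector need not vanish without extra control. Here I would invoke Lemma~\ref{lem:J_product_stability}. It requires uniform integrability of the empirical second moments of $\boldsymbol m_o$, $\boldsymbol q_o$ and $\boldsymbol g_o$. For $\boldsymbol m_o$ and $\boldsymbol q_o$ this is exactly the standing oracle-admissibility assumption recorded before Lemma~\ref{lem:J_m_stability}. For $\boldsymbol g_o$ it follows because $\boldsymbol g_o$ is a bounded-coefficient linear combination of $\boldsymbol m_o$ and $\boldsymbol q_o$. Combining the terms with the triangle inequality in $\|\cdot\|_N$ yields $\|\boldsymbol h_a-\boldsymbol h_o\|_N\to0$ in probability.
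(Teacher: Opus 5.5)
Your proposal is correct and follows the paper's proof essentially step by step. The precisions are handled by the non-expansive projection plus the Lipschitz inverse, and the means by the identical split $\boldsymbol h_a-\boldsymbol h_o=\boldsymbol\Pi_a^{-1}(\boldsymbol g_a-\boldsymbol g_o)+(\boldsymbol\Pi_a^{-1}-\boldsymbol\Pi_o^{-1})\boldsymbol g_o$, with Lemma~\ref{lem:J_product_stability} controlling the product terms. The one difference is that you attach the vanishing coefficients $d_{a,j}^{-1}-T_{o,j}^{-1}$ and $\gamma_{a,j}-\gamma_{o,j}$ to the oracle vectors $\boldsymbol m_o,\boldsymbol q_o$, whereas the paper attaches them to $\boldsymbol m_a,\boldsymbol q_a$; your choice is slightly cleaner, since uniform integrability is assumed only for the oracle history.
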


\begin{IEEEproof}
	The first claim follows from Lemma~\ref{lem:J_lipschitz_compact}:
	\begin{equation*}
	\begin{aligned}
		|\pi_{a,j}-\pi_{o,j}|
		&\le
		|d_{a,j}^{-1}-T_{o,j}^{-1}|
		+
		|\gamma_{a,j}-\gamma_{o,j}|.
	\end{aligned}
	\end{equation*}
	Taking empirical \(\ell_2\) norms gives
	\begin{equation*}
	\|\boldsymbol\pi_a-\boldsymbol\pi_o\|_N\to0.
	\end{equation*}
	Since \(\pi_{a,j},\pi_{o,j}\in[\pi_{\min},\pi_{\max}]\),
	\begin{equation*}
	\|\boldsymbol\pi_a^{-1}-\boldsymbol\pi_o^{-1}\|_N\to0.
	\end{equation*}
	
	For the cavity error, write
	\begin{equation*}
	\boldsymbol h_a-\boldsymbol h_o
	=
	\boldsymbol\Pi_a^{-1}\boldsymbol r_a
	-
	\boldsymbol\Pi_o^{-1}\boldsymbol r_o,
	\end{equation*}
	where
	\begin{equation*}
	r_{a,j}=d_{a,j}^{-1}m_{a,j}-\gamma_{a,j}q_{a,j},
	\qquad
	r_{o,j}=T_{o,j}^{-1}m_{o,j}-\gamma_{o,j}q_{o,j}.
	\end{equation*}
	It is enough to show
	\begin{equation*}
	\|\boldsymbol r_a-\boldsymbol r_o\|_N\to0
	\end{equation*}
	and
	\begin{equation*}
	\|\boldsymbol r_o\|_N=O_p(1).
	\end{equation*}
	Indeed,
	\begin{equation*}
	\boldsymbol h_a-\boldsymbol h_o
	=
	\boldsymbol\Pi_a^{-1}(\boldsymbol r_a-\boldsymbol r_o)
	+
	(\boldsymbol\Pi_a^{-1}-\boldsymbol\Pi_o^{-1})\boldsymbol r_o.
	\end{equation*}
	The first term converges to zero because
	\(\|\boldsymbol\Pi_a^{-1}\|\le \pi_{\min}^{-1}\).  The second term converges
	to zero by Lemma~\ref{lem:J_product_stability}, since
	\(\boldsymbol\Pi_a^{-1}-\boldsymbol\Pi_o^{-1}\) is uniformly bounded and
	converges in empirical \(\ell_2\), while \(\boldsymbol r_o\) has uniformly
	integrable empirical second moments by admissibility.
	
	It remains to verify \(\|\boldsymbol r_a-\boldsymbol r_o\|_N\to0\).  Decompose
	\begin{equation*}
	\begin{aligned}
		r_{a,j}-r_{o,j}
		&=
		(d_{a,j}^{-1}-T_{o,j}^{-1})m_{a,j}
		+
		T_{o,j}^{-1}(m_{a,j}-m_{o,j})       \\
		&\quad
		-
		(\gamma_{a,j}-\gamma_{o,j})q_{a,j}
		-
		\gamma_{o,j}(q_{a,j}-q_{o,j}).
	\end{aligned}
	\end{equation*}
	The second and fourth terms converge to zero directly from the boundedness of
	\(T_{o,j}^{-1}\) and \(\gamma_{o,j}\).  The first and third terms are handled
	by Lemma~\ref{lem:J_product_stability}, using the empirical convergence of
	\(d_a^{-1}-T_o^{-1}\) and \(\gamma_a-\gamma_o\), and the admissibility of
	\(\boldsymbol m_a\) and \(\boldsymbol q_a\).  This proves the cavity stability.
\end{IEEEproof}

Combining Lemmas~\ref{lem:J_m_stability}--\ref{lem:J_pi_h_stability} proves
Proposition~\ref{prop:linear_module_stability}.

\subsection{Prior-Module Stability}
\label{subsec:J_prior_stability}

We now prove Proposition~\ref{prop:prior_module_stability}.  Fix a branch
\(b\in\{\mathrm{std},\mathrm{corr}\}\).  The prior-module input consists of a
scalar shift \(s_{t,j}^b\) and a scalar precision \(\rho_{t,j}^b\).  The scalar
posterior error and variance are
\begin{equation*}
p_{t,j}^b
=
\eta(x_j+s_{t,j}^b;\rho_{t,j}^b)-x_j,
\end{equation*}
and
\begin{equation*}
v_{B,t,j}^b
=
v_B(x_j+s_{t,j}^b;\rho_{t,j}^b).
\end{equation*}

Assume
\begin{equation*}
\|\boldsymbol s_a-\boldsymbol s_o\|_N\to0,
\qquad
\|\boldsymbol\rho_a-\boldsymbol\rho_o\|_N\to0,
\end{equation*}
and that all entries of \(\boldsymbol\rho_a,\boldsymbol\rho_o\) lie in a
compact positive interval.  By Assumption~\ref{ass:regularized_scalar_module},
\begin{equation*}
\|\boldsymbol\eta(\boldsymbol x+\boldsymbol s_a;\boldsymbol\rho_a)
-
\boldsymbol\eta(\boldsymbol x+\boldsymbol s_o;\boldsymbol\rho_o)\|_N
\to0,
\end{equation*}
and
\begin{equation*}
\|\boldsymbol v_B(\boldsymbol x+\boldsymbol s_a;\boldsymbol\rho_a)
-
\boldsymbol v_B(\boldsymbol x+\boldsymbol s_o;\boldsymbol\rho_o)\|_N
\to0.
\end{equation*}
Since the same \(\boldsymbol x\) is subtracted in both posterior errors, this
implies
\begin{equation*}
	\|\boldsymbol p_a-\boldsymbol p_o\|_N\to0,
	\qquad
	\|\boldsymbol v_{B,a}-\boldsymbol v_{B,o}\|_N\to0.
\end{equation*}

The outgoing precision is
\begin{equation*}
\gamma_{+,j}
=
\operatorname{Proj}_{[\gamma_{\min},\gamma_{\max}]}
(v_{B,j}^{-1}-\rho_j).
\end{equation*}
Because \(v_{B,j}\in[v_{\min},v_{\max}]\), the inverse map is Lipschitz.
Together with the non-expansiveness of the projection,
\begin{equation*}
\|\boldsymbol\gamma_{+,a}-\boldsymbol\gamma_{+,o}\|_N
\to0.
\end{equation*}

The outgoing mean error is defined by
\begin{equation*}
\gamma_{+,j}q_{+,j}
=
v_{B,j}^{-1}p_j-\rho_js_j.
\end{equation*}
Let
\begin{equation*}
r_{+,j}=v_{B,j}^{-1}p_j-\rho_js_j.
\end{equation*}
We first show
\begin{equation*}
\|\boldsymbol r_{+,a}-\boldsymbol r_{+,o}\|_N\to0.
\end{equation*}
Indeed,
\begin{equation*}
\begin{aligned}
	r_{+,a,j}-r_{+,o,j}
	&=
	(v_{B,a,j}^{-1}-v_{B,o,j}^{-1})p_{a,j}
	+
	v_{B,o,j}^{-1}(p_{a,j}-p_{o,j})       \\
	&\quad
	-
	(\rho_{a,j}-\rho_{o,j})s_{a,j}
	-
	\rho_{o,j}(s_{a,j}-s_{o,j}).
\end{aligned}
\end{equation*}
The second and fourth terms converge to zero directly from boundedness of
\(v_B^{-1}\) and \(\rho\).  The first and third terms are controlled by
Lemma~\ref{lem:J_product_stability}, because the coefficient differences
converge in empirical norm and are uniformly bounded, while
\(\boldsymbol p_a\) and \(\boldsymbol s_a\) have uniformly integrable empirical
second moments by the regularity of the scalar module and the input history.
Thus
\begin{equation*}
\|\boldsymbol r_{+,a}-\boldsymbol r_{+,o}\|_N\to0.
\end{equation*}
Finally,
\begin{equation*}
\boldsymbol q_+
=
\boldsymbol\Gamma_+^{-1}\boldsymbol r_+,
\end{equation*}
and \(\gamma_{+,j}\in[\gamma_{\min},\gamma_{\max}]\).  Since
\begin{equation*}
\|\boldsymbol\gamma_{+,a}^{-1}
-
\boldsymbol\gamma_{+,o}^{-1}\|_N\to0
\end{equation*}
by Lemma~\ref{lem:J_lipschitz_compact}, we conclude
\begin{equation*}
\|\boldsymbol q_{+,a}-\boldsymbol q_{+,o}\|_N\to0.
\end{equation*}
This proves Proposition~\ref{prop:prior_module_stability}.

\subsection{Stability of the Memory Map}
\label{subsec:J_memory_map_stability}

We prove Proposition~\ref{prop:memory_map_stability}.  This part is needed
only for the corrected branch.

For each coordinate \(j\), define the finite Gaussian-history covariance blocks
\begin{equation*}
\boldsymbol{\mathcal Z}_{<t,j}
=
(\mathcal Z_{0,j},\ldots,\mathcal Z_{t-1,j})^{\mathsf T},
\end{equation*}
\begin{equation*}
\boldsymbol\zeta_j^{<t,<t}
=
(\zeta_j^{r,s})_{0\le r,s<t},
\qquad
\boldsymbol\zeta_j^{t,<t}
=
(\zeta_j^{t,0},\ldots,\zeta_j^{t,t-1}).
\end{equation*}
The memory map is
\begin{equation*}
\mu_{t,j}
=
\alpha_{t,j}
\boldsymbol\zeta_j^{t,<t}
(\boldsymbol\zeta_j^{<t,<t})^\dagger
\boldsymbol{\mathcal Z}_{<t,j},
\end{equation*}
where
\begin{equation*}
\alpha_{t,j}
=
(\bar\pi_{t,j}T_{t,j})^{-1}.
\end{equation*}
The innovation variance is
\begin{equation*}
\tau_{t,j}
=
\alpha_{t,j}^2
\left[
\zeta_j^{t,t}
-
\boldsymbol\zeta_j^{t,<t}
(\boldsymbol\zeta_j^{<t,<t})^\dagger
\boldsymbol\zeta_j^{<t,t}
\right].
\end{equation*}

We compare two versions, actual and oracle.  We assume that the corresponding
finite covariance blocks, scalings, and past Gaussian histories are empirically
close:
\begin{equation*}
\frac1N\sum_{j=1}^N
|\alpha_{a,j}-\alpha_{o,j}|^2\to0,
\end{equation*}
\begin{equation*}
\frac1N\sum_{j=1}^N
\|\boldsymbol\zeta_{a,j}^{t,<t}
-\boldsymbol\zeta_{o,j}^{t,<t}\|^2\to0,
\end{equation*}
\begin{equation*}
\frac1N\sum_{j=1}^N
\|\boldsymbol\zeta_{a,j}^{<t,<t}
-\boldsymbol\zeta_{o,j}^{<t,<t}\|_F^2\to0,
\end{equation*}
and
\begin{equation*}
\frac1N\sum_{j=1}^N
\|\boldsymbol{\mathcal Z}_{a,<t,j}
-\boldsymbol{\mathcal Z}_{o,<t,j}\|^2\to0.
\end{equation*}
These closeness relations are part of the branch-wise induction environment in
the corrected recursion.

Assumption~\ref{ass:stable_memory_regression} gives
\begin{equation*}
\left\|
(\boldsymbol\zeta_{a,j}^{<t,<t})^\dagger
\right\|
\le
C_\zeta,
\qquad
\left\|
(\boldsymbol\zeta_{o,j}^{<t,<t})^\dagger
\right\|
\le
C_\zeta
\end{equation*}
with probability tending to one.  Since \(t\) is fixed and the nonzero spectra
are bounded away from zero, the Moore--Penrose inverse is locally Lipschitz on
the retained subspace.  Hence
\begin{equation}
\frac1N\sum_{j=1}^N
\left\|
(\boldsymbol\zeta_{a,j}^{<t,<t})^\dagger
-
(\boldsymbol\zeta_{o,j}^{<t,<t})^\dagger
\right\|_F^2
\to0.
\label{eq:J_pinv_close}
\end{equation}
A direct expansion of
\begin{equation*}
\alpha
\boldsymbol\zeta^{t,<t}
(\boldsymbol\zeta^{<t,<t})^\dagger
\boldsymbol{\mathcal Z}_{<t}
\end{equation*}
then gives
\begin{equation*}
\frac1N\sum_{j=1}^N
|\mu_{a,t,j}-\mu_{o,t,j}|^2
\to0.
\end{equation*}
Indeed, the difference is the sum of four terms, obtained by perturbing
successively \(\alpha\), \(\boldsymbol\zeta^{t,<t}\),
\((\boldsymbol\zeta^{<t,<t})^\dagger\), and \(\boldsymbol{\mathcal Z}_{<t}\).  Each term
is controlled by Cauchy--Schwarz, the uniform boundedness of the remaining
factors, and the empirical convergence assumptions above.

The same argument applies to the innovation variance.  Expanding
\begin{equation*}
\zeta^{t,t}
-
\boldsymbol\zeta^{t,<t}
(\boldsymbol\zeta^{<t,<t})^\dagger
\boldsymbol\zeta^{<t,t}
\end{equation*}
around the oracle quantities, using \eqref{eq:J_pinv_close}, and multiplying by
the stable factor \(\alpha^2\), yields
\begin{equation*}
\|\boldsymbol\tau_a-\boldsymbol\tau_o\|_N\to0.
\end{equation*}
Since
\begin{equation*}
0<\tau_{\min}\le \tau_{a,j},\tau_{o,j}\le\tau_{\max}<\infty
\end{equation*}
with probability tending to one, Lemma~\ref{lem:J_lipschitz_compact} gives
\begin{equation*}
\|\boldsymbol\tau_a^{-1}-\boldsymbol\tau_o^{-1}\|_N\to0.
\end{equation*}
Finally,
\begin{equation*}
\widetilde{\boldsymbol h}_a-\widetilde{\boldsymbol h}_o
=
(\boldsymbol h_a-\boldsymbol h_o)
-
(\boldsymbol\mu_a-\boldsymbol\mu_o),
\end{equation*}
so the stability of \(\boldsymbol h\) and \(\boldsymbol\mu\) gives
\begin{equation*}
\|\widetilde{\boldsymbol h}_a-\widetilde{\boldsymbol h}_o\|_N\to0.
\end{equation*}
This proves Proposition~\ref{prop:memory_map_stability}.

\subsection{Closing the Branch-Wise Replacement Induction}
\label{subsec:J_closing_induction}

We finally show how the estimates above close the induction used in
Theorem~\ref{thm:precision_replacement}.

Assume that, for a fixed branch \(b\in\{\mathrm{std},\mathrm{corr}\}\),
\begin{equation*}
\|\boldsymbol q_t^{b,\mathrm{act}}
-
\boldsymbol q_t^{b,\mathrm{orc}}\|_N\to0,
\end{equation*}
and
\begin{equation*}
\|\boldsymbol\gamma_t^{b,\mathrm{act}}
-
\boldsymbol\gamma_t^{b,\mathrm{orc}}\|_N\to0.
\end{equation*}
By Proposition~\ref{prop:linear_module_stability},
\begin{equation*}
\|\boldsymbol m_t^{b,\mathrm{act}}
-
\boldsymbol m_t^{b,\mathrm{orc}}\|_N\to0,
\end{equation*}
\begin{equation*}
\|\boldsymbol\pi_t^{b,\mathrm{act}}
-
\boldsymbol\pi_t^{b,\mathrm{orc}}\|_N\to0,
\end{equation*}
and
\begin{equation*}
\|\boldsymbol h_t^{b,\mathrm{act}}
-
\boldsymbol h_t^{b,\mathrm{orc}}\|_N\to0.
\end{equation*}

For the standard branch,
\begin{equation*}
\boldsymbol s_t^{\mathrm{std}}=\boldsymbol h_t,
\qquad
\boldsymbol\rho_t^{\mathrm{std}}=\boldsymbol\pi_t.
\end{equation*}
Therefore the input stability condition of
Proposition~\ref{prop:prior_module_stability} holds, and so
\begin{equation*}
\|\boldsymbol\gamma_{t+1}^{\mathrm{std},\mathrm{act}}
-
\boldsymbol\gamma_{t+1}^{\mathrm{std},\mathrm{orc}}\|_N\to0,
\end{equation*}
\begin{equation*}
\|\boldsymbol q_{t+1}^{\mathrm{std},\mathrm{act}}
-
\boldsymbol q_{t+1}^{\mathrm{std},\mathrm{orc}}\|_N\to0.
\end{equation*}

For the corrected branch,
\begin{equation*}
\boldsymbol s_t^{\mathrm{corr}}
=
\widetilde{\boldsymbol h}_t
=
\boldsymbol h_t-\boldsymbol\mu_t,
\qquad
\boldsymbol\rho_t^{\mathrm{corr}}
=
\boldsymbol\tau_t^{-1}.
\end{equation*}
The stability of \(\boldsymbol h_t\) follows from the linear module, and the
stability of \(\boldsymbol\mu_t\) and \(\boldsymbol\tau_t^{-1}\) follows from
Proposition~\ref{prop:memory_map_stability}.  Hence
\begin{equation*}
\|\boldsymbol s_t^{\mathrm{corr},\mathrm{act}}
-
\boldsymbol s_t^{\mathrm{corr},\mathrm{orc}}\|_N\to0,
\end{equation*}
and
\begin{equation*}
\|\boldsymbol\rho_t^{\mathrm{corr},\mathrm{act}}
-
\boldsymbol\rho_t^{\mathrm{corr},\mathrm{orc}}\|_N\to0.
\end{equation*}
Applying Proposition~\ref{prop:prior_module_stability} gives
\begin{equation*}
\|\boldsymbol\gamma_{t+1}^{\mathrm{corr},\mathrm{act}}
-
\boldsymbol\gamma_{t+1}^{\mathrm{corr},\mathrm{orc}}\|_N\to0,
\end{equation*}
and
\begin{equation*}
\|\boldsymbol q_{t+1}^{\mathrm{corr},\mathrm{act}}
-
\boldsymbol q_{t+1}^{\mathrm{corr},\mathrm{orc}}\|_N\to0.
\end{equation*}

Thus, in either branch, closeness at time \(t\) implies closeness at time
\(t+1\).  Since the initial actual and oracle recursions are identical and the
horizon \(T\) is fixed, finite induction proves the branch-wise precision
replacement theorem.

\bibliographystyle{IEEEtran}
\bibliography{main} 

@article{DonohoMalekiMontanari2009,
  author  = {D. L. Donoho and A. Maleki and A. Montanari},
  title   = {Message passing algorithms for compressed sensing},
  journal = {Proc. Natl. Acad. Sci. U.S.A.},
  volume  = {106},
  number  = {45},
  pages   = {18914--18919},
  month   = nov,
  year    = {2009},
  doi     = {10.1073/pnas.0909892106},
  note    = {\doi{10.1073/pnas.0909892106}},
}

@article{BayatiMontanari2011,
  author  = {M. Bayati and A. Montanari},
  title   = {The dynamics of message passing on dense graphs, with applications to compressed sensing},
  journal = {IEEE Trans. Inf. Theory},
  volume  = {57},
  number  = {2},
  pages   = {764--785},
  month   = feb,
  year    = {2011},
  doi     = {10.1109/TIT.2010.2094817},
  note    = {\doi{10.1109/TIT.2010.2094817}},
}

@article{BayatiLelargeMontanari2015,
  author  = {M. Bayati and M. Lelarge and A. Montanari},
  title   = {Universality in polytope phase transitions and message passing algorithms},
  journal = {Ann. Appl. Probab.},
  volume  = {25},
  number  = {2},
  pages   = {753--822},
  year    = {2015},
  doi     = {10.1214/14-AAP1010},
  note    = {\doi{10.1214/14-AAP1010}},
}

@article{JavanmardMontanari2013,
  author  = {A. Javanmard and A. Montanari},
  title   = {State evolution for general approximate message passing algorithms, with applications to spatial coupling},
  journal = {Inf. Inference},
  volume  = {2},
  number  = {2},
  pages   = {115--144},
  year    = {2013},
  doi     = {10.1093/imaiai/iat004},
  note    = {\doi{10.1093/imaiai/iat004}},
}

@article{BerthierMontanariNguyen2020,
  author  = {R. Berthier and A. Montanari and P.-M. Nguyen},
  title   = {State evolution for approximate message passing with non-separable functions},
  journal = {Inf. Inference},
  volume  = {9},
  number  = {1},
  pages   = {33--79},
  year    = {2020},
  doi     = {10.1093/imaiai/iay021},
  note    = {\doi{10.1093/imaiai/iay021}},
}

@article{FengVenkataramananRushSamworth2022,
  author  = {O. Y. Feng and R. Venkataramanan and C. G. Rush and R. J. Samworth},
  title   = {A unifying tutorial on approximate message passing},
  journal = {Found. Trends Mach. Learn.},
  volume  = {15},
  number  = {4},
  pages   = {335--536},
  year    = {2022},
  doi     = {10.1561/2200000092},
  note    = {\doi{10.1561/2200000092}},
}

@inproceedings{Rangan2011,
  author    = {S. Rangan},
  title     = {Generalized approximate message passing for estimation with random linear mixing},
  booktitle = {Proc. IEEE Int. Symp. Inf. Theory (ISIT)},
  pages     = {2168--2172},
  year      = {2011},
  doi       = {10.1109/ISIT.2011.6033942},
  note    = {\doi{10.1109/ISIT.2011.6033942}},
}

@inproceedings{Minka2001,
  author    = {T. P. Minka},
  title     = {Expectation propagation for approximate {Bayesian} inference},
  booktitle = {Proc. 17th Conf. Uncertainty Artif. Intell. (UAI)},
  pages     = {362--369},
  publisher = {Morgan Kaufmann},
  year      = {2001},
  doi     = {10.5555/647235.720257},
  note    = {\doi{10.5555/647235.720257}},
}

@article{OpperWinther2001,
  author  = {M. Opper and O. Winther},
  title   = {Tractable approximations for probabilistic models: The adaptive {TAP} mean field approach},
  journal = {Phys. Rev. Lett.},
  volume  = {86},
  number  = {17},
  pages   = {3695--3698},
  year    = {2001},
  doi     = {10.1103/PhysRevLett.86.3695},
  note    = {\doi{10.1103/PhysRevLett.86.3695}},
}

@article{OpperWinther2005,
  author  = {M. Opper and O. Winther},
  title   = {Expectation consistent approximate inference},
  journal = {J. Mach. Learn. Res.},
  volume  = {6},
  pages   = {2177--2204},
  month   = dec,
  year    = {2005},
  doi     = {10.5555/1046920.1194917},
  note    = {\doi{10.5555/1046920.1194917}},
}

@article{MaPing2017,
  author  = {J. Ma and L. Ping},
  title   = {Orthogonal {AMP}},
  journal = {IEEE Access},
  volume  = {5},
  pages   = {2020--2033},
  year    = {2017},
  doi     = {10.1109/ACCESS.2017.2653119},
  note    = {\doi{10.1109/ACCESS.2017.2653119}},
}

@article{RanganSchniterFletcher2019,
  author  = {S. Rangan and P. Schniter and A. K. Fletcher},
  title   = {Vector approximate message passing},
  journal = {IEEE Trans. Inf. Theory},
  volume  = {65},
  number  = {10},
  pages   = {6664--6684},
  month   = oct,
  year    = {2019},
  doi     = {10.1109/TIT.2019.2916359},
  note    = {\doi{10.1109/TIT.2019.2916359}},
}

@article{Takeuchi2020,
  author  = {K. Takeuchi},
  title   = {Rigorous dynamics of expectation-propagation-based signal recovery from unitarily invariant measurements},
  journal = {IEEE Trans. Inf. Theory},
  volume  = {66},
  number  = {1},
  pages   = {368--386},
  month   = jan,
  year    = {2020},
  doi     = {10.1109/TIT.2019.2947058},
  note    = {\doi{10.1109/TIT.2019.2947058}},
}

@article{Takeuchi2022LongMemory,
  author  = {K. Takeuchi},
  title   = {On the convergence of orthogonal/vector {AMP}: Long-memory message-passing strategy},
  journal = {IEEE Trans. Inf. Theory},
  volume  = {68},
  number  = {12},
  pages   = {8121--8138},
  month   = dec,
  year    = {2022},
  doi     = {10.1109/TIT.2022.3194855},
  note    = {\doi{10.1109/TIT.2022.3194855}},
}

@article{Takeuchi2024SpatialOAMP,
  author  = {K. Takeuchi},
  title   = {Orthogonal approximate message-passing for spatially coupled linear models},
  journal = {IEEE Trans. Inf. Theory},
  volume  = {70},
  number  = {1},
  pages   = {594--631},
  month   = jan,
  year    = {2024},
  doi     = {10.1109/TIT.2023.3311408},
  note    = {\doi{10.1109/TIT.2023.3311408}},
}

@article{TakahashiKabashima2022,
  author  = {T. Takahashi and Y. Kabashima},
  title   = {Macroscopic analysis of vector approximate message passing in a model-mismatched setting},
  journal = {IEEE Trans. Inf. Theory},
  volume  = {68},
  number  = {8},
  pages   = {5579--5600},
  month   = aug,
  year    = {2022},
  doi     = {10.1109/TIT.2022.3163342},
  note    = {\doi{10.1109/TIT.2022.3163342}},
}

@article{MaXuMaleki2024,
  author  = {J. Ma and J. Xu and A. Maleki},
  title   = {Toward designing optimal sensing matrices for generalized linear inverse problems},
  journal = {IEEE Trans. Inf. Theory},
  volume  = {70},
  number  = {1},
  pages   = {482--508},
  month   = jan,
  year    = {2024},
  doi     = {10.1109/TIT.2023.3307553},
  note    = {\doi{10.1109/TIT.2023.3307553}},
}

@article{CademartoriRush2024,
  author  = {C. Cademartori and C. Rush and A. De Jesus Martins da Silva},
  title   = {A non-asymptotic analysis of generalized vector approximate message passing algorithms with rotationally invariant designs},
  journal = {IEEE Trans. Inf. Theory},
  volume  = {70},
  number  = {8},
  pages   = {5811--5856},
  month   = aug,
  year    = {2024},
  doi     = {10.1109/TIT.2024.3396472},
  note    = {\doi{10.1109/TIT.2024.3396472}},
}

@article{Fan2022,
  author  = {Z. Fan},
  title   = {Approximate message passing algorithms for rotationally invariant matrices},
  journal = {Ann. Statist.},
  volume  = {50},
  number  = {1},
  pages   = {197--224},
  year    = {2022},
  doi     = {10.1214/21-AOS2101},
  note    = {\doi{10.1214/21-AOS2101}},
}

@article{ZhongWangFan2024,
  author  = {X. Zhong and T. Wang and Z. Fan},
  title   = {Approximate message passing for orthogonally invariant ensembles: Multivariate non-linearities and spectral initialization},
  journal = {Inf. Inference},
  volume  = {13},
  number  = {3},
  pages   = {iaae024},
  year    = {2024},
  doi     = {10.1093/imaiai/iaae024},
  note    = {\doi{10.1093/imaiai/iaae024}},
}

@inproceedings{VenkataramananKoglerMondelli2022,
  author    = {R. Venkataramanan and K. K{"o}gler and M. Mondelli},
  title     = {Estimation in rotationally invariant generalized linear models via approximate message passing},
  booktitle = {Proc. 39th Int. Conf. Mach. Learn. (ICML)},
  series    = {Proc. Mach. Learn. Res.},
  volume    = {162},
  pages     = {22120--22144},
  publisher = {PMLR},
  year      = {2022},
}

@article{LiuMa2024,
  author  = {S. Liu and J. Ma},
  title   = {Unifying {AMP} algorithms for rotationally-invariant models},
  journal = {arXiv preprint arXiv:2412.01574},
  year    = {2024},
  eprint  = {2412.01574},
  archivePrefix= {arXiv},
  doi     = {10.48550/arXiv.2412.01574},
  note    = {\doi{10.48550/arXiv.2412.01574}},
}

@article{DudejaLiuMa2026,
  author  = {R. Dudeja and S. Liu and J. Ma},
  title   = {Optimality of approximate message passing for spiked matrix models with rotationally invariant noise},
  journal = {Ann. Statist.},
  volume  = {54},
  number  = {1},
  year    = {2026},
  doi     = {10.1214/25-AOS2575},
  pages   = {466--489},
  note    = {\doi{10.1214/25-AOS2575}},
}

@article{LiuHuangKurkoski2022,
  author  = {L. Liu and S. Huang and B. M. Kurkoski},
  title   = {Memory {AMP}},
  journal = {IEEE Trans. Inf. Theory},
  volume  = {68},
  number  = {12},
  pages   = {8015--8039},
  month   = dec,
  year    = {2022},
  doi     = {10.1109/TIT.2022.3186166},
  note    = {\doi{10.1109/TIT.2022.3186166}},
}

@inproceedings{LiuHuangKurkoski2022SSMAMP,
  author    = {L. Liu and S. Huang and B. M. Kurkoski},
  title     = {Sufficient statistic memory approximate message passing},
  booktitle = {Proc. IEEE Int. Symp. Inf. Theory (ISIT)},
  pages     = {157--162},
  year      = {2022},
  doi     = {10.1109/ISIT50566.2022.9834568},
  note    = {\doi{10.1109/ISIT50566.2022.9834568}},
}

@article{WangZhongFan2024,
  author  = {T. Wang and X. Zhong and Z. Fan},
  title   = {Universality of approximate message passing algorithms and tensor networks},
  journal = {Ann. Appl. Probab.},
  volume  = {34},
  number  = {4},
  pages   = {3943--3994},
  year    = {2024},
  doi     = {10.1214/24-AAP2056},
  note    = {\doi{10.1214/24-AAP2056}},
}

@article{Hachem2024,
  author  = {W. Hachem},
  title   = {Approximate message passing for sparse matrices with application to the equilibria of large ecological {Lotka--Volterra} systems},
  journal = {Stochastic Process. Appl.},
  volume  = {170},
  pages   = {104276},
  year    = {2024},
  doi     = {10.1016/j.spa.2023.104276},
  note    = {\doi{10.1016/j.spa.2023.104276}},
}

@article{GueddariHachemNajim2025Elliptic,
  author  = {M.-Y. Gueddari and W. Hachem and J. Najim},
  title   = {Elliptic approximate message passing and an application to theoretical ecology},
  journal = {Random Matrices Theory Appl.},
  volume  = {14},
  number  = {4},
  year    = {2025},
  doi     = {10.1142/S2010326325500182},
  pages   = {2550018},
  note    = {\doi{10.1142/S2010326325500182}},
}

@article{GueddariHachemNajim2026General,
  author  = {M.-Y. Gueddari and W. Hachem and J. Najim},
  title   = {Approximate message passing for general non-symmetric random matrices},
  journal = {J. Theor. Probab.},
  year    = {2026},
  doi     = {10.1007/s10959-025-01476-z},
  volume  = {39},
  number  = {1},
  pages   = {1--69},
  note    = {\doi{10.1007/s10959-025-01476-z}},
}

@article{BaoHanXu2023,
  author  = {Z. Bao and Q. Han and X. Xu},
  title   = {A leave-one-out approach to approximate message passing},
  journal = {Ann. Appl. Probab.},
  year    = {2025},
  volume  = {35},
  number  = {4},
  pages   = {2716--2766},
  doi     = {10.1214/25-AAP2186},
  note    = {\doi{10.1214/25-AAP2186}},
}

@article{BarbierKrzakala2017,
  author  = {J. Barbier and F. Krzakala},
  title   = {Approximate message-passing decoder and capacity-achieving sparse superposition codes},
  journal = {IEEE Trans. Inf. Theory},
  volume  = {63},
  number  = {8},
  pages   = {4894--4927},
  month   = aug,
  year    = {2017},
  doi     = {10.1109/TIT.2017.2713833},
  note    = {\doi{10.1109/TIT.2017.2713833}},
}

@article{RushGreigVenkataramanan2017,
  author  = {C. Rush and A. Greig and R. Venkataramanan},
  title   = {Capacity-achieving sparse superposition codes via approximate message passing decoding},
  journal = {IEEE Trans. Inf. Theory},
  volume  = {63},
  number  = {3},
  pages   = {1476--1500},
  month   = mar,
  year    = {2017},
  doi     = {10.1109/TIT.2017.2649460},
  note    = {\doi{10.1109/TIT.2017.2649460}},
}

@article{MontanariVenkataramanan2021,
  author  = {A. Montanari and R. Venkataramanan},
  title   = {Estimation of low-rank matrices via approximate message passing},
  journal = {Ann. Statist.},
  volume  = {49},
  number  = {1},
  pages   = {321--345},
  year    = {2021},
  doi     = {10.1214/20-AOS1984},
  note    = {\doi{10.1214/20-AOS1984}},
}

@article{LelargeMiolane2019,
  author  = {M. Lelarge and L. Miolane},
  title   = {Fundamental limits of symmetric low-rank matrix estimation},
  journal = {Probab. Theory Relat. Fields},
  volume  = {173},
  pages   = {859--929},
  year    = {2019},
  doi     = {10.1007/s00440-018-0845-x},
  note    = {\doi{10.1007/s00440-018-0845-x}},
}

@article{GuionnetKoKrzakalaZdeborova2022,
  author  = {A. Guionnet and J. Ko and F. Krzakala and L. Zdeborov{\'a}},
  title   = {Low-rank matrix estimation with inhomogeneous noise},
  journal = {Inf. Inference},
  year    = {2025},
  volume  = {14},
  number  = {2},
  pages   = {iaaf010},
  doi     = {10.1093/imaiai/iaaf010},
  note    = {\doi{10.1093/imaiai/iaaf010}},
}

@inproceedings{PakKoKrzakala2023,
  author    = {A. Pak and J. Ko and F. Krzakala},
  title     = {Optimal algorithms for the inhomogeneous spiked {Wigner} model},
  booktitle = {Adv. Neural Inf. Process. Syst.},
  volume    = {36},
  pages     = {76409--76424},
  year      = {2023},
  doi     = {10.52202/075280-3340},
  note    = {\doi{10.52202/075280-3340}},
}

@article{AjankiErdosKruger2017,
  author  = {O. H. Ajanki and L. Erd{\H{o}}s and T. Kr{\"u}ger},
  title   = {Universality for general {Wigner}-type matrices},
  journal = {Probab. Theory Relat. Fields},
  volume  = {169},
  number  = {3--4},
  pages   = {667--727},
  year    = {2017},
  doi     = {10.1007/s00440-016-0740-2},
  note    = {\doi{10.1007/s00440-016-0740-2}},
}

@article{AjankiErdosKruger2019,
  author  = {O. H. Ajanki and L. Erd{\H{o}}s and T. Kr{\"u}ger},
  title   = {Stability of the matrix {Dyson} equation and random matrices with correlations},
  journal = {Probab. Theory Relat. Fields},
  volume  = {173},
  number  = {1--2},
  pages   = {293--373},
  year    = {2019},
  doi     = {10.1007/s00440-018-0835-z},
  note    = {\doi{10.1007/s00440-018-0835-z}},
}

@incollection{Erdos2019MDE,
  author    = {L. Erd{\H{o}}s},
  title     = {The matrix {Dyson} equation and its applications for random matrices},
  booktitle = {Random Matrices},
  series    = {IAS/Park City Math. Ser.},
  volume    = {26},
  publisher = {Amer. Math. Soc.},
  address   = {Providence, RI, USA},
  year      = {2019},
  doi     = {10.1090/pcms/026/03},
  pages   = {75--158},
  note    = {\doi{10.1090/pcms/026/03}},
}

@book{BaiSilverstein2010,
  author    = {Z. Bai and J. W. Silverstein},
  title     = {Spectral Analysis of Large Dimensional Random Matrices},
  edition   = {2},
  publisher = {Springer},
  address   = {New York, NY, USA},
  year      = {2010},
  doi     = {10.1007/978-1-4419-0661-8},
  note    = {\doi{10.1007/978-1-4419-0661-8}},
}

@book{CouilletDebbah2011,
  author    = {R. Couillet and M. Debbah},
  title     = {Random Matrix Methods for Wireless Communications},
  publisher = {Cambridge Univ. Press},
  address   = {Cambridge, U.K.},
  year      = {2011},
  doi     = {10.1017/CBO9780511994746},
  note    = {\doi{10.1017/CBO9780511994746}},
}

@book{Anderson2003,
  author    = {T. W. Anderson},
  title     = {An Introduction to Multivariate Statistical Analysis},
  edition   = {3},
  publisher = {Wiley},
  address   = {Hoboken, NJ, USA},
  year      = {2003},
}

@book{Chatterjee2014,
  author    = {S. Chatterjee},
  title     = {Superconcentration and Related Topics},
  publisher = {Springer},
  address   = {Cham, Switzerland},
  year      = {2014},
  doi       = {10.1007/978-3-319-03886-5},
  note    = {\doi{10.1007/978-3-319-03886-5}},
}

@book{MezardMontanari2009,
  author    = {M. M{\'e}zard and A. Montanari},
  title     = {Information, Physics, and Computation},
  publisher = {Oxford Univ. Press},
  address   = {Oxford, U.K.},
  year      = {2009},
  doi     = {10.1093/acprof:oso/9780198570837.001.0001},
  note    = {\doi{10.1093/acprof:oso/9780198570837.001.0001}},
}

@article{Kabashima2003,
  author  = {Y. Kabashima},
  title   = {A {CDMA} multiuser detection algorithm on the basis of belief propagation},
  journal = {J. Phys. A, Math. Gen.},
  volume  = {36},
  number  = {43},
  pages   = {11111--11121},
  year    = {2003},
  doi     = {10.1088/0305-4470/36/43/030},
  note    = {\doi{10.1088/0305-4470/36/43/030}},
}

\end{document}